\documentclass[letter,11pt]{article}
\pdfoutput=1

\usepackage{tcs}

\usepackage{listings}
\newcommand{\Brac}[1]{\left [ #1 \right ]}

\newcommand{\Snote}[1]  {}

\begin{document}
\title{Entrywise Low-Rank Approximation and Matrix $p \rightarrow q$ Norms via Global Correlation Rounding}

\author{
Prashanti Anderson\footnote{Supported by NSF award no. 2238080.}\\
\texttt{paanders@csail.mit.edu} \\
MIT
\and
Ainesh Bakshi \\
\texttt{ainesh@nyu.edu} \\
NYU
\and
Samuel B. Hopkins\footnote{Supported by NSF award no. 2238080, MLA@CSAIL, FinTechAI@CSAIL, MIT Research Support Committee, MIT-Google Program for Computing Innovation.} \\
\texttt{samhop@mit.edu} \\
MIT
}

\maketitle

\begin{abstract}
    Given a matrix $A$, the goal of the entrywise low-rank approximation problem is to find $\argmin \|A-B\|_p$ over all rank-$k$ matrices $B$, where $\| \cdot \|_p$ is the entrywise $\ell_p$ norm.
    When $p = 2$ this well-studied problem is solved by the singular value decomposition, but for $p \neq 2$ the problem becomes computationally challenging.
    For every even $p > 2$ and every fixed $k$, we give the first polynomial-time approximation scheme for this problem, improving on the $(3 + \eps)$ approximation of Ban, Bhattiprolu, Bringmann, Kolev, Lee, and Woodruff, the bi-criteria approximation of Woodruff and Yasuda, and the additive approximation scheme of Anderson, Bakshi, and Hopkins.
    Prior algorithmic approaches based on sketching and column selection, which yielded a polynomial-time approximation scheme in the $p < 2$ setting, face concrete barriers when $p > 2$.
    Instead, we use the Sherali-Adams hierarchy of convex programs, and in so doing establish a blueprint for how to use convex hierarchies to design polynomial-time approximation schemes for continuous optimization problems.

    We use the same algorithmic strategy to give a new family of additive approximation algorithms for matrix $p \rightarrow q$ norms, which are intimately related to small-set expansion and quantum information.
    In particular, we give the first nontrivial additive approximation algorithms in the regime $p < 2 < q$.
\end{abstract}

\thispagestyle{empty}

\newpage
\thispagestyle{empty}

\tableofcontents

\newpage

\setcounter{page}{1}

\section{Introduction}
\label{sec:introduction}

We design new approximation algorithms for two fundamental families of high-dimensional continuous optimization problems: \emph{entrywise low-rank approximation} of matrices and matrix $p \rightarrow q$ norms. 
Our algorithms are obtained via a new analysis of the Sherali-Adams hierarchy of convex relaxations for both problems.

\paragraph{Entrywise Low-Rank Approximation.}
The goal of the \emph{low rank approximation} problem (LRA) is to find the closest low-rank matrix to a given $A \in \R^{n \times d}$.
Low-rank approximation is ubiquitous in algorithms, machine learning, statistics, and beyond.

We study low rank approximation with entrywise error, where the approximation error
is measured via an entrywise $\ell_p$ norm
\[
\|A - B\|_p^p = \sum_{i,j} |A_{ij} - B_{ij}|^p.
\]
When $p = 2$, this is the well-studied Frobenius norm, which admits exact low-rank
approximation via singular value decomposition (SVD). The cases $p = 0$ and
$p \in [1,2)$ have received intensive study, generating numerous new algorithmic
insights spanning sketching and subspace embeddings
\cite{sohler_woodruff_stoc2011_l1embed,woodruff_fnt2014_sketching,woodruff_zhang_colt2013_exponential},
randomized linear algebra
\cite{sarlos_focs2006_randomprojections},
robust statistics 
\cite{candes_etal_jacm2011_rpca},
and fine-grained approximation algorithms
\cite{ban_etal_soda2019_ptas_lp_lra},
culminating in polynomial time approximation schemes (PTASes)
achieving $(1+\varepsilon)$-approximation in time polynomial in $n$
for fixed $k$ and $p$
\cite{ban_etal_soda2019_ptas_lp_lra,mahankali_woodruff_arxiv2020_l1css_ptas}.

We focus on $p > 2$, which penalizes any poorly-approximated entry more aggressively than $p=2$ would, thus pushing the approximating matrix $B$ to approximate $A$ uniformly well on all entries.
When $p \rightarrow \infty$, entrywise $\ell_p$ low-rank approximation turns into the problem of computing the \emph{approximate rank} of $A$, an important quantity in communication complexity and beyond, which is NP hard to compute even when $k=1$ \cite{AlonLeeShraibmanVempala2013ApproxRank,LeeShraibman2009LowerBounds,LinialShraibman2009FactorizationNorms,ChattopadhyayMandeSherif2019LogApproxRankFalse,AnshuBenDavidGargJainKothariLee2017SeparatingQuantumApproxRank,gillis2019low}.

The $p > 2$ case remains algorithmically mysterious.
The state of the art algorithms with multiplicative approximation guarantees, draw on the tools that lead to a PTAS in the $p < 2$ case, are a $(3+\eps)$ approximation by \cite{ban_etal_soda2019_ptas_lp_lra} and a bi-criteria approximation of \cite{DBLP:conf/stoc/Woodruff023}, which is allowed to output a matrix of rank larger than $k$.
(See Table~\ref{tab:lra-intro}.)
There are significant barriers to extending these approaches to a PTAS for the $p > 2$ case.\footnote{Notably, in the $p \in [1,2)$ case the PTAS of \cite{ban_etal_soda2019_ptas_lp_lra} relies on the existence of $p$-stable distributions, which do not exist for $p > 2$.}\footnote{Interestingly, the work \cite{cohen2024ptas} which gives a PTAS for the $p=0$ case uses a different set of techniques, borrowing from algorithms for dense constraint satisfaction problems. Our algorithm can also be viewed as a (significant) generalization of a different set of tools for solving dense CSPs, via LP/SDP hierarchies.}
It has therefore been unclear whether there is a PTAS for $\ell_p$ LRA when $p > 2$, which seems to require new techniques, or whether this problem is fundamentally harder than in the $p < 2$ case.

Our first main result resolves this question by giving the first PTAS for $p > 2$.

\begin{theorem}
    \label{thm:lra-intro}
    For every $\eps > 0$ and even $p \in \N$, there is an algorithm running in time $(nd)^{\poly(k/\eps)}$ which takes a matrix $A \in \R^{n \times d}$ whose entries are integers of magnitude at most $\poly(nd)$ and with probability $0.99$ finds $U \in \R^{n \times k}, V \in \R^{d \times k}$ such that
    \[
    \| A - UV^\top \|_p^p \leq (1 + \eps) \cdot \min_{\rank(B) \leq k} \|A - B\|_p^p \, .
    \]
\end{theorem}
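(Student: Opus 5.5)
We will prove Theorem~\ref{thm:lra-intro} by rounding a Sherali-Adams relaxation with global correlation rounding, as in dense CSPs. The first step is a reduction that removes $U$. For fixed $V \in \R^{d\times k}$, the problem decouples over rows: $\min_U \|A - UV^\top\|_p^p = \sum_i \min_{u\in\R^k} \|A_i - Vu\|_p^p$. Each row problem is a convex $\ell_p$ regression, solvable in polynomial time. So it suffices to find a $V$ that is $(1+\eps)$-good.

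\textbf{Discretization.} We first normalize: after preprocessing (polynomially bounded integer entries, plus a crude $O(1)$-approximation such as the $(3+\eps)$ algorithm of \cite{ban_etal_soda2019_ptas_lp_lra}), we may assume the optimal factors have polynomially bounded entries. We then round each row $v_j\in\R^k$ of $V$ to a grid with $\poly(nd)^{O(k)}$ points. This loses a $(1+\eps)$ factor, because $p$ is even and the objective is a polynomial. The decision variables become $d$ discrete "labels" $v_j$, each ranging over a grid $\Sigma$.

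\textbf{Relaxation.} The objective $\sum_i \min_u \sum_j (A_{ij} - \langle u, v_j\rangle)^p$ is not a low-degree CSP, since each row couples all $j$. We handle this with a reduction in the spirit of \cite{ban_etal_soda2019_ptas_lp_lra}. There is a subset $S$ of $\poly(k/\eps)$ columns, given by $\ell_p$ Lewis weights or sensitivity sampling, such that $u_i$ can be taken to be the $\ell_p$ regression solution on the coordinates of $S$, up to $(1+\eps)$ loss. This makes $u_i$ a function of $(v_j)_{j\in S}$ and $A_{i,S}$. Each row term then depends only on the labels in $S\cup\{j\}$. We write the objective as $\sum_{i,j} f_{ij}(v_S, v_j)$, a CSP of arity $|S|+1=\poly(k/\eps)$.

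We solve degree-$r$ Sherali-Adams over the labels with $r = \poly(k/\eps)$. This takes time $(nd)^{\poly(k/\eps)}$, which matches the claimed running time.

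\textbf{Rounding.} We use global correlation rounding. Conditioning on $\poly(k/\eps)$ random labels reduces the average pairwise mutual information between $v_S$ and $v_j$ to $\eps$. After that, independent rounding from the local marginals loses an additive $\eps$ times a normalizer. A random $S$ can be guessed or enumerated as part of the conditioning.

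\textbf{Main obstacle.} The hard part is turning this additive error into a multiplicative $(1+\eps)$ factor. The CSP error is measured relative to something like $\sum_{ij} \max_\sigma |f_{ij}|$. That quantity can be far larger than $\mathrm{OPT}$ when $A$ is nearly rank $k$ or has heavy entries.

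We would handle this in two steps.
\begin{itemize}
\item Weight the rows and columns by sensitivities or Lewis weights, so that every row and column contributes comparably.
\item Bound the rounding loss by a local term. By Pinsker-type inequalities, the loss is at most $\sqrt{\text{mutual information}}$ times the $L_2$ norm of $f_{ij}$ under the pseudodistribution. Using SA constraints, which are valid for polynomial inequalities of degree $p$, we would bound that $L_2$ norm by $O(\mathrm{OPT}_{ij}$-scale$)$ plus the relaxation value.
\end{itemize}
We expect this local bound to be the crux: showing that under the pseudodistribution, $\E[f_{ij}^2]^{1/2}$ is controlled by the relaxation value. This should use evenness of $p$, so that the objective is a sum of squares, together with a boosting or restart argument: repeat the rounding on the residual with $\mathrm{OPT}$ known up to a constant factor.

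Success probability $0.99$ follows by repeating the rounding and keeping the best $V$.
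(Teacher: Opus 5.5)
There is a genuine gap, and it lies in your ``Relaxation'' step rather than in the obstacle you flag. Everything rests on the claim that one set $S$ of $\poly(k/\eps)$ columns, chosen via Lewis weights or sensitivities of the unknown optimal $V^*$, has the following property: solving every row's $\ell_p$ regression on the coordinates in $S$ gives total cost $(1+\eps)\OPT$. For $p>2$, nothing you invoke gives this. A subspace embedding $S$ for the column space of $V^*$ only yields $\|A_i - V^*\hat u_i\|_p \le \|r_i\|_p + 2(1+\eps)\|S r_i\|_p$, where $r_i$ is the optimal residual of row $i$. That is a constant factor, not $1+\eps$. Known single-target $(1+\eps)$ guarantees (active $\ell_p$ regression) are high-probability statements. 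Their failure events are driven by rare, heavily reweighted residual entries, so they do not add up to a $(1+\eps)$ bound over $n$ targets sharing one $S$. This is the kind of sketching/column-selection step the paper identifies as a barrier for $p>2$.

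Worse, the claim proves too much. If it held, fixing $v_S$ would determine every $u_i$, and the objective would become $\sum_j g_j(v_j)$, separable in the remaining labels. You could then enumerate the $\poly(nd)^{k|S|}$ grid values of $v_S$ and solve for $V$ by column-wise convex regression. That needs no relaxation and no correlation rounding, and your ``main obstacle'' disappears, since conditioning on $v_S$ makes independent rounding lossless. That your proposal still hits an obstacle shows the real difficulty has been hidden in this unproved step.

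Without that reduction you face a genuine rounding problem, and both tools you propose for it fail.
\begin{itemize}
\item An entropy/mutual-information pinning lemma needs a number of conditionings proportional to the initial entropy. For your grid this can be as large as $\log|\Sigma| = \Theta(k\log(nd))$, which forces degree $\Omega(\log(nd))$ and quasi-polynomial time. This is exactly what limited the earlier Anderson--Bakshi--Hopkins additive scheme.
\item Bounding $\E[f_{ij}^2]^{1/2}$ by the relaxation value is false. The relaxation controls $p$-th moments of residuals, while $f_{ij}^2$ is a $2p$-th moment. Take a mixture with mass $\delta$ on a solution with residual $R$ at entry $(i,j)$ and mass $1-\delta$ on one with residual $0$ there. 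It has $\E[f_{ij}^2]^{1/2} = \E[f_{ij}]/\sqrt{\delta}$, and evenness of $p$ does not help.
\end{itemize}

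The paper's proof supplies what is missing:
\begin{itemize}
\item A warm start $(X,Y)$ of cost $O(\OPT)$.
\item An identifiability lemma: an invertible $M$ from a polar decomposition with $(U+XM)^\top(U-XM)=0$, transferred from $\ell_2$ to $\ell_p$ by sketching.
\item Auerbach-based balancing matrices $N_1,N_2$, so that all four rebalanced residual factors have $\ell_p^p$-norm $k^{O_p(1)}\sqrt{\OPT}$.
\item An expansion of $\|A-UV^\top\|_p^p$ into terms $(A_{ij}-X_i^\top Y_j)^a\langle Z_i,W_j\rangle$. The $a=0$ part is made rounding-free by guessed global moment constraints.
\item A heavy-tailed, Bregman-uncertainty pinning lemma that needs only $\frac{p}{p-a}$-th moments of $Z,W$. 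The constraints bound these, so the rounding loss is $\eps\cdot\OPT$ after $(k/\eps)^{O_p(1)}$ conditionings.
\end{itemize}
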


The assumption that the entries of $A$ are bounded integers -- or equivalently that they have bounded bit-complexity -- is relatively standard in the low-rank approximation literature, needed also for the only known PTAS in the $p < 2 $ case \cite{ban_etal_soda2019_ptas_lp_lra}.
We conjecture that our algorithm, or a minor variant thereof, obtains an analogous guarantee in the case that $p > 2$ is odd, but our analysis does not currently extend to this setting.

Our algorithm combines the $O(1)$-approximation of \cite{ban_etal_soda2019_ptas_lp_lra} with the Sherali-Adams hierarchy of linear programs. 
The only prior work we are aware of to use convex programming to solve entrywise LRA is \cite{anderson2025additiveapproximationschemeslowdimensional}, which obtains an \emph{additive} approximation scheme in the $p > 2$, $k = 1$ case in quasipolynomial time.
We introduce a range of new technical tools, discussed below, to obtain multiplicative rather than additive error guarantees, polynomial rather than quasi-polynomial running time, and especially to handle $k > 1$.

\begin{table}[t]
\centering
\footnotesize
\setlength{\tabcolsep}{5pt}
\begin{tabular}{@{}lccc@{}}
\toprule
Reference & Approx.\ factor & Output rank & Running time \\
\midrule
\cite{pmlr-v70-chierichetti17a}
& $k+1$
& $k$
& $d^k \poly(nd)$ \\
\cite{ban_etal_soda2019_ptas_lp_lra}
& $3+\varepsilon$
& $k$
& $n^{\poly(k/\varepsilon)}$ \\
\cite{DBLP:conf/stoc/Woodruff023}
& $1+\varepsilon$
& $\widetilde O\!\left(k^{p/2-2/p+1}\log d/\varepsilon^{1+2/p}\right)$
& $\poly(n,d,k,\varepsilon^{-1})$ \\
This work
& $1+\varepsilon$
& $k$
& $(nd)^{(k/\varepsilon)^{O_p(1)}}$ \\
\bottomrule
\end{tabular}
\caption{Selected algorithms for entrywise $\ell_p$ rank-$k$ approximation for fixed finite $p>2$. The target rank is $k$ throughout. The column ``Output rank'' records the rank of the returned matrix, so \cite{DBLP:conf/stoc/Woodruff023} gives a bicriteria approximation. \cite{ban_etal_soda2019_ptas_lp_lra} assumes bounded-bit entries, and in this work we additionally assume that $p$ is an even integer.}
\label{tab:lra-intro}
\end{table}

\paragraph{Matrix $p \rightarrow q$ Norms.}
For $p,q \geq 1$, the $p\rightarrow q$ norm of a matrix $A \in \R^{n \times d}$ is
\[
\|A\|_{p \rightarrow q} = \max_{x \in \R^d} \frac{ \|Ax\|_q}{\|x\|_p} \, ,
\]
where $\|\cdot\|_p$ and $\|\cdot \|_q $ denote the $\ell_p$ and $\ell_q$ norms, respectively. 
The $2 \rightarrow 2$ norm is simply the maximum singular value of $A$, which can be computed in polynomial time.
For most choices of $p,q$ this problem is hard to solve exactly, with varying degrees of hardness of approximation \cite{HendrickxOlshevsky2010MatrixPNormsNPHard,BhaskaraVijayaraghavan2011ApproximatingMatrixPNorms,barak2012hypercontractivity,BhattiproluGhoshGuruswamiLeeTulsiani2023InapproximabilityMatrixPQ,BhattiproluLeeTulsiani2022GrothendieckSeparation}.
And yet, the problem of computing $p \rightarrow q$ norms when $p \neq 2$ and/or $q \neq 2$ is pervasive.
Examples include the $\infty \rightarrow 1$ norm, known as the Grothendieck problem and closely related to the cut norm, communication complexity, and discrepancy theory \cite{alon2006cutnorm,frieze1999quick,linial2008learning,LinialShraibman2009FactorizationNorms,lee2008discrepancy}, the $2 \rightarrow 1$ norm used in high-dimensional statistics \cite{hopkins2020mean} and the $2 \rightarrow q$ norms for $q > 2$, which are closely related to small-set expansion and unique games, quantum information, and algorithmic statistics \cite{barak2012hypercontractivity,brandao2015replacing,diakonikolas2025sos}.

We focus on the ``hypercontractive'' $p < q$ case.
Even for the special case of the $2 \rightarrow q$ norms, they are NP-hard to approximate within inverse polynomial multiplicative error, and any constant factor approximation would refute both the exponential time hypothesis and the small set expansion hypothesis \cite{barak2012hypercontractivity}.
But algorithms with other forms of approximation guarantees -- in particular, \emph{additive} approximation error -- still have broad consequences, for instance in quantum information \cite{brandao2015replacing} and low-rank approximation \cite{ban_etal_soda2019_ptas_lp_lra}, and are of interest in their own right to delineate which instances $A$ have easy-to-compute $p \rightarrow q$ norms.

Our second main contribution is a new additive approximation algorithm for the $p \rightarrow q$ norm.

\begin{theorem}
\label{thm:intro-pq}
  Let $q \geq 2$ be an even integer with H\"older dual $q^* = \tfrac q {q-1}$, and let $p \in [q^*,2]$ have H\"older dual $p^* = \tfrac p {p-1}$ which is an even integer.
  For every $\eps > 0$, there is a $(nd)^{O(\eps^{-4q})}$ time algorithm which takes a matrix $A \in \R^{n \times d}$ and with probability $0.99$ returns $x \in \R^d$ such that
  \[
  \frac{\|Ax\|_q^q}{\|x\|_p^q} \leq \|A\|_{p \rightarrow q}^q \leq \frac{\|Ax\|_q^q}{\|x\|_p^q} + \eps \cdot \sum_{i \leq n} \|A_i\|_{p^*}^q
  \]
  where $A_1,\ldots,A_n$ are the rows of $A$.
\end{theorem}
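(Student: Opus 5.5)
We will prove Theorem~\ref{thm:intro-pq} by running the Sherali--Adams (or degree-$t$ sum-of-squares) relaxation on a dualized form of the problem, followed by global correlation rounding. The dualization is as follows. Since $p^*$ is an even integer, H\"older duality gives $\|x\|_p = \max_{\|y\|_{p^*}\le 1}\langle x,y\rangle$, so $\|A\|_{p\to q} = \|A^\top\|_{q^*\to p^*}$. Alternatively, and more conveniently, we write
\[
\|A\|_{p\to q}^q = \max_{\|x\|_p\le 1}\sum_{i\le n}\langle A_i,x\rangle^q .
\]
Because $q$ is even, the objective is a polynomial. The constraint $\|x\|_p\le 1$ with $p\le 2$ is awkward, so we first pass to the dual side. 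We pick the variable on which both the objective and the constraint are even polynomials: we use $\|A\|_{p\to q}=\|A^\top\|_{q^*\to p^*}$ together with the identity $\|A^\top y\|_{p^*}^{p^*}=\sum_j \langle (A^\top)_j,y\rangle^{p^*}$. Whichever side we use, the key structural fact we want is a sum over $n$ ``rows'' of a power of a low-dimensional linear form, normalized by $\sum_i \|A_i\|_{p^*}^q$. This is exactly what H\"older gives pointwise, since $\langle A_i,x\rangle^q\le \|A_i\|_{p^*}^q\|x\|_p^q$.

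Step 1: discretize and reweight. We rescale each row so that the contribution of row $i$ has ``weight'' $w_i=\|A_i\|_{p^*}^q/\sum_k\|A_k\|_{p^*}^q$. The objective is then an average, under distribution $w$, of the quantities $z_i:=\langle A_i,x\rangle/\|A_i\|_{p^*}$, each lying in $[-1,1]$. We introduce bounded auxiliary variables $z_i$ with constraints $z_i\|A_i\|_{p^*}=\langle A_i,x\rangle$. The objective $\E_{i\sim w} z_i^q$ is then a dense, CSP-like average of bounded variables. This is what makes an additive $\eps\sum_i\|A_i\|_{p^*}^q$ guarantee plausible.

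Step 2: solve a level-$t$ relaxation with $t=\poly(1/\eps)$. The variables are $x$ (with $\|x\|_p\le 1$ encoded via its even-power dual certificate) and the $z_i$, each discretized to an $\eps$-net of $[-1,1]$ so that Sherali--Adams over finite alphabets applies.

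Step 3: global correlation rounding. By the standard potential argument, conditioning on $O(\eps^{-O(q)})$ randomly chosen $z_i$'s (sampled according to $w$) makes the pseudo-distribution have average covariance at most $\eps$ among the $z_i$ under $w\otimes w$. After conditioning, we set $\hat x=\pE[x]$. The crux is to show that $\E_{i\sim w}\langle A_i,\hat x\rangle^q/\|A_i\|_{p^*}^q \ge \pE\,\E_{i} z_i^q - \eps$, while $\|\hat x\|_p\le 1$ holds by convexity.

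The main obstacle is that $\pE[z_i^q]$ can exceed $\pE[z_i]^q$ by a variance term, and low \emph{pairwise} correlation between rows does not directly make each individual $z_i$ nearly deterministic. We plan to handle this by using the linear structure. Each $z_i$ is a linear function of $x$, so $\sum_i w_i \Var(z_i)$ can be bounded by the dual norm pairing: writing $\Var(z_i)=\pE[z_i\langle A_i,x-\hat x\rangle]/\|A_i\|_{p^*}$, we then use H\"older with the $p\to p^*$ pairing together with the fact that $q\ge p^*$ (equivalently $p\ge q^*$) to relate the average variance to average covariance. The required inequality is that $\E_{i\sim w}\Var(z_i)\lesssim(\E_{i,j\sim w}\mathrm{Cov}(z_i,z_j)^2)^{c}$ for some $c>0$. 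This is the step where the hypothesis $p\in[q^*,2]$ must enter, and it is where we expect the real work to lie. Given this inequality, the bound $|\pE z_i^q-\pE[z_i]^q|\le q\cdot O(\Var(z_i))$ for bounded variables finishes the argument. The running time is $(nd)^{O(t)}$ with $t=O(\eps^{-4q})$, and the success probability of $0.99$ follows by repeating the random conditioning.
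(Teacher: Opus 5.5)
There is a genuine gap, and it sits exactly at the step you flag. With $z_i=\langle a_i,x\rangle$, $a_i=A_i/\|A_i\|_{p^*}$, and output $\hat x=\pE x$, the rounding loss is $\E_{i\sim w}\bigl(\pE z_i^q-(\pE z_i)^q\bigr)$. That is a per-row \emph{variance} term, so the whole proof rests on your inequality $\E_{i\sim w}\Var(z_i)\lesssim(\E_{i,j\sim w}\Cov(z_i,z_j)^2)^c$, which you do not prove.

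The route you sketch cannot give it. The inequality does not involve $q$ at all ($w$ is just some probability distribution on $\ell_{p^*}$-unit vectors), so $q\ge p^*$ cannot be what drives it. Pointwise H\"older only gives $\Var(z_i)\le 1$. The inequality is really spectral: with $\Sigma$ the covariance of $x$ and $M=\E_{i\sim w}a_ia_i^\top$, it reads $\tr(M\Sigma)\lesssim\tr((M\Sigma)^2)^c$, i.e.\ eigenvalue decay of $M^{1/2}\Sigma M^{1/2}$. Boundedness alone does not force this. For example, at $p=1$, take $x$ uniform on $\{e_1,\dots,e_N\}$ and random sign rows: then $\Var(z_i)\approx 1$ for every $i$, while typically $|\Cov(z_i,z_j)|\approx N^{-1/2}$ for $i\neq j$.

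For genuine distributions and $1<p\le 2$ the inequality does hold, with $c=1/(p^*+2)$. The reason is that the kernel $\langle a_i,x_\omega-\E x\rangle$ defines an operator in the Schatten class $S_r$ with $1/r=1/2+1/p^*$, by complex interpolation between the Hilbert--Schmidt bound at $p=1$ and the trace-class bound at $p=2$. That argument uses an honest factorization through points of the $\ell_p$ ball, and it never uses $q$.

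That is the second problem: your relaxation only gives a pseudo-distribution. Under Sherali--Adams the matrix $(\Cov(z_i,z_j))_{ij}$ need not even be PSD. Under SoS you would need a low-degree certificate of this Schatten-type bound for pseudo-covariances constrained to the $\ell_p$ ball, a hypercontractivity-flavoured statement you do not supply. At $p=2$ it does follow from $\Sigma\succeq 0$, $\tr\Sigma\le 1$ and $\tr M=1$ alone. For $p<2$, however, $\tr M=\E_{i\sim w}\|a_i\|_2^2$ is unbounded and that argument breaks. Without such a certificate, pinning $\poly(1/\eps)$ coordinates only makes \emph{pairwise} covariances small, and expectation rounding can still lose the full variance term.

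The paper avoids variance terms altogether by keeping the problem bilinear. It writes $\|A\|_{p\to q}=\max\langle y,Ax\rangle$ with $y_i=v_i^{q-1}$, $x_j=u_j^{p^*-1}$, $\|v\|_q^q\le 1$ and $\|u\|_{p^*}^{p^*}\le 1$. Outputting $\pE x$ then loses exactly $\sum_{ij}A_{ij}\Cov(v_i^{q-1},u_j^{p^*-1})$. The price is that these variables are heavy-tailed: only their $q^*$-th and $p$-th moments are controlled. This is why the paper uses the Bregman-uncertainty potential $\sum_j\phi^{(p^*)}_{u_j^{p^*-1}}$ and its asymmetric decrease lemma. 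That lemma is where $p^*\le q$ actually enters, and it only uses two-wise local distributions, so it applies to the relaxation.
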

An illustrative example is the $4/3 \rightarrow 4$ norm, chosen because in this case both $p^*, q > 2$, a setting in which prior additive approximations suffer.
Theorem~\ref{thm:intro-pq} says that there is a polynomial time algorithm to approximate $\|A\|_{4/3 \rightarrow 4}^4$ to additive error $\epsilon \cdot \sum_{ij} A_{ij}^4$.
Note that by H\"older's inequality, $\sum_{ij} A_{ij}^4 \geq \|A\|_{4/3 \rightarrow 4}^4$.
The only prior additive approximation we are aware of for the $p < 2$ case, by \cite{brandao2015replacing}, instead yields incomparable additive error (see Section~\ref{sec:prior-work-pq-additive}).

More generally, to interpret the right-hand side of Theorem~\ref{thm:intro-pq}, note that $\sum_{i \leq n} \|A_i\|_{p^*}^q$ is a natural upper bound on $\|A\|_{p \rightarrow q}^q$ via H\"older's inequality.
The theorem says that there is an efficient algorithm which improves on this upper bound by any constant factor; alternatively, the $p \rightarrow q$ norm is easy to compute up to $(1+\eps)$ multiplicative error when $\|A\|_{p \rightarrow q}$ is within a constant factor of this upper bound.

All prior polynomial-time additive approximation algorithms for $p \rightarrow q$ norm either apply only to the case $p=2$ or incur error which can be larger than $\eps \cdot \sum_{i \leq n} \|A_i\|_{p^*}^q$ by a polynomial factor in $n$ or $d$ \cite{brandao2015replacing}.
Thus, we view Theorem~\ref{thm:intro-pq} as an important step to chart the algorithmic landscape in the thus-far underexplored $p < 2$ case.
We give more detailed comparison to prior algorithms with additive approximation guarantees in \cref{sec:prior-work-pq-additive}.

\paragraph{A New Perspective on Pinning Lemmas: Taming Pairwise Correlations via Mirror Descent}
Our third main contribution is a new perspective on \emph{pinning lemmas}, which we use to prove Theorems~\ref{thm:intro-pq} and~\ref{thm:lra-intro}.
Pinning lemmas are \emph{measure decomposition} theorems in high-dimensional probability which show that multivariate probability distributions can be decomposed as mixtures of a small number of simple distributions.
Here ``simple'' means having small pairwise correlations.
Each of the simple distributions in the mixture is just the original distribution conditioned on the values of a few coordinates, or ``pinned''.

Pinning lemmas arose separately in statistical physics \cite{Montanari2008} and rounding algorithms for semidefinite programs  \cite{Barak2011,Raghavendra2012}, in particular the ``global correlation rounding'' paradigm, and have since seen extensive use in both areas \cite{YoshidaZhou2014SAForDenseCSPs,JainKoehlerRisteski2018MeanFieldCorrelationRounding,CohenAddadLeeNewman2022CorrelationClusteringSA,JeronimoSrivastavaTulsiani2023SoSListDecoding,CojaOghlanKrzakalaPerkinsZdeborova2018InfoThresholds,CojaOghlanPerkins2019BetheStates,CojaOghlanPerkins2019SpinSystemsBetheLattices}, as well as high-dimensional probability \cite{coja2021cut}, random matrix theory, and error-correcting coding \cite{coja2020rank}.
More broadly, pinning lemmas are an important example of the power of \emph{localization schemes} \cite{ChenEldan2022LocalizationSchemes}, which have seen an explosion of applications in high-dimensional probability and sampling \cite{eldan2018decomposition,eldan2018exponential,eldan2020taming,ChenEldan2022LocalizationSchemes}.
We introduce a new connection between pinning lemmas and the \emph{mirror descent} framework from online convex optimization; this connection also unifies pre-existing pinning lemmas under the umbrella of a single proof technique.
Because of the broad applicability and importance of pinning lemmas, we expect that our pinning lemmas, as well as the new bridge between pinning lemmas and mirror descent, will have further applications in the future.

There are two known pinning lemmas  \cite{Montanari2008,Barak2011,Raghavendra2012}, one of which applies to discrete random variables and the other to real-valued random variables with bounded variance.
Because $p \rightarrow q$ norm and entrywise LRA are optimization problems over the reals, the version which allows real-valued random variables is our starting point.
(We will expand on the relationship between pinning lemmas and rounding convex relaxations in Section~\ref{sec:overview}.)

\begin{theorem}[Pinning Lemma, Bounded Variance Version \cite{Montanari2008,Raghavendra2012,Barak2011}]
\label{thm:pinning-intro-variance}
  For all $t \in \N$ there exists $s \leq t$ such that if $S \subseteq [n]$ is a random set of size $s$ and $(X_1,\ldots,X_n)$ are any jointly distributed random variables,
  \[
  \E_{S} \E_{i,j \sim [n]} |\Cov(X_i, X_j \, | \, X_S)|  \leq O\Paren{\frac 1 {\sqrt t}} \cdot \E_{i \sim [n]} \Var(X_i) \, ,
  \]
  where $\Cov(X_i, X_j \, | \, X_S)$ is the covariance between $X_i$ and $X_j$ after conditioning on the values $X_k$ for $k \in S$, and $\Var( \cdot )$ denotes variance.
\end{theorem}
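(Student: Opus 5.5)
We will use the standard potential-function argument, with the potential being the average conditional variance. For $s \ge 0$ let $S_s$ be a uniformly random ordered sequence of $s$ distinct indices (equivalently a random set of size $s$), and define
\[
\Phi_s = \E_{S_s} \E_{i \sim [n]} \Var(X_i \mid X_{S_s}) \, .
\]
By the law of total variance, conditioning on more information can only decrease conditional variance. So $\Phi_s$ is nonincreasing in $s$, with $0 \le \Phi_s \le \Phi_0 = \E_i \Var(X_i)$.

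\textbf{Step 1: one-step variance decrease.} Fix a conditioning set $S$ and pick a fresh index $j \sim [n]$. The key identity is that for any $i$,
\[
\Var(X_i \mid X_S) - \E\bigl[\Var(X_i \mid X_S, X_j)\bigr] = \Var\bigl(\E[X_i \mid X_S, X_j] \mid X_S\bigr) \ge \frac{\Cov(X_i, X_j \mid X_S)^2}{\Var(X_j \mid X_S)} \, .
\]
The inequality holds because among functions of $X_j$, the linear predictor $\E[X_i \mid X_S] + \frac{\Cov(X_i,X_j\mid X_S)}{\Var(X_j\mid X_S)}(X_j - \E[X_j \mid X_S])$ explains at most as much variance as the full conditional expectation. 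Averaging over $i,j$ (adding $j$ to $S$, ignoring the negligible chance $j\in S$, which can be absorbed or avoided by sampling with replacement) gives
\[
\Phi_s - \Phi_{s+1} \ge \E_{S_s}\E_{i,j} \frac{\Cov(X_i,X_j\mid X_S)^2}{\Var(X_j\mid X_S)} \, .
\]

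\textbf{Step 2: convert to absolute covariance via Cauchy--Schwarz.} Write $|\Cov_{ij}| = \frac{|\Cov_{ij}|}{\sqrt{\Var_j}}\cdot\sqrt{\Var_j}$. Applying Cauchy--Schwarz over the joint randomness of $S,i,j$,
\[
\E_{S}\E_{i,j}|\Cov(X_i,X_j\mid X_S)| \le \Bigl(\E \frac{\Cov_{ij}^2}{\Var_j}\Bigr)^{1/2}\bigl(\E_S\E_j \Var(X_j\mid X_S)\bigr)^{1/2} \le \sqrt{(\Phi_s-\Phi_{s+1})\,\Phi_0} \, .
\]

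\textbf{Step 3: telescoping and pigeonhole.} Summing over $s=0,\dots,t-1$ gives $\sum_s (\Phi_s-\Phi_{s+1}) \le \Phi_0$. Hence some $s \le t-1$ has $\Phi_s - \Phi_{s+1} \le \Phi_0/t$. For that $s$, Step 2 yields
\[
\E_S\E_{i,j}|\Cov(X_i,X_j\mid X_S)| \le \Phi_0/\sqrt t \, ,
\]
which is the claimed bound.

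\textbf{Main obstacle.} The main point of care is Step 1. Conditional covariances may be taken with respect to arbitrary (not Gaussian) distributions, and the linear-predictor lower bound must be applied pointwise in the values of $X_S$ before averaging. Division by zero variance must be handled by the convention that the term is $0$, since then the covariance vanishes too. A secondary technical point is sampling without replacement: if $j \in S$, then $\Cov(X_i,X_j\mid X_S)=0$, so these terms do not hurt the bound.
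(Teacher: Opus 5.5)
Your proposal is correct and follows essentially the same route as the paper: the potential $\Phi(s)=\E_S\E_i\Var(X_i\mid X_S)$, a one-step decrease of at least $\Cov^2/\Var$, then Cauchy--Schwarz and telescoping. The one cosmetic difference is the one-step bound: the paper exhibits the gradient-step predictor $\E u+\eta(v-\E v)$ and gets a constant $1/4$, while you use the law of total variance plus Cauchy--Schwarz, which is the same linear-predictor idea with the optimal step size. Like the paper, you get an $s$ that depends on the distribution (a uniformly random $s<t$ also works), and your handling of indices in $S$ and of zero-variance terms is fine.
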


Theorem~\ref{thm:pinning-intro-variance} breaks down in the face of heavy-tailed $X_1,\ldots,X_n$ with large or infinite variance.
Many probabilistic phenomena undergo a qualitative change when passing from finite-variance to infinite-variance settings -- e.g. central limit theorems no longer apply to the average of independent random variables with infinite variance; thus \emph{a priori} it could be that finite variance is needed for the ``conditioning tames pairwise correlations'' phenomenon expressed by Theorem~\ref{thm:pinning-intro-variance}.
We show the opposite is true by proving a pinning lemma which applies to heavy-tailed random vectors.

\begin{theorem}[Pinning Lemma, Heavy-Tailed Version]
\label{thm:pinning-intro-heavy}
For every $p \in \N$ and $t \in \N$ there is $s \leq t$ such that if $S \subseteq [n]$ is a random set of size $s$ and $(X_1,\ldots,X_n)$ are any jointly distributed random variables,
\[
  \E_S \E_{i,j\sim [n]} | \Cov(X_i,X_j \, | \, X_S) | \leq \frac 1 {t^{\Omega(1)}} \cdot
   \Brac{ \E_{X} \Paren{ \E_{i \sim [n]} |X_i|^{\frac {p+1}{p}} }^2}^{\frac p {p+1}}
\]
\end{theorem}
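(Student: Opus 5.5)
Since the paper advertises a mirror-descent view of pinning, I will prove the statement with a potential-function argument in which the potential is a Bregman-type divergence adapted to the $\ell_{(p+1)/p}$ geometry rather than to variance. Write $r = \frac{p+1}{p}$, so $r^* = p+1$. Let $\Phi_s = \E_{S,\,|S|=s}\, \E_{i}\, \E\bigl[\,|\E[X_i \mid X_S]|^{?}\bigr]$; more concretely, I will use the convex potential $\psi(y) = \frac{1}{r}\E_i |y_i|^{r}$ applied to conditional means. Its convex conjugate is $\psi^*(z) = \frac{1}{p+1}\E_i|z_i|^{p+1}$. The key identity to aim for is that adding one uniformly random coordinate $j$ to $S$ increases $\E\,\psi^*$ of a suitable dual quantity by at least a constant times $\E_{i}\,|\Cov(X_i,X_j\mid X_S)|^{\alpha}$, up to a normalization. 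At the same time, the total potential is bounded by the right-hand side of the theorem.

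\textbf{Step 1: Reduce to a single conditioning step.} The first step is the standard telescoping argument. If $\Phi_s$ is nondecreasing in $s$ and bounded by $B$, then for some $s\le t$ we have $\Phi_{s+1}-\Phi_s \le B/t$. A random $S$ of size $s$ together with an extra random $j$ gives exactly the averaged quantity $\E_S\E_j$.

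\textbf{Step 2: Lower-bound one step's progress by pairwise correlations.} This is done via duality. For fixed $S$ and pinned values, $\E_{i,j}|\Cov(X_i,X_j)| = \E_j \E[\,\E_i \sigma_{ij}(X_i-\mu_i)\,(X_j-\mu_j)]$ with signs $\sigma_{ij}$. By H\"older with exponents $r$ and $r^*$, the covariance average is controlled by $\bigl(\E_j \E|\E[X_{\cdot}\mid X_j]-\mu|_{r}^{\,\cdot}\bigr)$. This is the increment of the conditional-mean potential $\E\,\|\E[X\mid X_{S\cup j}]\|$, measured through the uniformly convex function $\|\cdot\|_{r}^{2}$ (I use the squared norm because $\ell_r$ with $r\in(1,2]$ is $2$-uniformly convex). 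This uniform convexity converts the increment in the potential into a lower bound of the form $(\text{increment})\gtrsim(\text{correlation})^{2}/(\text{scale})$. That is where the $t^{-\Omega(1)}$ rate, rather than $1/t$, comes from.

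\textbf{Step 3: Bound the total potential.} Let $M_S = \E[X\mid X_S]$ be the martingale of conditional means indexed by growing $S$. Define $\Phi_s = \E\,\|M_S\|_{r}^{2}$, where $\|y\|_r = (\E_i|y_i|^{r})^{1/r}$. By Jensen this is nondecreasing along the martingale, and it is bounded by $\E\|X\|_r^{2}$, which by a further Jensen is at most $[\E(\E_i|X_i|^{r})^{2}]^{p/(p+1)}$. This matches the right-hand side exactly: $(\E_i|X_i|^r)^{2/r}$ is $\|X\|_r^2$, and the outer exponent comes from bounding $\E\|X\|_r^2 \le (\E\|X\|_r^{2r})^{1/r}$.

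\textbf{Main obstacle.} The hard step is making Step 2 rigorous. The increment from conditioning on a single random $X_j$ has to be lower-bounded in terms of $\E_{i}|\Cov(X_i,X_j)|$, and the variable $X_j$ itself is heavy-tailed. The plan is to use the $2$-uniform convexity (Clarkson/Ball–Carlen–Lieb) inequality for $\|\cdot\|_r^2$ to get $\Phi_{s+1}-\Phi_s \gtrsim (r-1)\,\E\|M_{S\cup j}-M_S\|_r^2$. Then I test $M_{S\cup j}-M_S$ against the dual vector $\sigma_{\cdot j}\,\mathrm{sgn}$ truncations of $X_j-\mu_j$. Heavy tails of $X_j$ force truncation at a level $\tau$: the part of the covariance coming from the tail $|X_j|>\tau$ is bounded via H\"older by the $r$-th moment, while the bounded part is handled by the uniform-convexity increment. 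Optimizing $\tau$ against the $B/t$ bound from Step 1 should produce the polynomial rate $t^{-\Omega(1)}$. If this truncation argument loses too much, my fallback is to replace $X_j$ by its sign or by a quantile discretization, which reduces the problem to the discrete pinning lemma for that piece.
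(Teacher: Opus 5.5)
Your proposal works, and it takes a genuinely different route from both of the paper's arguments. One condition: build everything on the Step~3 potential $\Phi_s=\E_S\E\|\E[X\mid X_S]\|_r^2$, and discard the separable $\frac1r\E_i|y_i|^r$ from your first paragraph.

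Your Step~2 plan then closes as follows. Put $D_j=\E[X\mid X_S,X_j]-\E[X\mid X_S]$, let $Y_j$ be $X_j$ clipped at level $L$, and set $R_j=X_j-Y_j$, $A=\E_i|X_i|^r$, $M=\E A^2$.

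\textbf{Progress bound.} The function $\|\cdot\|_r^2$ is uniformly convex with modulus $r-1=1/p$, and the linear term vanishes by the martingale property. This gives $\E_S\E_j\E\|D_j\|_r^2\le p(\Phi_{s+1}-\Phi_s)\le pM^{1/r}/t$ for the telescoped $s$.

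\textbf{Clipped part.} Since $Y_j$ is a function of $X_j$, we have $\Cov(X_i,Y_j\mid X_S)=\E[(D_j)_i(Y_j-\E[Y_j\mid X_S])\mid X_S]$. Hence $\E_i|\Cov(X_i,Y_j\mid X_S)|\le\E[\|D_j\|_r\,|Y_j-\E[Y_j\mid X_S]|\mid X_S]$. Cauchy--Schwarz together with $\E_j\E Y_j^2\le L^{2-r}\E A\le L^{2-r}M^{1/2}$ bounds this part by $\sqrt p\,t^{-1/2}L^{(2-r)/2}M^{1/4+1/(2r)}$.

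\textbf{Tail part.} Here the pairwise moment is indispensable; single-coordinate $r$-th moments cannot suffice, by Section~\ref{sec:pairs-are-necessary}. So your phrase ``bounded via H\"older by the $r$-th moment'' should be made precise as follows. We have $\E_{i,j}|\Cov(X_i,R_j\mid X_S)|\le\E[UV\mid X_S]+\E[U\mid X_S]\,\E[V\mid X_S]$, where $U=\E_i|X_i|\le A^{1/r}$ and $V=\E_j|R_j|\le L^{1-r}A$. Its expectation is at most $2(\E U^2)^{1/2}(\E V^2)^{1/2}\le 2L^{1-r}M^{(r+1)/(2r)}$.

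\textbf{Choice of $L$.} Take $L=M^{1/(2r)}t^{1/r}$; this is allowed after $s$ is chosen, since $\Phi$ does not involve $L$. The two parts combine to $O(\sqrt p)\,t^{-1/(p+1)}M^{1/r}$, which is the theorem. Your sign/quantile fallback is not needed.

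\textbf{Comparison with the paper.} Your discarded separable potential is, up to constants, exactly the paper's Bregman-uncertainty potential $\sum_i\phi_{X_i}$ built from $|x|^r$. Because $|x|^r$ is not strongly convex, the paper's mirror-descent proof cannot use the textbook progress inequality. It instead needs a hand-built predictor: the rescaled mirror step in Lemma~\ref{lem:general-bregman-p-uncertainty-decrease}, supported by Lemmas~\ref{lem:separate-v-u-decrease} and~\ref{lem:bregman-update-divergence-bound}. The truncation proof of Section~\ref{sec:truncation-proof} instead clips every coordinate and applies the variance pinning lemma to the clipped vector, whose potential $\E_i\Var(Y_i)\le L^{2-r}M^{1/2}$ grows with $L$.

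Your $2$-uniformly convex regularizer $\|\cdot\|_r^2$, the standard $p$-norm mirror-descent regularizer, gives the clean progress bound directly. It also keeps the potential bounded by $M^{1/r}$ independently of $L$, so truncation is paid only on the test coordinate $X_j$. A side benefit is the exponent $1/(p+1)$, compared with $1/(2p)$ in the truncation proof.

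What the paper's separable potential buys is that it is a sum of one-coordinate uncertainties. That is the form its potential-alignment machinery consumes for the weighted, bipartite and pseudo-distribution versions. Your non-separable norm potential would have to be re-derived, e.g.\ with weighted norms, to serve there.
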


Theorem~\ref{thm:pinning-intro-heavy} can be much stronger than Theorem~\ref{thm:pinning-intro-variance} when the average pairwise product $\E |X_i|^{\frac {p+1}{p}} |X_j|^{\frac {p+1} p}$ is much smaller than the average $\Var(X_i)^{\frac{p+1}{p}}$.
We will later prove a version of Theorem~\ref{thm:pinning-intro-heavy} which applies even when each $X_i$ could have infinite variance, roughly speaking by removing the ``$i = j$'' terms from both sides of the inequality.
We also prove a version which allows for a weighted average of the pairs $(i,j) \in [n]^2$.
Interestingly, the right-hand side of the heavy-tailed bound depends on pairwise moments, while the bounded-variance version depends on an average of the one-wise moments $\E_{i \sim [n]} \Var(X_i)$.
A simple example (see \cref{sec:pairs-are-necessary}) shows that this dependence on the pairwise moments cannot, in general, be removed in the heavy-tailed setting.
High-dimensional heavy-tailed random vectors are ubiquitous, arising everywhere from quantum gravity \cite{rhodes2014gaussian,kahane1985chaos} to mathematical finance \cite{francq2019garch,basrak2002regular,engle2002dynamic,bollerslev1986generalized} to random matrix theory and spin glasses \cite{cizeau1993mean,janzen2010levy,jagannath2024existence,soshnikov2004poisson,arous2008spectrum}.
Thus, we anticipate uses for Theorem~\ref{thm:pinning-intro-heavy} beyond the rounding algorithms in our paper.

\paragraph{AI Disclosure, Centaur Math, and a Short History of this Paper.}
\emph{Centaur math} refers to math accomplished by a hybrid of human and artificial intelligence.
Like chess before them, we believe that math and theoretical computer science are entering their centaur phase.
In addition to our algorithmic results, which remain our main contribution, we present this paper as an artifact of interest in this transformation of our field.

We wrote the initial version of this paper in late 2025 and January/February of 2026 with minimal AI assistance.
At the time, we were able to obtain a PTAS for entrywise rank-one approximation, as well as our additive approximation for $p \rightarrow q$ norm; we felt this to be already more than sufficiently novel to merit publication.
Then two things happened in March 2026:
\begin{enumerate}
    \item In the course of testing generative AI tools for math, one of the authors prompted GPT 5.4 Pro to formulate and prove the heavy-tailed pinning lemma, given only a high-level description of the lemma statement.
    With only this prompt, GPT 5.4 Pro provided a correct formulation and proof; even more remarkably, it provided a much simpler proof than the authors had, one that didn't rely on any connection to mirror descent but instead a simple reduction to the bounded-variance case.
    We present this proof in Section~\ref{sec:truncation-proof}.

    \item Over several rounds of interaction and given access to the LaTeX files for our complete writeup of the rank-one algorithm, GPT 5.4 Pro formulated and proved the key matrix-norm inequalities we needed to generalize our result to rank $k > 1$, yielding Theorem~\ref{thm:lra-intro}.
\end{enumerate}

We have chosen to preserve the ``pre-AI'' versions of both the heavy-tailed pinning lemma and our low-rank approximation result, for two reasons.
First, we feel that our unification of pinning lemmas via mirror descent is independently interesting, even though the heavy-tailed pinning lemma can be proved without it.
Second, the main ideas for the rank-$k$ LRA algorithm, with the exception of the key matrix inequality we discuss below, can be readily understood from the rank $1$ case, which we feel is somewhat easier to read.

\section{Technical Overview}
\label{sec:overview}
We turn to an overview of our techniques.
We start by reviewing the route from rounding convex programming hierarchies to pinning lemmas, which is by now standard \cite{Barak2011,GuruswamiSinop2012FasterSDP,Raghavendra2012,GuruswamiSinop2013RoundingLasserre,YoshidaZhou2014SAForDenseCSPs,BarakKelnerSteurer2014RoundingSoS,Risteski2016PartitionFunctionsHierarchies,JainKoehlerRisteski2018MeanFieldCorrelationRounding,AlevJeronimoTulsiani2019HDXSoS,HopkinsSchrammTrevisan2020SubexpSA,RaghavendraYau2020ListDecodableLearningSoS,RaghavendraYau2020ListDecodableSubspaceRecoverySoS,CohenAddadLeeNewman2022CorrelationClusteringSA,CohenAddadLeeLiNewman2023PreclusteringSA,JeronimoSrivastavaTulsiani2023SoSListDecoding,AnandLeeMazzaliSharma2025MinCSPCompleteII_SA,BakshiCohenAddadJayaramHopkinsLattanzi2025SheraliAdamsEmbeddings,anderson2025additiveapproximationschemeslowdimensional}.
We use a sketch of our result for $p\rightarrow q$ norms to illustrate the importance of the heavy-tailed pinning lemma, and then discuss our PTAS for low-rank approximation (\cref{sec:ptas-intro}) and the proof of the heavy-tailed pinning lemma (\cref{sec:overview-pinning}).

\subsection{Sum-of-Squares Relaxation and Global Correlation Rounding for $p \rightarrow q$ Norms}
As a warm-up, we start with our additive approximation algorithm for $p \rightarrow q$ norms.
For concreteness, we focus on $4/3 \rightarrow 4$.
Our algorithms for more general $\|A\|_{p \rightarrow q}$ follow the exact same template.
The first step is to describe the convex relaxation and rounding scheme we use, both of which are fairly standard.

\paragraph{Setting Up the Convex Relaxation}
We can phrase $\|A\|_{4/3 \rightarrow 4}$ as the following bilinear optimization problem over the $\ell_{4/3}$ unit ball:
\[
\|A\|_{4/3 \rightarrow 4} = \max_{\|x\|_{4/3} = 1, \|y\|_{4/3} = 1} x^\top A y
\]
It is not too hard to show that there exists a (not too large) discrete subset $\calS$ of $\mathbb{R}$ such that there are $\ell_{4/3}$ unit vectors $x,y$ which satisfy $x^\top A y \geq \|A\|_{4/3 \rightarrow 4} - \eps \cdot \sum_{i \leq n} \|A_i\|_4^4$ and have all their entries in $\calS$.
Thus, we can restrict attention to $x,y$ with entries in $\calS$.
This lets us formulate $\max x^\top A y$ as a discrete assignment problem over $\calS^{n} \times \calS^d$.

The first step of our algorithm is to solve the degree $O(\eps^{-4})$ Sum-of-Squares (SoS) semidefinite programming relaxation of this discrete assignment problem.\footnote{Our algorithms would also work using a variant of the Sherali-Adams linear programming relaxation; we do not need the full power of SoS. But since the SoS notation has become standard, we will follow the SoS conventions.}
A solution to the degree-$\ell$ SoS relaxation is a \emph{pseudodistribution}.
For our purposes, a pseudodistribution is a collection of \emph{local distributions} $\{ \mu_T \}_{T \subseteq [n+d], |T| = \ell}$ on $\calS^\ell$.
These local distributions are \emph{consistent} in that $\mu_T$ and $\mu_{T'}$ must induce the same marginal distribution on $T \cap T'$.
Such a collection of local distributions induces a \emph{pseudoexpectation}, which is a linear operator 
\[
\pE \, : \, (\text{linear combinations of functions depending on $\leq \ell$ variables}) \rightarrow \R \, ,
\]
by defining, for each function $f(x,y)$ depending on at most $\ell$ variables $A \subseteq [n] \cup [d]$,
\[
\pE f(x,y) = \E_{(x_A,y_A) \sim \mu_{A}} f(x_A,y_A) \, ,
\]
and extending $\pE$ linearly.
We impose additional (convex) constraints on the local distributions to ensure that $\pE \|x\|_{4/3}^{4/3} = \pE \|y\|_{4/3}^{4/3} = 1$.
We can write the objective value of a solution to the relaxation as $\pE \sum_{i \leq n,j\leq d} A_{ij} x_i y_j$, which is a linear function of $\{\mu_T\}$, and hence we can maximize it over solutions.
Our next goal is to round $\{ \mu_T \}$ to a pair of vectors $\hat{x},\hat{y}$ without losing too much in objective value\footnote{It suffices to only consider the objective value; in fact, we can slightly modify the rounding procedure to ensure we produce vectors with norm at most $1$.}.

\paragraph{Global Correlation Rounding and Pinning Lemmas.}
Two key algorithmic operations comprise the global correlation rounding scheme.
First is \emph{conditioning:} choosing an index $i \in [n]$ or $j \in [d]$, sampling $x_i$ or $y_j$ from its marginal distribution, and replacing all the local distributions with their conditioned versions, which yields a new degree $\ell-1$ pseudodistribution.
Second is independent sampling: we can always draw each $x_i$ and $y_j$ from their one-wise marginals, though this will not preserve any correlations across coordinates\footnote{Note that independently sampling (in expectation) produces the same objective value as outputting the expectation of each coordinate on bilinear objectives. We utilize expectation rounding in our final algorithm to ensure that we produce vectors with bounded norm.}.
\begin{mdframed}
  \begin{algorithm}[Global Correlation Rounding \cite{Barak2011,Raghavendra2012} for $4/3\rightarrow 4$ Norm]
    \label{algo:lra-additive}\mbox{}
    \begin{description}
    \item[Input:] Pseudo-distribution $\pE$ of degree at least $\ell$ over variables $\{x_i\}_{i \in [n]} \cup \{y_j\}_{j \in[d]}$.
    
    \item[Operations:]\mbox{}
    \begin{enumerate}
        \item For every subset $\calT \subseteq [n] \cup [d]$, $\vert \calT \vert \leq \ell-2$ of variables:
        \begin{enumerate}
        \item Sample the values of $(x_{\calT},y_{\calT})$ from their joint marginal local distribution and condition on their values.
        \item Sample $\hat{x}_i, \hat{y}_j$ from the one-wise conditional marginals.
        \end{enumerate}
        \item Output the $\hat{x},\hat{y}$ achieving the best objective value.
    \end{enumerate}
    \end{description}
  \end{algorithm}
\end{mdframed}

By adding appropriate constraints to the convex relaxation, we can ensure that conditioning preserves objective value.
So the main thing we need to show is that at least one of the conditioned pseudodistributions satisfies the following ``small rounding loss'' inequality.
\[
\Abs{\text{SoS value} - \text{expected rounded value}} = \Abs{\sum_{i \in [n], j\in [d]} A_{ij} \cdot (\pE x_i y_j - \pE x_i \pE y_j)} \leq \eps \cdot \Paren{\sum_i \|A_i\|_4^4}^{1/4} \, .
\]

Now we can see the use of a pinning lemma: $\sum_{ij} A_{ij} (\pE x_i y_j - \pE x_i \pE y_j) \leq \sum_{ij} |A_{ij}| \cdot |\Cov(x_i,y_j)|$ is exactly a weighted sum of pairwise covariances.
(We abuse notation and use $\Cov(x_i,y_j)$ to denote the covariance of the variables $x_i,y_j$ in their $2$-wise local distribution.)
The underlying (pseudo)random variables $x_1,\ldots,x_n,y_1,\ldots,y_d$ might not have variances we can control -- our relaxation promises us nothing about $\Var(x_i), \Var(y_j)$.\footnote{Strictly speaking, these variances are bounded in terms of the largest-magnitude element of our discretization $\calS$. But using this bound would cost us a $\poly(n,d)$ factor, which we cannot afford.}
But we have a tight handle on the $4/3$ moments, because $\pE \|x\|_{4/3}^{4/3} \|y\|_{4/3}^{4/3} = \sum_{ij} \pE |x_i|^{4/3} |y_j|^{4/3} = 1$ via the constraints on our relaxation.
Thus, we are in the setting of a heavy-tailed pinning lemma!
Our weighted, bipartite\footnote{That is, allowing the two disjoint sets of variables $\{x_i\},\{y_j\}$} version of Theorem~\ref{thm:pinning-intro-heavy}, which also applies to pseudodistributions, shows that the small rounding loss inequality above holds for at least one choice of pinning subset $\calT$.

\subsection{PTAS for Entrywise Low-Rank Approximation}
\label{sec:ptas-intro}
We turn to Theorem~\ref{thm:lra-intro}.
We focus on $p=4$ for simplicity; the ideas easily extend to any even $p$.

\paragraph{Additive Approximation, Rank-One Case.}
We start by describing how to get an additive approximation algorithm to $\ell_4$ LRA with additive error $\eps \cdot \|A\|_4^4$. Even for this guarantee, the best prior algorithm \cite{anderson2025additiveapproximationschemeslowdimensional} runs only in quasipolynomial time.
As with $p \rightarrow q$ norm, we solve a degree $\poly(1/\eps)$ SoS relaxation obtained by discretizing the allowed set of values assumed by $u$ and $v$.
We show that there exists an optimal solution $u,v$ obeying $\|u\|_4^4, \|v\|_4^4 \leq O(\|A\|_4^2)$, and hence our convex program can enforce that $\pE \|u\|_4^4, \pE \|v\|_4^4 \leq O(\|A\|_4^2)$.

We once again deploy global correlation rounding.
To see the ``small rounding loss'' inequality we need in this setting, it helps to expand the objective function as a polynomial:
\[ \sum_{i \in [n], j \in [d]} (A_{ij} - u_i v_j)^4 = A_{ij}^4 - 4 A_{ij}^3 u_i v_j + 6 A_{ij}^2 u_i^2 v_j^2 - 4 A_{ij} u_i^3 v_j^3 + u_i^4 v_j^4\,.\]

We can break the loss in objective value incurred by independent rounding on a conditioned pseudodistribution $\pE$ into four terms:
\begin{align*}
& \Abs{\sum_{ij} 4 A_{ij}^3 (\pE u_i v_j - \pE u_i \pE v_j)} \, , \quad \Abs{\sum_{ij} 6 A_{ij}^2 (\pE u_i^2 v_j^2 - \pE u_i^2 \pE v_j^2)} \, , \quad \\
& \Abs{\sum_{ij} 4 A_{ij} (\pE u_i^3 v_j^3 - \pE u_i^3 \pE v_j^3)} \, , \quad 
\Abs{\sum_{ij} \pE u_i^4 v_j^4 - \pE u_i^4 \pE v_j^4} \, .
\end{align*}
We can show that one of the conditioned pseudodistributions discovered by global correlation rounding after conditioning on $\poly(1/\eps)$ coordinates makes the first three of these terms less than $\eps \cdot \|A\|_4^4$ using pinning lemmas.
(Again, variants of Theorem~\ref{thm:pinning-intro-heavy} and Theorem~\ref{thm:pinning-intro-variance} which allow for weighted averages of covariances, and two sets of variables $u$ and $v$.)
Crucially, the term involving $\pE u_i^3 v_j^3 - \pE u_i^3 \pE v_j^3 = \Cov(u_i^3,v_j^3)$ demands a heavy-tailed pinning lemma. We don't control the variances $\Var(u_i^3),\Var(v_j^3)$, so a variance-style pinning lemma won't work, but we do control the $4/3$ moments, since $\E |u_i^3|^{4/3} = \E u_i^4$.

For the last term, involving $\Cov(u_i^4,v_j^4)$, we are able to include constraints in our relaxation which ensure that no rounding loss is incurred.
This term factors as $\pE \|u\|_4^4 \|v\|_4^4 - \pE \|u\|_4^4 \pE \|v\|_4^4$, and we can ensure that this difference is zero, following \cite{anderson2025additiveapproximationschemeslowdimensional}. Anderson, Bakshi and Hopkins~\cite{anderson2025additiveapproximationschemeslowdimensional} follow the same template as above to obtain their additive approximation, but, lacking a heavy-tailed pinning lemma, they employ a pinning lemma suited to discrete random variables in an \emph{ad hoc} manner; it is unclear how to improve their approach from quasipolynomial to polynomial time except via the heavy-tailed pinning lemma.

\paragraph{From Additive to Multiplicative Approximation, Rank-One Case}
We now describe the main ideas required to extend the additive approximation described above to a PTAS. 
The key idea is to ``warm start'' our additive approximation algorithm with a constant-factor optimal rank-one approximation.
Using e.g. \cite{ban_etal_soda2019_ptas_lp_lra}, in polynomial time we can find $x,y$ such that $\|A -xy^\top\|_4^4 \leq 3.1 \cdot \min_{u,v} \|A-uv^\top\|_4^4$.\footnote{Here $3.1$ stands in for any constant greater than $3$, since \cite{ban_etal_soda2019_ptas_lp_lra} gives a $3+\eps$-approximation algorithm.}
Then we search for a rank-one approximation to $A$ only in a small $\ell_4$ ball around $(x,y)$.
To enable this, we prove a structure lemma, which shows that some optimal rank-one approximation to $A$ must exist in a small ball around $(x,y)$.

\begin{lemma}[Nearby Optimal Solutions for Rank-One Approximation]
\label{lem:strong-ident-intro}
    Let $p$ be an even integer, $x,u \in \mathbb{R}^n$, and $y,v \in \mathbb{R}^m$. Then
    \[ \min_c \left(\norm{x+cu}_p^p \norm{y-v/c}_p^p + \norm{x-cu}_p^p \norm{y+v/c}_p^p\right) \leq O(1) \cdot \norm{uv^\top - xy^\top}_p^p\,.\]
\end{lemma}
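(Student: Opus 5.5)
The plan is to rewrite the left-hand side as the $p$-th powers of two matrices $M\pm D$, where $M = xy^\top - uv^\top$ is the matrix on the right-hand side. Since $\|ab^\top\|_p^p = \|a\|_p^p\|b\|_p^p$, we have
\[
\|x+cu\|_p^p\|y-v/c\|_p^p + \|x-cu\|_p^p\|y+v/c\|_p^p = \|(x+cu)(y-v/c)^\top\|_p^p + \|(x-cu)(y+v/c)^\top\|_p^p .
\]
Expanding, $(x\pm cu)(y\mp v/c)^\top = M \pm D_c$ with $D_c = c\,uy^\top - \tfrac1c\, xv^\top$. By convexity of $t\mapsto |t|^p$ entrywise, the left-hand side is at most $2^{p}(\|M\|_p^p + \|D_c\|_p^p)$. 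The lemma therefore reduces to finding some $c$ with $\|D_c\|_p \le O(1)\cdot\|M\|_p$. Note this is consistent with the degenerate case: if $M=0$ then $u=\lambda x$, $v=y/\lambda$, and $c=1/\lambda$ gives $D_c=0$. All norms below are entrywise $\ell_p$ (on vectors, plain $\ell_p$).

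The key intermediate bound is a lower bound on $\|M\|_p$ via an $\ell_p$-projection decomposition. Let $\lambda$ minimize $\|u-\lambda x\|_p$, and set $w = u-\lambda x$, so that $\|w\|_p = \mathrm{dist}_p(u,\mathrm{span}\,x)$. Then
\[
M = x\,a^\top - w\,v^\top, \qquad a = y-\lambda v .
\]
Looking column by column, $\|M\|_p^p = \sum_j \|a_j x - v_j w\|_p^p$. Each column satisfies $\|a_j x - v_j w\|_p \ge |v_j|\,\|w\|_p$ by minimality of the projection, which gives $\|M\|_p \ge \|v\|_p\|w\|_p$. By the triangle inequality, $|a_j|\,\|x\|_p \le \|a_jx - v_jw\|_p + |v_j|\,\|w\|_p$, which gives $\|x\|_p\|a\|_p \le 2\|M\|_p$. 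So $\|M\|_p \gtrsim \|x\|_p\|a\|_p + \|w\|_p\|v\|_p$.

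Now take $c = 1/\lambda$. A direct computation gives
\[
D_{1/\lambda} = x a^\top + \tfrac1\lambda\, w y^\top .
\]
Substituting $y = a+\lambda v$,
\[
\|D_{1/\lambda}\|_p \le \|x\|_p\|a\|_p + \|w\|_p\|v\|_p + \tfrac{1}{|\lambda|}\|w\|_p\|a\|_p .
\]
The first two terms are $O(\|M\|_p)$ by the previous paragraph. The last term is also $O(\|M\|_p)$ whenever $\|w\|_p \lesssim |\lambda|\,\|x\|_p$, i.e., whenever $u$ is dominated by its component along $x$.

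The main obstacle is the complementary regime, where $\|w\|_p \gg |\lambda|\,\|x\|_p$ (including $\lambda = 0$), i.e., $u$ is far from the span of $x$. Here I would abandon $c=1/\lambda$ and instead balance crudely: choose $c$ so that $\|c\,uy^\top\|_p = \|\tfrac1c\,xv^\top\|_p$. This gives
\[
\|D_c\|_p \le 2\sqrt{\|u\|_p\|y\|_p\|x\|_p\|v\|_p} \le \|xy^\top\|_p + \|uv^\top\|_p .
\]
It then remains to show that both $\|xy^\top\|_p$ and $\|uv^\top\|_p$ are $O(\|M\|_p)$ in this regime. In this regime $\|u\|_p \le |\lambda|\,\|x\|_p + \|w\|_p \le 2\|w\|_p$, so $\|uv^\top\|_p \le 2\|w\|_p\|v\|_p \lesssim \|M\|_p$. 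Then $\|xy^\top\|_p \le \|M\|_p + \|uv^\top\|_p$ finishes the case. If the threshold between the two regimes needs a constant slack, I would split at $\|w\|_p \le K|\lambda|\,\|x\|_p$ for a suitable absolute constant $K$, absorbing $K$ into the $O(1)$. If some residual case still fails, I would run the symmetric decomposition with the roles of $(x,u)$ and $(y,v)$ exchanged, i.e., project $v$ onto $\mathrm{span}\,y$, and take the better of the two choices of $c$.
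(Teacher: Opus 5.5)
Your proof is correct, and it takes a genuinely different route from the paper's.

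\textbf{The paper's route.} It first applies an $\ell_p$ subspace embedding to $\mathrm{span}(u,x)$ and $\mathrm{span}(v,y)$ to reduce to constant dimension. It then pays a constant to pass from $\ell_p$ to $\ell_2$ there. Finally it proves an exact Frobenius identity: choosing $c=\|u\|_2/\|x\|_2$ makes $\langle u+cx,\,u-cx\rangle=0$. So the two rank-one pieces are Frobenius-orthogonal, and their squared norms sum to exactly $\|2(uv^\top-xy^\top)\|_F^2$.

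\textbf{Your route.} You stay in $\ell_p$ throughout. Writing the pieces as $M\pm D_c$ reduces everything to $\|D_c\|_p\le O(1)\cdot\|M\|_p$. The $\ell_p$-projection $u=\lambda x+w$ gives the columnwise bounds $\|v\|_p\|w\|_p\le\|M\|_p$ and $\|x\|_p\|a\|_p\le 2\|M\|_p$, and the two-case choice of $c$ closes the argument. In fact the split already works with threshold $K=1$: both regimes give $\|D_c\|_p\le 5\|M\|_p$, hence a left-hand side of at most $2^p(1+5^p)\|M\|_p^p$. So your fallbacks (extra slack, the symmetric decomposition) are never needed.

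\textbf{What each approach buys.} Yours is a short, self-contained proof with explicit constants. It uses only Minkowski's inequality and $\|ab^\top\|_p=\|a\|_p\|b\|_p$, so it holds for every real $p\ge 1$ rather than only even $p$, and it needs no sketching fact at all. The paper's exact orthogonality identity has a mechanism that carries over directly to the rank-$k$ identifiability lemma: choose the gauge so that $(U+XM)^\top(U-XM)=0$, via a polar decomposition. Your projection-and-balance argument, by contrast, is tailored to rank one.

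\textbf{A minor caveat shared with the paper.} If $u=0$, or if $\|u\|_p\|y\|_p$ or $\|x\|_p\|v\|_p$ vanishes, your $c=1/\lambda$ or balancing $c$ is undefined and must be replaced by a limit $c\to\infty$ or $c\to 0$. When $M=0$ degenerately (e.g.\ $x=0$, $v=0$, $uy^\top\neq 0$), the minimum is not attained, so the statement should be read with $\inf_c$. Your side remark that $M=0$ forces $u=\lambda x$, $v=y/\lambda$ also fails in such cases.
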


To interpret Lemma~\ref{lem:strong-ident-intro}, note that if $(x,y)$ has objective value $O(\OPT)$ and $(u,v)$ is optimal, then $\|xy^\top - uv^\top\|_p^p \leq O(\OPT)$ by triangle inequality.
Furthermore, the solution $(cu, v/c)$ has the same objective value as $(u,v)$.
So at a high level, Lemma~\ref{lem:strong-ident-intro} tells us there exists an optimal solution $(cu,v/c)$ that is close to $(x,y)$ in $\ell_p$ norm.

The quantitative bound on the distance between $(cu,v/c)$ and $(x,y)$ implied by Lemma~\ref{lem:strong-ident-intro} is stronger than we might expect in a not-obvious way.
A first (but incorrect) interpretation of Lemma~\ref{lem:strong-ident-intro} might be that all of $\|x + cu\|_p^p, \|y - v/c\|_p^p,\|x-cu\|_p^p,\|y + v/c\|_p^p$ are small, perhaps of order $\sqrt{\OPT}$.
But this is actually impossible (assuming the optimal error is sufficiently small) -- $\|x+cu\|_p^p$ and $\|x -cu\|_p^p$ cannot simultaneously be very small, and the same goes for $\|y -v/c\|_p^p,\|y+v/c\|_p^p$.
WLOG, imagine that $\|x+cu\|$ and $\|y+v/c\|$ are the ``large'' ones, much larger than $\sqrt{\OPT}$.
Then $\|y-v/c\|_p^p$ and $\|x - cu\|_p^p$ must be much less than $\sqrt{\OPT}$.
This improvement over the naive interpretation of Lemma~\ref{lem:strong-ident-intro} turns out to be crucial to our PTAS.

With Lemma~\ref{lem:strong-ident-intro} in hand, we can add constraints to our convex program ensuring that the norms of the residual vectors, $\pE \|u-x\|_4^4$ and $\pE \|v-y\|_4^4$ are much smaller than $\pE \|u\|_4^4, \pE \|v\|_4^4$, scaling (roughly speaking) with $\OPT$ instead of with $\|A\|_4^4$.
Then, at a high level, we repeat the additive approximation algorithm above, except that when we analyze the objective value loss due to rounding, we expand $\|A - uv^\top\|_4^4$ as a polynomial in the residual variables $u_i - x_i, v_j - y_j$.
We then apply pinning lemmas to these residual variables rather than $u_i,v_j$.

\paragraph{The Rank-$k$ Case.}
Most of the ideas from the rank-one case extend immediately to rank $k$.
We can still set up a convex program whose solutions are $\ell$-wise local distributions, now over $(\R^k)^\ell$ rather than $\R^\ell$, and condition on the values of individual coordinates $u_i,v_j \in \R^k$.
As we alluded to earlier, the two key ingredients (supplied by GPT 5.4 Pro) are the higher-rank version of Lemma~\ref{lem:strong-ident-intro} and a \emph{residual balancing} lemma. We first present the higher-rank structure lemma:

\begin{lemma}[Nearby Optimal Solutions for Rank-$k$ Approximation]
\label{lem:high-rank-ident-intro}
    Let $X,U \in \mathbb{R}^{n \times k}$ and $Y,V \in \mathbb{R}^{m \times k}$ be full column rank matrices. Then there exists an invertible matrix $M \in \mathbb{R}^{k \times k}$ such that
    \[ \norm{(U + XM)(V - Y\left(M^{-1}\right)^{\top})^\top}_p^p + \norm{(U - XM)(V + YM^{-1})^\top}_p^p \leq k^{O_p(1)} \norm{UV^\top - XY^\top}_p^p\,.\]
    Furthermore, if $X$, $U$, $\left(X^\top X\right)^{-1/2}$, and $\left(U^\top U\right)^{-1/2}$ have entries bounded in magnitude by $\beta$, then $M$ and $M^{-1}$ have entries bounded by $\poly(\beta \cdot nmk)$.\footnote{The second part of this lemma, regarding the bounds on the entries of $M$ and $M^{-1}$, was proved by human analysis of the GPT-provided proof of the first part of the lemma on the existence of $M$.}
\end{lemma}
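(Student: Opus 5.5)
The plan is to start from the exact algebraic identity behind the lemma. It is cleaner to use $M^{-\top}$ in both terms; the second factor in the statement should presumably read $V+Y(M^{-1})^\top$. Write $E = UV^\top - XY^\top$, $A_+ = (U+XM)(V-YM^{-\top})^\top$ and $A_- = (U-XM)(V+YM^{-\top})^\top$. Expanding gives
\[
A_+ + A_- = 2E, \qquad A_+ - A_- = 2\big(XMV^\top - UM^{-1}Y^\top\big).
\]
So it suffices to choose $M$ so that, say, $\norm{A_+}_p \leq k^{O_p(1)}\norm{E}_p$; then $\norm{A_-}_p \leq 2\norm{E}_p + \norm{A_+}_p$. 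Equivalently, we need a change of basis $M$ under which the cross term $XMV^\top - UM^{-1}Y^\top$ is controlled by $E$ itself. This is the rank-$k$ analogue of picking the scalar $c$ in Lemma~\ref{lem:strong-ident-intro}.

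First I would normalize. All quantities are invariant under $U \mapsto UG$, $V \mapsto VG^{-\top}$, and likewise for $(X,Y)$, provided we reparametrize $M$ accordingly. Rather than orthonormalize in $\ell_2$, I would put the column spaces of $X$ and $U$, and the column spaces of $Y$ and $V$, into $\ell_p$-well-conditioned form. Concretely, take Auerbach bases, or bases from $\ell_p$ Lewis weights, of the $\le 2k$-dimensional spaces $\mathrm{col}(X)+\mathrm{col}(U)$ and $\mathrm{col}(Y)+\mathrm{col}(V)$. This gives linear projections $\Pi_L,\Pi_R$ onto these spaces with $\ell_p\to\ell_p$ operator norm $k^{O(1)}$. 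Writing everything in such a basis reduces the matrices $UV^\top$, $XY^\top$ and $E$ to $2k\times 2k$ "core" matrices. Entrywise $\ell_p$ norms of matrices of the form $L\,C\,R^\top$, with $L,R$ well-conditioned, are then equivalent to $\norm{C}$ in any fixed norm up to $k^{O_p(1)}$. This is the step that replaces $n,m$ by $k$ and is where all the $k^{O_p(1)}$ losses come from. Within the core we may then freely use Frobenius norms and the SVD.

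In the core, the choice of $M$ generalizes choosing $c$ to balance the two terms. Let $C_X, C_U$ be the core coordinates of $X,U$. I would take $M$ to be a least-squares fit of $U$ by $X$, namely $M=(C_X^\top C_X)^{-1}C_X^\top C_U$ after whitening. I would fix the orientation freedom by a polar/Procrustes step together with an eigenvalue-by-eigenvalue sign choice; this is the matrix version of $c \mapsto -c$, so that along each principal direction we pick whichever of the $+$ or $-$ combination is small. Then $U - XM$ is the component of $U$ orthogonal to $\mathrm{col}(X)$ in the core. Project $E$ onto $\mathrm{col}(X)^\perp$ on the left. Since the $XY^\top$ part is killed, this shows $\norm{(U-XM)V^\top} \lesssim \norm{E}$, and symmetrically on the right for $V+YM^{-\top}$ after matching. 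The product term $A_-$ then splits into pieces each of which is a projection of $E$ composed with bounded maps. The delicate part is the cross piece $XM\,(V+YM^{-\top})^\top$: bounding it needs $M$ and $M^{-1}$ to be balanced in the sense of the rank-one choice $|c|$ equalizing scales. So I would rescale $M$ along its singular directions by the geometric-mean choice, as in Lemma~\ref{lem:strong-ident-intro}, applied direction by direction.

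The main obstacle is exactly this product bound. As in the rank-one discussion, we need the \emph{product} of a large factor and a small factor to be $O(\norm{E})$, not a $\sqrt{\cdot}$-type bound on each factor separately. In higher rank the principal directions of $X$ versus $U$ need not align with those of $Y$ versus $V$. I would try to handle this by using the identity $A_+ - A_-$ above and arguing via the SVD of the $k\times k$ core of $E$ restricted to $\mathrm{col}(X)\times\mathrm{col}(Y)$. The fallback is a block-by-block reduction to Lemma~\ref{lem:strong-ident-intro}, paying $k^{O(1)}$.

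The entry bounds on $M$ and $M^{-1}$ then follow by tracking the construction: $M$ is built from $(X^\top X)^{-1/2}$, $(U^\top U)^{-1/2}$, and SVD factors of $k\times k$ matrices whose entries are polynomially bounded by $\beta nmk$. Cramer's rule then bounds the entries of the inverses.
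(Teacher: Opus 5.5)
There is a genuine gap, and it sits exactly at what you call the main obstacle. Your outer framework is sound. The identity $A_+ + A_- = 2E$ is also the paper's starting point. Your reading $V+Y(M^{-1})^\top$ of the second factor matches the paper's own proof. Your reduction to a $2k\times 2k$ core via Auerbach or Lewis-weight bases is a valid deterministic substitute for the paper's $\ell_p$ subspace-embedding sketch, at a cost of $k^{O_p(1)}$.

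The problem is the choice of $M$ in the core. The one construction you actually specify fails. The least-squares choice $M=(X^\top X)^{-1}X^\top U$ can be nearly singular and badly unbalanced. For example, take $k=1$, $x=e_1$, $u=e_2+\delta e_1$. Then $M=\delta$ and $\|A_+\|_F\approx \|y\|/\delta\gg \|E\|_F$. Once you rotate or rescale $M$ to repair this, you lose the property you used to bound $(U-XM)V^\top$, namely that $U-XM$ is the projection of $U$ onto $\mathrm{col}(X)^\perp$. The cross term then remains unbounded. The block-by-block fallback to the rank-one lemma is also unjustified: when the principal directions of $(X,U)$ and $(Y,V)$ are misaligned, the cross terms between blocks do not vanish.

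The missing idea is to make $A_+$ and $A_-$ Frobenius-orthogonal rather than bounding either one directly. Since
\[
\langle A_+,A_-\rangle_F=\mathrm{tr}\big((U+XM)^\top(U-XM)\,(V+YM^{-\top})^\top(V-YM^{-\top})\big),
\]
it suffices to make the $k\times k$ matrix $(U+XM)^\top(U-XM)=U^\top U-M^\top X^\top XM+M^\top X^\top U-U^\top XM$ vanish. This condition involves only the left factors, so the misalignment between $(X,U)$ and $(Y,V)$ never enters.

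Set $\widehat X=X(X^\top X)^{-1/2}$ and $\widehat U=U(U^\top U)^{-1/2}$, and let $\widehat X^\top\widehat U=QH$ be a polar decomposition. Take $M=(X^\top X)^{-1/2}Q(U^\top U)^{1/2}$. Then $M^\top X^\top XM=U^\top U$, and $U^\top XM=(U^\top U)^{1/2}H(U^\top U)^{1/2}$ is symmetric, so the matrix is zero. Pythagoras then gives $\|A_+\|_F^2+\|A_-\|_F^2=\|2E\|_F^2$.

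Your whitening and Procrustes ingredients point in this direction, but you did not identify this target. The least-squares step, the sign choices, and the direction-by-direction geometric-mean rescaling are unnecessary. The rescaling would in general destroy the Gram matching $M^\top X^\top XM=U^\top U$.

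This also settles the entry bounds. $M^{-1}=(X^\top X)^{1/2}Q^\top(U^\top U)^{-1/2}$ is explicit, and $Q$ is orthogonal. Your appeal to Cramer's rule would instead need a lower bound on $|\det M|$. You do not supply one, and none exists for the least-squares $M$.
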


Note that, unlike in~\cref{lem:strong-ident-intro}, \cref{lem:high-rank-ident-intro} only bounds the norm of the product matrix of the residuals (rather than the product of the norms). Thus, it is not immediately clear whether~\cref{lem:high-rank-ident-intro} is enough to analyze the rounding error on the rank $k$ objective. However, the following \emph{residual balancing} lemma turns out to be sufficient\footnote{We will also need in the analysis that there exist $N_1, N_2$ satisfying the inequalities below, which are also bounded in magnitude. The proof of this fact was generated by the authors from the proof of~\cref{lem:residual-balancing} without GPT.}.

\begin{lemma}[Residual Balancing (Informal)]
\label{lem:residual-balancing}
    Let $A \in \mathbb{R}^{n \times m}$ be a matrix and let $X \in \mathbb{R}^{n \times k},Y \in \mathbb{R}^{m \times k}$ be matrices such that $\norm{A- XY^\top}_p^p \leq \OPT$. Let $B$ be a rank $k$ matrix such that $\norm{A - B}_p^p \leq \OPT$. Then there exists a factorization $U \in \mathbb{R}^{n \times k}$, $V \in \mathbb{R}^{m \times k}$ such that $B = UV^\top$ and invertible matrices $N_1, N_2 \in \mathbb{R}^{k \times k}$ such that
    \begin{align*}
        \norm{(X+U)N_1^{-1}}_p^p &\leq O_{p,k}(1) \cdot \sqrt{OPT}  &\norm{(Y-V)N_1^\top}_p^p &\leq O_{p,k}(1) \cdot \sqrt{OPT} \\
        \norm{(X-U)N_2^{-1}}_p^p &\leq O_{p,k}(1) \cdot \sqrt{OPT} &\norm{(Y+V)N_2^\top}_p^p &\leq O_{p,k}(1) \cdot \sqrt{OPT}\,.
    \end{align*}
\end{lemma}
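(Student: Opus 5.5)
Since $\|A-XY^\top\|_p^p\le\OPT$ and $\|A-B\|_p^p\le\OPT$, the triangle inequality for the entrywise $\ell_p$ norm gives $\|XY^\top-B\|_p^p\le 2^{p}\OPT$. Factor $B=U_0V_0^\top$ with $U_0,V_0$ of full column rank; if $X$ or $Y$ is rank deficient, first reduce to a smaller $k$ or perturb slightly. Lemma~\ref{lem:high-rank-ident-intro} then yields an invertible $M$ with
\[
\norm{(U_0+XM)(V_0-Y(M^{-1})^\top)^\top}_p^p+\norm{(U_0-XM)(V_0+YM^{-1})^\top}_p^p\le k^{O_p(1)}\OPT .
\]
Substituting $U:=U_0M^{-1}$ and $V:=V_0M^\top$, we still have $UV^\top=B$. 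The two product terms become $\|(U+X)(V-Y)^\top\|_p^p$ and $\|(U-X)(V+Y)^\top\|_p^p$, up to the sign and $M$-conjugation conventions in the lemma, which I will reconcile by the same substitution. So it suffices to show the following: if a product $PQ^\top$ of two $n\times k$ and $m\times k$ factors has $\|PQ^\top\|_p^p\le C\cdot\OPT$, then there is an invertible $N$ with $\|PN^{-1}\|_p^p$ and $\|QN^\top\|_p^p$ both $O_{p,k}(\sqrt{C\,\OPT})$. Applying this with $(P,Q)=(X+U,\,Y-V)$, up to sign, gives $N_1$, and with $(X-U,\,Y+V)$ gives $N_2$.

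The core is this \emph{balanced factorization} claim, which I would prove by a volume/John-position argument. Assume first that $P$ and $Q$ have full column rank; otherwise restrict to their column spaces and pad $N$ suitably. The norm $\|PN^{-1}\|_p$ is the $\ell_p$ norm of the columns $P n_1',\dots$ after a change of basis. Choose $N$ so that $PN^{-1}$ is in a well-conditioned position with respect to the $\ell_p$ norm on the column space of $P$, for example an Auerbach basis or Lewis weights, where every unit vector $z$ satisfies $\|PN^{-1}z\|_p\asymp_k |z|$. Then rescale $N$ by a scalar $\lambda$ so that $\|PN^{-1}\|_p^p=\|QN^\top\|_p^p$. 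The key inequality is
\[
\|PN^{-1}\|_p^p\cdot\|QN^\top\|_p^p\le O_{p,k}(1)\,\|PQ^\top\|_p^p .
\]
To prove it, write $G=PN^{-1}$ and $H=QN^\top$, so that $GH^\top=PQ^\top$. Since $G$ is well conditioned in $\ell_p$, row $j$ of $GH^\top$ is $G h_j$ with $\|G h_j\|_p^p\ge c_k|h_j|_2^p$. Summing over $j$ gives $\|PQ^\top\|_p^p\ge c_k\sum_j|h_j|^p\ge c_k'\|H\|_p^p$. Also $\|G\|_p^p\le O_k(1)$ by the Auerbach normalization. Combining these yields the product bound, and balancing by $\lambda$ turns the product bound into a $\sqrt{\cdot}$ bound for each factor.

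I expect the main obstacle to be the reconciliation at the start: ensuring that the $M$ produced by Lemma~\ref{lem:high-rank-ident-intro} relates to $U,V$ exactly as in the desired sign pattern. The paper's formula has $M^{-1}$ in one term and $(M^{-1})^\top$ in the other, which looks asymmetric. I would handle this by redefining $U,V$ and allowing $N_1,N_2$ to absorb extra factors of $M$, since the balancing claim works for any factorization of the same product. Getting $k$-dependence only (no $n,m$) from the Auerbach step is routine, because the column space is $k$-dimensional.
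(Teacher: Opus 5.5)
Your proposal is correct and follows essentially the same route as the paper. The paper applies the higher-rank identifiability lemma (the polar-decomposition $M$) and refactors $B$ so that $\|(X+U)(Y-V)^\top\|_p^p$ and $\|(X-U)(Y+V)^\top\|_p^p$ are both $k^{O_p(1)}\OPT$. It then balances each product using an Auerbach basis for the norm $z\mapsto\|Pz\|_p$, which gives $\|PN^{-1}\|_p^p\,\|QN^\top\|_p^p\le k^2\|PQ^\top\|_p^p$, followed by a scalar rescaling; full column rank is ensured by a small random perturbation.

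The sign/transpose asymmetry you worried about is only a typo in the introductory statement. The body version of the lemma and its proof have $(M^{-1})^\top$ in both terms, so your substitution $U=U_0M^{-1}$, $V=V_0M^\top$ reconciles both products directly and no absorption of extra factors into $N_1,N_2$ is needed.
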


We prove~\cref{lem:residual-balancing} by combining~\cref{lem:high-rank-ident-intro} with \emph{Auerbach's Lemma}, a tool from functional analysis. \cref{lem:residual-balancing} allows us to bound the norms of individual residuals, albeit at the cost of introducing matrices $N_1, N_2$. It turns out that this is sufficient, and we can decompose the objective in terms of $(X+U)N_1^{-1}$, $(Y-V)N_1^\top$, $(X-U)N_2^{-1}$, and $(Y+V)N_2^\top$.

\subsection{Pinning and Mirror Descent}
\label{sec:overview-pinning}
Now we can turn to the heavy-tailed pinning lemma and our new bridge between pinning lemmas and mirror descent.
We start by reviewing the proof of the variance pinning lemma (Theorem~\ref{thm:pinning-intro-variance}), and re-interpreting that proof as an instance of (online) gradient descent.
Then by replacing gradient descent with its generalization, online mirror descent, we will be able to prove Theorem~\ref{thm:pinning-intro-heavy}.
As discussed earlier, after we developed the connection between pinning lemmas and mirror descent, GPT 5.4 Pro found a proof of Theorem~\ref{thm:pinning-intro-heavy} via a relatively simple reduction to Theorem~\ref{thm:pinning-intro-variance}.
Nonetheless, we present the more sophisticated proof because it presents a clear picture of the landscape of pinning lemmas.

\paragraph{The Variance Pinning Lemma.}
Theorem~\ref{thm:pinning-intro-variance} is proved via a nonnegative potential function:
\[
  \Phi(s) = \E_{S \subseteq [n], |S| = s} \E_{i \sim [n]} \Var(X_i \, | \, X_S) = \frac 1 n \E_{S \subseteq [n], |S| = s} \E \| X - \E[ X \, | \, X_S] \|^2 \, .
\]
$\Phi(s)$ provides a measure of the ``uncertainty about $X$'' remaining after conditioning on $s$ random coordinates of $X$.
Since $\Phi(0) = \E_i \Var(X_i)$, to prove Theorem~\ref{thm:pinning-intro-variance} it is enough to lower bound the decrement in potential function $\Phi(s) - \Phi(s+1)$ in terms of $\E_{S} \E_{ij} |\Cov(X_i,X_j \, | \, X_S)|$; this decrement cannot be large for too many steps and thus eventually $\E_{S} \E_{ij} |\Cov(X_i,X_j \, | \, X_S)$ must be small.
The heavy lifting is done by the following simple fact; we give a particularly suggestive proof below.
\begin{fact}[Conditioning reduces Variance]
\label{fact:var-reduction-techoverview}
Given scalar random variables $u, v$ drawn from the joint distribution $\calD$, the decrease in variance of $u$ when conditioning on $v$ can be lower bounded as follows:
\begin{equation*}
    \Var(u) - \E_v \Var(u \vert v) \geq \Omega(1) \cdot \frac{\Cov^2(u, v)}{\Var(v)}\,.
\end{equation*}
\end{fact}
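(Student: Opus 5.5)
The plan is to reduce the inequality to the law of total variance together with a projection (least-squares) argument. The law of total variance gives
\[
\Var(u) - \E_v \Var(u \mid v) = \Var\big(\E[u \mid v]\big),
\]
so it suffices to lower bound the variance of the conditional mean $g(v) := \E[u \mid v]$ by $\Cov^2(u,v)/\Var(v)$. We may assume $\Var(v) > 0$; otherwise $v$ is almost surely constant, $\Cov(u,v)=0$, and the statement is trivial (interpreting the right-hand side as $0$).

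The key steps are as follows.

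\textbf{Step 1.} Apply the tower property to the covariance:
\[
\Cov(u,v) = \E\big[(u-\E u)(v - \E v)\big] = \E\big[(g(v) - \E g(v))(v-\E v)\big] = \Cov(g(v), v).
\]
This uses that $v - \E v$ is $\sigma(v)$-measurable, so we may replace $u$ by its conditional expectation given $v$.

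\textbf{Step 2.} Apply Cauchy--Schwarz:
\[
\Cov(g(v),v)^2 \le \Var(g(v)) \Var(v).
\]
Rearranging gives $\Var(\E[u\mid v]) \ge \Cov^2(u,v)/\Var(v)$, which is the claim with constant exactly $1$.

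The ``suggestive'' reading, which is what the surrounding discussion presumably wants, is the gradient-step one. Among all functions of $v$, the conditional mean minimizes $\E(u - f(v))^2$. In particular, it does at least as well as the best affine predictor $f(v) = \E u + \eta (v - \E v)$. Expanding $\E(u - f(v))^2 = \Var(u) - 2\eta \Cov(u,v) + \eta^2 \Var(v)$ and choosing the step size $\eta = \Cov(u,v)/\Var(v)$ yields
\[
\E_v \Var(u\mid v) \le \Var(u) - \frac{\Cov^2(u,v)}{\Var(v)},
\]
which is one step of gradient descent on the squared loss. I would present this second version in the paper, since it connects to the mirror descent viewpoint. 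A non-optimal step size would only change the $\Omega(1)$ constant.

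There is no real obstacle here. The only care needed is in the pseudodistribution setting, where $u,v$ live in a local distribution. Everything above involves only the joint law of the pair $(u,v)$, which is a genuine distribution, so the argument applies verbatim to the local distributions.
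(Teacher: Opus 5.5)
Your proof is correct, and your second (gradient-step) argument is exactly the paper's proof. The paper compares $\E[u \mid v]$ with the affine predictor $\E u + \eta(v - \E v)$, but uses the damped step $\eta = \Cov(u,v)/(100\Var(v))$, which is why it only claims an $\Omega(1)$ constant. Your exact step gives the sharp constant $1$ stated in \eqref{eq:var-decrease-bound}.

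Your primary argument is a genuinely different route. The law of total variance identifies the decrease exactly as $\Var(\E[u \mid v])$, and the tower property plus Cauchy--Schwarz then bound it directly, with no step size to tune. This route is shorter and sharp, but it relies on the inner-product structure specific to squared loss.

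The predictor argument needs only two things: that the conditional mean minimizes expected loss over functions of $v$, and one explicit competitor predictor. So it transfers to Bregman losses by swapping the gradient step for a mirror-descent step. That is how the paper proves Lemma~\ref{lem:general-bregman-p-uncertainty-decrease} for the regularizer $|x|^{p/(p-1)}$. That regularizer is not strongly convex and admits no comparably clean analogue of your Cauchy--Schwarz step. Presenting the gradient-step version here, as you suggest, is therefore the right choice for the paper's purposes.
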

\begin{proof}
    Assume without loss of generality that $\Cov(u,v) > 0$.
    Note that since the conditional expectation of $u$ given $v$ minimizes mean squared error among all functions which attempt to predict the value of $u$ given knowledge of $v$, we have that 
    \[ \Var(u) - \E_v \Var(u \vert v) \geq \E \left(u - \E u\right)^2 - \min_{f \, : \, \R \rightarrow \R} \E_v \E_u \left(u - f(v)\right)^2\,,\]
    To prove the fact, we consider a particular predictor $f(v)$, which we can view as a gradient descent update to the predictor $\E u$, interpreting $v$ as a ``gradient''.
    In particular, let $f(v) = \E u + \eta (v - \E v)$ for some step size $\eta > 0$ we will choose later.
    Expanding, we have that
    \begin{align*}
        \Var(u) - \E_v \Var(u \vert v) &\geq \E \left(u - \E u\right)^2 - \E_v \E_u \left(u - \E u - \eta (v - \E v)\right)^2 \\
        &= \E \left(u - \E u\right)^2 - \left[ \E \left(u - \E u\right)^2 - 2 \E_v \E_u \eta \left(v - \E v \right)\left(u - \E u\right) + \E_v \eta^2 (v - \E v)^2\right] \\
        &\geq 2 \eta \Cov(u,v) - \eta^2 \Var(v)\,.
    \end{align*}
    Setting $\eta = \frac{1}{100} \cdot \frac{\Cov(u,v)}{\Var(v)}$ completes the proof.
\end{proof}
Given Fact~\ref{fact:var-reduction-techoverview}, a simple Cauchy-Schwarz argument to lower bound $\Phi(s) - \Phi(s+1)$ finishes the proof of Theorem~\ref{thm:pinning-intro-variance}.
Since all the arguments above look only at $t+2$ coordinates of $X$ at a time, they work unchanged for pseudodistributions of degree at least $t+2$.
So now we can move on to the heavy-tailed pinning lemma.
The main idea to obtain the heavy-tailed pinning lemma can be summed up as:
\begin{quote}
    \emph{The above argument proves that conditioning reduces variance via the witness of taking a gradient step with ``gradient'' $X_j$. Gradient descent is a special case of mirror descent with quadratic regularization $\|x\|^2$. Replace this with the regularization function $\|x\|_{1+1/p}^{1+1/p}$.}
\end{quote}
We refer the reader to~\cite{bubeck2015convex} for background material on mirror descent.
This pattern fits a long line of work in theoretical computer science using mirror descent with non-Euclidean regularizers as a proof technique -- e.g., in spectral sparsification \cite{allen2015spectral} and discrepancy theory \cite{DadushJiangReis2022}.
We now carry it out in more detail.

\paragraph{Generalizing the Potential Function via Bregman Divergence.}
For this overview, we will focus on the case of Theorem~\ref{thm:pinning-intro-heavy} where we get a bound on the pairwise correlations which depends on the $4/3$ moments of $X$.
We start by introducing a new potential function, to replace $\Phi(s)$.
Following the potential functions used to analyze mirror descent in the context of online convex optimization, we use Bregman divergences: 

\begin{definition}[Bregman Divergence]
    For a differentiable function $f$, the Bregman divergence associated with $f$ is
    \[ B_f(x, y) = f(x) - f(y) - \langle \nabla f(y), x - y \rangle\,.\]
\end{definition}

We now define the ``Bregman uncertainty'' of a random variable, which will be our measure of its uncertainty, generalizing variance. 

\begin{definition}[Bregman Uncertainty]
\label{def:breg-uncertainty}
    Let $p \in \mathbb{N}^{> 0}$ and let $B_p$ be the Bregman divergence associated with $f(x) = \frac{p-1} {p} \vert x\vert^{\frac {p} {p-1}}$. Then the $p$-Bregman uncertainty of a random variable $u$ is 
    \[\min_c \E_u B_p(u, c)\,.\]
    In fact, the minimum is always obtained at $c = \E u$, so an equivalent formulation is 
    \[
    \E_u B_p(u, \E u) = \frac{p-1} {p} \Paren{\E |x|^{\frac{p+1}p} - |\E x|^{\frac{p+1}{p}}} \, ,
    \]
    which also conveniently shows that $\E_u B_p(u,\E u) \geq 0$.
\end{definition}
When $p=2$, this is precisely the variance.
Directly generalizing from the variance case, our potential function will be
\[
\Phi_p(s) = \E_{S \subseteq [n], |S| = s} \E_{i \sim [n]} \E B_p(X_i, \E [X_i \, | \, X_S]) \geq 0 \, .
\]
It turns out that $\E[X_i \, | \, X_S]$ minimizes $\E B_p (X_i, f(X_S))$ among all functions depending only on $X_S$.
We can think of it as the ``best prediction'' of $X_i$ given only the random variable $X_S$.

\paragraph{Conditioning Reduces Bregman Uncertainty.}
Focusing on the $4/3$ case again, the heart of the argument is the following analogue of Fact~\ref{fact:var-reduction-techoverview}, which will be enough to show a lower bound on $\Phi_4(s) - \Phi_4(s+1)$ in terms of $\E_{S \subseteq [n], |S| = s} \E_{ij} |\Cov(X_i, X_j \, | \, X_S)|$.
\begin{lemma}
\label{lem:bregman-reduction-tech-overview}
    For all jointly distributed random variables $u,v$,
    \[ \E_{u, v} B_4(u, \E u) - B_4(u, \E [u \vert v]) \geq \Omega \left(\frac{\Cov^6 (u, v)}{\left( \E |v|^{4/3} \right)\left( \E_{u,v} [|u|^{4/3} |v|^{4/3}] + \E_u [|u|^{4/3}] \E_v [|v|^{4/3}] \right)^{7/2}}\right) \,.\]
\end{lemma}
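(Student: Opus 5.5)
The plan is to follow the proof of Fact~\ref{fact:var-reduction-techoverview}, with the gradient step replaced by a mirror step for $f(x)=\tfrac34|x|^{4/3}$. The first step is the variational characterization: $\E[u\mid v]$ minimizes $\E B_4(u,g(v))$ over all functions $g$. This holds because $\E_u[B_4(u,g)\mid v]-\E_u[B_4(u,\E[u\mid v])\mid v]=B_4(\E[u\mid v],g)\ge 0$, by the standard Bregman Pythagorean identity. So for any witness $g$, the left-hand side of Lemma~\ref{lem:bregman-reduction-tech-overview} is at least $\E B_4(u,\E u)-\E B_4(u,g(v))$.

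The witness is defined in the dual. Let $c=\E u$ and $\theta_0=\nabla f(c)=\mathrm{sgn}(c)|c|^{1/3}$. Set $\nabla f(g(v))=\theta_0+\eta w$, where $w=\phi(v)-\E\phi(v)$ is a centered, truncated version of $v$, with $\phi(v)=\mathrm{sgn}(v)\min(|v|,\tau)$ and $\eta,\tau>0$ to be chosen. Write $f^*(\theta)=\tfrac14\theta^4$ for the convex conjugate. Expanding with $f(g)-g\nabla f(g)=-f^*(\nabla f(g))$ and $\E[\theta_0(u-c)]=0$ gives the exact identity
\[
\E B_4(u,c)-\E B_4(u,g(v)) = \eta\,\E[uw]-\E\bigl[f^*(\theta_0+\eta w)-f^*(\theta_0)\bigr]
= \eta\Cov(u,\phi(v))-\E B_{f^*}(\theta_0+\eta w,\theta_0).
\]
The linear term of $f^*$ drops out because $\E w=0$. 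For the quartic conjugate, $B_{f^*}(\theta_0+h,\theta_0)\le O(\theta_0^2h^2+h^4)$. Hence the decrement is at least
\[
\eta\Cov(u,\phi(v))-O\bigl(\eta^2|c|^{2/3}\E w^2+\eta^4\E w^4\bigr).
\]
This replaces the $\eta\Cov-\eta^2\Var$ computation of the variance case.

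Next we control each term using only $4/3$ moments, which is the reason for truncating. For the moments of $w$, write $m_v=\E|v|^{4/3}$; then $\E w^2\le\E\phi^2\le\tau^{2/3}m_v$ and $\E w^4\le O(\tau^{8/3}m_v)$. For the prefactor, $|c|^{2/3}\le(\E|u|^{4/3})^{1/2}$. For the covariance, the truncation loss is
\[
|\Cov(u,v)-\Cov(u,\phi(v))|\le \E\bigl[|u||v|\mathbf 1_{|v|>\tau}\bigr]+\E|u|\,\E\bigl[|v|\mathbf 1_{|v|>\tau}\bigr].
\]
We bound these by H\"older and Markov. For the first,
\[
\E\bigl[|uv|\mathbf 1_{|v|>\tau}\bigr]\le\bigl(\E|u|^{4/3}|v|^{4/3}\bigr)^{3/4}\Pr(|v|>\tau)^{1/4},\qquad \Pr(|v|>\tau)\le m_v\tau^{-4/3}.
\]
The second is bounded analogously by $(\E|u|^{4/3}\,\E|v|^{4/3})^{3/4}$ times a tail factor. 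This is where the sum $\E|u|^{4/3}|v|^{4/3}+\E|u|^{4/3}\E|v|^{4/3}$ on the right-hand side comes from; call it $Q$. We choose $\tau$ so that the truncation loss is at most $\tfrac12|\Cov(u,v)|$, which takes $\tau\approx(Q/|\Cov|^{4/3})^{\Theta(1)}m_v^{\Theta(1)}$. We then choose $\eta$ (with the sign of $\Cov$) to balance the linear term against the larger of the $\eta^2$ and $\eta^4$ penalties. Substituting both choices yields a bound of the form $|\Cov|^{a}/(m_v\,Q^{b})$, and the bookkeeping should give $a=6$ and $b=7/2$. If it gives a better power, that is fine, since a weaker bound suffices.

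I expect the main obstacle to be this last optimization, specifically the mixed term $\eta^2|c|^{2/3}\E w^2$. Its prefactor involves one-wise moments of $u$ rather than pair moments, so it must be absorbed into $Q$ via $(\E|u|^{4/3})^{1/2}$ together with homogeneity. If that absorption fails, I would fall back on the crude bound $B_{f^*}(\theta_0+h,\theta_0)\le O(|\theta_0|^4+h^4)$ and use the freedom in $\eta$ to make the $\eta^4$ term dominant, at the price of the worse exponents that appear in the statement.
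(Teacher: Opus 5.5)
Your argument is correct, and it takes a genuinely different route from the paper.

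\textbf{How the paper proceeds.} The paper does not truncate. It runs the mirror step on a reweighted problem: it predicts $|v|^{2/3}u$ from $|v|^{2/3}\E u$ with gradient $\mathrm{sgn}(v)|v|^{1/3}$. The mirror-step error then involves only the moments $\E|v|^{10/9}$ and $\E|v|^{4/3}$ of $v$. The price is that this certifies a decrease only of the weighted quantity $\E\bigl[|v|^{8/9}\bigl(B_4(u,\E u)-B_4(u,\E[u\mid v])\bigr)\bigr]$. \cref{lem:separate-v-u-decrease} converts this back by H\"older into a bound of $O(Q^{2/3}D^{1/3})$, where $D$ is the unweighted decrement. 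That cube root is where $\Cov^6/(m_vQ^{7/2})$ comes from.

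\textbf{How you proceed.} You keep the unweighted problem and pay for the heavy-tailed gradient only through a truncation loss in the covariance. In effect you merge the mirror step with the clipping idea of \cref{sec:truncation-proof}. Your dual parametrization through $f^*(\theta)=\theta^4/4$ makes the penalty an exact, explicit quartic.

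\textbf{Your bookkeeping gives more than you claim.} Write $m_u=\E|u|^{4/3}$ and $P=\E|u|^{4/3}|v|^{4/3}$.
\begin{itemize}
\item The truncation loss is at most $2\tau^{-1/3}m_v^{1/4}Q^{3/4}$, so $\tau^{1/3}=4m_v^{1/4}Q^{3/4}/|\Cov(u,v)|$ suffices.
\item The absorption you worried about is immediate from $m_um_v\le Q$, which gives $m_u^{1/2}m_v\le Q^{1/2}m_v^{1/2}$.
\item With $\eta=c_0|\Cov(u,v)|^3/(m_vQ^2)$ for a small constant $c_0$, the gain and both penalties are all of order $\Cov^4/(m_vQ^2)$.
\end{itemize}
So your decrement is $\Omega\bigl(\Cov^4(u,v)/(m_vQ^2)\bigr)$. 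Your fallback $O(|\theta_0|^4+h^4)$ would have failed, since its $\theta_0^4$ term does not shrink with $\eta$, but it is not needed.

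\textbf{The one line you must add.} Your remark that ``a better power is fine'' hides a step: $|\Cov(u,v)|\le\E|uv|+\E|u|\,\E|v|\le P^{3/4}+(m_um_v)^{3/4}\le 2Q^{3/4}$. Hence $\Cov^6/(m_vQ^{7/2})\le 4\,\Cov^4/(m_vQ^2)$, and your bound implies the stated one. The paper uses this same inequality at the end of its own proof.

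\textbf{Comparison.} Your proof is shorter and gives a stronger bound. If carried out for general even $p$, it would also improve the exponents in the downstream potential-alignment lemmas. The paper's truncation-free reweighting, for its part, is the template it reuses for the asymmetric covariances in \cref{lem:gen-p-q-norm-potential-decrease}.
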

Lemma~\ref{lem:bregman-reduction-tech-overview} captures a reduction in the Bregman uncertainty associated with $u$ when conditioning on $v$ for which $|\Cov(u,v)|$ is significant and the $4/3$ moments of $u$ and $v$ are not too big, exactly what we need.
To prove Lemma~\ref{lem:bregman-reduction-tech-overview}, roughly speaking, we construct an explicit function $f(v)$ which is designed to ``predict'' $u$ given only $v$, and prove a lower bound on $B_4(u,\E u) - B_4(u, f(v))$.
The latter then implies a lower bound on $B_4(u, \E u) - B_4(u, \E [u | v])$ because the divergence is minimized at $B_4(u, \E[u | v])$.
At a high level, the idea is to take $f(v)$ to be a mirror-descent step starting at $\E u$ with gradient $v$.
The formula for such $f$ is inscrutable, but can be derived from the mirror descent update rules:
\[
f(v)
=
\operatorname{sign}\!\Big(\operatorname{sign}(\E u)\,|\E u|^{1/3}-\eta\,v\Big)\,
\Big|\operatorname{sign}(\E u)\,|\E u|^{1/3}-\eta\,v\Big|^{3} \, ,
\]
where $\eta$ is an appropriately-chosen step size.
In fact, even this $f$ will not quite work -- for technical reasons, we need to instead use a mirror descent step starting at $v^{2/3} \E u$ with gradient $v^{1/3}$; an explanation why is beyond the scope of this overview.

A major challenge in the proof of Lemma~\ref{lem:bregman-reduction-tech-overview} is that $x \mapsto |x|^{4/3}$ is not strongly convex, so we cannot use the standard analysis of mirror descent for strongly convex regularizers.
Instead we give a ``bare hands'' proof.
To prove Theorem~\ref{thm:pinning-intro-heavy} (the $4/3$ case) from Lemma~\ref{lem:bregman-reduction-tech-overview}, we can use H\"older's inequality, generalizing the argument which proves Theorem~\ref{thm:pinning-intro-variance} from Fact~\ref{fact:var-reduction-techoverview}.

\paragraph{A Unified Proof for Pinning Lemmas.}
Our mirror descent proof strategy for the heavy-tailed pinning lemma can also be adapted to recover the other known pinning lemma, which applies to discrete random variables, by choosing an entropy regularizer.
We give the details in~\cref{sec:appendix}.
The mirror-descent unification of pinning lemmas is summarized in Table~\ref{tab:pinning-lemmas}.

\begin{table}[h!]
\centering
\begin{tabular}{ccc}
\hline
Regularizer & Class of Random Variables & Reference \\ 
\hline
$x^2$ & finite variance & \cite{Barak2011} \\ 
entropy & discrete/bounded & \cite{Montanari2008,Raghavendra2012}\\ 
$|x|^{1+1/p}$ & heavy-tailed & this work \\ 
Von Neumann entropy & quantum mixed states & \cite{brandao2013product} \\
\hline
\end{tabular}
\caption{Pinning lemmas, unified via mirror descent.}
\label{tab:pinning-lemmas}
\end{table}

\section{Preliminaries}

Throughout the following sections, we follow the convention that all vectors are column vectors.

\subsection{Sum-of-Squares Background}

We now provide an overview of the sum-of-squares proof system, taken from the corresponding section in~\cite{10.1145/3717823.3718218}.
We closely follow the exposition as it appears in the lecture notes of Barak~\cite{barak2016proofs}.   

\paragraph{Pseudo-Distributions.}
A discrete probability distribution over $\R^m$ is defined by its probability mass function, $D\from \R^m \to \R$, which must satisfy $\sum_{x \in \mathrm{supp}(D)} D(x) = 1$ and $D \geq 0$.
We extend this definition by relaxing the non-negativity constraint to merely requiring that $D$ passes certain low-degree non-negativity tests.
We call the resulting object a pseudo-distribution.

\begin{definition}[Pseudo-distribution]
A \emph{degree-$\ell$ pseudo-distribution} is a finitely-supported function $D:\R^m \rightarrow \R$ such that $\sum_{x} D(x) = 1$ and $\sum_{x} D(x) p(x)^2 \geq 0$ for every polynomial $p$ of degree at most $\ell/2$, where the summation is over all $x$ in the support of $D$.
\end{definition}
Next, we define the related notion of pseudo-expectation.
\begin{definition}[Pseudo-expectation]
The \emph{pseudo-expectation} of a function $f$ on $\R^m$ with respect to a pseudo-distribution $\mu$, denoted by $\pexpecf{\mu(x)}{f(x)}$,  is defined as
\begin{equation*}
    \pexpecf{\mu(x)}{f(x)} = \sum_{x} \mu(x) f(x).
\end{equation*}
\end{definition}
We use the notation $\pexpecf{\mu(x)}{(1,x_1, x_2,\ldots, x_m)^{\otimes \ell}}$ to denote the degree-$\ell$ moment tensor of the pseudo-distribution $\mu$.
In particular, each entry in the moment tensor corresponds to the pseudo-expectation of a monomial of degree at most $\ell$ in $x$. 

\begin{definition}[Constrained pseudo-distributions]
\label{def:constrained-pseudo-distributions}
Let $\calA = \Set{ p_1\geq 0 , p_2\geq0 , \dots, p_r\geq 0}$ be a system of $r$ polynomial inequality constraints of degree at most $d$ in $m$ variables.
Let $\mu$ be a degree-$\ell$ pseudo-distribution over $\mathbb{R}^m$.
We say that $\mu$ \emph{satisfies} $\calA$ at degree $\ell \ge1$ if for every subset $\calS \subset [r]$ and every sum-of-squares polynomial $q$ such that $\deg(q) + \sum_{i \in \calS } \max\Paren{ \deg(p_i), d} \leq \ell$, $\pexpecf{\mu}{ q \prod_{i \in \calS} p_i } \geq 0$.
Further, we say that $\mu$ \emph{approximately satisfies} the system of constraints $\calA$ if the above inequalities are satisfied up to additive error $\pexpecf{\mu}{ q \prod_{i \in \calS} p_i } \geq -2^{-n^{\ell} } \norm{q} \prod_{i \in \calS} \norm{p_i}$, where $\norm{\cdot}$ denotes the Euclidean norm of the coefficients of the polynomial, represented in the monomial basis.  
\end{definition}

Crucially, there's an efficient separation oracle for moment tensors of constrained pseudo-distributions. 
Below gives the unconstrained statement; the constraint statement follows analogously.

\begin{fact}[\cite{shor1987approach, nesterov2000squared, parrilo2000structured, grigoriev2001complexity}]
    \label{fact:sos-separation-efficient}
    For any $m,\ell \in \N$, the following convex set has a $m^{\bigO{\ell}}$-time weak separation oracle, in the sense of \cite{grotschel1981ellipsoid}:\footnote{
        A separation oracle of a convex set $S \subset \R^M$ is an algorithm that can decide whether a vector $v \in \R^M$ is in the set, and if not, provide a hyperplane between $v$ and $S$.
        Roughly, a weak separation oracle is a separation oracle that allows for some $\eta$ slack in this decision.
    }:
    \begin{equation*}
        \Set{  \pexpecf{\mu(x)} { (1,x_1, x_2, \ldots, x_m)^{\otimes \ell } } \Big\vert \text{ $\mu$ is a degree-$\ell$ pseudo-distribution over $\R^m$}}
    \end{equation*}
\end{fact}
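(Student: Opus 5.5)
The plan is to identify the set in Fact~\ref{fact:sos-separation-efficient} with an explicit spectrahedron, namely an affine slice of a positive semidefinite cone of dimension $m^{O(\ell)}$. Separation over such a set reduces to two things: checking linear equalities and computing approximate eigenvectors. Throughout, let $N = \binom{m+\ell}{\ell} = m^{O(\ell)}$ be the number of monomials of degree at most $\ell$, and $N' = \binom{m+\lfloor \ell/2\rfloor}{\lfloor \ell/2\rfloor}$ the number of monomials of degree at most $\ell/2$.

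\textbf{Step 1 (characterization).} I would show that a tensor $T$ lies in the set if and only if three conditions hold:
\begin{enumerate}
\item the entry of $T$ indexed by the empty monomial equals $1$;
\item any two entries of $T$ corresponding to the same monomial agree (symmetry/consistency), so $T$ defines a linear functional $L$ on polynomials of degree $\le \ell$;
\item the moment matrix $\mathcal{M}(L) \in \R^{N'\times N'}$, with $\mathcal{M}(L)_{\alpha,\beta} = L(x^{\alpha+\beta})$, is PSD.
\end{enumerate}
Necessity is immediate, since $\sum_x D(x)p(x)^2 = c^\top \mathcal{M} c$ where $c$ is the coefficient vector of $p$.

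For sufficiency, we must produce a finitely supported $D$ whose moments equal $L$. Point evaluations $\{p \mapsto p(x)\}_{x\in\R^m}$ span the dual of the finite-dimensional space of polynomials of degree $\le \ell$, because no nonzero polynomial vanishes everywhere. Hence $L = \sum_{k} a_k\,\mathrm{ev}_{x_k}$ for finitely many points $x_k$ and real weights $a_k$. Setting $D(x_k)=a_k$ gives a finitely supported function with $\sum_x D(x) = L(1) = 1$. The PSD condition then says exactly that $\sum_x D(x)p(x)^2 \ge 0$ for all $p$ of degree at most $\ell/2$. This establishes the claimed equivalence.

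\textbf{Step 2 (the oracle).} Given a query $T$, first check the $N^{O(1)}$ linear equalities from conditions 1 and 2. If one fails, say $T_a \ne T_b$ or $T_\emptyset \neq 1$, return the corresponding coordinate hyperplane, which every member of the set satisfies with equality. Otherwise form $\mathcal{M}(L)$ and compute its eigendecomposition to precision $\eta$ in $\poly(N, \log(1/\eta))$ time.
\begin{itemize}
\item If the smallest eigenvalue is at least $-\eta$, accept. In the weak sense, $T$ is then $\eta$-close (after an $O(\eta)$ shift toward the moments of a point mass) to the set.
\item Otherwise, take a unit eigenvector $c$ with $c^\top \mathcal{M}(L) c < -\eta/2$. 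The map $T \mapsto c^\top \mathcal{M}(L) c = L(p_c^2)$, where $p_c$ is the polynomial with coefficient vector $c$, is linear in $T$. It is nonnegative on the set and negative at the query, so it is a separating hyperplane with margin.
\end{itemize}
The total running time is $N^{O(1)} = m^{O(\ell)}$.

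\textbf{Main obstacle.} The conceptual step is sufficiency in Step 1, namely that PSD moment functionals correspond to finitely supported signed $D$; the spanning argument above handles it. The technical step is the precision bookkeeping needed to meet the definition of a weak separation oracle in the sense of \cite{grotschel1981ellipsoid}. For this I would use standard bounds on approximate symmetric eigensolvers, together with the fact that the set contains a ball of radius $2^{-\poly(N)}$ around the moments of, say, a Gaussian-like finite mixture within the affine slice. The same argument goes through for constrained pseudo-distributions: one adds the localizing matrices for the products $\prod_{i\in\calS} p_i$, each of which must also be PSD and each of which is again linear in $T$.
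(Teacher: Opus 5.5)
Your proposal is correct in structure and is the standard argument behind the references the paper cites; the paper itself does not prove this Fact. The set is exactly the spectrahedron cut out by the normalization and symmetry equalities together with $\mathcal{M}(L)\succeq 0$. Your spanning argument for sufficiency is right for the paper's definition of pseudo-distributions as signed, finitely supported functions. A negative eigenvector of the moment matrix gives a valid separating linear functional.

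One step in the precision bookkeeping is wrong as written, though it is easy to repair. In the accept branch, moving toward the moments of a point mass $\delta_{x_0}$ cannot fix directions where the moment matrix is nearly negative. The point-mass moment matrix $\mathcal{M}(\delta_{x_0})=v_{x_0}v_{x_0}^\top$ has rank one. So for any unit $c$ with $p_c(x_0)=0$,
\[
c^\top\mathcal{M}\bigl((1-\lambda)L+\lambda\,\delta_{x_0}\bigr)c=(1-\lambda)\,c^\top\mathcal{M}(L)c,
\]
which stays negative whenever $\mathcal{M}(L)$ has a negative eigenspace of dimension at least $2$. This already happens for $m=1$, $\ell=4$.

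Instead, move toward a strictly feasible point $T_0$ with $\mathcal{M}(L_0)\succeq\rho I$. For example, take the standard Gaussian moments: they lie in the set by your Step~1, and their moment matrix is a positive definite integer matrix, so $\log(1/\rho)\le\poly(N)$. Then $(1-\lambda)T+\lambda T_0$ with $\lambda=\eta/(\rho+\eta)$ lies in the set. Its distance from $T$ is at most $(\eta/\rho)\,\|T-T_0\|$, not $O(\eta)$. So, given the slack parameter $\delta$ of the weak separation problem, run the eigenvalue test with $\eta$ on the order of $\delta\rho/(\|T\|+\|T_0\|)$. This costs only polynomially many extra bits and preserves the $m^{O(\ell)}$ running time.
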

This fact, together with the equivalence of weak separation and optimization \cite{grotschel1981ellipsoid} forms the basis of the sum-of-squares algorithm, as it allows us to efficiently approximately optimize over pseudo-distributions. 

Given a system of polynomial constraints, denoted by $ \calA$, we say that it is \emph{explicitly bounded} if it contains a constraint of the form $\{ \|x\|^2 \leq 1\}$. Then, the following fact follows from  \cref{fact:sos-separation-efficient} and \cite{grotschel1981ellipsoid}:

\begin{theorem}[Efficient optimization over pseudo-distributions]
    \label{fact:eff-pseudo-distribution}
There exists an $(m+r)^{O(\ell)} $-time algorithm that, given any explicitly bounded and satisfiable system $ \calA$ of $r$ polynomial constraints in $m$ variables, outputs a degree-$\ell$ pseudo-distribution that satisfies $ \calA$ approximately, in the sense of~\cref{def:constrained-pseudo-distributions}.\footnote{
    Here, we assume that the bit complexity of the constraints in $ \calA$ is $(m)^{O(1)}$.
}
\end{theorem}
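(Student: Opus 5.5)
This theorem is the standard consequence of the separation oracle in Fact~\ref{fact:sos-separation-efficient} together with the ellipsoid-method equivalence of weak separation and weak optimization from \cite{grotschel1981ellipsoid}; my plan is to assemble these two pieces.

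\textbf{Step 1: the feasible set.} Consider the set $\calK_\ell(\calA)$ of degree-$\ell$ moment tensors $\pexpecf{\mu}{(1,x)^{\otimes \ell}}$ of pseudo-distributions satisfying $\calA$. Every constraint in Definition~\ref{def:constrained-pseudo-distributions} is linear in the moment tensor, so this set is convex. Concretely, the constraints are:
\begin{itemize}
\item for each $\calS \subseteq [r]$ of bounded total degree, the matrix $\pE\big[(\prod_{i\in\calS} p_i)\, m(x) m(x)^\top\big]$, with $m(x)$ the vector of monomials of the allowed degree, must be positive semidefinite;
\item the normalization $\pE 1 = 1$.
\end{itemize}
The number of such subsets that can appear nontrivially is $r^{O(\ell)}$, since products of more than $\ell$ constraints exceed the degree budget. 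Each matrix has dimension $m^{O(\ell)}$, and its entries are linear in the moments with coefficients of bit length $\poly(m)$.

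\textbf{Step 2: a weak separation oracle.} Following Fact~\ref{fact:sos-separation-efficient}, I will compute an approximate eigendecomposition of each localizing matrix. Given a candidate point, a sufficiently negative eigenvector $w$ of one of these matrices yields the separating hyperplane $w^\top(\cdot)w$, which is a linear functional of the moments. Running this over all $r^{O(\ell)}$ matrices of size $m^{O(\ell)}$ costs $(m+r)^{O(\ell)}$ in total.

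\textbf{Step 3: boundedness.} Weak optimization also requires the convex set to lie inside a ball of known radius. This is where explicit boundedness is used. From $\|x\|^2\le 1$, together with multiplying by sums of squares at degree $\ell$, I will derive a bound on every moment:
\begin{itemize}
\item iterate the inequality $\pE[x^{2\alpha}\|x\|^2] \le \pE[x^{2\alpha}]$ for monomials $x^{2\alpha}$, which follows from the constraint applied with the SoS multiplier $q = x^{2\alpha}$;
\item combine this with pseudo-Cauchy--Schwarz to bound all even moments by $1$, and hence all moments by $1$.
\end{itemize}
This places $\calK_\ell(\calA)$ in a ball of radius $m^{O(\ell)}$.

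\textbf{Step 4: feasibility and conclusion.} Satisfiability of $\calA$ means that a Dirac point mass at a solution is feasible, so the set is nonempty. Now apply \cite{grotschel1981ellipsoid} with a zero or arbitrary objective and slack $\eta = 2^{-n^{\ell}}$. The number of ellipsoid iterations is polynomial in the dimension $m^{O(\ell)}$ and in $\log(1/\eta) = n^{\ell}$. This returns a point within $\eta$ of the set, meaning each test $\pE[q\prod p_i]$ is violated by at most $\eta\norm{q}\prod\norm{p_i}$ after rescaling by the bit-complexity bound. This is exactly the approximate satisfaction of Definition~\ref{def:constrained-pseudo-distributions}.

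\textbf{Main obstacle.} I expect the main difficulty to be the weak-versus-exact issue: the set may have empty interior, and weak optimization only returns near-feasible points. The approximate notion in Definition~\ref{def:constrained-pseudo-distributions} is designed to absorb this, and the key check is that the slack can be driven to $2^{-n^{\ell}}$ in time polynomial in its bit length.
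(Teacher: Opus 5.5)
Your route matches the paper's, which justifies this fact only by combining Fact~\ref{fact:sos-separation-efficient} with \cite{grotschel1981ellipsoid}; Steps~1--3 flesh that out sensibly. The gap is in Step~4, at exactly the point you flag as the main obstacle. From a weak separation oracle, the Gr\"otschel--Lov\'asz--Schrijver machinery solves \emph{weak} optimization. Its output is \emph{either} a point within $\eta$ of $\mathcal{K}_\ell(\calA)$, \emph{or} the assertion that no Euclidean $\eta$-ball fits inside $\mathcal{K}_\ell(\calA)$. Your set lies in a proper affine subspace. It is cut out by $\pE 1 = 1$, by the symmetry of the moment tensor, and, for the programs in this paper, by equality constraints such as grid membership or $\sum_i u_i^p = s_u$. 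So no such ball ever fits, and the ``empty'' answer is always a legal output. Worse, when no ball fits, a weak separation oracle may return arbitrary hyperplanes, so the guarantee you invoke is vacuous. The approximate notion of Definition~\ref{def:constrained-pseudo-distributions} tolerates a slightly infeasible \emph{output}, but nothing in your argument forces an output to exist. The check you single out, that $\log(1/\eta)$ is polynomial, is not where the difficulty lies.

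The missing idea is to build the slack into the body given to the ellipsoid method and to exhibit an inner ball.
\begin{itemize}
\item Parametrize by the moments of the distinct monomials of degrees $1,\dots,\ell$, so that $\pE 1 = 1$ is built in.
\item Let $\mathcal{K}'$ be the set on which every localizing matrix satisfies $M_S \succeq -\eta' I$. Choose $\eta' = \eta \cdot 2^{-(m+r)^{O(\ell)}}$ so that this implies each test of Definition~\ref{def:constrained-pseudo-distributions} up to slack $\eta \norm{q} \prod_{i} \norm{p_i}$.
\item Since $\calA$ is satisfiable, the moment vector of a Dirac mass at a solution lies in $\mathcal{K}'$. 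By the bit-complexity assumption, a perturbation of size $\delta$ moves each localizing matrix by at most $\delta \cdot 2^{(m+r)^{O(\ell)}}$ in operator norm. Hence $\mathcal{K}'$ contains a ball of radius $\eta \cdot 2^{-(m+r)^{O(\ell)}}$ around that point.
\item Run the ellipsoid method on $\mathcal{K}'$ with accuracy below this radius. This rules out the ``empty'' answer and returns a point satisfying every test up to slack $2\eta\norm{q}\prod_i\norm{p_i}$, in time polynomial in $(m+r)^{O(\ell)}$ and $\log(1/\eta)$.
\end{itemize}
Your Step~3 then has to be run on the relaxed body $\mathcal{K}'$. It still works, with moment bounds $1+o(1)$ instead of $1$.

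There is one further caveat in Step~3. Under the paper's Definition~\ref{def:constrained-pseudo-distributions}, every constraint, including $\norm{x}^2 \le 1$, is charged degree $d = \max_i \deg p_i$. So if $d > 2$, your iteration only controls moments of degree at most $\ell - d + 2$, and the body can be unbounded in the top-degree directions. The simplest fix is to add the linear constraints $\lvert \pE x^\gamma \rvert \le 1$ for $\lvert\gamma\rvert \le \ell$, which every genuine distribution on the unit ball satisfies.
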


We now state the following standard fact for pseudo-distributions, which extends a fact that holds for standard probability distributions.
These can be found in the prior works listed above.

\begin{fact}[Hölder's inequality for pseudo-distributions] \label{fact:pseudo-expectation-holder}
Let $f,g$ be polynomials of degree at most $d$ in the variables $x \in \R^m$. 
Fix $t \in \N$. Then, for any degree-$dt$ pseudo-distribution $\mu$,
\begin{equation*}
    \pexpecf{\mu}{ f^{t-1}  g} \leq \Paren{ \pexpecf{\mu}{ f^t }  }^{\frac{t-1}{t}} \cdot  \Paren{  \pexpecf{\mu }{ g^t }  }^{\frac{1}{t}}.  
\end{equation*}
In particular, when $t$ is even,
$\pexpecf{\mu}{f}^t \leq \pexpecf{\mu}{ f^t }$.
\end{fact}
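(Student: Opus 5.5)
The plan is to imitate the proof for genuine distributions and then check that each step uses only degree-$dt$ moments of $\mu$. By linearity of $\pE$ and the fact that $\mu$ satisfies $\pE[q] \geq 0$ for every sum-of-squares $q$ of degree at most $dt$, the whole argument reduces to one polynomial inequality that has a low-degree sum-of-squares certificate.

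First I would normalize. Set $a = (\pE f^t)^{1/t}$ and $b = (\pE g^t)^{1/t}$. This needs $t$ even, or $f, g$ coming with the sign structure that makes $\pE f^t, \pE g^t \geq 0$. The precise hypothesis needed is that $f^t$, $g^t$, and the mixed terms below are sums of squares, so I would read the fact with $t$ even, as in the ``in particular'' clause. If $a, b > 0$, rescale to $\tilde f = f/a$ and $\tilde g = g/b$, so that $\pE \tilde f^t = \pE \tilde g^t = 1$. The goal then becomes $\pE \tilde f^{t-1}\tilde g \leq 1$. The degenerate cases $a = 0$ or $b = 0$ are handled by a limiting version of the same inequality: replace $a$ by $a + \delta$ and let $\delta \to 0$.

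Next comes the pointwise weighted AM--GM inequality
\[
\frac{t-1}{t}\,\tilde f^t + \frac1t\,\tilde g^t - \tilde f^{t-1}\tilde g \;\succeq\; 0,
\]
for which I would exhibit an SoS certificate of degree $dt$ in $x$. Viewing it as a binary form in $(\tilde f, \tilde g)$ of even degree $t$, it is nonnegative on all of $\R^2$, since AM--GM applies to $|\tilde f|^{t-1}|\tilde g|$. A nonnegative bivariate homogeneous form is a sum of squares of forms of degree $t/2$ (Hilbert). Substituting the polynomials $\tilde f(x)$ and $\tilde g(x)$ then yields squares of polynomials of degree $dt/2$ in $x$, so the certificate has degree $dt$. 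Alternatively, one can write an explicit certificate by the standard telescoping identity $\sum_{k}(\tilde f^{k}\tilde g^{t-1-k}\ldots)$. Applying $\pE$ and using linearity gives $\pE \tilde f^{t-1}\tilde g \leq \frac{t-1}{t} + \frac1t = 1$, which is the first claim after undoing the scaling.

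For the ``in particular'' statement, take $g = f$ and $f = 1$; more precisely, apply the inequality with the roles arranged so that $\pE[1^{t-1} f] \leq (\pE 1)^{(t-1)/t}(\pE f^t)^{1/t}$. This gives $\pE f \leq (\pE f^t)^{1/t}$. Applying the same bound to $-f$ gives $|\pE f| \leq (\pE f^t)^{1/t}$, and raising to the even power $t$ yields $(\pE f)^t \leq \pE f^t$.

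The main obstacle is the SoS certificate for the AM--GM form at the correct degree, together with the sign and parity issue when $t$ is odd. I expect to rely on Hilbert's theorem for binary forms, or on the explicit Hurwitz-type SoS representation of AM--GM, and to restrict to even $t$ wherever $\pE f^t \geq 0$ is required.
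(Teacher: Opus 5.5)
Your proof is correct. It is the standard argument behind this fact; the paper does not prove the fact itself but cites it from prior work. You normalize so that the pseudo-expectations of $f^t$ and $g^t$ equal $1$. You then use a degree-$dt$ sum-of-squares certificate for the weighted AM--GM form $\frac{t-1}{t}a^t+\frac1t b^t-a^{t-1}b\ge 0$, for which Hilbert's theorem on nonnegative binary forms suffices, and apply linearity. The limiting argument handles the degenerate cases, and the substitution $(f,g)\to(1,f)$ together with $f\to -f$ handles the ``in particular'' clause.

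Your restriction to even $t$ is the correct reading for arbitrary polynomials $f,g$. For odd $t$ the right-hand side can be undefined or the inequality false: take $t=3$, $f=1$, $g=x$ under the uniform distribution on $\{-1,0\}$. The paper's applications only use the even exponent $2(p-q)$.
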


\subsection{Potential Aligned Functions}

In this subsection, we restate some definitions and basic facts about potential functions used in the analysis of global correlation rounding, taken from Section 4 in~\cite{anderson2025additiveapproximationschemeslowdimensional}. 

\begin{definition}[Pseudo-distribution potentials]
\label{def:pseudo-distribution-potential}
    A function $\Phi$ from pseudo-distributions over $\mathbb{R}^n$ to $\mathbb{R}$ is a pseudo-distribution potential if
    \begin{enumerate}
        \item For all pseudo-distributions $\mu$ we have that $\Phi(\mu) \geq 0$.
        \item For a fixed $i \in [n]$, let $\mu'$ be the pseudo-distribution given by sampling a random value $\hat{x}_i$ from the marginal of $x_i$ and conditioning on the event $\Set{ x_i = \hat{x}_i}$. Then,
        \[ \E_{x_i} \left[\Phi(\mu) - \Phi(\mu')\right] \geq 0\,.\]
    \end{enumerate}
\end{definition}

The typical potential functions used in global correlation rounding by \cite{Raghavendra2012,Barak2011} are $\E_{i \sim [n]} H(\{x_i\})$ and $\E_{i \sim [n]} \Var(x_i)$.
Both are pseudo-distribution potentials in the sense of Definition~\ref {def:pseudo-distribution-potential}.

\begin{definition}[Potential aligned functions]
\label{def:potential-aligned}
    A function $f: (\mathbb{R}^k)^n \rightarrow \mathbb{R}$ is $(\delta, \delta')$-potential aligned with a pseudo-distribution potential $\Phi$, initial pseudo-distribution $\mu_0$, and family of polynomials $p_1,\ldots,p_m$ which have $\pE_\mu p_i = 0$ if for all distributions over pseudo-distributions $\mu$ such that 
    \begin{enumerate}
        \item $\E_{\{\mu\}} \mu = \mu_0$, 
        \item all $\mu'$ in the support of $\{\mu\}$ satisfy $\pE_{\mu'} p_i = 0$ for all $i \leq m$, and 
        \item $ \left\vert \E_{\{\mu\}} \pE_{\mu} f(x_1,\ldots, x_n) - \pE_{\mu^{\otimes}} f(x_1,\ldots, x_n) \right\vert \geq \delta\,,$
    \end{enumerate}
    we have that there exists an index $i \in [n]$ such that
    \[ \E_{ \{ \mu \}} \E_{\hat{x_i} \sim \{x_i \} } \left[\Phi(\mu) - \Phi(\mu') \right] \geq \delta'\,,\]
    where $\mu'$ is the pseudo-distribution given by sampling a random value $\hat{x}_i$ from the marginal of $x_i$ and conditioning on $x_i = \hat{x}_i$.
\end{definition}

In this language, prior works like \cite{Barak2011, Raghavendra2012} can be interpreted as showing that any  function $f$ of the form $\E_{i,j \sim [n]} f_{ij}(x_i,x_j)$ where $f_{ij}$'s are $[-1,1]$-valued is potential-aligned with respect to $\Phi = \E_{i} H(\{x_i\})$. 
And, functions of the form $f = \E_{i,j \sim [n]} A_{ij} \langle x_i, x_j\rangle$ are potential-aligned with respect to $\Phi = \E_{i \sim [n]} \Var(x_i)$; this follows almost directly from prior work on global correlation rounding (see~\cite{Barak2011}). 

We can also combine multiple potential functions to analyze correlation rounding on a sum of objective functions.

\begin{lemma}[Handling several potential aligned functions (see Lemma 4.4 in~\cite{anderson2025additiveapproximationschemeslowdimensional})]
\label{lem:gcr_linear_combo}
    Let $\epsilon \geq 0$. Let $\{f_{j}\}_{j \in [\ell]}$ be a collection of functions from $(\mathbb{R}^k)^n$ such that $f_j$ is $(\epsilon/\ell, \delta_j)$-potential aligned for some pseudo-distribution potential $\Phi_j$ and some initial pseudo-distribution $\mu$. Then there exists a set $\calT \subseteq [n]$ of size at most $\sum_{j \in [\ell]} \Phi_j(\mu)/\delta_j$ such that after conditioning on the values of $x_i \in \calT$,
    \[ \left\vert \E_{\{\mu_\calT\}} \pE_{\mu_{\calT}} \sum_{j \in [\ell]} f_j(x_1,\ldots, x_n) - \pE_{\mu^{\otimes}_{\calT}} \sum_{j \in [\ell]} f_j(x_1,\ldots, x_n) \right\vert \leq \epsilon \,.\]
\end{lemma}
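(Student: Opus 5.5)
The plan is a greedy potential argument: we repeatedly pin one variable that is guaranteed to decrease some potential by a definite amount. Start from $\mu_0 = \mu$ and the trivial distribution over pseudo-distributions $\{\mu\}$ concentrated on $\mu_0$. We build a sequence of sets $\calT_0 = \emptyset \subseteq \calT_1 \subseteq \cdots$ and the corresponding distributions over conditioned pseudo-distributions $\{\mu_{\calT_r}\}$. These are obtained by sampling the values of $x_{\calT_r}$ from their local marginal and conditioning on them. By the law of total (pseudo)expectation, $\E_{\{\mu_{\calT_r}\}} \mu_{\calT_r} = \mu$, so condition~1 of Definition~\ref{def:potential-aligned} holds at every stage. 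Condition~2, preservation of the constraints $\pE p_i = 0$, is assumed to persist under conditioning, as in the intended applications. We track the total potential $\Psi_r = \sum_{j\in[\ell]} \E_{\{\mu_{\calT_r}\}} \Phi_j(\mu_{\calT_r})/\delta_j$. This quantity is nonnegative by property~1 of Definition~\ref{def:pseudo-distribution-potential}, and it starts at $\Psi_0 = \sum_j \Phi_j(\mu)/\delta_j$.

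At stage $r$ there are two cases. If the rounding error $|\E \pE_{\mu_{\calT_r}} \sum_j f_j - \pE_{\mu^{\otimes}_{\calT_r}} \sum_j f_j|$ is at most $\epsilon$, we stop and output $\calT_r$. Otherwise, by the triangle inequality, some $j^\ast$ satisfies $|\E \pE_{\mu_{\calT_r}} f_{j^\ast} - \pE_{\mu^\otimes_{\calT_r}} f_{j^\ast}| \ge \epsilon/\ell$. Potential alignment of $f_{j^\ast}$, with parameters $(\epsilon/\ell,\delta_{j^\ast})$ and applied to the distribution $\{\mu_{\calT_r}\}$, then yields an index $i$ such that pinning $x_i$ decreases $\E\Phi_{j^\ast}$ by at least $\delta_{j^\ast}$. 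We set $\calT_{r+1} = \calT_r \cup\{i\}$. For every other $j$, $\E \Phi_j$ does not increase, by property~2 of Definition~\ref{def:pseudo-distribution-potential} applied pointwise to each $\mu_{\calT_r}$ in the support and then averaged. Hence $\Psi_{r+1} \le \Psi_r - 1$.

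Since $\Psi_r \ge 0$, the process must halt after at most $\Psi_0 = \sum_j \Phi_j(\mu)/\delta_j$ steps. At that point the stopping condition holds, with $|\calT| \le \sum_j \Phi_j(\mu)/\delta_j$, which is exactly the claim. The only delicate point is bookkeeping. We must make sure the expectation in Definition~\ref{def:potential-aligned} over the distribution $\{\mu\}$ matches the averaged conditioned pseudo-distributions we actually maintain. We must also check that the pseudo-distributions keep enough degree for each conditioning step, which requires the initial degree to exceed the final $|\calT|$ plus the degree of the $f_j$. Both of these are routine; I do not expect any step to be a real obstacle.
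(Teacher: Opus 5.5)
Your proof is correct and is the standard greedy potential-decrement argument behind this lemma; the paper does not reprove the lemma but imports it from Lemma 4.4 of the cited Anderson--Bakshi--Hopkins work, where it follows the same template. Your key steps are the right ones: the alignment index is a single index for the whole distribution $\{\mu_{\calT_r}\}$, so $\calT$ is deterministic; the other potentials do not increase in expectation; and $\sum_j \E\Phi_j/\delta_j$ drops by at least one per round while staying nonnegative, which caps the number of rounds at $\sum_j \Phi_j(\mu)/\delta_j$.
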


\subsection{Miscellaneous}

\begin{fact}[Young’s Inequality]
\label{fact:young}
    For $a,b \ge 0$ and $p,q > 1$ with $\frac{1}{p} + \frac{1}{q} = 1$, we have that
    \[ ab \le \frac{a^{p}}{p} + \frac{b^{q}}{q}\,.\]
\end{fact}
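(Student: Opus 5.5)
The statement to prove is Young's inequality (Fact~\ref{fact:young}): $ab \le a^p/p + b^q/q$ for $a,b\ge 0$ and conjugate exponents $p,q>1$. I would use concavity of the logarithm, i.e.\ the weighted AM--GM inequality. The degenerate case is handled first: if $a=0$ or $b=0$, the left side is $0$ and the right side is nonnegative, so the inequality holds trivially. From then on assume $a,b>0$.

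For $a,b>0$, write
\[
\log(ab) = \tfrac1p \log(a^p) + \tfrac1q \log(b^q).
\]
Since $\frac1p+\frac1q=1$ and both weights are positive, this is a convex combination of $\log(a^p)$ and $\log(b^q)$. Concavity of $\log$ on $(0,\infty)$ (its second derivative $-1/x^2$ is negative) gives
\[
\tfrac1p \log(a^p) + \tfrac1q \log(b^q) \le \log\!\Big(\tfrac{a^p}{p} + \tfrac{b^q}{q}\Big).
\]
Exponentiating, and using that $\exp$ is increasing, yields $ab \le \frac{a^p}{p}+\frac{b^q}{q}$.

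As a fallback that avoids logarithms, fix $b>0$ and minimize $g(a)= \frac{a^p}{p} + \frac{b^q}{q} - ab$ over $a\ge 0$. The derivative $g'(a)=a^{p-1}-b$ vanishes only at $a=b^{1/(p-1)}$. At this point $a^p = b^{p/(p-1)} = b^q$, since $q=p/(p-1)$. Hence
\[
g = b^q\Big(\tfrac1p+\tfrac1q\Big) - b^q = 0.
\]
Since $g$ is convex in $a$, this critical point is the global minimum, so $g\ge 0$ everywhere.

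There is no real obstacle here. The only points needing care are the zero case, which is excluded before taking logarithms, and the identity $a^p=b^q$ at the critical point, which uses $q=p/(p-1)$.
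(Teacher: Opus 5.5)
Your proof is correct. The concavity-of-$\log$ (weighted AM--GM) argument, with the case $a=0$ or $b=0$ set aside first, is the standard proof. Your calculus fallback is also valid: $a \mapsto a^p/p - ab$ is convex on $[0,\infty)$, so the interior critical point $a=b^{1/(p-1)}$, where $ab=b^q$, is a global minimum. The paper does not prove this fact; it simply quotes Young's inequality as a standard tool in its preliminaries, so there is no competing argument to compare against.
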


\section{Designing Potential Functions for Global Correlation Rounding}
\label{sec:potential-fn}

In this section, we will quantify the decrease in the uncertainty of a random variable $u$ when conditioning on some other random variable $v$. In prior work, uncertainty has often been measured by either entropy or variance of the associated random variable. We give extensions of these potential decrease inequalities based on a general framework inspired by mirror descent.

In all of our applications, our potential functions will be defined using Bregman divergences. 

\begin{definition}[Bregman Divergence]
    For a differentiable function $f$, we have that the Bregman divergence associated with $f$ is
    \[ B_f(x, y) = f(x) - f(y) - \langle \nabla f(y), x - y \rangle\,.\]
\end{definition}

At a high level, our proof of both of our new potential decrease lemmas will revolve around using the decrease in Bregman divergence given by mirror descent to bound the potential decrease when there are large correlations among variables.

\subsection{Low-moment potential decay based on covariance}
One common approach to understanding how uncertainty decreases when conditioning uses the fact that if $u$ and $v$ have large covariance and $v$ has small variance, then the variance of $u$ significantly drops when conditioning on $v$. We will generalize this approach to work in settings where the variance of the associated random variable might be very large, but lower moments are still bounded. Thus, it is necessary to understand how the $q$ moments (or related functions of them) for $q \in (1,2)$ behave when conditioning on other random variables. We give a bound that characterizes this and generalizes the following fact, which has been used to analyze GCR rounding schemes in prior work:
\begin{align}
    \Var(u) - \E_v \Var(u \vert v) \geq \frac{\Cov^2(u, v)}{\Var(v)}\,. \label{eq:var-decrease-bound}
\end{align} 
Instead of working with variance as our measure of uncertainty, we will instead work with a more general measure of uncertainty. These will be defined using Bregman divergences associated with polynomial functions $f$. Note that for polynomial functions of degree strictly less than $q$, the associated Bregman divergences only depend on smaller moments of the distribution than the $q$th-moment.
\begin{definition}[Bregman Uncertainty]
\label{def:breg-uncertainty}
    Let $p \in \mathbb{N}^{> 0}$ and let $B_p$ be the Bregman divergence associated with the function $f(x) = \frac{p-1} p \vert x\vert^{\frac p {p-1}}$. Then the $p$-Bregman uncertainty $\phi^{(p)}_u$ is 
    \[\phi^{(p)}_u = \min_c \E_u B_p(u, c)\,.\]
\end{definition}

Note that the following fact gives us an explicit formula for $\phi^p_u$:
\begin{fact}
\label{fact:bregman_min_expecation}
    For $p \in \mathbb{N}^{>0}$ and any random variable $x$, the minimizer over all \emph{constants} $c$ of the function $c \mapsto \E_x B_p(x,c)$ is $c = \E x$, and the minimum value is $\frac{p-1} p \left[\E \vert x\vert^{\frac p {p-1}} - \left\vert\E x\right\vert^{\frac p {p-1}}\right]$.
\end{fact}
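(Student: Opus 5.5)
The plan is to compute $\E_x B_p(x,c)$ explicitly for a constant $c$ and then minimize it as a function of $c$. Write $r = \frac{p}{p-1}$, so that $f(x) = \frac{1}{r}|x|^r$. Since $r>1$, $f$ is differentiable on all of $\R$ with derivative $f'(y) = \operatorname{sign}(y)|y|^{r-1}$, and $f'$ is continuous and strictly increasing.

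First, substitute into the definition of the Bregman divergence. For any constant $c$,
\[
\E_x B_p(x,c) = \E f(x) - f(c) - f'(c)\,(\E x - c).
\]
This uses only linearity of expectation. I will assume $\E|x|^r < \infty$; otherwise both sides of the claim are $+\infty$. The terms involving $c$ depend on $x$ only through $\E x$, so I can set $\mu = \E x$ and define $g(c) = -f(c) - f'(c)(\mu - c)$. The quantity $\E f(x)$ is a constant in $c$.

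Second, compare $g(c)$ with $g(\mu) = -f(\mu)$. A direct computation gives
\[
g(c) - g(\mu) = f(\mu) - f(c) - f'(c)(\mu - c) = B_p(\mu, c).
\]
By convexity of $f$ this is nonnegative, which is the first-order characterization of convexity. So $c = \mu$ is a minimizer. It is the unique minimizer, because strict convexity of $f$ (from $r>1$) makes $B_p(\mu,c) > 0$ whenever $c \neq \mu$. This three-point identity is the cleanest route. It avoids differentiating $g$, which would require second derivatives of $f$ that blow up at the origin when $r<2$.

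Third, evaluate the minimum value:
\[
\E f(x) - f(\mu) = \frac{1}{r}\left(\E|x|^r - |\E x|^r\right).
\]
Since $\frac{1}{r} = \frac{p-1}{p}$, this is exactly the stated formula. Nonnegativity also follows, either from the previous step or directly from Jensen's inequality.

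The only real subtlety is the case $p = 1$, where $\frac{p}{p-1}$ is undefined. I would note that the statement should be read for $p \ge 2$, or by a limiting convention at $p = 1$; the argument above needs only that $r > 1$. A minor technical point is integrability of $x$, which follows from $\E|x|^r < \infty$ because $r \ge 1$.
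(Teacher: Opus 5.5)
Your proof is correct. The paper states this as a Fact without proof, and your argument is the standard one: by linearity, $\E_x B_p(x,c) = \E f(x) - f(\E x) + B_p(\E x, c)$, which is the mean case of the three-point identity in \cref{fact:bregman_three_point}, and (strict) convexity of $\frac{1}{r}|x|^r$ then gives both the minimizer and the minimum value. Your caveat is also right: $p=1$ has to be excluded from the stated range $p \in \mathbb{N}^{>0}$, because $\frac{p}{p-1}$ is undefined there.
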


This also implies the following easy bound on the potential:
\begin{fact}
\label{fact:gen-initial-potential-bound}
    For $p \in \mathbb{N}^{>0}$ and any random variable $x$ we have that 
    \[ \phi^{(p)}_x \leq O\left(\E \vert x\vert^{\frac p {p-1}}\right)\,.\]
\end{fact}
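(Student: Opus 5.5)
The statement is Fact~\ref{fact:gen-initial-potential-bound}: $\phi^{(p)}_x \leq O(\E|x|^{p/(p-1)})$. I will derive it directly from the explicit formula in Fact~\ref{fact:bregman_min_expecation}. That fact gives
\[
\phi^{(p)}_x = \frac{p-1}{p}\left[\E |x|^{\frac p{p-1}} - \left|\E x\right|^{\frac p{p-1}}\right].
\]
The subtracted term $|\E x|^{p/(p-1)}$ is nonnegative, so dropping it only increases the right-hand side. This yields
\[
\phi^{(p)}_x \leq \frac{p-1}{p}\,\E|x|^{\frac p{p-1}} \leq \E|x|^{\frac p{p-1}},
\]
and the constant is at most $1$.

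Nonnegativity of $\phi^{(p)}_x$ is not part of the statement. If it were needed, it would follow from Jensen's inequality, since $t\mapsto |t|^{p/(p-1)}$ is convex for $p\ge 2$ and gives $|\E x|^{p/(p-1)}\le \E|x|^{p/(p-1)}$. Alternatively, it follows because a Bregman divergence of a convex function is nonnegative.

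There is one degenerate case. When $p=1$ the exponent $p/(p-1)$ is undefined, and so is the coefficient convention. The paper's uses have $p\ge 2$, so I will note that the fact is meant for those $p$.

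The argument also needs $\E|x|^{p/(p-1)}<\infty$, since otherwise the bound is vacuous. In the pseudo-distribution setting only finitely supported local distributions appear, so this holds automatically. No step is a real obstacle; the only care needed is confirming that Fact~\ref{fact:bregman_min_expecation} applies as stated, so that the minimum is attained at $c=\E x$.
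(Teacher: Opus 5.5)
Your proposal is correct and matches the paper, which obtains this bound directly from the closed form in Fact~\ref{fact:bregman_min_expecation} by dropping the nonnegative term $\frac{p-1}{p}\left|\E x\right|^{\frac{p}{p-1}}$, exactly as you do. Your side remarks are also accurate: $p=1$ must be excluded, and the bound is vacuous when $\E|x|^{\frac{p}{p-1}}=\infty$.
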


Finally, we will need the following equality in the analysis of potential decreases:
\begin{fact}[Bregman Divergence Three Point Equality]
\label{fact:bregman_three_point}
    For any Bregman Divergence $B_h$ we have that
    \[ B_h(x, y) + B_h(y, z) = B_h(x, z) + \langle \nabla h (z) - \nabla h (y), x-y \rangle\,. \]
\end{fact}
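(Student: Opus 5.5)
The identity follows by expanding each Bregman divergence from its definition $B_h(a,b) = h(a) - h(b) - \langle \nabla h(b), a-b\rangle$ and comparing the two sides term by term. No earlier result is needed beyond the definition; the only work is bookkeeping of the linear terms.

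First I expand the left-hand side:
\[
B_h(x,y) + B_h(y,z) = h(x) - h(y) - \langle \nabla h(y), x-y\rangle + h(y) - h(z) - \langle \nabla h(z), y-z\rangle .
\]
The $h(y)$ terms cancel, which leaves
\[
h(x) - h(z) - \langle \nabla h(y), x-y\rangle - \langle \nabla h(z), y-z\rangle .
\]
Next I expand the right-hand side:
\[
h(x) - h(z) - \langle \nabla h(z), x-z\rangle + \langle \nabla h(z), x-y\rangle - \langle \nabla h(y), x-y\rangle .
\]

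The function values $h(x) - h(z)$ match on both sides, so it remains to compare the linear terms. The term $-\langle \nabla h(y), x-y\rangle$ appears on both sides. On the right, the two remaining $\nabla h(z)$ terms combine by linearity of the inner product in its second argument:
\[
-\langle \nabla h(z), x-z\rangle + \langle \nabla h(z), x-y\rangle = -\langle \nabla h(z), (x-z)-(x-y)\rangle = -\langle \nabla h(z), y-z\rangle .
\]
This is exactly the remaining term on the left, which proves the equality.

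There is no real obstacle here. The one point to watch is the sign convention: the correction term is $\langle \nabla h(z) - \nabla h(y), x-y\rangle$, paired with $x-y$ and not with $x-z$, and this is what makes the $\nabla h(z)$ terms telescope. The identity uses only that $h$ is differentiable at $y$ and $z$, with no convexity assumption. In the paper's application, $h(t) = \tfrac{p-1}{p}|t|^{p/(p-1)}$ is differentiable everywhere because $p/(p-1) > 1$.
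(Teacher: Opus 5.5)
Your expansion is correct: after cancelling $h(y)$ and the common $-\langle \nabla h(y), x-y\rangle$ term, the two remaining $\nabla h(z)$ terms combine to $-\langle \nabla h(z), y-z\rangle$ exactly as you show. The paper states this as a standard fact without proof, and expanding the definition is the standard verification, so your proof is complete.
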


\subsection{Bregman uncertainty generalizes variance}

Note that when $p=2$ the associated Bregman divergence is simply $B_2(x,y) = \frac 1 2 (x-y)^2$ and since the minimizing constant $c$ is the expectation we have that $\phi^{(2)}_u = \frac 1 2 \Var(u)$. Furthermore, the guarantee of~\cref{eq:var-decrease-bound} (up to constant factors) can be recovered using mirror descent progress lemmas as follows:

\begin{theorem}[Conditioning reduces Variance]
\label{thm:variance-reduction-via-mirror}
Given scalar random variables $u, v$ drawn from the joint distribution $\calD$, the decrease in variance of $u$ when conditioning on $v$ can be lower bounded as follows:
\begin{equation*}
    \Var(u) - \E_v \Var(u \vert v) \geq \frac{\Cov^2(u, v)}{4\cdot \Var(v)}\,.
\end{equation*}
\end{theorem}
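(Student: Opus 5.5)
We follow the mirror-descent template from the technical overview, specialized to the quadratic regularizer, where the Bregman divergence is $B_2(x,y)=\tfrac12(x-y)^2$ and the Bregman uncertainty is $\phi^{(2)}_u=\tfrac12\Var(u)$. The first step is a variational characterization of the conditional variance. For each fixed value of $v$, the conditional mean $\E[u\mid v]$ is the minimizing constant in $c\mapsto\E[B_2(u,c)\mid v]$; this is \cref{fact:bregman_min_expecation} applied to the conditional law of $u$. Hence for any measurable predictor $f:\R\to\R$,
\[
\E_v\Var(u\mid v)=\E_v\E\big[(u-\E[u\mid v])^2\mid v\big]\le \E_{u,v}(u-f(v))^2 .
\]
It therefore suffices to exhibit one good predictor $f$.

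The predictor will be a single gradient step from the unconditional optimum $\E u$, treating the centered value $v-\E v$ as the gradient:
\[
f(v)=\E u+\eta\,(v-\E v),
\]
with a step size $\eta\in\R$ to be chosen. Its sign can match that of $\Cov(u,v)$, so no WLOG is needed. Expanding the square gives
\[
\E(u-f(v))^2=\Var(u)-2\eta\,\Cov(u,v)+\eta^2\Var(v).
\]
Equivalently, this is the three-point equality \cref{fact:bregman_three_point} with $h(x)=\tfrac12x^2$, since the cross term is exactly $\eta\Cov(u,v)$. Combining with the first step,
\[
\Var(u)-\E_v\Var(u\mid v)\ge 2\eta\,\Cov(u,v)-\eta^2\Var(v).
\]

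It remains to choose $\eta$. The right-hand side is a concave quadratic in $\eta$. Taking $\eta=\Cov(u,v)/(2\Var(v))$ gives $\Cov^2/\Var(v)-\Cov^2/(4\Var(v))=\tfrac34\,\Cov^2(u,v)/\Var(v)$, which is at least the claimed $\Cov^2(u,v)/(4\Var(v))$. (The optimal choice $\eta=\Cov/\Var(v)$ would even give constant $1$.)

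Two degenerate cases are handled separately.
\begin{itemize}
\item If $\Var(v)=0$, then $v$ is almost surely constant. In that case $\Cov(u,v)=0$, and we interpret the bound as trivial; it also holds in the limiting sense.
\item If $\Var(u)=\infty$, the statement is vacuous. So we assume $u$ and $v$ have finite second moments, which makes all the expansions legitimate.
\end{itemize}
No step here is a genuine obstacle. The only point needing care is justifying the conditional-expectation minimization and its integrability, which is standard $L^2$ projection.
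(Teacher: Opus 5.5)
Your proposal is correct and follows essentially the same route as the paper. Both bound $\E_v\Var(u\mid v)$ by the squared error of the one-step predictor $\E u+\eta(v-\E v)$ and take $\eta=\Cov(u,v)/(2\Var(v))$; the only difference is that the paper invokes the generic mirror-descent progress lemma, whose $\eta^2\|y\|_*^2$ slack yields the constant $1/4$, while you expand the three-point identity exactly and get $3/4$ (or $1$ with the optimal step).
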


In the proof of~\cref{thm:variance-reduction-via-mirror}, we will need the following standard mirror descent progress lemma:
\begin{lemma}[Mirror Descent Progress Lemma]
\label{lem:mirror-descent-progress}
    Let $f: \mathbb{R}^d \to \mathbb{R}$ be a differentiable function that is $1$-strongly convex with respect to a norm $\|\cdot\|$ and let $B$ be the Bregman divergence induced by $f$. Let $x, x' \in \mathbb{R}^d$ be such that
    \[x' = \arg\min_{z \in \mathbb{R}^d} \left\{ \eta \langle y, z \rangle + B(z, x) \right\},\]
    for some cost vector $y \in \mathbb{R}^d$ and step size $\eta > 0$. Then for any $u \in \mathbb{R}^d$, the following inequality holds:
    \[ \eta \langle y, u - x \rangle \leq B(u, x) - B(u, x') + \eta^2 \|y\|_*^2\,,\]
    where $\|\cdot\|_*$ denotes the dual norm of $\|\cdot\|$. 
\end{lemma}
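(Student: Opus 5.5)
The plan is to prove the displayed inequality from the first-order optimality condition of the mirror step, the Bregman three point equality (\cref{fact:bregman_three_point}), and strong convexity of $f$. There is one caveat to settle first. With the literal update $\arg\min_z\{\eta\langle y,z\rangle + B(z,x)\}$, the Euclidean case $f=\tfrac12\|z\|^2$ gives $x'=x-\eta y$. Then $B(u,x)-B(u,x') = -\eta\langle y,u-x\rangle - \tfrac{\eta^2}{2}\|y\|^2$, and the stated bound fails once $u-x$ has a large component along $y$. So I will prove the inequality in the orientation it is stated, $\eta\langle y,u-x\rangle$ on the left, for the step that moves \emph{along} $y$. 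That is the update $x'=\arg\min_z\{-\eta\langle y,z\rangle + B(z,x)\}$, the same as the stated update with cost vector $-y$. This reading is the one consistent with the paper's use: in \cref{thm:variance-reduction-via-mirror} the predictor $\E u+\eta(v-\E v)$ steps in the direction of the ``gradient''. I will record this sign convention explicitly before starting.

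\emph{Step 1 (optimality).} Since $f$ is differentiable and strictly convex, the objective $-\eta\langle y,z\rangle + f(z) - f(x) - \langle\nabla f(x),z-x\rangle$ is strictly convex in $z$. Its minimizer is therefore characterized by $\nabla f(x') = \nabla f(x)+\eta y$.

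\emph{Step 2 (three point identity).} Expand the definitions, or apply \cref{fact:bregman_three_point} with $(x,y,z)\to(u,x',x)$:
\[
B(u,x)-B(u,x') = \langle \nabla f(x')-\nabla f(x), u-x'\rangle - B(x',x).
\]
Substituting Step 1, this equals $\eta\langle y,u-x'\rangle - B(x',x)$. Rearranging,
\[
\eta\langle y,u-x\rangle = B(u,x)-B(u,x') + B(x',x) + \eta\langle y,x'-x\rangle .
\]

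\emph{Step 3 (bounding the error term).} It remains to show $B(x',x)+\eta\langle y,x'-x\rangle\le \eta^2\|y\|_*^2$. By the three point equality with $u=x$, we have $B(x,x')+B(x',x)=\langle\nabla f(x')-\nabla f(x),x'-x\rangle=\eta\langle y,x'-x\rangle$. Hence
\[
B(x',x)+\eta\langle y,x'-x\rangle = 2\eta\langle y,x'-x\rangle - B(x,x').
\]
Now use $1$-strong convexity, $B(x,x')\ge\tfrac12\|x'-x\|^2$, together with the duality bound $\langle y,x'-x\rangle\le\|y\|_*\|x'-x\|$. This gives at most $\max_{r\ge0}\{2\eta\|y\|_* r-\tfrac12 r^2\}=2\eta^2\|y\|_*^2$. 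To reach the stated constant $1$, I will instead bound the original pair $B(x',x)+\eta\langle y,x'-x\rangle$ directly. The cleanest route is to compare against the unperturbed point: $x$ is feasible in the mirror-step minimization, so $-\eta\langle y,x'\rangle+B(x',x)\le -\eta\langle y,x\rangle$, and I will combine this with the strong-convexity bound above.

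Steps 1 and 2 are mechanical. The only real obstacle is the bookkeeping in Step 3, where one must track the constant in front of $\eta^2\|y\|_*^2$. The downstream use in \cref{thm:variance-reduction-via-mirror} only needs an $O(\eta^2\|y\|_*^2)$ term, since the step size is tuned afterwards and the constant $4$ there can absorb a factor of $2$. So if the sharp constant $1$ resists, I will settle for the $2\eta^2\|y\|_*^2$ bound and note that it suffices.
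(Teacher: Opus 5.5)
You read the sign convention correctly. As literally stated, the update gives $\nabla f(x')=\nabla f(x)-\eta y$, and the inequality is then false. Every use in the paper, such as the predictor $\E u+\eta(v-\E v)$ in the variance-reduction proof, needs the step along $y$. The paper gives no proof of this lemma; it calls it standard. Your route (first-order optimality, then the three point equality, then strong convexity) is the standard one.

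However, Step 2 contains a sign error that derails the rest. The three point equality gives
\[
B(u,x)-B(u,x')=\langle \nabla f(x')-\nabla f(x),\,u-x'\rangle + B(x',x),
\]
with a \emph{plus} sign. In the Euclidean case this is $\tfrac12\|u-x\|^2-\tfrac12\|u-x'\|^2=\langle x'-x,u-x'\rangle+\tfrac12\|x'-x\|^2$. Your own Step 3 identity $B(x,x')+B(x',x)=\langle\nabla f(x')-\nabla f(x),x'-x\rangle$ is correct, and it contradicts your Step 2 formula: setting $u=x$ there would give $B(x,x')-B(x',x)$ on the left.

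So the rearranged equality in Step 2 is false, and the residual you try to bound, $B(x',x)+\eta\langle y,x'-x\rangle$, is the wrong one. For $f=\tfrac12\|\cdot\|_2^2$ it equals $\tfrac32\eta^2\|y\|_2^2$, so your plan to reach constant $1$ via the feasibility comparison cannot succeed. Your constant-$2$ fallback is a true inequality only because the true residual is smaller (since $B\ge 0$). But it is derived from a false equality, so the proof as written is not valid.

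The fix takes one line and makes Step 3 unnecessary. With the correct sign and $\nabla f(x')-\nabla f(x)=\eta y$,
\[
\eta\langle y,u-x\rangle = B(u,x)-B(u,x') + \eta\langle y,x'-x\rangle - B(x',x).
\]
By $1$-strong convexity, $B(x',x)\ge \tfrac12\|x'-x\|^2$, so the residual is at most
\[
\eta\|y\|_*\|x'-x\|-\tfrac12\|x'-x\|^2\le \tfrac12\eta^2\|y\|_*^2 .
\]
This proves the lemma with constant $\tfrac12$, better than the stated $1$, without the symmetrized identity or the comparison against the unperturbed point.
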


\begin{proof}[Proof of~\cref{thm:variance-reduction-via-mirror}]
Using \cref{def:breg-uncertainty}, we note that for any $c$ which is just a function of $v$ and the joint distribution of $u, v$ we have that
\[ \Var(u) - \E_v \Var(u\vert v) = 2 \left( \phi^{(2)}_u - \E_v \phi^{(2)}_{u \vert v}\right) \geq \E_u B_2(u, \E u) - \E_v \E_{u \vert v} B_2(u, c)\,,\]
using the fact that the divergence is minimized at $c = \E_u$. Thus, it suffices to exhibit a $c$ which causes a sufficiently large decrease in the Bregman divergence. We can compute such a $c$ by performing a mirror descent update step on $x= \E u$ using the cost vector $y= v - \E v$. Note that in our setting $f(x) = \frac 1 2 x^2$, which is $1$-strongly convex with respect to the $\ell_2$ norm and is self-dual. Thus, we can apply~\cref{lem:mirror-descent-progress} with $x = \E u$ and $y = v - \E v$ to get that
\[ \eta \langle v - \E v, u - \E u \rangle \leq B_2(u, \E u) - B_2(u, x') + \eta^2 \norm{v - \E v}_2^2\,.\]
When we take an expectation over both $u, v$ on both sides, we get that this inequality gives us that
\begin{align*}
    \eta \Cov(u, v) = \E_{u,v} \eta \langle v - \E v, u - \E u \rangle &\leq \E_u B_2(u, \E u) - \E_v \E_{u \vert v} B_2(u, x') + \eta^2 \E_v \norm{v - \E v}_2^2 \\
    &= \E_u B_2(u, \E u) - \E_v \E_{u \vert v} B_2(u, x') + \eta^2 \Var(v)\,.
\end{align*}

Now setting $\eta = \frac{\Cov(u, v)}{2 \Var(v)}$ and rearranging, we have 
\begin{equation*}
    \begin{split}
    \Paren{\frac{1}{2} - \frac{1}{4}} \frac{\Cov^2(u, v)}{ \Var(v)}  & \leq \E_u B_2(u, \E u) - \E_v \E_{u \vert v} B_2(u, x') \\
    &\leq \Var(u) - \E_v \Var(u \vert v) 
    \end{split}
\end{equation*}
which yields the desired variance reduction inequality.
\end{proof}

Thus, Bregman uncertainty generalizes variance as a metric of uncertainty. Furthermore, the associated mirror descent progress bounds and analysis can be used to understand how Bregman uncertainty decreases when conditioning on other random variables. 

\subsection{Fractional Bregman uncertainty decrease}

We now extend the proof of the case where $p$ is an even integer larger than $2$. Specifically, we prove the following bound on the decrease in Bregman uncertainty:
\begin{lemma}
\label{lem:general-bregman-p-uncertainty-decrease}
    Let $p$ be an even integer and let $B_p$ be the Bregman divergence associated with $\frac {p-1} p \vert x\vert^{\frac p {p-1}}$. Let $y,z$ be jointly distributed random variables over $\mathbb{R}$. Then we have that 
    \[ \E_{y,z} \left[B(y, \E y) - B(y, \E [y \vert z])\right] \geq \Omega \left(\frac{\Cov^{2(p-1)} (y,z)}{\left( \E \vert z\vert^{\frac p {p-1}} \right)\left(\E_{y,z} \left[\vert y\vert^{\frac p {p-1}} \vert z\vert^{\frac p {p-1}}\right] + \E_y \left[\vert y\vert^{\frac p {p-1}}\right] \E_z \left[\vert z\vert^\frac p {p-1}\right]\right)^{\frac{(p-2)(2p-1)}{p}}}\right) \,.\]
\end{lemma}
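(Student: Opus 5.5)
The plan is to follow the template of the proof of \cref{thm:variance-reduction-via-mirror}. Write $r = \frac{p}{p-1}\in(1,2)$, $h(x)=\frac{p-1}{p}|x|^r$, and $\psi(x)=h'(x)=\mathrm{sign}(x)|x|^{1/(p-1)}$, whose inverse is $\psi^{-1}(w)=\mathrm{sign}(w)|w|^{p-1}$; since $p$ is even, $p-1$ is odd and $\psi^{-1}(w)=w^{p-1}$ is an honest polynomial.

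\textbf{Reduction to an explicit predictor.} The first step uses the fact that $\E[y\mid z]$ minimizes $c\mapsto \E_{y\mid z}B(y,c)$ over constants, by \cref{fact:bregman_min_expecation} applied conditionally. Hence for any measurable predictor $c(z)$,
\[
\E\bigl[B(y,\E y)-B(y,\E[y\mid z])\bigr]\ \ge\ \E\bigl[B(y,\E y)-B(y,c(z))\bigr].
\]
By the three point equality (\cref{fact:bregman_three_point}) with $x=y$, middle point $c(z)$, and endpoint $\E y$, the right side equals
\[
\E\bigl[\langle \psi(c(z))-\psi(\E y),\,y-c(z)\rangle\bigr] + \E\bigl[B(c(z),\E y)\bigr].
\]
This reduces the lemma to constructing a good $c(z)$.

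\textbf{Choice of predictor.} I take $c(z)$ to be a mirror step in dual coordinates, $\psi(c(z))=\psi(m)-\eta g(z)$, where $m=\E y$. The naive choice $g(z)=z-\E z$ makes the first-order gain equal to $-\eta\Cov(y,z)$, which is good after choosing the sign of $\eta$. The difficulty is the second-order term. Since $h$ is not strongly convex, $y-c(z)$ involves $c(z)-m=(\psi(m)-\eta g)^{p-1}-m$, and this grows like $\eta^{p-1}|g|^{p-1}$. Its expectation would need $\E|z|^{p-1}$, which is not controlled: we only have the $r$-moments. As the overview hints, I would therefore follow the modification of taking the step from a $z$-dependent base, for example a base scaled by $|z|^{(p-2)/(p-1)}$, with gradient proportional to $\mathrm{sign}(z)|z|^{1/(p-1)}=\psi(z)$. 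The aim is that every term of the binomial expansion of $(\psi(\text{base})-\eta\psi(z))^{p-1}$ becomes a product of powers of $|y|$-type quantities and $|z|$ with total homogeneity matching $|y|^r|z|^r$ or $|z|^r$.

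\textbf{Bounding the terms.} With the predictor fixed, I expand $\langle\psi(c)-\psi(m),\,y-c\rangle$ as a polynomial in $\eta$ using the binomial theorem. The linear term should recover $\eta\,\Cov(y,z)$ up to a constant. Each higher term $\eta^{k}$ for $2\le k\le p$ I bound using Hölder's and Young's inequalities (\cref{fact:young}) by $\eta^{k}$ times a product of $\E|z|^r$ and $M:=\E|y|^r|z|^r+\E|y|^r\E|z|^r$ raised to fractional powers. The $B(c,m)\ge 0$ term is simply dropped.

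\textbf{Optimizing $\eta$.} The last step is to choose $\eta$ so that the linear gain dominates the sum of the higher-order losses. The worst loss term is the one of order $\eta^{p}$, and balancing it against the gain gives $\eta\approx (\Cov/\text{moments})^{1/(p-1)}$. The resulting gain is $\Cov^{p/(p-1)}$ over a moment product, and raising to the appropriate power, via the homogeneity bookkeeping, should produce the stated exponent $\Cov^{2(p-1)}/\bigl(\E|z|^r\, M^{(p-2)(2p-1)/p}\bigr)$. If the direct balance yields a different but valid exponent, I would convert it to the claimed form using the bound $\Cov(y,z)\lesssim \E|y|^r|z|^r$-type inequalities derived via Hölder.

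\textbf{Main obstacle.} The hard step is designing the $z$-dependent base and gradient so that all cross terms are controlled by the $r$-th moments alone. Without strong convexity, the textbook progress lemma (\cref{lem:mirror-descent-progress}) is unavailable and the argument has to be done by hand. I would first try the base $|z|^{(p-2)/(p-1)}\psi(m)$ with gradient $\psi(z)$, then check the homogeneity of every monomial.
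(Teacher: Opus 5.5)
There is a genuine gap, and it sits at the step you flag as the main obstacle.

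\textbf{Why your reduction cannot work as set up.} Your reduction compares $\E[y\mid z]$ with a predictor $c(z)$ of $y$ itself, so it lower-bounds the unweighted progress $D:=\E[B(y,m)-B(y,\E[y\mid z])]$ directly. Using $B(c,m)+B(m,c)=\langle\psi(c)-\psi(m),c-m\rangle$, your three-point identity rearranges to
\[
\E\bigl[B(y,m)-B(y,c(z))\bigr]=\E\bigl[(\psi(c(z))-\psi(m))(y-m)\bigr]-\E\bigl[B(m,c(z))\bigr].
\]
So the gain is always $\Cov\bigl(y,\psi(c(z))\bigr)$. Write $\Delta=\psi(c(z))-\psi(m)$. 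Then $B(m,c(z))\ge 0$ behaves like $\frac1p|\Delta|^p$ whenever $|\Delta|\gg|\psi(m)|$. Having gain $\eta\Cov(y,z)$ forces $\Delta=\eta z$, and then the expected loss involves $\E|z|^p$, which the $\frac{p}{p-1}$-moments do not control. No H\"older/Young bookkeeping on the binomial expansion changes this.

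Your proposed modification does not escape the problem either. If you output the raw mirror step from the base $|z|^{(p-2)/(p-1)}m$, the result is centred near $|z|^{(p-2)/(p-1)}m$ and is not a predictor of $y$ at all; its $\eta=0$ term is typically already negative. If you divide back by $|z|^{(p-2)/(p-1)}$, then by homogeneity of $\psi$ the predictor is just a mirror step from $m$ with dual increment $\eta\,\sgn(z)|z|^{1/(p-1)^2}$. Its gain is $\eta\Cov\bigl(y,\sgn(z)|z|^{1/(p-1)^2}\bigr)$, which is not $\Cov(y,z)$ and can vanish even when $\Cov(y,z)\neq0$.

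\textbf{The missing idea.} The paper changes the prediction problem, not just the step. It predicts $Y':=|z|^{(p-2)/(p-1)}y$ from $x_1=|z|^{(p-2)/(p-1)}m$ by a dual step $\eta\psi(z)$, and applies the three-point equality in these rescaled coordinates:
\[
B(Y',x_1)-B(Y',x_2)=\eta\,\psi(z)\,|z|^{(p-2)/(p-1)}(y-m)-B(x_1,x_2)=\eta z(y-m)-B(x_1,x_2).
\]
Now the gain is exactly $\eta\Cov(y,z)$. The loss $B(x_1,x_2)$ depends only on $z$ and $m$, and \cref{lem:bregman-update-divergence-bound} bounds it using only $\E|z|^{p/(p-1)}$ and $\E|y|^{p/(p-1)}$.

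The price is on the other side of the inequality. By $\frac{p}{p-1}$-homogeneity of $B$ and pointwise-in-$z$ optimality of $\E[y\mid z]$, what has been lower-bounded is the weighted progress $\E\bigl[|z|^{p(p-2)/(p-1)^2}\bigl(B(y,m)-B(y,\E[y\mid z])\bigr)\bigr]$, not $D$. Converting it requires \cref{lem:separate-v-u-decrease}. That lemma applies H\"older with exponents $\frac{p-1}{p-2}$ and $p-1$, using two facts: $\E_{y\mid z}[B(y,m)-B(y,\E[y\mid z])]\ge0$ for each $z$, and $\E[|z|^{p/(p-1)}B(y,m)]\lesssim M$. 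It bounds the weighted progress by $M^{(p-2)/(p-1)}D^{1/(p-1)}$, which gives $D\gtrsim(\eta\Cov(y,z))^{p-1}/M^{p-2}$. Choosing $\eta$ to beat both the $\eta^2$ and the $\eta^p$ loss terms then produces the exponents $2(p-1)$ on the covariance and $(p-2)(2p-1)/p$ on $M$. Your phrase ``raising to the appropriate power via the homogeneity bookkeeping'' is standing in for exactly this missing weighted-to-unweighted step.
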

Before we proceed to the proof of~\cref{lem:general-bregman-p-uncertainty-decrease}, we first prove a few helper lemmas which will be useful to us in the proof.

\begin{lemma}
\label{lem:small-fact-non-negative-rv}
    Let $c \in (0,1)$ and let $X, Y$ be non-negative jointly distributed random variables. Then we have that
    \[ \E_{X, Y} [X Y] \E_Y [Y]^{c} \leq O\left( \E_{X,Y} [X Y^{c+1}] + \E_X [X] \E_Y [Y^{c+1}]\right)\]
\end{lemma}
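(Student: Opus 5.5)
The argument is a pointwise case split on whether $Y$ is above or below its mean, followed by Jensen's inequality. Set $m = \E_Y[Y] \geq 0$, which is a deterministic constant, so the left-hand side equals $\E_{X,Y}[XY m^c]$. The plan is to bound the integrand $XYm^c$ pointwise and then take expectations.

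\emph{Pointwise bound.} On the event $\{Y \geq m\}$, we have $m^c \leq Y^c$ since $c > 0$ and both quantities are nonnegative, so $XYm^c \leq XY^{c+1}$. On the complementary event $\{Y < m\}$, we have $Y < m$, so $XYm^c \leq Xm^{c+1}$. Both bounds use the nonnegativity of $X$ and $Y$. Combining the two cases gives, pointwise,
\[
XYm^c \leq XY^{c+1} + Xm^{c+1}.
\]

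\emph{Taking expectations.} Taking expectations of this pointwise bound yields
\[
\E_{X,Y}[XY]\,\E_Y[Y]^c \leq \E_{X,Y}[XY^{c+1}] + \E_X[X]\, m^{c+1}.
\]
It remains to replace $m^{c+1} = (\E Y)^{c+1}$ by $\E_Y[Y^{c+1}]$. This follows from Jensen's inequality, because $t \mapsto t^{c+1}$ is convex on $[0,\infty)$ (as $c+1 > 1$). The result is the claimed inequality with constant $1$. The restriction $c < 1$ is not even needed; the argument only uses $c > 0$.

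There is no real obstacle here. The only point requiring care is that $m$ is a constant rather than a random variable, which is what allows it to sit inside the expectation and be compared to $Y$ pointwise. If $\E Y$ or the right-hand side is infinite, the inequality is trivial, so I will assume all quantities are finite.
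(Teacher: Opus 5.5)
Your proof is correct, and it takes a different, more elementary route than the paper.

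The paper's argument has three steps:
\begin{itemize}
\item It first decouples $X$ from $Y$ at the level of expectations. Applying H\"older with exponents $1+c$ and $\frac{1+c}{c}$ to $\E[(X^{1/(1+c)}Y)\cdot X^{c/(1+c)}]$ gives $\E[XY] \le \E[XY^{1+c}]^{1/(1+c)}\,\E[X]^{c/(1+c)}$.
\item It then applies Young's inequality to the resulting product of scalars $\E[XY^{1+c}]^{1/(1+c)}\cdot \E[X]^{c/(1+c)}\E[Y]^{c}$.
\item It finishes with the same Jensen step $\E[Y]^{1+c}\le \E[Y^{1+c}]$ that you use.
\end{itemize}

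You instead use the fact that $m=\E[Y]$ is deterministic, so the balancing can be done pointwise before integrating. You bound $Ym^c \le \max(Y,m)^{1+c} \le Y^{1+c}+m^{1+c}$ and then multiply by $X\ge 0$. This removes the H\"older step entirely and avoids the positivity side-remarks in the paper's proof. Your case split is effectively a crude pointwise form of Young's inequality.

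The paper's route gives slightly sharper weights, $\frac{1}{1+c}$ and $\frac{c}{1+c}$, on the two terms instead of $1$ and $1$. You would recover these as well by using the weighted AM--GM bound $Ym^c \le \frac{1}{1+c}Y^{1+c}+\frac{c}{1+c}m^{1+c}$ pointwise.

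Your remark that only $c>0$ is needed is right, and it applies equally to the paper's argument.
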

\begin{proof}
    Note that by H\"olders with $1+c, (1+c)/c$ we have that 
    \[ \E[XY] \E[Y]^{c} = \E\left[\left(X^{\frac 1 {1+c}} Y\right)\left(X^{\frac c {1+c}}\right)\right] \E[Y]^{c} \leq \E[XY^{1+c}]^{\frac 1 {1+c}} \E[X]^{\frac c {1+c}} \E[Y]^{c}\,.\]
    Using that $\E[XY^{1+c}]^{\frac 1 {1+c}} > 0, \E[X]^{\frac c {1+c}} \E[Y]^{c} > 0$ and~\cref{fact:young} we have that 
    \[ \E[XY] \E[Y]^{c}  \leq \E[XY^{1+c}]^{\frac 1 {1+c}} \E[X]^{\frac c {1+c}} \E[Y]^{c} \leq O\left( \E[XY^{1+c}] + \E[X] \E[Y]^{1+c}\right)\,.\]
    Applying convexity to $\E[Y]^{1+c}$ and using that $\E[X] > 0$ completes the proof.
\end{proof}

Note that~\cref{lem:small-fact-non-negative-rv} also implies the following bound:
\begin{lemma}
\label{lem:separate-v-u-decrease}
    Let $y,z$ be jointly distributed random variables and let $B$ be the Bregman divergence associated with $\frac {p-1} p \vert x\vert^{\frac p {p-1}}$. Then we have that
    \begin{align*}
        &\E_{y,z} \left[ \vert z \vert^{\frac{p(p-2)}{(p-1)^2}} (B(y, \E [y]) - B(y, \E [y \vert z])) \right] \\
        &\quad\leq O\left( \left(\E_{y,z} \left[\vert y \vert^{\frac p {p-1}} \vert z\vert^{\frac p {p-1}}\right] + \E_y \left[\vert y\vert^{\frac p {p-1}}\right] \E_z \left[\vert z\vert^{\frac p {p-1}}\right] \right)^{\frac {p-2}{p-1}} \left( \E_{y,z} (B(y, \E [y]) - B(y, \E [y \vert z])) \right)^{\frac 1 {p-1}} \right)\,.
    \end{align*}
\end{lemma}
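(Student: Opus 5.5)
Write $r = \frac{p}{p-1}$ and $c = \frac{p-2}{p-1} \in (0,1)$, and set $D(y,z) = B(y,\E y) - B(y,\E[y\vert z])$. The left-hand side is $\E[\vert z\vert^{cr} D]$, since $\frac{p(p-2)}{(p-1)^2} = cr$. The plan is to apply H\"older's inequality with exponents $\frac{1}{c}$ and $\frac{1}{1-c} = p-1$, using a pointwise bound on $D$ for the first factor and \cref{lem:small-fact-non-negative-rv} to recombine.

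\textbf{Step 1: $D$ is nonnegative after conditioning.} $D$ is not pointwise nonnegative, so I first take the expectation over $y$ given $z$. Let $d_z = \E_{y\vert z} D$. By \cref{fact:bregman_three_point}, or directly by \cref{fact:bregman_min_expecation} applied to the conditional law of $y$ given $z$,
\[
d_z = \E_{y\vert z}\bigl[B(y,\E y)\bigr] - \E_{y\vert z}\bigl[B(y,\E[y\vert z])\bigr] = B(\E[y\vert z],\E y) \ge 0,
\]
because $\E[y\vert z]$ minimizes $c' \mapsto \E_{y\vert z} B(y,c')$. Since $\vert z\vert^{cr}$ depends only on $z$, the left-hand side equals $\E_z[\vert z\vert^{cr} d_z]$.

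\textbf{Step 2: H\"older split.} Write
\[
\vert z\vert^{cr} d_z = \bigl(\vert z\vert^{r} d_z\bigr)^{c}\, d_z^{1-c}
\]
and apply H\"older with exponents $\frac{1}{c}$ and $p-1$:
\[
\E_z[\vert z\vert^{cr} d_z] \le \bigl(\E_z \vert z\vert^r d_z\bigr)^{c} \bigl(\E_z d_z\bigr)^{1/(p-1)}.
\]
The second factor is $(\E_{y,z} D)^{1/(p-1)}$, which is exactly the second factor in the claim. It remains to show
\[
\E_z \vert z\vert^r d_z \lesssim \E\bigl[\vert y\vert^r\vert z\vert^r\bigr] + \E\vert y\vert^r\, \E\vert z\vert^r .
\]

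\textbf{Step 3: bounding $d_z$ pointwise.} Expanding the Bregman divergence gives
\[
d_z = B(m_z,m) \le f(m_z) + \vert f'(m)\vert\,\vert m_z\vert + \vert f'(m)\vert\,\vert m\vert,
\]
where $m = \E y$, $m_z = \E[y\vert z]$ and $f(x) = \frac{p-1}{p}\vert x\vert^r$. Each term is at most $O(\E_{y\vert z}\vert y\vert^r + \E\vert y\vert^r)$:
\begin{itemize}
\item $\vert m_z\vert^r \le \E_{y\vert z}\vert y\vert^r$ by Jensen, since $r \ge 1$.
\item $\vert m\vert^{r-1}\vert m_z\vert \le \vert m\vert^r + \vert m_z\vert^r$ by Young (\cref{fact:young}).
\item $\vert m\vert^r \le \E\vert y\vert^r$ by Jensen.
\end{itemize}
Therefore
\[
\E_z \vert z\vert^r d_z \lesssim \E\bigl[\vert z\vert^r \vert y\vert^r\bigr] + \E\vert y\vert^r\, \E\vert z\vert^r,
\]
which is the required bound. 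Raising it to the power $c = \frac{p-2}{p-1}$ finishes the argument.

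Here \cref{lem:small-fact-non-negative-rv} turns out to be unnecessary: Jensen and Young suffice. The main subtlety, and the step I would check most carefully, is Step 1, where nonnegativity only holds after conditioning on $z$. Doing that conditioning before applying H\"older is what makes the split valid.
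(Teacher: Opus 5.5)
Your proof is correct, and its skeleton matches the paper's: you use conditional nonnegativity of $d_z$, then H\"older with exponents $\frac{p-1}{p-2}$ and $p-1$, which reduces everything to bounding $\E_z[\vert z\vert^{r} d_z]$ by the pairwise moments.

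The difference is in that last step. The paper uses only $d_z \ge 0$ and $\E_{y\vert z}B(y,\E[y\vert z]) \ge 0$. It upper bounds $d_z$ by $\E_{y\vert z} B(y,\E y)$ and expands $B(y,\E y)$ in terms of $y$. This leaves the cross term $(\E\vert y\vert)^{1/(p-1)}\,\E[\vert z\vert^{r}\vert y\vert]$, which it controls via \cref{lem:small-fact-non-negative-rv}, a H\"older--Young argument on the joint law.

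You instead use the exact identity $d_z = B(\E[y\vert z],\E y)$. It follows from the three-point equality, since the inner-product term vanishes because $\E_{y\vert z}[y-\E[y\vert z]]=0$. You then bound this deterministic quantity pointwise in $z$ by Jensen and Young, so the auxiliary lemma is not needed. Your intermediate bound is slightly sharper, $B(\E[y\vert z],\E y)$ versus $\E_{y\vert z}B(y,\E y)$, though the final estimate is the same up to constants. The paper's route could also drop \cref{lem:small-fact-non-negative-rv} by applying Young pointwise, $\vert\E y\vert^{1/(p-1)}\vert y\vert \lesssim \vert\E y\vert^{r}+\vert y\vert^{r}$. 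So your version buys cleanliness rather than strength.

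One wording point: the equality $d_z = B(\E[y\vert z],\E y)$ comes from the three-point identity or direct expansion. \cref{fact:bregman_min_expecation} alone gives only $d_z\ge 0$, which is all your Step 2 needs.
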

\begin{proof}
    For every fixed $z$ we have
    \[
      \E_{y | z} B(y, \E[y]) - B(y, \E[y | z]) \geq 0.
    \]
    Hence, by H\"olders with $\frac{p-1}{p-2}, p-1$ we have that 
    \begin{align*}
        &\E_{y,z} \left[\vert z\vert^{\frac{p(p-2)}{(p-1)^2}} (B(y, \E [y]) - B(y, \E [y \vert z]))\right] \\
        &\quad = \E_{z} \left[\vert z\vert^{\frac{p(p-2)}{(p-1)^2}} \E_{y \vert z} (B(y, \E [y]) - B(y, \E [y \vert z])) \right]\\
        &\quad\leq O\left( \left(\E_{z} \left[\vert z \vert^{\frac p {p-1}} \cdot (\E_{y | z} B(y, \E [y]) - B(y, \E [y \vert z]))  \right]\right)^{\frac{p-2}{p-1}} \left( \E_{y,z} (B(y, \E [y]) - B(y, \E [y \vert z])) \right)^{\frac 1 {p-1}} \right)\,.
    \end{align*}
    Thus, it remains to bound $\E_{z} \left[\vert z\vert^{\frac p {p-1}} \cdot (\E_{y | z} B(y, \E [y]) - B(y, \E [y \vert z]))\right]$. We have that 
    \begin{align*}
        \E_{z} \left[\vert z\vert^{\frac p {p-1}} \cdot (\E_{y | z} B(y, \E [y]) - B(y, \E [y \vert z]))\right] \leq \E_{y,z} \left[\vert z\vert^{\frac p {p-1}} B(y, \E [y])\right]\,,
    \end{align*}
    since $\vert z\vert^{\frac p {p-1}}$ and $\E_{y | z} B(y, \E [y \vert z])$ are both non-negative. Expanding the definition of $B$, we have that 
    \begin{align*}
        \E_{y,z} \left[ \vert z \vert^{\frac p {p-1}} B(y, \E [y])\right] &= \E_{y,z} \left[\vert z\vert^{\frac p {p-1}} \left( \vert y\vert^{\frac p {p-1}} - \left\vert\E [y]\right\vert^{\frac p {p-1}} - \frac p {p-1} \cdot \sgn(\E [y]) \cdot \left\vert \E [y] \right\vert^{\frac 1 {p-1}} \left( y - \E [y] \right) \right) \right]\\
        &\leq \E_{y,z} [\vert y\vert^{\frac p {p-1}} \vert z\vert^{\frac p {p-1}}] + \frac p {p-1} \left( \E [ \vert y \vert]\right)^{\frac 1 {p-1}} \left(\E [\vert z \vert^{\frac p {p-1}} \cdot \vert y \vert] \right)  + \frac p {p-1} \E [\vert z\vert^{\frac p {p-1}}] \E[\vert y\vert]^{\frac p {p-1}}\,.
    \end{align*}
    Using~\cref{lem:small-fact-non-negative-rv} to bound the second term in the sum, we have that 
    \[ \E_{y,z} \left[\vert z\vert^{\frac p {p-1}} B(y, \E [y])\right] \leq O\left( \E_{y,z} [\vert y\vert^{\frac p {p-1}} \vert z \vert^{\frac p {p-1}}] + \E [\vert z \vert^{\frac p {p-1}}] \E[\vert y\vert^{\frac p {p-1}}] + \E [\vert z\vert^{\frac p {p-1}}] \E[\vert y \vert]^{\frac p {p-1}}\right)\,.\]
    Applying convexity to the last term in the bound completes the proof of the lemma.
\end{proof}

We will also need a simple bound on the divergence incurred by the mirror descent update.

\begin{lemma}
\label{lem:bregman-update-divergence-bound}
    Let $p \in \mathbb{Z}^{\geq 2}$ and let $B$ be the Bregman divergence associated with $\frac {p-1}{p}\vert x\vert^{\frac{p} {p-1}}$. Fix any $x_1 \in \R$ and $\Delta \in \R$, and define $x_2 = \Paren{\sgn(x_1) \cdot \vert x_1\vert^{\frac 1 {p-1}} + \Delta}^{p-1}$.
    Then we have that
    \[
        B(x_1, x_2) \leq O\left(\Delta^2 \vert x_1 \vert^{\frac{p-2}{p-1}} + \Delta^p\right)\,.
    \]
    In particular, setting $\Delta = \eta \cdot \sgn(z) \cdot \vert z\vert^{\frac 1 {p-1}}$ yields
    \[
        B(x_1, x_2) \leq O\left( \eta^2 \vert z\vert^{\frac 2 {p-1}} \vert x_1 \vert^{\frac{p-2}{p-1}} + \eta^p \vert z \vert^{\frac p {p-1}}\right)\,.
    \]
\end{lemma}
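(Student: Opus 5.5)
The plan is to pass to the dual (mirror) coordinates, where the update is additive and the Bregman divergence becomes a polynomial Taylor remainder. Write $f(x) = \frac{p-1}{p}|x|^{\frac p{p-1}}$, so that $f'(x) = \sgn(x)|x|^{\frac 1{p-1}}$. Set $a = f'(x_1) = \sgn(x_1)|x_1|^{\frac1{p-1}}$ and $b = a + \Delta$. The definition of $x_2$ reads $x_2 = b^{p-1}$.

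\textbf{Step 1: $x_2$ is the mirror point.} We need $f'(x_2) = b$, i.e.\ $x_2 = (f')^{-1}(b) = \sgn(b)|b|^{p-1}$. When $p$ is even, $p-1$ is odd, so $b^{p-1} = \sgn(b)|b|^{p-1}$ and this holds directly. For general integer $p \ge 2$ I would read $x_2$ as $(f')^{-1}(a+\Delta)$, which is the intended mirror-descent step.

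\textbf{Step 2: duality of Bregman divergences.} Use the standard identity $B_f(x_1,x_2) = B_{f^*}(f'(x_2), f'(x_1)) = B_{f^*}(b,a)$, where $f^*$ is the convex conjugate. Here $f^*(s) = \frac1p|s|^p$. This identity follows from Fenchel--Young equality $f(x) + f^*(f'(x)) = x f'(x)$ together with $(f^*)' = (f')^{-1}$, and can be verified in two lines. Hence
\[ B(x_1,x_2) = \tfrac1p\Paren{|b|^p - |a|^p - p\,\sgn(a)|a|^{p-1}(b-a)}, \]
which is the second-order Taylor remainder of $s\mapsto \frac1p|s|^p$ at $a$ with increment $\Delta$.

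\textbf{Step 3: bound the remainder.} Since $(f^*)''(s) = (p-1)|s|^{p-2}$, the integral form of the remainder gives
\[ B(x_1,x_2) = \int_0^1 (1-t)(p-1)|a+t\Delta|^{p-2}\Delta^2\,dt \le O\Paren{(|a|^{p-2}+|\Delta|^{p-2})\Delta^2}, \]
using $|a+t\Delta|^{p-2} \le 2^{p-2}(|a|^{p-2}+|\Delta|^{p-2})$. For even $p$ one can instead expand binomially to $\frac1p\sum_{k=2}^p\binom pk a^{p-k}\Delta^k$ and bound each middle term by Young's inequality (\cref{fact:young}): $|a|^{p-k}|\Delta|^k \le |a|^{p-2}\Delta^2 + |\Delta|^p$. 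Either route gives the same bound.

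\textbf{Step 4: conclude.} Substituting $|a|^{p-2} = |x_1|^{\frac{p-2}{p-1}}$ yields $B(x_1,x_2) \le O(\Delta^2|x_1|^{\frac{p-2}{p-1}} + |\Delta|^p)$, and $|\Delta|^p = \Delta^p$ for even $p$. The ``in particular'' statement follows by plugging in $\Delta = \eta\,\sgn(z)|z|^{\frac1{p-1}}$, so that $\Delta^2 = \eta^2|z|^{\frac2{p-1}}$ and $|\Delta|^p = \eta^p|z|^{\frac p{p-1}}$.

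The main obstacle is Step 2: recognizing that the divergence is most cleanly computed in dual coordinates. Bounding $B$ directly in the primal variable is awkward because $f$ is not strongly convex near $0$. Once we switch to the dual, everything reduces to a polynomial Taylor remainder.
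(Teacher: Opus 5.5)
Your proof is correct, and its core is the paper's argument. The paper's proof is essentially your ``even $p$'' alternative in Step 3:
\begin{itemize}
\item it sets $a=\sgn(x_1)|x_1|^{1/(p-1)}$ and writes $x_1=a^{p-1}$, $x_2=(a+\Delta)^{p-1}$;
\item it substitutes these directly into the definition of $B$, which produces your dual remainder $\tfrac{p-1}{p}a^p+\tfrac1p(a+\Delta)^p-a^{p-1}(a+\Delta)$ without ever naming the identity $B_f(x_1,x_2)=B_{f^*}(f'(x_2),f'(x_1))$;
\item it expands binomially, notes that the $a^{p-1}\Delta$ terms cancel, and applies Young's inequality to each surviving monomial $|a|^i|\Delta|^{p-i}$, $0\le i\le p-2$.
\end{itemize}

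Your packaging adds two things. First, conceptual clarity: the cancellation is just the vanishing of the first-order Taylor term of $\tfrac1p|s|^p$. Second, the integral-remainder bound needs no sign bookkeeping and works for every integer $p\ge 2$ under the mirror-step reading $x_2=(f')^{-1}(a+\Delta)$. By contrast, the paper's expansion writes $a^p$ for $|a|^p$ and uses $x_1=a^{p-1}$, so it tacitly needs $p$ even.

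Your hedge in Step 1 is in fact necessary, because the statement as literally written fails for odd $p$. Take $p=3$, $x_1=-1$, $\Delta=0$: then $x_2=(-1)^2=1$ and $B(x_1,x_2)=2$, while the right-hand side is $0$. The lemma is only ever used with $p$ even, so nothing downstream is affected.

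One small correction to your closing remark. The difficulty in the primal variable is that $f''(x)=\tfrac{1}{p-1}|x|^{(2-p)/(p-1)}$ blows up at $0$. That is a lack of smoothness, which is what an upper bound on $B$ requires, not a lack of strong convexity near $0$; the function fails to be strongly convex only at infinity.
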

\begin{proof}
    Let $a = \sgn(x_1) \cdot \vert x_1\vert^{\frac 1 {p-1}}$ so that $x_1 = a^{p-1}$ and $x_2 = (a+\Delta)^{p-1}$.
    Expanding the definition of $B$ gives
    \begin{align*}
        B(x_1, x_2)
        &= \frac {p-1}{p} \left(a^{p} - (a+\Delta)^p - \frac{p}{p-1}(a+\Delta)\left(a^{p-1} - (a+\Delta)^{p-1}\right)\right) \\
        &= \frac {p-1}{p} \left(- \sum_{i=0}^{p-1} \binom{p}{i} a^i \Delta^{p-i} + \frac{p}{p-1} \left(a+ \Delta\right) \sum_{i=0}^{p-2} \binom{p-1} i a^i \Delta^{p-1-i}\right)
    \end{align*}
    Note that
    \[ - \binom{p}{p-1} a^{p-1} \Delta + \left(\frac{p}{p-1} a\right) \left(\binom{p-1} {p-2} a^{p-2} \Delta \right) \]
    is identically $0$, so we have that the only monomials with non-negative coefficients are of the form $a^i \Delta^{p-i}$ for $i \in [0, p-2]$ and
    \[ B(x_1, x_2) \leq O \left( \sum_{i=1}^{p-2} \vert a\vert^i \cdot \vert \Delta \vert^{p-i} \right)\,.\]
    By~\cref{fact:young} we have that $\vert a\vert^i \cdot \vert \Delta \vert^{p-i} \leq O\left(\Delta^2 \vert a \vert^{p-2} + \vert \Delta \vert^p\right)$ for $i \in [0,p-2]$, and therefore $B(x_1, x_2) \leq O\left(\Delta^2 a^{p-2} + \Delta^p\right)$.
    Substituting $a^{p-2} = \vert x_1\vert^{\frac{p-2}{p-1}}$ completes the proof.
\end{proof}

We now return to the proof of~\cref{lem:general-bregman-p-uncertainty-decrease}.
\begin{proof}[Proof of~\cref{lem:general-bregman-p-uncertainty-decrease}]
    WLOG let $\Cov(y,z) \geq 0$. We consider the following setup: let $x_0  = \E y$ be the predictor for $y$, minimizing $B(y, x)$. We will show that we can construct a good predictor $x(z)$ which is a function of only $z$ and the joint distribution of $y,z$ such that $B(y, x(z))$ is sufficiently smaller than $B(y, x_0)$ in expectation. Note that this means that the potential function we are considering decreases since it picks the best such function. We can describe the predictor as follows:
    \begin{enumerate}
        \item Sample $z$ and let $x_1$ = $\vert z \vert^{\frac {p-2}{p-1}} x_0 = \vert z\vert^{\frac {p-2}{p-1}} \E y$.
        \item Let $x_2$ be the result of a mirror descent update step on $x_1$ with cost function $\sgn(z) \cdot \vert z\vert^{\frac 1 {p-1}}$ and some step size $\eta$. 
        \item Output $x_2/ \vert z\vert^{\frac {p-2}{p-1}}$.
    \end{enumerate}
    We will analyze this using a mirror descent-inspired analysis on the evolution between $x_1$ and $x_2$, which can be thought of as trying to predict $z^{\frac {p-2}{p-1}} y$.
    
    Note that the update in step 2 has the following explicit form:
    \[ \left( \sgn\left( \vert z \vert^{\frac {p-2}{p-1}} \E y\right) \cdot \left\vert \vert z \vert^{\frac {p-2}{p-1}} \E y \right\vert^{\frac 1 {p-1}} + \eta \sgn(z) \cdot \vert z\vert^{\frac 1 {p-1}}\right)^{p-1}\]
    Consider~\cref{fact:bregman_three_point} applied to $\vert z\vert^{\frac {p-2}{p-1}} y, \vert z\vert^{\frac {p-2}{p-1}} \E y$, and the mirror-descent update $x_2$.
    Since $x_2$ is defined by a mirror-descent update with gradient $\eta \cdot \sgn(z) \cdot \vert z\vert^{\frac 1 {p-1}}$, we have that
    \[ B(\vert z\vert^{\frac {p-2}{p-1}} y, \vert z\vert^{\frac {p-2}{p-1}} \E y) - B(\vert z\vert^{\frac {p-2}{p-1}} y, x_2) = \langle \eta \cdot \sgn(z) \cdot \vert z\vert^{\frac 1 {p-1}}, \vert z \vert^{\frac {p-2}{p-1}} y - \vert z\vert^{\frac {p-2}{p-1}} \E y \rangle - B(\vert z\vert^{\frac {p-2}{p-1}} \E y, x_2) \,.\]
    If we take the expectation over $y,z$ on both sides, note that we have that
    \[ \E_{y,z} B(\vert z\vert^{\frac {p-2}{p-1}} y, \vert z\vert^{\frac {p-2}{p-1}} \E y) - B(\vert z\vert^{\frac {p-2}{p-1}} y, x_2) \leq \E_{y,z} B(\vert z\vert^{\frac {p-2}{p-1}} y, \vert z\vert^{\frac {p-2}{p-1}} \E y) - B(\vert z\vert^{\frac {p-2}{p-1}} y, \vert z\vert^{\frac {p-2}{p-1}} \E [y \vert z]) \,,\]
    since Bregman divergences are minimized at the expectation by~\cref{fact:bregman_min_expecation}. Furthermore, since the Bregman divergence for the $\frac p {p-1}$ power is a homogeneous function of degree $\frac p {p-1}$ we can move the terms involving $z$ out of the divergence. Using this as well as~\cref{lem:separate-v-u-decrease} we have that
    \begin{align*}
        \E_{y,z} \Biggl[&B(\vert z\vert^{\frac {p-2}{p-1}} y, \vert z\vert^{\frac {p-2}{p-1}} \E y) - B(\vert z\vert^{\frac {p-2}{p-1}} y, \vert z\vert^{\frac {p-2}{p-1}} \E [y \vert z])\Biggr] = \E_{y,z} \vert z\vert^{\frac{p(p-2)}{(p-1)^2}} \left[B(y, \E y) - B(y, \E [y \vert z])\right] \\
        &\leq O\left( \left(\E_{y,z} [\vert y\vert^{\frac p {p-1}} \vert z\vert^{\frac p {p-1}}] + \E_y [\vert y\vert^{\frac p {p-1}}] \E_z [\vert z\vert^{\frac p {p-1}}] \right)^{\frac {p-2}{p-1}} \left( \E_{y,z} (B(y, \E [y]) - B(y, \E [y \vert z])) \right)^{\frac 1 {p-1}} \right)\,.
    \end{align*}
    Combining this with our initial expression, we have that
    \begin{align}
        \label{eq:bregman-progress-1}
        &\left(\E_{y,z} \left[ \vert y\vert^{\frac p {p-1}} \vert z\vert^{\frac p {p-1}}\right] + \E_y \left[\vert y\vert^{\frac p {p-1}}\right] \E_z \left[\vert z\vert^{\frac p {p-1}}\right] \right)^{\frac {p-2}{p-1}} \left( \E_{y,z} (B(y, \E [y]) - B(y, \E [y \vert z])) \right)^{\frac 1 {p-1}} \\
        &\quad \geq \E_{y,z} \langle \eta \sgn(z) \cdot \vert z\vert^{\frac 1 {p-1}}, \vert z\vert^{\frac {p-2}{p-1}} y - \vert z\vert^{\frac {p-2}{p-1}} \E y \rangle - \E_{y,z} B(x_1, x_2) \,.
    \end{align}
    Note that $\E_{y,z} \langle \eta \sgn(z) \cdot \vert z\vert^{\frac 1 {p-1}}, \vert z\vert^{\frac {p-2}{p-1}} y - \vert z\vert^{\frac {p-2}{p-1}} \E y \rangle$ is simply $\eta \Cov(y,z)$ so it suffices to bound the last divergence. By~\cref{lem:bregman-update-divergence-bound} applied with $\Delta = \eta \sgn(z) \cdot \vert z\vert^{\frac 1 {p-1}}$, we have that
    \[
        B(x_1, x_2) \leq O\left( \eta^2 \vert z\vert^{\frac 2 {p-1}} \cdot \vert x_1\vert^{\frac{p-2}{p-1}} + \eta^p \vert z\vert^{\frac p {p-1}}\right) = O \Paren{ \eta^2 \vert z\vert^{\frac {p^2 - 2p + 2}{(p-1)^2}} (\E [\vert y\vert])^{\frac{p-2}{p-1}} + \eta^p \vert z\vert^{\frac p {p-1}}}\,.
    \]
    When we take the expectation over $z$ on both sides, by H\"olders and convexity we have that
    \[
        \E_z B(x_1, x_2) \leq O\left( \eta^2 \left(\E \vert z\vert^{\frac p {p-1}} \right)^{\frac {p^2 - 2p + 2}{p(p-1)}} \left( \E \left[ \vert y\vert^{\frac p {p-1}} \right]\right)^{\frac {p-2} p} + \eta^p \left(\E \left[ \vert z\vert^{\frac p {p-1}} \right] \right)\right) \,.
    \]
    Now we choose $\eta$ so that $\eta \Cov(y,z)$ dominates $\E_v B(x_1,x_2)$.
    It suffices to ensure
    \[
    \eta \ll \frac{\Cov(y,z)}{\left(\E \vert z\vert^{\frac p {p-1}} \right)^{\frac {p^2 - 2p + 2}{p(p-1)}} \left( \E \left[ \vert y\vert^{\frac p {p-1}} \right]\right)^{\frac {p-2} p}}
    \]
    and
    \[
    \eta \ll \Paren{ \frac{ \Cov(y,z)}{\Paren{\E \vert z\vert^{\frac p {p-1}}}}   }^{\frac 1 {p-1}} \, ,
    \]
    so for a small enough constant $c$ we choose
    \[
    \eta = c \min \Paren{ \frac{\Cov(y,z)}{\left(\E \vert z\vert^{\frac p {p-1}} \right)^{\frac {p^2 - 2p + 2}{p(p-1)}} \left( \E \left[ \vert y\vert^{\frac p {p-1}} \right]\right)^{\frac {p-2} p}}, \Paren{ \frac{ \Cov(y,z)}{\Paren{\E \vert z\vert^{\frac p {p-1}}}}   }^{\frac 1 {p-1}}} \, .
    \]
    
    Putting this together with \eqref{eq:bregman-progress-1}, we get
    \[
        \E_{y,z} (B(y, \E [y]) - B(y, \E [y \vert z])) \geq \Omega(1) \cdot \frac{\min \Paren{ \frac{\Cov(y,z)^{2(p-1)}}{\left(\E \vert z\vert^{\frac p {p-1}} \right)^{\frac {p^2 - 2p + 2}{p}} \left( \E \left[ \vert y\vert^{\frac p {p-1}} \right]\right)^{\frac {(p-2)(p-1)} p}},  \frac{ \Cov^p(y,z)}{\Paren{\E \vert z\vert^{\frac p {p-1}}}} }}
        {\left(\E_{y,z} [\vert y\vert^{\frac p {p-1}} \vert z\vert^{\frac p {p-1}}] + \E_y [\vert y\vert^{\frac p {p-1}}] \E_z [\vert z\vert^{\frac p {p-1}}] \right)^{p-2}}
    \]

    We now simplify the minimum term.
    Using
    \[
        \min\{A,B\} \geq \frac{AB}{A+B} = \frac{1}{\frac{1}{A} + \frac{1}{B}}
    \]
    with
    \[
        A = \frac{\Cov(y,z)^{2(p-1)}}{\left(\E \vert z\vert^{\frac p {p-1}} \right)^{\frac {p^2 - 2p + 2}{p}} \left( \E \left[ \vert y\vert^{\frac p {p-1}} \right]\right)^{\frac {(p-2)(p-1)} p}}
        \qquad\text{and}\qquad
        B = \frac{ \Cov^p(y,z)}{\Paren{\E \vert z\vert^{\frac p {p-1}}}} \,,
    \]
    we get
    \begin{align*}
        \min\{A,B\}
        &\geq \frac{1}{\frac{\left(\E \vert z\vert^{\frac p {p-1}} \right)^{\frac {p^2 - 2p + 2}{p}} \left( \E \left[ \vert y\vert^{\frac p {p-1}} \right]\right)^{\frac {(p-2)(p-1)} p}}{\Cov(y,z)^{2(p-1)}}
        + \frac{\left(\E \vert z\vert^{\frac p {p-1}}\right)}{\Cov(y,z)^p}} \\
        &= \frac{\Cov(y,z)^{2(p-1)}}{\left(\E \vert z\vert^{\frac p {p-1}} \right)^{\frac {p^2 - 2p + 2}{p}} \left( \E \left[ \vert y\vert^{\frac p {p-1}} \right]\right)^{\frac {(p-2)(p-1)} p}
        + \Cov(y,z)^{p-2}\left(\E \vert z\vert^{\frac p {p-1}}\right)}\,.
    \end{align*}
    Plugging this lower bound into the previous display yields
    \begin{align*}
        & \E_{y,z} (B(y, \E [y]) - B(y, \E [y \vert z]))\\
        & \quad \geq
        \Omega(1) \cdot 
        \left(\Cov(y,z)^{2(p-1)}\right)\cdot
        \left(\E_{y,z} \left[\vert y\vert^{\frac p {p-1}} \vert z\vert^{\frac p {p-1}}\right] + \E_y \left[\vert y\vert^{\frac p {p-1}}\right] \E_z \left[\vert z\vert^\frac p {p-1}\right] \right)^{-(p-2)} \\
        & \qquad \left(\left(\E \vert z\vert^{\frac p {p-1}} \right)^{\frac {p^2 - 2p + 2}{p}} \left( \E \left[ \vert y\vert^{\frac p {p-1}} \right]\right)^{\frac {(p-2)(p-1)} p}
        + \Cov(y,z)^{p-2}\left(\E \vert z\vert^{\frac p {p-1}}\right)\right)^{-1}\,.
    \end{align*}

    We now lower bound the last two factors using only $\frac{p}{p-1}$-th moments.
    Using triangle, Holder, and Jensen's inequality,
    \[
        \abs{\Cov(y,z)} \leq \Paren{\E[\vert y\vert^{\frac p {p-1}} \vert z\vert^{\frac p {p-1}}]}^{\frac {p-1}{p}} + \Paren{\E \vert y\vert^{\frac p {p-1}} \, \E \vert z\vert^{\frac p {p-1}}}^{\frac {p-1} {p}}\,.
    \]
    Hence,
    \begin{align*}
        \Cov(y,z)^{p-2} &\leq O\left( \Paren{\E[\vert y\vert^{\frac p {p-1}} \vert z\vert^{\frac p {p-1}}]}^{\frac {p-1}{p}} + \Paren{\E \vert y\vert^{\frac p {p-1}} \, \E \vert z\vert^{\frac p {p-1}}}^{\frac {p-1} {p}}\right)\\
        &\leq O\left(\Paren{\E[\vert y\vert^{\frac p {p-1}} \vert z\vert^{\frac p {p-1}}] + \E \vert y\vert^{\frac p {p-1}} \, \E \vert z\vert^{\frac p {p-1}}}^{\frac {(p-1)(p-2)}{p}}\right)\,.
    \end{align*}
    We thus have that
    \begin{align*}
        &\left(\E_{y,z} \left[\vert y\vert^{\frac p {p-1}} \vert z\vert^{\frac p {p-1}}\right] + \E_y \left[\vert y\vert^{\frac p {p-1}}\right] \E_z \left[\vert z\vert^\frac p {p-1}\right] \right)^{p-2} \\
        &\quad\left(\left(\E \vert z\vert^{\frac p {p-1}} \right)^{\frac {p^2 - 2p + 2}{p}} \left( \E \left[ \vert y\vert^{\frac p {p-1}} \right]\right)^{\frac {(p-2)(p-1)} p}
        + \Cov(y,z)^{p-2}\left(\E \vert z\vert^{\frac p {p-1}}\right)\right) \\
        &\quad \quad\leq O \Biggl(
        \left(\E_{y,z} \left[\vert y\vert^{\frac p {p-1}} \vert z\vert^{\frac p {p-1}}\right] + \E_y \left[\vert y\vert^{\frac p {p-1}}\right] \E_z \left[ \vert z\vert^\frac p {p-1}\right] \right)^{p-2}\left(\E \vert z\vert^{\frac p {p-1}} \right)^{\frac {p^2 - 2p + 2}{p}} \left( \E \left[ \vert y\vert^{\frac p {p-1}} \right]\right)^{\frac {(p-2)(p-1)} p}
        \\
        &\quad \qquad \qquad+ \left(\E \vert z\vert^{\frac p {p-1}}\right)\left(\E_{y,z} \left[\vert y\vert^{\frac p {p-1}} \vert z\vert^{\frac p {p-1}}\right] + \E_y \left[\vert y\vert^{\frac p {p-1}}\right] \E_z \left[\vert z\vert^\frac p {p-1}\right]\right)^{\frac{(p-2)(2p-1)}{p}}\Biggr) \\
        &\quad \quad \leq O\Paren{        \left(\E \vert z\vert^{\frac p {p-1}}\right)\left(\E_{y,z} \left[\vert y\vert^{\frac p {p-1}} \vert z\vert^{\frac p {p-1}}\right] + \E_y \left[\vert y\vert^{\frac p {p-1}}\right] \E_z \left[\vert z\vert^\frac p {p-1}\right]\right)^{\frac{(p-2)(2p-1)}{p}}}
    \end{align*}
\end{proof}

\subsection{Rounding error to potential decrease}

In this subsection, we will give a general lemma that can be used to analyze how rounding error on typical objectives is related to a potential decrease when the potential function is based on Bregman uncertainty. 
\begin{lemma}
\label{lem:general-p-rounding-err}
    Let $p$ be an even integer and let $q$ be an integer in $[1, p-1]$. Let $\mu_{y,z}$ be a random pseudodistribution with indeterminates $y,z$ and let $\err$ denote 
    \[ \E_{\{\mu\}} \left[ \pE_{y,z \sim \mu_y \otimes \mu_z} \sum_{i,j} A_{ij}^q y_i z_j - \pE_{y,z \sim \mu_{y,z}} \sum_{i,j} A_{ij}^q y_i z_j\right]\,,\]
    where $\E_{\{\mu\}}$ denotes the expectation over the randomness in the choice of $\mu$. Let 
    \[ \Phi(y) = \sum_i \left(\phi_{y_i}^{(p)}\right)^{\frac{p-1}{p-q}}\,,\]
    where $\phi_{y_i}^{(p)}$ is the $p$-Bregman uncertainty of $y_i$. Let $\psi$ be a distribution over $[n]$ where $\psi(j) \propto \left( \E_{\{\mu\}} \pE [\vert z_j\vert^{\frac p {p-1}}] \right)^{\frac{p-1}{p-q}}$. Then we have that
    \begin{align*}
        \err \leq &\left(\E_{\{\mu\}} \left[ \sum_j \pE \left[\vert z_j\vert^{\frac p {p-q}}\right] \right]\right)^{\frac{p-q}{2(p-1)^2}} \left( \E_{\{\mu\}} \left[ \sum_{ij} \pE \left[\vert y_i\vert^{\frac p {p-q}} \vert z_j\vert^{\frac p {p-q}}\right] + \pE \left[\vert y_i\vert^{\frac p {p-q}}\right] \pE \left[\vert z_j\vert^{\frac p {p-q}}\right] \right] \right)^{\frac{(p-q)(p-2)(2p-1)}{2p(p-1)^2}} \\
        &\qquad\norm{A}_p^q \left( \E_{\{\mu\}} \left[\E_{j \sim \psi} \Phi(y) - \E_{z_j} \Phi(y \vert z_j) \right]\right)^{\frac{p-q}{2(p-1)^2}}\,.
    \end{align*}
\end{lemma}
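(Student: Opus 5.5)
The plan is to reduce the rounding error to a weighted sum of pairwise covariances and then apply \cref{lem:general-bregman-p-uncertainty-decrease} pair by pair. The exponents in \cref{lem:general-p-rounding-err} are intricate and I have not fully checked that they match the route below, so the exponent bookkeeping is itself part of the plan.

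\emph{Step 1: reduce to covariances.} Fix a realization $\mu$. By linearity the bracketed quantity equals $-\sum_{i,j} A_{ij}^q \Cov_\mu(y_i,z_j)$, where $\Cov_\mu$ is computed in the $2$-wise local distribution. Hence
\[
|\err| \le \E_{\{\mu\}} \sum_{i,j} |A_{ij}|^q \, |\Cov_\mu(y_i,z_j)| .
\]
Apply H\"older with exponents $\frac pq$ and $\frac p{p-q}$, over both the sum on $(i,j)$ and the randomness of $\mu$. This pulls out $\norm{A}_p^q$ and leaves
\[
\Big(\E_{\{\mu\}}\sum_{i,j} |\Cov_\mu(y_i,z_j)|^{\frac p{p-q}}\Big)^{\frac{p-q}p}.
\]

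\emph{Step 2: per-pair bound.} For each pair $(i,j)$, apply \cref{lem:general-bregman-p-uncertainty-decrease} to the local distribution of $(y_i,z_j)$. This is legitimate because the lemma's proof only looks at two coordinates, so it holds for pseudodistributions of degree at least the needed constant. Write $D_{ij}=\phi^{(p)}_{y_i}-\E_{z_j}\phi^{(p)}_{y_i\mid z_j}$ and $M_{ij}$ for the pairwise-plus-product moment. The lemma gives
\[
|\Cov(y_i,z_j)|^{2(p-1)} \lesssim D_{ij}\cdot \pE|z_j|^{\frac p{p-1}}\cdot M_{ij}^{\frac{(p-2)(2p-1)}p}.
\]
Raising both sides to the power $\frac{p}{2(p-1)(p-q)}$ expresses $|\Cov|^{\frac p{p-q}}$ as a product of three factors:
\[
|\Cov(y_i,z_j)|^{\frac p{p-q}} \lesssim D_{ij}^{\frac{p}{2(p-1)(p-q)}}\cdot \Big(\pE|z_j|^{\frac p{p-1}}\Big)^{\frac{p}{2(p-1)(p-q)}}\cdot M_{ij}^{\frac{(p-2)(2p-1)}{2(p-1)(p-q)}} .
\]

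\emph{Step 3: split the product by H\"older.} Sum over $i,j$ and the randomness of $\mu$, then apply a three-way H\"older inequality to separate the product.
\begin{itemize}
\item The $M_{ij}$ factor collects into the total moment term $\E\sum_{ij}(\pE|y_i|^{\cdot}|z_j|^{\cdot}+\pE|y_i|^{\cdot}\pE|z_j|^{\cdot})$.
\item The $z_j$-moment factor collects into $\E\sum_j \pE|z_j|^{\cdot}$.
\item The $D_{ij}$ factor is handled by weighting $j$ with $\psi(j)\propto(\E\pE|z_j|^{p/(p-1)})^{\frac{p-1}{p-q}}$. The normalizing constant of $\psi$ is absorbed into the $z$-moment factor, which is why that factor appears with exponent $\frac{p-q}{2(p-1)^2}$.
\end{itemize}
Raising everything to the power $\frac{p-q}p$ from Step 1 should produce the exponents $\frac{p-q}{2(p-1)^2}$ and $\frac{(p-q)(p-2)(2p-1)}{2p(p-1)^2}$ in the statement.

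\emph{Step 4: relate $D_{ij}$ to the potential drop (main obstacle).} It remains to bound $\sum_i D_{ij}^{a}$, with $a=\frac{p-1}{p-q}\ge1$, by $\Phi(y)-\E_{z_j}\Phi(y\mid z_j)$, where $\Phi$ sums the $a$-th powers of the uncertainties. The difficulty is that conditioning decreases $\phi$ only on average over $z_j$, while $t\mapsto t^a$ is convex. Convexity helps in one direction: $\E_{z_j}\phi_{y_i\mid z_j}^{a}\le \phi_{y_i}^{a}$ holds whenever $\phi_{y_i\mid z_j}\le\phi_{y_i}$ pointwise.

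My first attempt is the elementary inequality $(s-t)^a\le s^a-t^a$ for $0\le t\le s$, applied to the conditional uncertainties and then averaged. If pointwise domination fails, I would instead apply \cref{lem:general-bregman-p-uncertainty-decrease} directly to the quantity $\phi^a$. I expect the exponent bookkeeping here, in particular reconciling the $\frac p{p-1}$ and $\frac p{p-q}$ moments, to be the delicate part.

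Finally, averaging over $j\sim\psi$ and taking the $\frac{p-q}{2(p-1)^2}$ power yields the stated bound.
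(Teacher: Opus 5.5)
\textbf{Step~4 is a genuine gap, and it cannot be closed: for $q\ge 2$ the statement itself is false.} Here $a=\frac{p-1}{p-q}>1$, and the inequality $\sum_i D_{ij}^a\le \Phi(y)-\E_{z_j}\Phi(y\mid z_j)$ fails.

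Conditioning lowers Bregman uncertainty only on average, so $\phi^{(p)}_{y_i\mid z_j}$ can exceed $\phi^{(p)}_{y_i}$ for some values of $z_j$. Convexity of $t\mapsto t^a$ then pushes $\E_{z_j}(\phi^{(p)}_{y_i\mid z_j})^a$ above $(\phi^{(p)}_{y_i})^a$.

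Concretely, take $p=4$, $q=2$ (so $a=3/2$) and $n=m=1$. Let $z$ be uniform on $\{0,1\}$, with $y=\delta$ when $z=0$ and $y=\pm 1$ with equal probability when $z=1$. Then $\phi_{y\mid z=0}=0$, $\phi_{y\mid z=1}=3/4$, and $\phi_y\to 3/8$ as $\delta\to 0$. Hence $\Phi(y)-\E_z\Phi(y\mid z)\to (3/4)^{3/2}(2^{-3/2}-2^{-1})<0$, while $\err=A_{11}^2\delta/4>0$. The last bracket of the lemma is therefore negative, and the right-hand side (with fractional power $\frac{p-q}{2(p-1)^2}=\frac19$) is not a valid upper bound.

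Your first attempt needs the pointwise domination $\phi_{y_i\mid z_j}\le\phi_{y_i}$, which fails here. No fallback can succeed, because the target inequality is false.

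The paper's proof reaches this step through \cref{fact:powers-of-breg-potential-bound}, and that argument has exactly the flaw you sensed. The bound $(s-t)^a\le s^a-t^a$ correctly gives $D_{ij}^a\le(\phi_{y_i})^a-(\E_{z_j}\phi_{y_i\mid z_j})^a$. But Jensen gives $(\E_{z_j}\phi_{y_i\mid z_j})^a\le\E_{z_j}(\phi_{y_i\mid z_j})^a$, which is the wrong direction for replacing the former by the latter. So the obstacle you flagged is a hole in the paper, not only in your plan.

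\textbf{A second problem: the order of your H\"older steps cannot produce the deterministic weighting $\psi$.} You apply H\"older jointly over $(i,j)$ and the randomness of $\mu$ in Step~1. Step~3 then leaves you with $\E_{\{\mu\}}\sum_{i,j}\big(Z_j(\mu)D_{ij}(\mu)\big)^{a}$, where $Z_j(\mu)=\pE_\mu|z_j|^{\frac p{p-1}}$ depends on the realization.
\begin{itemize}
\item Since $Z_j(\mu)$ and $D_{ij}(\mu)$ can be correlated across realizations, you cannot factor out weights proportional to $(\E_{\{\mu\}}Z_j)^a$.
\item You also cannot push $\E_{\{\mu\}}$ into each factor: after raising to the power $\frac p{p-q}$, the three exponents sum to $\frac{2p^2-3p+2}{2(p-1)(p-q)}>1$.
\end{itemize}

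The paper uses the opposite order.
\begin{enumerate}
\item For each fixed pair $(i,j)$, it first applies H\"older over $\mu$ with exponents $\frac1{2(p-1)},\frac{(p-2)(2p-1)}{2p(p-1)},\frac1{2(p-1)}$. These sum to at most $1$, so $\bar Z_j=\E_{\{\mu\}}Z_j$, $\bar M_{ij}$ and $\bar D_{ij}=\E_{\{\mu\}}D_{ij}$ are all deterministic.
\item It then applies H\"older with exponents $\frac pq,\frac p{p-q}$ over $(i,j)$.
\item Finally it applies a two-way (not three-way) H\"older with exponents $\frac{2(p-1)^2}{(p-2)(2p-1)}$ and $\frac{2(p-1)^2}{p}$. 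It keeps $\bar Z_j$ and $\bar D_{ij}$ together, so that $\sum_{i,j}(\bar Z_j\bar D_{ij})^a=\big(\sum_j\bar Z_j^a\big)\,\E_{j\sim\psi}\sum_i\bar D_{ij}^a$.
\end{enumerate}
A genuine three-way split separating $Z_j$ from $D_{ij}$ would cost a factor of $n$, because $Z_j$ does not depend on $i$.

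\textbf{A possible repair.} With the paper's ordering, one valid replacement for Step~4 is $\sum_i\bar D_{ij}^a\le(\max_i\bar D_{ij})^{a-1}\sum_i\bar D_{ij}$, using $0\le\bar D_{ij}\le\E_{\{\mu\}}\phi^{(p)}_{y_i}\le\E_{\{\mu\}}\pE|y_i|^{\frac p{p-1}}$. The sum $\sum_i\bar D_{ij}$ is the expected drop of the monotone potential $\sum_i\phi^{(p)}_{y_i}$. This yields a different statement, with that potential in the last bracket and an extra factor $\big(\max_i\E_{\{\mu\}}\pE|y_i|^{\frac p{p-1}}\big)^{a-1}$. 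The remaining steps then go through.
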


\begin{remark}
    Note that a sum-of-squares pseudodistribution is consistent with a real distribution on subsets of size at most the degree. Thus, we abuse notation and write $\pE$ to denote the expectation over these local distributions for functions of one or two indeterminates, even when they are not (integer) polynomials.
\end{remark}

\begin{fact}
\label{fact:powers-of-breg-potential-bound}
    Let $c > 1$ and let $x_i$ be real-valued random variables with $p$-Bregman uncertainty $\phi_{x_i}^{(p)}$. Then we have that for any random variable $z$ that
    \[ \left[ \sum_{i \in [n]} \left\vert \phi_{x_i}^{(p)} - \E_z \phi_{x_i \vert z}^{(p)} \right\vert^c\right] \leq \left[ \sum_{i \in [n]} \left(\left\vert \phi_{x_i}^{(p)} \right\vert^c - \E_z \left\vert \phi_{x_i \vert z}^{(p)} \right\vert^c\right)\right] \]
\end{fact}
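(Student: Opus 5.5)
The plan is to prove the inequality coordinatewise and then sum over $i \in [n]$. Fix $i$ and write $a = \phi^{(p)}_{x_i} \ge 0$ and $b_z = \phi^{(p)}_{x_i \vert z} \ge 0$. Both are nonnegative by \cref{fact:bregman_min_expecation}, since each is a minimum of expected Bregman divergences, which are nonnegative. It then suffices to show, for each $i$,
\[
\left\vert a - \E_z b_z \right\vert^c \le a^c - \E_z b_z^c .
\]

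\textbf{Step 1 (the absolute value is harmless).} First I will show $\E_z b_z \le a$. By \cref{fact:bregman_min_expecation}, $a = \E B_p(x_i, \E x_i)$. For each value of $z$, the constant $\E x_i$ is a feasible competitor in the minimization defining $b_z$, so $b_z \le \E_{x_i\vert z} B_p(x_i, \E x_i)$. Averaging over $z$ gives $\E_z b_z \le a$. Hence $\vert a - \E_z b_z\vert = a - \E_z b_z$.

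\textbf{Step 2 (superadditivity of $t \mapsto t^c$).} For $c > 1$ and $0 \le b \le a$ we have $(a-b)^c + b^c \le a^c$, because $t \mapsto t^c$ is convex and vanishes at $0$, hence superadditive on $[0,\infty)$. Applying this pointwise in $z$ gives $(a - b_z)^c \le a^c - b_z^c$. Then Jensen's inequality for the convex map $t \mapsto t^c$ on $[0,\infty)$ gives
\[
(a - \E_z b_z)^c \le \E_z (a - b_z)^c \le a^c - \E_z b_z^c,
\]
which is the required coordinatewise bound. Summing over $i$ finishes the proof.

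\textbf{Main obstacle.} Step 2 needs $b_z \le a$ for every $z$ (or at least almost surely), so that $a - b_z \ge 0$ and superadditivity applies. Step 1 only gives this on average, and in general conditioning can increase the uncertainty for some values of $z$. Note that Jensen applied directly to $\E_z b_z^c$ points the wrong way, so the pointwise bound cannot be replaced by the averaged one.

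I would first try to secure $b_z \le a$ in the setting where the fact is applied. If that fails, I would fall back on weakening the inequality or changing how it is used. The natural fallback is $\sum_i (a_i - \E b_{i,z})^c \le \sum_i (a_i^c - (\E b_{i,z})^c)$. This follows from Step 1 and superadditivity with no pointwise assumption, since $0 \le \E_z b_z \le a$. It also suffices if, in the proof of \cref{lem:general-p-rounding-err}, the potential after conditioning is bounded using $(\E_z \phi_{x_i \vert z})^c$ in place of $\E_z (\phi_{x_i \vert z})^c$.

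Because I cannot establish the pointwise condition from what is stated so far, this remains an open gap in the plan.
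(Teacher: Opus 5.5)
The gap you isolate is real and cannot be closed, because the statement itself is false. Take $n=1$ and let $z\in\{0,1\}$ with $\Pr[z=1]=\delta\in(0,1)$. Let $x=0$ if $z=0$, and let $x$ be uniform on $\{-1,1\}$ if $z=1$. Since $\E[x\mid z]=0=\E x$, \cref{fact:bregman_min_expecation} gives $\phi^{(p)}_x=\frac{p-1}{p}\delta$, $\phi^{(p)}_{x\mid z=0}=0$ and $\phi^{(p)}_{x\mid z=1}=\frac{p-1}{p}$. So $\E_z \phi^{(p)}_{x\mid z}=\phi^{(p)}_x$, the left-hand side is $0$, and the right-hand side is $\bigl(\frac{p-1}{p}\bigr)^c(\delta^c-\delta)<0$. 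More generally, the inequality fails whenever $\E[x\mid z]$ is constant but $\phi^{(p)}_{x\mid z}$ is not: the left side vanishes and the right side equals $(\E_z \phi^{(p)}_{x\mid z})^c-\E_z (\phi^{(p)}_{x\mid z})^c<0$.

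Your Steps 1 and 2 are correct, and they prove the inequality under the extra hypothesis $\phi^{(p)}_{x_i\mid z}\le\phi^{(p)}_{x_i}$ for every $z$. That hypothesis fails here. It is also unavailable where the fact is used, since \cref{lem:general-p-rounding-err} allows arbitrary joint (pseudo)distributions of $(y_i,z_j)$. The paper's own one-line proof applies $|a-b|^c\le a^c-b^c$ with $b=\E_z \phi^{(p)}_{x_i\mid z}$ and then invokes $(\E_z \phi^{(p)}_{x_i\mid z})^c\le\E_z (\phi^{(p)}_{x_i\mid z})^c$. That inequality gives $a^c-(\E_z \phi^{(p)}_{x_i\mid z})^c\ge a^c-\E_z (\phi^{(p)}_{x_i\mid z})^c$, so it is exactly the wrong-direction Jensen step you warned about, and it does not prove the claim.

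Your fallback is the right repair. The averaged inequality $\sum_i(a_i-\E_z b_{i,z})^c\le\sum_i\bigl(a_i^c-(\E_z b_{i,z})^c\bigr)$ holds by your Step 1 and superadditivity. It suffices in the proof of \cref{lem:general-p-rounding-err} once you drop the Jensen step that moves $\E_{\{\mu\}}$ inside the power. Set $c=\frac{p-1}{p-q}$, $A_i=\E_{\{\mu\}}\phi^{(p)}_{y_i}$ and $A_i^{(j)}=\E_{\{\mu\}}\E_{z_j}\phi^{(p)}_{y_i\mid z_j}$. Step 1 gives $\E_{\{\mu\}}\Delta(i\mid j)=A_i-A_i^{(j)}\ge 0$, hence $\sum_i\bigl(\E_{\{\mu\}}\Delta(i\mid j)\bigr)^c\le\Psi-\Psi^{(j)}$, where $\Psi=\sum_i A_i^c$ and $\Psi^{(j)}=\sum_i(A_i^{(j)})^c$.

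This change of potential is necessary, not cosmetic. The example above shows that $\sum_i(\phi^{(p)}_{y_i})^c$, evaluated per pseudodistribution, can increase in expectation under conditioning. So it is not a pseudo-distribution potential in the sense of \cref{def:pseudo-distribution-potential}. By contrast, $\Psi$ takes the power only after averaging over the conditioning randomness. It is nonnegative, starts at $\sum_i(\phi^{(p)}_{y_i}(\mu_0))^c$, and by Step 1 never increases under further conditioning on any variable.

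The price is that $\Psi$ is not of the form $\E_{\{\mu\}}\Phi(\mu)$, so \cref{def:potential-aligned} and \cref{lem:gcr_linear_combo} must be restated for such averaged potentials. The telescoping argument uses only nonnegativity, the initial value, and monotonicity, so it goes through with the same bound on the number of conditioning steps.
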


Note that the above fact allows us to utilize bounds from~\cref{lem:general-bregman-p-uncertainty-decrease} in bounding the decrease in potential values of the form $\Phi = \sum_i \left( \phi_{x_i}^{(p)} \right)^c$.

\begin{proof}
    This follows by noting that for non-negative numbers $a\geq b$ and $c>1$ we have that $\vert a-b \vert^c \leq a^c - b^c$. Bregman divergences (and thus our potential functions) are non-negative and only decrease in expectation when conditioning on a variable, so applying this to each entry of the sum and using convexity to bound $\left\vert \E_z \phi_{x_i \vert z}^{(p)} \right\vert^c \leq \E_z \left\vert \phi_{x_i \vert z}^{(p)} \right\vert^c$ completes the proof. 
\end{proof}

We now return to the proof of~\cref{lem:general-p-rounding-err}.
\begin{proof}[Proof of~\cref{lem:general-p-rounding-err}]
    Note that we have that
    \[ \err = \E_{\{\mu\}} \left[ \pE_{y,z \sim \mu_y \otimes \mu_z} \sum_{i,j} A_{ij}^q y_i z_j - \pE_{y,z \sim \mu_{y,z}} A_{ij}^q y_i z_j  \right] \leq \E_{\{\mu\}}\sum_{i,j} \vert A_{ij} \vert^q \, \vert \Cov(y_i, z_j) \vert \,.\]
    Let $\Delta(i \vert j) = \E_{z_j} \phi_{y_i}^{(p)} - \phi_{y_i \vert z_j}^{(p)}$ and define
    \[ \alpha_{ij} = \pE_{y_i,z_j} \left[\vert y_i\vert^{\frac p {p-1}} \vert z_j\vert^{\frac p {p-1}}\right] + \pE \left[\vert y_i\vert^{\frac p {p-1}}\right] \pE \left[\vert z_j\vert^\frac p {p-1}\right]\,.\]
    Then by~\cref{lem:general-bregman-p-uncertainty-decrease} we have that
    \begin{align*}
        \err &\leq \E_{\{\mu\}} \left[ \sum_{i,j} \vert A_{ij} \vert^q \, \vert \Cov(y_i, z_j) \vert \right] \\
        &\leq O(1) \cdot \sum_{i,j} \vert A_{ij} \vert^q \cdot \E_{\{\mu\}} \left[ \left(\pE \left[\vert z_j\vert^\frac p {p-1}\right] \right)^{\frac 1 {2(p-1)}}  \alpha_{ij}^{\frac{(p-2)(2p-1)}{2p(p-1)}} \Delta(i \vert j)^{\frac 1 {2(p-1)}}\right]\,.
    \end{align*}
    Note that $\frac 1 {2(p-1)} + \frac{(p-2)(2p-1)}{2p(p-1)} + \frac 1 {2(p-1)} \leq 1$ for $p \geq 2$ and thus we have by H\"older and convexity that
    \begin{align*}
        \err &\leq O(1) \cdot \sum_{i,j} \vert A_{ij} \vert^q \cdot \left( \E_{\{\mu\}} \pE \left[\vert z_j\vert^\frac p {p-1}\right] \right)^{\frac 1 {2(p-1)}}  \left( \E_{\{\mu\}} \alpha_{ij}\right)^{\frac{(p-2)(2p-1)}{2p(p-1)}} \left(\E_{\{\mu\}} \Delta(i \vert j)\right)^{\frac 1 {2(p-1)}}\,.
    \end{align*}
    Applying H\"older with $p/q$ and $p/(p-q)$ we have that
    \begin{align*}
        \err &\leq O(1) \cdot \sum_{i,j} \vert A_{ij} \vert^q \cdot \left( \E_{\{\mu\}} \pE \left[\vert z_j\vert^\frac p {p-1}\right] \right)^{\frac 1 {2(p-1)}}  \left( \E_{\{\mu\}} \alpha_{ij}\right)^{\frac{(p-2)(2p-1)}{2p(p-1)}} \left(\E_{\{\mu\}} \Delta(i \vert j)\right)^{\frac 1 {2(p-1)}} \\
        &\leq O(1) \cdot \norm{A}_p^q \left( \sum_{i,j} \left(\E_{\{\mu\}} \pE \left[\vert z_j\vert^\frac p {p-1}\right] \right)^{\frac {p}{2(p-1)(p-q)}} \left( \E_{\{\mu\}} \alpha_{ij}\right)^{\frac{(p-2)(2p-1)}{2(p-q)(p-1)}} \left( \E_{\{\mu\}} \Delta(i \vert j)\right)^{\frac p {2(p-q)(p-1)}} \right)^{\frac{p-q}{p}}\,.
    \end{align*}
    Applying H\"olders again with $\frac{2(p-1)^2}{(p-2)(2p-1)}$ and $\frac{2(p-1)^2}{p}$ gives us that
    \begin{align*}
        \err &\leq O(1) \cdot \norm{A}_p^q \left( \sum_{i,j} \left(\E_{\{\mu\}} \pE \left[\vert z_j\vert^\frac p {p-1}\right] \right)^{\frac {p}{2(p-1)(p-q)}} \left( \E_{\{\mu\}} \alpha_{ij}\right)^{\frac{(p-2)(2p-1)}{2(p-q)(p-1)}} \left( \E_{\{\mu\}} \Delta(i \vert j)\right)^{\frac p {2(p-q)(p-1)}} \right)^{\frac{p-q}{p}} \\
        &\leq O(1) \cdot \norm{A}_p^q \left( \sum_{i,j} \left( \E_{\{\mu\}} \alpha_{ij}\right)^{\frac {p-1}{p-q}} \right)^{\frac{(p-q)(p-2)(2p-1)}{2p(p-1)^2}} \left( \sum_{i,j} \left(\E_{\{\mu\}} \E \left[\vert z_j\vert^\frac p {p-1}\right] \right)^{\frac{p-1}{p-q}} \left( \E_{\{\mu\}} \Delta(i \vert j) \right)^{\frac{p-1}{p-q}}\right)^{\frac{p-q}{2(p-1)^2}} \,.
    \end{align*}
    Let by $\psi(j) \propto \left( \E_{\{\mu\}} \left[ \pE\left[\vert z_j\vert^{\frac p {p-1}}\right] \right]\right)^{\frac{p-1}{p-q}}$. Then we have that
    \begin{align*}
        \err &\leq O(1) \cdot \norm{A}_p^q \left( \sum_{i,j} \left( \E_{\{\mu\}} \alpha_{ij}\right)^{\frac {p-1}{p-q}} \right)^{\frac{(p-q)(p-2)(2p-1)}{2p(p-1)^2}} \left( \sum_{i,j} \left(\E_{\{\mu\}} \pE \left[\vert z_j\vert^\frac p {p-1}\right] \right)^{\frac{p-1}{p-q}} \left( \E_{\{\mu\}} \Delta(i \vert j) \right)^{\frac{p-1}{p-q}}\right)^{\frac{p-q}{2(p-1)^2}} \\
        &= O(1) \cdot \norm{A}_p^q \left( \sum_{i,j} \left( \E_{\{\mu\}} \alpha_{ij}\right)^{\frac {p-1}{p-q}} \right)^{\frac{(p-q)(p-2)(2p-1)}{2p(p-1)^2}} \left(  \sum_j \left( \E_{\{\mu\}} \left[ \pE\left[\vert z_j\vert^{\frac p {p-1}}\right] \right]\right)^{\frac{p-q}{p-q}}\right)^{\frac{p-q}{2(p-1)^2}} \\
        &\quad\left( \E_{j \sim \psi} \sum_i \left( \E_{\{\mu\}} \Delta(i \vert j) \right)^{\frac{p-1}{p-q}} \right)^{\frac{p-q}{2(p-1)^2}}\,.
    \end{align*}
    We now note that 
    \[ \E_{j \sim \psi} \sum_i \left( \E_{\{\mu\}} \Delta(i \vert j) \right)^{\frac{p-1}{p-q}} \leq \E_{j \sim \psi} \E_{\{\mu\}} \sum_i \Delta(i \vert j)^{\frac{p-1}{p-q}} = \E_{j \sim \psi} \E_{\{\mu\}} \sum_i \left( \phi_{y_i}^{(p)} - \E_{z_j} \phi_{y_i \vert z_j}^{(p)} \right)^{\frac{p-1}{p-q}} \]
    where $\frac{p-1}{p-q} > 1$ so we can apply~\cref{fact:powers-of-breg-potential-bound} to get that
    \[ \E_{j \sim \psi} \E_{\{\mu\}} \sum_i \Delta(i \vert j)^{\frac{p-1}{p-q}} \leq \E_{\{\mu\}} \E_{j \sim \psi} \sum_i \left(\phi_{y_i}^{(p)}\right)^{\frac{p-1}{p-q}} - \E_{z_j} \left(\phi_{y_i \vert z_j}^{(p)}\right)^{\frac{p-1}{p-q}}\,.\]
    Plugging this back into the previous inequality and using that by convexity and Almost Triangle Inequality we have that
    \[  \left(\E_{\{\mu\}} \pE [\vert z_j\vert^\frac p {p-1}] \right)^{\frac{p-1}{p-q}} \leq \E_{\{\mu\}} \pE [\vert z_j\vert^{\frac p {p-q}}] \quad \text{and} \quad \alpha_{ij}^{\frac {p-1}{p-q}} \leq O\left(  \pE \left[\vert y_i\vert^{\frac p {p-q}} \vert z_j\vert^{\frac p {p-q}}\right] + \pE \left[\vert y_i\vert^{\frac p {p-q}}\right] \pE \left[\vert z_j\vert^{\frac p {p-q}}\right]  \right)\]
    completes the proof.
\end{proof}

\subsection{Generalizing fractional Bregman uncertainty to asymmetric correlations}

In this section, we generalize the potential decrease lemma from the previous sections to cases where we have correlations of the form $\Cov(u^p, v^q)$ for $p\neq q$. This will be necessary for our application to $p\rightarrow q$ norms.

\begin{lemma}
\label{lem:gen-p-q-norm-potential-decrease}
    Let $p \leq q$ be even integers and let $B$ be the Bregman divergence associated with $\vert x \vert^{\frac p {p-1}}$. Then we have that
    \[\E_{u,v} \left( B( u^{p-1}, \E u^{p-1}) - B ( u^{p-1},  \E [u^{p-1} \vert v])  \right) \geq \frac{\Cov^{\frac{2q(p-1)}{p}}(u^{p-1}, v^{q-1})}{\left( \E v^q \right)^{\frac{(q-p)(p-2)+pq}{p^2}} \left( \E [u^p v^q] + \E u^p \E v^q \right)^{\frac{2 pq (p-2) - (p^2 - 2q)}{p^2}}}\,.\]
\end{lemma}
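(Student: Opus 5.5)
We follow the template of the proof of \cref{lem:general-bregman-p-uncertainty-decrease}, with $y = u^{p-1}$ and $z = v^{q-1}$, but we adjust the weighting and the mirror-descent step to the asymmetric exponents. Note that $|y|^{\frac p{p-1}} = u^p$, so the natural moment of $y$ is $\E u^p$. For $z$, however, the available moment is $\E v^q = \E |z|^{\frac q{q-1}}$, not $\E|z|^{\frac p{p-1}}$. We therefore build a predictor $x(z)$ for $y$ in three steps. First, sample $v$ and set $x_1 = |v|^{a}\,\E u^{p-1}$ for a weight exponent $a$ to be chosen. Second, take one mirror-descent step for the regularizer $\frac{p-1}{p}|x|^{\frac p{p-1}}$ with cost $\sgn(v)|v|^{b}$ and step size $\eta$, i.e.\ $x_2 = (\sgn(x_1)|x_1|^{\frac1{p-1}} + \eta\,\sgn(v)|v|^{b})^{p-1}$. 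Third, output $x_2/|v|^{a}$. The exponents must satisfy two conditions. We need $\frac{a}{p-1} + b = q-1$, so that the linear term in the three-point identity (\cref{fact:bregman_three_point}) becomes $\eta\E[|v|^{a/(p-1)}\sgn(v)|v|^b (y-\E y)] = \eta\Cov(u^{p-1},v^{q-1})$. We also want the weight $|v|^{\frac{ap}{p-1}}$ appearing from homogeneity to have exponent at most $q$, so that it is controlled by $\E v^q$.

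The key steps, in order, are as follows. (i) Apply the three-point equality to $|v|^a y$, $|v|^a\E y$ and $x_2$, take expectations, and use \cref{fact:bregman_min_expecation} to replace $x_2$ by $|v|^a\E[y\mid v]$. This gives
\[ \eta\Cov(u^{p-1},v^{q-1}) - \E B(x_1,x_2) \le \E\big[|v|^{\frac{ap}{p-1}}\big(B(y,\E y)-B(y,\E[y\mid v])\big)\big]. \]
(ii) Bound the right-hand side by an analogue of \cref{lem:separate-v-u-decrease}. Using conditional nonnegativity and H\"older with exponents chosen so that one factor is $\E[v^q\,(B(y,\E y)-B(y,\E[y|v]))]$, we get the unweighted decrease raised to a power, times $(\E[u^pv^q]+\E u^p\E v^q)$ raised to a power. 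The second factor is controlled by expanding $B(y,\E y)$ and applying \cref{lem:small-fact-non-negative-rv}. (iii) Bound $\E B(x_1,x_2)$ by \cref{lem:bregman-update-divergence-bound} with $\Delta = \eta\sgn(v)|v|^b$, which gives $O(\eta^2|v|^{2b}|x_1|^{\frac{p-2}{p-1}} + \eta^p|v|^{bp})$. Then use H\"older and Jensen to bound the expectations by powers of $\E v^q$ and $\E u^p$. This requires $bp \le q$ and $2b + \frac{a(p-2)}{p-1} \le q$, after which Jensen applies. (iv) Choose $\eta$ as the minimum of the two values that make each error term at most $\frac12\eta\Cov$. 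Combine the two cases via $\min\{A,B\}\ge AB/(A+B)$. Finally, use $|\Cov(u^{p-1},v^{q-1})| \lesssim (\E[u^pv^q]+\E u^p\E v^q)^{\frac{p-1}{p}}(\E v^q)^{\frac{q-p}{pq}}$-type H\"older bounds to absorb the lower-order $\Cov$ factor, as at the end of the proof of \cref{lem:general-bregman-p-uncertainty-decrease}.

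The main obstacle is choosing the exponents $a,b$ (and the H\"older exponents in step (ii)) consistently, so that every moment that appears is dominated by $\E v^q$, $\E u^p$, or $\E[u^pv^q]$ while using $q\ge p$. Our first attempt is $b = \frac{q}{p}$ (so $bp = q$) and $a = (p-1)(q-1-\frac qp)$. Then $\frac{ap}{p-1} = pq-p-q$, which exceeds $q$ for large $p$. If this happens, we will instead split the weight using H\"older against the pair moment $\E[u^pv^q]$, or take $b$ smaller and compensate with additional Jensen steps. Once feasible exponents are fixed, the remaining work is bookkeeping to match the stated powers $\frac{2q(p-1)}{p}$, $\frac{(q-p)(p-2)+pq}{p^2}$ and $\frac{2pq(p-2)-(p^2-2q)}{p^2}$. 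We will verify these exponents by checking homogeneity in the rescalings $u\mapsto\lambda u$ and $v\mapsto\lambda v$, and by checking that the case $q=p$ reduces to \cref{lem:general-bregman-p-uncertainty-decrease}.
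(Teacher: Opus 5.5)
Your template is the same as the paper's: weight the target by a power of $|v|$, take one mirror-descent step whose cost is a power of $v$, bound the weighted decrease by H\"older against $\E[u^pv^q]+\E u^p\E v^q$, control $\E B(x_1,x_2)$ via \cref{lem:bregman-update-divergence-bound}, and optimize $\eta$. However, the exponent condition you build everything on is wrong, and as written step (i) fails.

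\begin{itemize}
\item In the three-point identity, the dual increment $\nabla h(x_2)-\nabla h(x_1)=\eta\,\sgn(v)|v|^{b}$ is paired with the \emph{primal} difference $|v|^{a}y-|v|^{a}\E y=|v|^{a}(y-\E y)$.
\item So the linear term is $\eta\,\E[\sgn(v)|v|^{a+b}(y-\E y)]$. The factor $|v|^{a/(p-1)}$ describes how the weight scales in dual coordinates, but it is not what multiplies $y-\E y$.
\item The correct requirement is therefore $a+b=q-1$.
\item With your choice $b=q/p$, $a=(p-1)(q-1-\frac qp)$, you get $a+b\neq q-1$ as soon as $p\ge 4$. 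Step (i) then does not produce $\Cov(u^{p-1},v^{q-1})$.
\item Your step (iii) also fails: the exponent $2b+\frac{a(p-2)}{p-1}=\frac{2q}{p}+(p-2)(q-1-\frac qp)$ exceeds $q$, so the Jensen step is unavailable.
\end{itemize}
The ``main obstacle'' you describe, a weight exponent $pq-p-q>q$, is an artifact of this slip. The fallbacks you sketch (splitting against $\E[u^pv^q]$, shrinking $b$) are not an argument.

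With the correct constraint, the natural choice works and no splitting is needed. Take $b=q/p$ and $a=q-1-\frac qp=\frac{pq-p-q}{p}$, i.e.\ $x_1=|v|^{\frac{pq-p-q}{p}}\E u^{p-1}$ with cost $\sgn(v)|v|^{q/p}$. This is exactly the paper's predictor.

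\begin{itemize}
\item \textbf{Step (ii).} The homogeneity weight is $|v|^{\frac{pq-p-q}{p-1}}$, whose exponent is at most $q$. Apply H\"older with exponents $\frac{q(p-1)}{pq-p-q}$ and $\frac{q(p-1)}{p}$. Writing $\alpha=\E[u^pv^q]+\E u^p\E v^q$ and $\delta$ for the Bregman decrease, this gives the bound $O(\alpha^{\frac{pq-p-q}{q(p-1)}}(\E\delta)^{\frac{p}{q(p-1)}})$.
\item \textbf{Step (iii).} Here $bp=q$ and $2b+\frac{a(p-2)}{p-1}=\frac{pq-p-q+2}{p-1}\le q$, using $p\ge 2$. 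Jensen then gives $\E B(x_1,x_2)=O(\eta^2(\E v^q)^{\frac{pq-p-q+2}{q(p-1)}}(\E u^p)^{\frac{p-2}{p}}+\eta^p\E v^q)$.
\item \textbf{Step (iv).} Combine with $\E|u|^{p-1}|v|^{q-1}\le(\E u^pv^q)^{\frac{p-1}{p}}(\E v^q)^{\frac{q-p}{pq}}$ and $\E|u|^{p-1}\E|v|^{q-1}\le(\E u^p)^{\frac{p-1}{p}}(\E v^q)^{\frac{q-1}{q}}$. This reproduces the stated powers $\frac{2q(p-1)}{p}$, $\frac{(q-p)(p-2)+pq}{p^2}$ and $\frac{2pq(p-2)-(p^2-2q)}{p^2}$ exactly.
\end{itemize}
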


\begin{proof}
    Consider the problem of predicting $v^{\frac{pq-p-q}{p}} u^{p-1}$ given $v$. Let $x_1 = \vert v \vert^{\frac{pq-p-q}{p}} \E u^{p-1}$ and let $x_2$ be the result of a mirror descent update step on $x_1$ with cost function $\sgn(v) \cdot \vert v \vert^{\frac{q}{p}}$. Applying~\cref{fact:bregman_three_point} to $\vert v \vert^{\frac{pq-p-q}{p}} u^{p-1}$, $x_1$, $x_2$ we have that
    \begin{align*}
         &\langle \eta \sgn(v) \cdot \vert v \vert^{\frac{q}{p}}, \vert v \vert^{\frac{pq-p-q}{p}} u^{p-1} - \vert v \vert^{\frac{pq-p-q}{p}} \E u^{p-1} \rangle  \\
         &\qquad= B(\vert v \vert^{\frac{pq-p-q}{p}} u^{p-1}, \vert v \vert^{\frac{pq-p-q}{p}} \E u^{p-1}) - B (\vert v \vert^{\frac{pq-p-q}{p}} u^{p-1}, x_2) + B(x_1, x_2)\,.
    \end{align*}
    When we take the expectation over $u,v$ on both sides, we have that the left-hand side is $\eta \Cov(u^{p-1}, v^{q-1})$, while the right-hand side is at most
    \[ \E_{u,v}  \left[ B(\vert v \vert^{\frac{pq-p-q}{p}} u^{p-1}, \vert v \vert^{\frac{pq-p-q}{p}} \E u^{p-1}) - B (\vert v \vert^{\frac{pq-p-q}{p}} u^{p-1}, \vert v \vert^{\frac{pq-p-q}{p}} \E [u^{p-1} \vert v]) + B(x_1, x_2) \right]\,,\]
    since expectations minimize the expected Bregman divergence to a fixed point (\cref{fact:bregman_min_expecation}). Thus,
    \begin{align*}
        \eta \Cov(u^{p-1}, v^{q-1}) \leq  \E_{u,v}  \Biggl[ &B(\vert v \vert^{\frac{pq-p-q}{p}} u^{p-1}, \vert v \vert^{\frac{pq-p-q}{p}} \E u^{p-1}) \\
        &\quad- B (\vert v \vert^{\frac{pq-p-q}{p}} u^{p-1}, \vert v \vert^{\frac{pq-p-q}{p}} \E [u^{p-1} \vert v]) + B(x_1, x_2) \Biggr] \,.
    \end{align*} 
    Note that by a similar argument to~\cref{lem:separate-v-u-decrease} we have that
    \begin{align*}
        &\E_{u,v}  \Biggl[ B(\vert v \vert^{\frac{pq-p-q}{p}} u^{p-1}, \vert v \vert^{\frac{pq-p-q}{p}} \E u^{p-1}) - B (\vert v \vert^{\frac{pq-p-q}{p}} u^{p-1}, \vert v \vert^{\frac{pq-p-q}{p}} \E [u^{p-1} \vert v]) \Biggr] \\
        &\quad= \E_{u,v}  \Biggl[\vert v \vert^{\frac{pq-p-q}{p-1}} \left( B( u^{p-1}, \E u^{p-1}) - B ( u^{p-1},  \E [u^{p-1} \vert v])  \right)\Biggr] \\
        &\quad\leq \left( \E_{u,v} v^q \left( B( u^{p-1}, \E u^{p-1}) - B ( u^{p-1},  \E [u^{p-1} \vert v])  \right) \right)^{\frac{pq-p-q}{q(p-1)}} \\
        &\qquad\qquad\left( \E_{u,v} \left( B( u^{p-1}, \E u^{p-1}) - B ( u^{p-1},  \E [u^{p-1} \vert v])  \right)\right)^{\frac{p}{q(p-1)}} \\
        &\quad\leq \left( \E [u^p v^q] + \E u^p \E v^q \right)^{\frac{pq-p-q}{q(p-1)}} \left( \E_{u,v} \left( B( u^{p-1}, \E u^{p-1}) - B ( u^{p-1},  \E [u^{p-1} \vert v])  \right)\right)^{\frac{p}{q(p-1)}} \,.
    \end{align*}
    We now turn our attention to the $\E_{u,v} B(x_1, x_2)$ error term. By~\cref{lem:bregman-update-divergence-bound} we have that
    \begin{align*}
        \E_{u,v} B(x_1, x_2) \leq \E_{u,v} \left[ \eta^2 \vert v \vert^{\frac{2q}{p}} \left( \vert v \vert^{\frac{pq-p-q}{p}}\E u^{p-1} \right)^{\frac{p-2}{p-1}} + \eta^p \vert v \vert^{q}\right] \leq \E_{u,v} \left[ \eta^2 \vert v \vert^{\frac{pq-p-q+2}{p-1}} \left( \E u^{p} \right)^{\frac{p-2}{p}} + \eta^p \vert v \vert^{q}\right] \,.
    \end{align*}
    Thus, combining this we have that
    \begin{align*}
        \eta \Cov(u^{p-1}, v^{q-1}) \leq &\left( \E [u^p v^q] + \E u^p \E v^q \right)^{\frac{pq-p-q}{q(p-1)}} \left( \E_{u,v} \left( B( u^{p-1}, \E u^{p-1}) - B ( u^{p-1},  \E [u^{p-1} \vert v])  \right)\right)^{\frac{p}{q(p-1)}} \\
        &\quad+  \E_{u,v} \left[ \eta^2 \vert v \vert^{\frac{pq-p-q+2}{p-1}} \left( \E u^{p} \right)^{\frac{p-2}{p}} + \eta^p v^{q}\right]\,.
    \end{align*}    
    We aim to pick $\eta$ such that the error term $\E_{u,v} \left[ \eta^2 \vert v \vert^{\frac{pq-p-q+2}{p-1}} \left( \E u^{p} \right)^{\frac{p-2}{p}} + \eta^p v^{q}\right]$ is smaller than the left hand side. Thus, it suffices to take
    \[ \eta \leq \min \left( \frac{\Cov(u^{p-1}, v^{q-1})}{\left(\E v^q \right)^{\frac{pq-p-q+2}{q(p-1)}} \left( \E u^{p} \right)^{\frac{p-2}{p}}},\, \left( \frac{\Cov(u^{p-1}, v^{q-1})}{\E v^q} \right)^{\frac 1 {p-1}}\right)\,.\]
    Furthermore, note that since 
    \[ \min(A, B) \geq \frac{AB}{A+B} = \frac{1}{\frac 1 A + \frac 1 B}\,,\]
    we have that 
    \begin{align*}
        \eta \geq \frac 1 {\frac {\left(\E v^q \right)^{\frac{pq-p-q+2}{q(p-1)}} \left( \E u^{p} \right)^{\frac{p-2}{p}}} {\Cov(u^{p-1}, v^{q-1})} + \frac {\left( \E v^q\right)^{\frac 1 p}} {\Cov^{\frac 1 p}(u^{p-1}, v^{q-1})}} = \frac {\Cov (u^{p-1}, v^{q-1})} {\left(\E v^q \right)^{\frac{pq-p-q+2}{q(p-1)}} \left( \E u^{p} \right)^{\frac{p-2}{p}} + \Cov^{\frac{p-2}{p-1}}(u^{p-1}, v^{q-1}) \cdot \left(E v^q\right)^{\frac 1 {p-1}}} \,.
    \end{align*}
    Observe that by Triangle Inequality, we have that
    \[ \Cov^{\frac{p-2}{p-1}}(u^{p-1}, v^{q-1}) \leq O(1) \cdot \left( \E_{u,v} u^{p-1} v^{q-1} + \E u^{p-1} \E v^{q-1}\right)^{\frac{p-2}{p-1}}\]
    and by H\"olders and convexity we have that
    \[ \E_{u,v} u^{p-1} v^{q-1} \leq \left( \E_{u,v} u^p v^q \right)^{\frac{p-1}{p}} \left( \E v^q \right)^{\frac{q-p}{pq}}\,.\]
    Thus we have that
    \[\Cov^{\frac{p-2}{p-1}}(u^{p-1}, v^{q-1}) \leq \left( \E_{u,v} u^p v^q \right)^{\frac{p-2}{p}} \left( \E v^q \right)^{\frac{(q-p)(p-2)}{pq(p-1)}} + \left(\E u^{p}\right)^{\frac{p-2}{p}} \left(\E v^{q}\right)^{\frac{(p-2)(q-1)}{(p-1)q}}\,,\]
    and thus
    \[ \eta \geq \frac{\Cov(u^{p-1}, v^{q-1})}{\left( \E v^q \right)^{\frac{(q-p)(p-2) + pq}{pq(p-1)}} \left( \left(\E_{u,v} u^p v^q\right)^{\frac {p-2} p} + \left(\E u^p \E v^q\right)^{\frac {p-2}{p}} \right)}\,.\]
    Returning to the original inequality, we have that
    \[ \eta \Cov(u^{p-1}, v^{q-1}) \leq \left( \E [u^p v^q] + \E u^p \E v^q \right)^{\frac{pq-p-q}{q(p-1)}} \left( \E_{u,v} \left( B( u^{p-1}, \E u^{p-1}) - B ( u^{p-1},  \E [u^{p-1} \vert v])  \right)\right)^{\frac{p}{q(p-1)}}\,.\]
    Note that
    \[ \left(\E_{u,v} u^p v^q\right)^{\frac {p-2} p} + \left(\E u^p \E v^q\right)^{\frac {p-2}{p}} \leq O(1) \cdot \left(\E_{u,v} u^p v^q + \E u^p \E v^q\right)^{\frac {p-2}{p}}\]
    so plugging in the lower bound on $\eta$ and rearranging yields
    \[ \E_{u,v} \left( B( u^{p-1}, \E u^{p-1}) - B ( u^{p-1},  \E [u^{p-1} \vert v])  \right) \geq \frac{\Cov^{\frac{2q(p-1)}{p}}(u^{p-1}, v^{q-1})}{\left( \E v^q \right)^{\frac{(q-p)(p-2)+pq}{p^2}} \left( \E [u^p v^q] + \E u^p \E v^q \right)^{\frac{2 pq (p-2) - (p^2 - 2q)}{p^2}}}\,.\]
\end{proof}

\section{Additive Approximation for LRA}

In this section, we will give an analysis of the additive approximation for $\ell_p$ LRA. 

\begin{theorem}
\label{thm:lra-additive}
    Given a matrix $A \in \R^{n \times m}$ with entries with bit complexity $B$, even $p \in N$, and $0<\eps<1$, there exists an algorithm that runs in $(nmB)^{(1/\eps)^{O_p(1)}}$ time and outputs matrices $\hat{u} \in \R^{n}, \hat{v} \in \R^{m}$ such that with probability at least $0.99$, 
    \begin{equation*}
        \norm{ A - \hat{u} \hat{v}^\top }_p^p \leq \OPT + \eps \cdot \Norm{A}_p^p\,. 
    \end{equation*}
\end{theorem}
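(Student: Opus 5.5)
Given the statement of Theorem~\ref{thm:lra-additive}, the plan is to follow the global correlation rounding template from the technical overview. \emph{First}, I will prove a structural/discretization lemma. Take any optimal $(u^*,v^*)$ and rebalance it to $(cu^*,v^*/c)$ so that $\|u^*\|_p^p=\|v^*\|_p^p$. Since $\|u^*v^{*\top}\|_p \le \|A\|_p+\|A-u^*v^{*\top}\|_p\le 2\|A\|_p$, this gives $\|u^*\|_p^p,\|v^*\|_p^p\le O(\|A\|_p^{p/2})$. Next I round each coordinate to a grid $\calS$ of size $\poly(nmB/\eps)$. The grid has geometric spacing near the scale of $\|A\|_p^{1/2}$ and a cutoff at $\poly(nm)\cdot 2^{B}$. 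The goal is to change $\|A-uv^\top\|_p^p$ by at most $\eps\|A\|_p^p/10$. I will check this by expanding the objective binomially and using H\"older with the norm bounds.

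\emph{Second}, I will write the degree-$\ell=(1/\eps)^{O_p(1)}$ SoS/Sherali--Adams relaxation over $\calS^n\times\calS^m$, with the following constraints:
\begin{itemize}
\item $\|u\|_p^p\le R$ and $\|v\|_p^p\le R$ for $R=O(\|A\|_p^{p/2})$;
\item $\|u\|_p^p=\|v\|_p^p$;
\item following \cite{anderson2025additiveapproximationschemeslowdimensional}, that $\|u\|_p^p\|v\|_p^p$ take (after discretizing the scalar value) a single fixed value $\tau$, which we guess over $\poly$ many choices.
\end{itemize}
This last constraint makes the $u_i^pv_j^p$ term invariant under conditioning and under product rounding, since the constraint is preserved by conditioning. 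I then solve the relaxation via Theorem~\ref{fact:eff-pseudo-distribution}, noting that the constraint system is explicitly bounded because the grid is finite.

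\emph{Third}, I will round. Expand
\[
\sum_{ij}(A_{ij}-u_iv_j)^p=\sum_{q=0}^{p}\binom pq(-1)^{p-q}\sum_{ij}A_{ij}^q u_i^{p-q}v_j^{p-q}.
\]
The $q=p$ term is constant, and the $q=0$ term has already been handled by the fixed-$\tau$ constraint. For $1\le q\le p-1$, set $y_i=u_i^{p-q}$ and $z_j=v_j^{p-q}$. The rounding error of that term is exactly the quantity $\err$ in Lemma~\ref{lem:general-p-rounding-err}. Its moment factors $\sum_j\pE|z_j|^{p/(p-q)}=\pE\|v\|_p^p\le R$ and $\sum_{ij}\pE|y_i|^{p/(p-q)}|z_j|^{p/(p-q)}\le R^2+R^2$ are both controlled. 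Plugging in, each term's error is at most
\[
O\big(\|A\|_p^q\,R^{\frac{p-q}{2(p-1)^2}}\,R^{2\cdot\frac{(p-q)(p-2)(2p-1)}{2p(p-1)^2}}\big)\cdot(\text{potential decrease})^{\frac{p-q}{2(p-1)^2}}.
\]
The potential $\Phi=\sum_i(\phi^{(p)}_{y_i})^{(p-1)/(p-q)}$ is initially bounded, via Fact~\ref{fact:gen-initial-potential-bound} and convexity, by a power of $R$. After normalization, a homogeneity check should show that a potential decrease of relative size $\delta=\eps^{O_p(1)}$ corresponds to error at most $\eps\|A\|_p^p/p$. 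This homogeneity check is where I expect to fiddle with exponents, since $R^{\cdot}\|A\|_p^q$ must combine to $\|A\|_p^p$ when the decrease is a constant fraction. Each term is therefore $(\eps/p,\delta)$-potential aligned with its own potential, symmetrically also for conditioning on $u$-coordinates. Lemma~\ref{lem:gcr_linear_combo} then gives a pinning set $\calT$ of size $(1/\eps)^{O_p(1)}$, chosen from the distributions $\psi$. Enumerating all $\calT$ and taking the best sample over $O(1)$ repetitions (Markov) gives success probability $0.99$.

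The main obstacle is twofold. Conditioning must preserve the pseudodistribution constraints, which is guaranteed for a pseudodistribution of degree $\ge\ell$. More seriously, I must output actual vectors whose product-rounded value matches the relaxation. Independent sampling gives the right value only in expectation, and the $q=0$ term $\pE\|u\|^p\pE\|v\|^p$ under product measure need not equal $\E\|\hat u\|^p\|\hat v\|^p$ for a single sample. I would try to handle this with concentration: condition so that the $p$-norm has small Bregman uncertainty, or use the $\tau$-constraint together with a second-moment argument over independent coordinates. If that fails, I would fall back on the expectation-rounding idea from the footnote and bound the resulting norms via Jensen.
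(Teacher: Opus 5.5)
Your plan is essentially the paper's proof. It uses the same discretized relaxation with $p$-norm constraints and the same binomial expansion. Each $1\le q\le p-1$ term is handled by Lemma~\ref{lem:general-p-rounding-err} with $y_i=u_i^{p-q}$, $z_j=v_j^{p-q}$, and the terms are combined through Lemma~\ref{lem:gcr_linear_combo}. Conditioning on $v$-coordinates alone suffices; the symmetric conditioning on $u$ is not needed. Your homogeneity check closes. With the moment factors and the initial potential of order $R=\Theta(\|A\|_p^{p/2})$, the exponent of $\|A\|_p$ is $q+\frac{p-q}{(p-1)^2}\bigl(\frac p4+\frac{(p-2)(2p-1)}{2}+\frac p4\bigr)=p$. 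So rounding error $\ge\eps\|A\|_p^p$ forces a potential decrease of $\eps^{O_p(1)}R$.

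One step is mis-justified. The constraint $\|u\|_p^p\|v\|_p^p=\tau$ fixes $\pE\|u\|_p^p\|v\|_p^p$, but product rounding produces $\pE\|u\|_p^p\,\pE\|v\|_p^p$, and ``preserved by conditioning'' does not relate the two. On its own the constraint is not enough. A mixture of $(\|u\|_p^p,\|v\|_p^p)=(a,\tau/a)$ and $(b,\tau/b)$ with $a\neq b$ has $\pE\|u\|_p^p\,\pE\|v\|_p^p=\tau(a+b)^2/(4ab)>\tau$. That is a real loss, since this term enters the objective with coefficient $+1$.

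Your balancing constraint rescues it. Let $N=\|u\|_p^p$, so $N^2=\tau$. From $\pE(N-\sqrt\tau)^2\ge0$ you get $\pE N\le\sqrt\tau$. From $\pE N(N-\sqrt\tau)^2\ge0$ (a sum of squares) together with $N^3\equiv\tau N$ you get $\pE N\ge\sqrt\tau$. Hence $\pE N=\sqrt\tau=\pE\|v\|_p^p$. It is cleaner to do what the paper does: guess $s_u,s_v$ and impose $\sum_iu_i^p=s_u$ and $\sum_jv_j^p=s_v$ separately. Then $\sum_{ij}\pE u_i^pv_j^p=s_us_v=\sum_{ij}\pE u_i^p\,\pE v_j^p$ in every conditioned pseudodistribution.

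Your ``main obstacle'' is not one. All coordinates are sampled independently, so $\hat u$ and $\hat v$ are independent and $\E\|\hat u\|_p^p\|\hat v\|_p^p=\sum_{ij}\pE u_i^p\,\pE v_j^p$ exactly. More generally, the expected objective of a sample is $\sum_{ij}\pE_{\mu_{u_i}\otimes\mu_{v_j}}(A_{ij}-u_iv_j)^p$, because each summand involves one $u_i$ and one $v_j$. No concentration argument is needed. Since $\|A-\hat u\hat v^\top\|_p^p-\OPT\ge0$ always, Markov on this nonnegative quantity (with $\eps$ rescaled by a constant) gives the $0.99$ guarantee, as in the paper. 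Deterministic expectation rounding would also work, but for a different reason than bounding norms. The objective is convex in $u$ for fixed $v$ and vice versa, so by Jensen replacing $u$ and $v$ by their means only lowers the value.
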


Our algorithm uses the same approach in the work of~\cite{anderson2025additiveapproximationschemeslowdimensional}, but we give a tighter analysis that avoids their quasipolynomial dependence on $n$.

\begin{mdframed}
  \begin{algorithm}[$\ell_p$ LRA Additive Approximation]
    \label{algo:lra-additive}\mbox{}
    \begin{description}
    \item[Input:] Matrix $A \in \mathbb{R}^{n \times m}$ and target accuracy $0<\epsilon<1$.
    
    \item[Operations:]\mbox{}
    \begin{enumerate}
        \item Let $\calS$ be a $\frac{\epsilon^p}{n^3} \left(\sum_{i \in [n], j \in [m]} A_{ij}^p\right)^{1/2}$ grid of \[\left[-2 \left(\sum_{i \in [n], j \in [m]} A_{ij}^p\right)^{1/2}, 2 \left(\sum_{i \in [n], j \in [m]} A_{ij}^p\right)^{1/2}\right]\,.\]
        \item For each $s_u \in \calS, s_v \in \calS$ and set $\calT$ of size at most $O_p\left(\eps^{-6}\right)$:
        \begin{enumerate}
            \item Let $\Sigma$ be a $\frac{\epsilon^p}{n^2} \cdot \left(\sum_{i \in [n], j \in [m]} A_{ij}^p\right)^{1/(2p)}$ grid of 
            \[\left[-\left(\sum_{i \in [n], j \in [m]} A_{ij}^p\right)^{1/(2p)}, \left(\sum_{i \in [n], j \in [m]} A_{ij}^p\right)^{1/(2p)}\right]\,.\]
            \item Let $\mu$ be a $O_p\left(\eps^{-6} \right)$-degree sum-of-squares pseudo-distribution over $\left(\Sigma\right)^{n+m}$ such that $\pE_{\mu}$ optimizes 
            \begin{align}
                &\min_{\pE_{\mu}} \quad \pE_{\mu} \sum_{i\in[n],j\in[m]} (A_{ij} - u_i v_j)^p \notag \\
                &\text{s.t.\ } \pE_{\mu} \text{ satisfies} \notag \\
                &\quad \sum_{i \in [n]} u_i^p = s_u \label{eq:constraint1} \\
                &\quad \sum_{j \in [m]} v_j^p = s_v \label{eq:constraint2}
            \end{align}
            \item Let $\hat{v}_\calT$ be a draw from the local distribution $\{v_{\calT}\}$. Let $\mu_\calT$ be the pseudodistribution obtained by conditioning on $\{v_j = \hat{v}_j\}_{j \in \calT}$.
            \item Let $\hat{u}, \hat{v}$ be the solution where $\hat{u}_i, \hat{v}_j$ are sampled independently from their local distributions in $\mu_\calT$.
        \end{enumerate}
    \end{enumerate}
    \item[Output:] The embedding $\hat{u}, \hat{v}$ with the lowest LRA objective value over all iterations of the loop.
    \end{description}
  \end{algorithm}
\end{mdframed}
\begin{remark}
    Although the above algorithm utilizes Sum-of-Squares pseudodistributions, we note that the Sherali-Adams linear programming hierarchy alongside global linear constraints would suffice. We use sum-of-squares pseudodistributions and SDPs for ease of exposition.
\end{remark}

Note that as in~\cite{anderson2025additiveapproximationschemeslowdimensional}, one of the SDPs with constraints and a discretized domain has optimal objective value at most $\OPT + \eps \cdot \norm{A}_p^p$. It thus suffices to show that our rounding algorithm produces $\hat{u}, \hat{v}$ with objective value at most $\eps \cdot \norm{A}_p^p$ worse than the SDP objective value. 

\begin{lemma}
\label{lem:additive-approx-rounding-error}
    Let $\OPT_{\mathrm{SDP}}$ be the objective value of the SDP in~\cref{algo:lra-additive} for some $s_u, s_v \in \mathcal{S}$. Then global correlation rounding with $(1/\eps)^{O_p(1)}$ rounds of conditioning produces a solution of value at most $\OPT_{\mathrm{SDP}} + \eps \norm{A}_p^p$.
\end{lemma}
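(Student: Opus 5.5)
We expand the objective binomially and control the rounding loss term by term. For a conditioned pseudodistribution $\mu_\calT$, independent rounding changes the expected objective by
\[
\sum_{q=1}^{p-1} \binom{p}{q} (-1)^{p-q} \sum_{i,j} A_{ij}^{q}\left(\pE u_i^{p-q} v_j^{p-q} - \pE u_i^{p-q}\,\pE v_j^{p-q}\right) + \left(\pE \|u\|_p^p\|v\|_p^p - \pE\|u\|_p^p\,\pE\|v\|_p^p\right),
\]
with the $q=0$ term rewritten as the last bracket. We handle the last bracket first. Constraints \eqref{eq:constraint1} and \eqref{eq:constraint2} force $\|u\|_p^p = s_u$ and $\|v\|_p^p=s_v$ identically in every conditioning of $\mu$, so this bracket vanishes.

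For each $q \in [1,p-1]$ we set $y_i = u_i^{p-q}$ and $z_j = v_j^{p-q}$. The $q$-th error term is then exactly the quantity $\err$ of \cref{lem:general-p-rounding-err}, applied with the Bregman exponent chosen so that $|y_i|^{p/(p-q)} = u_i^p$ and $|z_j|^{p/(p-q)}=v_j^p$. The moment factors on the right side of that lemma reduce to quantities fixed by the constraints:
\[
\sum_j \pE v_j^p = s_v, \qquad \sum_{ij}\left(\pE u_i^p v_j^p + \pE u_i^p\,\pE v_j^p\right) = 2 s_u s_v.
\]
The scales for $s_u,s_v$ come from the structure fact inherited from \cite{anderson2025additiveapproximationschemeslowdimensional}: some optimum has $s_u,s_v \le O(\|A\|_p^{p/2})$ after rebalancing. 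Substituting, the $q$-th loss is at most
\[
O(1)\,\|A\|_p^q \cdot \|A\|_p^{\,p-q}\cdot \left(\E_{\{\mu\}}\left[\E_{j\sim\psi}\Phi_q(y)-\E_{z_j}\Phi_q(y\mid z_j)\right]\right)^{\frac{p-q}{2(p-1)^2}}.
\]
The middle factor comes from checking that the exponents in \cref{lem:general-p-rounding-err} combine to homogeneity $p-q$ in $\|A\|_p$. We normalize $\Phi_q = \sum_i (\phi_{y_i})^{(p-1)/(p-q)}$ by its initial value, which \cref{fact:gen-initial-potential-bound} bounds by $O(s_u^{(p-1)/(p-q)})$, a power of $\|A\|_p$.

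The potential argument runs as follows. If the $q$-th loss exceeds $\eps\|A\|_p^p/p^2$, then conditioning on one more $v_j$ with $j\sim\psi$ decreases the normalized potential $\Phi_q$ by at least $\eps^{O_p(1)}$ in expectation. This says each $q$-th term is $(\eps/p^2,\eps^{O_p(1)})$-potential aligned in the sense of \cref{def:potential-aligned}. \cref{lem:gcr_linear_combo} then gives a set $\calT$ of size $p\cdot \eps^{-O_p(1)}$ with total rounding loss at most $\eps\|A\|_p^p$. Two points need care here. First, we condition only on $v$-coordinates, which the lemma permits because the potential lives on $y=u^{p-q}$ and the conditioning variable is $z_j$. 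Second, $\psi$ is a weighted distribution, so we sample from a mixture of the $\psi$'s and enumerate over $\calT$ as the algorithm does. Finally, we pass from expected rounded value to an actual outcome by Markov's inequality over repetitions, and we absorb the discretization errors from $\Sigma$ and $\calS$, which are polynomially small relative to $\eps\|A\|_p^p$.

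The main obstacle is the exponent bookkeeping in the second step. We must confirm that after substitution every moment factor is bounded by the fixed totals $s_u,s_v$ and not by per-coordinate variances. We must also confirm that the normalized potential drop enters with a positive power, so that a total drop of $O(1)$ over $\eps^{-O_p(1)}$ steps forces one step with small loss. If the homogeneity fails to give exactly $\|A\|_p^p$, we would rebalance by rescaling $u\mapsto cu$, $v\mapsto v/c$ so that $s_u=s_v$ before applying the lemma.
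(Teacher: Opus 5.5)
Your outline is the paper's proof: the same binomial expansion, the constraints \eqref{eq:constraint1} and \eqref{eq:constraint2} killing the $q=0$ term, the substitution $y_i=u_i^{p-q}$, $z_j=v_j^{p-q}$ into \cref{lem:general-p-rounding-err} with the moment factors evaluated as $s_v$ and $2s_us_v$, and then \cref{lem:gcr_linear_combo}. One bookkeeping step is wrong as written, and it is the step your homogeneity argument depends on.

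The initial value of $\Phi_q=\sum_i (\phi^{(p)}_{u_i^{p-q}})^{\frac{p-1}{p-q}}$ is not $O(s_u^{(p-1)/(p-q)})$. The potential $\Phi_q$ is homogeneous of degree $p$ in $u$, while $s_u^{(p-1)/(p-q)}$ has degree $p(p-1)/(p-q)$. So for $q\ge 2$ your bound fails when $u$ is rescaled. Normalizing by it would also destroy the homogeneity you need to get exactly $\norm{A}_p^p$ on the right. The correct bound, which the paper uses, applies Jensen inside each summand. Since $\phi^{(p)}_{y}\le \pE|y|^{\frac{p}{p-1}}$ and $\frac{p-1}{p-q}\ge 1$,
\[
\Bigl(\phi^{(p)}_{u_i^{p-q}}\Bigr)^{\frac{p-1}{p-q}} \le \Bigl(\pE |u_i|^{\frac{p(p-q)}{p-1}}\Bigr)^{\frac{p-1}{p-q}} \le \pE u_i^p ,
\]
and therefore $\Phi_q(\mu_0)\le s_u\le 2\norm{A}_p^{p/2}$. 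This matches the scale of the potential drop: \cref{lem:general-p-rounding-err} gives a lower bound of $\eps^{O_p(1)}\norm{A}_p^{p/2}$ once $s_v\le 2\norm{A}_p^{p/2}$ and $s_us_v\le 4\norm{A}_p^{p}$ are substituted. The ratio of the two is what makes the number of conditioning steps $\eps^{-O_p(1)}$.

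Two smaller points. First, the bound $s_u,s_v\le 2\norm{A}_p^{p/2}$ holds for every value in $\calS$, because $\calS$ is a grid of $[-2\norm{A}_p^{p/2},2\norm{A}_p^{p/2}]$. The structural fact about optimal solutions is not needed here; it is used elsewhere, to show that some SDP in the loop has value close to $\OPT$. Second, the Markov and discretization steps belong to \cref{thm:lra-additive}, not to this lemma.
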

\begin{proof}
    Note that since $p$ is an even integer, we can decompose the objective function as follows:
    \[ \sum_{i\in[n],j\in[m]} (A_{ij} - u_i v_j)^p = \sum_{q =0}^p \left(-1\right)^{p-q} \cdot \binom{p}{q} \sum_{i\in[n],j\in[m]} A_{ij}^q u_i^{p-q} v_j^{p-q}\,.\]
    Note that similarly to~\cite{anderson2025additiveapproximationschemeslowdimensional} the term $\sum_{i\in[n],j\in[m]} u_i^{p} v_j^{p}$ incurs no rounding error since
    \[ \sum_{i \in [n], j \in [m]} \pE_{\mu} u_i^p v_j^p = \sum_{i \in [n]} \pE_\mu u_i^p \left( \norm{v}_p^p\right) = \sum_{i \in [n]} s_v \pE_\mu u_i^p  = s_v \cdot s_u\,.\]    
    Thus, it remains to bound the rounding error for $1 \leq q \leq p-1$ (observe that when $q=p$ the corresponding polynomial is simply a constant). We will show that each such function $\sum_{i\in[n],j\in[m]} A_{ij}^q u_i^{p-q} v_j^{p-q}$ for $1 \leq q \leq p-1$ is potential aligned and then apply~\cref{lem:gcr_linear_combo}. Assume that the rounding error on this function is at least $\eps \norm{A}_p^p$ and let 
    \[ \Phi(u) = \sum_i \left(\phi_{u_i^{p-q}}^{(p)}\right)^{\frac{p-1}{p-q}}\,.\]
    Then by applying~\cref{lem:general-p-rounding-err} with $y_i = u_i^{p-q}$ and $z_j = v_j^{p-q}$ we have that\footnote{Note that while~\cref{lem:general-p-rounding-err} considers conditioning on $v_j^{p-q}$ conditioning on $v_j$ only further reduces the potential function since $v_j^{p-q}$ is a deterministic function of $v_j$.} 
    \begin{align*}
        \eps \norm{A}_p^p \leq &\left(\sum_j \E_{\{\mu\}}\E [v_j^p] \right)^{\frac{p-q}{2(p-1)^2}} \left( \sum_{ij} \E_{\{\mu\}} \left[\E [u_i^p v_j^p] + \E [u_i^p] \E [v_j^p] \right] \right)^{\frac{(p-q)(p-2)(2p-1)}{2p(p-1)^2}} \norm{A}_p^q \\
        &\qquad\left( \E_{\{\mu\}}\E_{j \sim \psi} \Phi(u) - \E_{v_j} \Phi(u \vert z_j)\right)^{\frac{p-q}{2(p-1)^2}}\,.
    \end{align*}
    By our sum-of-squares equality constraints\footnote{Note that conditioning preserves these equality constraints.}, we have that 
    \[ \sum_j \E_{\{\mu\}} \E [v_j^p] = s_v \quad \text{and} \quad \sum_{ij} \E_{\{\mu\}} \left[\E [u_i^p v_j^p] + \E [u_i^p] \E [v_j^p] \right] = 2 s_v s_u\,,\]
    and $s_v,s_u \leq \norm{A}_p^{p/2}$. Plugging this into the inequality above yields that
    \[ \eps \norm{A}_p^p \leq \norm{A}_p^{\frac{p(p-q)}{4(p-1)^2}} \cdot \norm{A}_p^{\frac{(p-q)(p-2)(2p-1)}{2(p-1)^2}}\cdot \norm{A}_p^q \cdot \left( \E_{j \sim \psi} \Phi(u) - \E_{v_j} \Phi(u \vert z_j)\right)^{\frac{p-q}{2(p-1)^2}}\,.\]
    Simplifying and cancelling terms we have that 
    \[ \eps^{O_{p}(1)} \norm{A}_p^{p(p-q)} \leq \left( \E_{\{\mu\}} \E_{j \sim \psi} \Phi(u) - \E_{v_j} \Phi(u \vert z_j)\right)^{2(p-q)},\]
    so we have that $\Phi$ for each $q$ is $(\eps \norm{A}_p^p, \eps^{O_p(1)} \norm{A}_p^{p/2})$-potential aligned. Furthermore, for each $q$ we have by~\cref{fact:gen-initial-potential-bound} and convexity that 
    \[ \sum_i \left\vert\phi_{u_i^{p-q}}^{(p)}\right\vert^{\frac{p-1}{p-q}} \leq \sum_i \left( \E_{u_i} \left\vert u_i^{p-q} \right\vert^{\frac p {p-1}} \right)^{\frac{p-1}{p-q}} \leq \sum_i \E_{u_i} u_i^p\,.\]
    Since for the initial pseudodistribution we have that $\sum_i \E_{u_i} u_i^p = s_u \leq \norm{A}_p^{p/2}$ we have by~\cref{lem:gcr_linear_combo} that the rounding error on the LRA objective is at most $\eps \norm{A}_p^p$ after $1/\eps^{O_p(1)}$ rounds of conditioning.
\end{proof}

Combining~\cref{lem:additive-approx-rounding-error} with the discretization error and runtime analysis in~\cite{anderson2025additiveapproximationschemeslowdimensional} (see section 6.2) gives the proof of~\cref{thm:lra-additive}.

\section{Additive Error to Relative Error for LRA}
\label{sec:ptas}

\begin{theorem}
\label{thm:lra-ptas}
    Given a matrix $A \in \R^{n \times m}$ with entries with bit complexity $B$, even $p \in N$, and $0<\eps<1$, there exists an algorithm that runs in $(nm\cdot 2^B)^{(1/\eps)^{O_p(1)}}$ time and outputs matrices $\hat{u} \in \R^{n}, \hat{v} \in \R^{m}$ such that with probability at least $0.99$, 
    \begin{equation*}
        \norm{ A - \hat{u} \hat{v}^\top }_p^p \leq (1+ \eps) \cdot \OPT\,. 
    \end{equation*}
\end{theorem}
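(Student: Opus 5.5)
The plan is to warm-start the additive scheme of \cref{thm:lra-additive} with a constant-factor solution, and to run global correlation rounding on \emph{residual} variables rather than on $u,v$ themselves.

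\textbf{Warm start and nearby optimum.} First, run the $(3+\eps)$-approximation of \cite{ban_etal_soda2019_ptas_lp_lra} to obtain $x,y$ with $\norm{A - xy^\top}_p^p \le 4\OPT$. Guess $\OPT$ up to a factor $(1+\eps)$ from a grid; bounded bit complexity makes this grid polynomial, and the case $\OPT=0$ is handled by exact rank-one solving. If $(u,v)$ is optimal, the almost-triangle inequality gives $\norm{uv^\top - xy^\top}_p^p \le O(\OPT)$. \cref{lem:strong-ident-intro} then yields a rescaling $c$ and a sign choice, which we guess, such that after replacing $(u,v)$ by $(cu, v/c)$ (possibly negated) we have
\[
\norm{x-u}_p^p\,\norm{y+v}_p^p \le O(\OPT).
\]
The point to use carefully is this product form, not a bound on each factor. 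I would then rebalance by a further scalar, replacing $x-u$ by $\lambda(x-u)$ and $y+v$ by $(y+v)/\lambda$ (applied to the reference $x,y$ as well, which keeps $xy^\top$ fixed). The goal is residual norms $a := \norm{x-u}_p^p$ and $b := \norm{y-v}_p^p$ (after the sign flip) with $ab \le O(\OPT)$. I would also guess $s_a \approx a$ and $s_b \approx b$ from a geometric grid.

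\textbf{Relaxation and expansion.} Write $u = x + r$ and $v = y + w$, and set up the discretized SoS program in the variables $r,w$. Add the constraints $\sum_i r_i^p = s_a$ and $\sum_j w_j^p = s_b$, and also fix $\sum_i (x_i + r_i)^p$-type moments if needed. Expand
\[
A - uv^\top = E - x w^\top - r y^\top - r w^\top, \qquad E = A - xy^\top,
\]
and write $\norm{A-uv^\top}_p^p$ as a sum of monomials $\sum_{ij} c_{ij}\, r_i^{a'} w_j^{b'}$. The coefficients $c_{ij}$ are products of powers of $E_{ij}$, $x_i$, and $y_j$. Terms that depend on only one of $r$ or $w$ incur no rounding loss, since rounding acts independently on the two sides. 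The pure term $\sum_{ij} r_i^p w_j^p$ is fixed exactly by the constraints, as in \cref{lem:additive-approx-rounding-error}.

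\textbf{Bounding the mixed terms.} For each mixed term, apply \cref{lem:general-p-rounding-err}, or a weighted variant with weights $|c_{ij}|$, to $y_i = r_i^{a'}$ and $z_j = w_j^{b'}$. This bounds the rounding error by
\[
(\text{coefficient } \ell_p\text{-mass}) \cdot (s_a s_b)^{\Theta(1)} \cdot (\text{potential drop})^{\Omega(1)}.
\]
One then needs to check, via Hölder and the bounds $\norm{E}_p^p \le 4\OPT$ and $s_a s_b \le O(\OPT)$, that the prefactor is $O(\OPT)$. The potential is initially at most $s_a$, so \cref{lem:gcr_linear_combo} gives error $\eps\,\OPT$ after $\poly(1/\eps)$ conditionings.

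\textbf{Main obstacle.} The hard step is the mixed terms that carry $x_i$ or $y_j$ factors, for example $\sum_{ij} x_i^{\alpha} E_{ij}^{\beta} r_i^{\gamma} w_j^{\delta}$. Here the coefficient mass involves $\norm{x}_p$, which scales like $\norm{A}_p^{1/2}$ rather than like $\OPT$. The balancing must make these combine as
\[
\norm{x}^{\alpha}\, s_a^{\gamma/p}\, s_b^{\delta/p} \lesssim \OPT.
\]
I would handle them by grouping $x w^\top$ with $E$, which is legitimate because $\norm{E - xw^\top}_p^p = \norm{A - u'y^\top\ldots}$-type quantities are $O(\OPT)$ at the optimum. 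I expect this bookkeeping to be the crux.
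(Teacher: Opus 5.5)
Your architecture (warm start from \cite{ban_etal_soda2019_ptas_lp_lra}, identifiability, residual variables, constraints that freeze the $E$-free part, pinning on the mixed terms) matches the paper's. However, the quantitative core is missing, and the argument fails at the step you yourself flag as the crux.

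\textbf{The target bound is too weak.} Your target $ab=\norm{x-u}_p^p\norm{y-v}_p^p\le O(\OPT)$ is essentially the naive ``$\sqrt{\OPT}$ per residual'' bound. For $p=4$, the term $\sum_{ij}E_{ij}^2(x_ir_i)(y_jw_j)$ has natural scale $\norm{E}_4^2\norm{x}_4\norm{y}_4\norm{r}_4\norm{w}_4$. Under $ab\le O(\OPT)$ this is only bounded by roughly $\OPT^{3/4}\norm{A}_4$. So pinning gives error about $\eps\,\OPT^{3/4}\norm{A}_4$, which is far above $\eps\,\OPT$ when $\OPT\ll\norm{A}_4^4$.

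\textbf{Your proposed fixes do not work.} The rebalancing cannot help: $\norm{x-u}_p\norm{y-v}_p$ is invariant under $(x,u,y,v)\mapsto(\lambda x,\lambda u,y/\lambda,v/\lambda)$. The grouping idea is circular. Since $E-xw^\top=A-xv^\top=(A-uv^\top)+(u-x)v^\top$, the bound $\norm{E-xw^\top}_p^p=O(\OPT)$ is equivalent to $\norm{u-x}_p^p\norm{v}_p^p=O(\OPT)$, which is exactly what you have not shown. Moreover, $E-xw^\top$ depends on the variables $w$, so it cannot serve as a fixed coefficient matrix in \cref{lem:general-p-rounding-err}.

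\textbf{The missing idea: individual residual bounds.} What is needed is the bound of \cref{cor:strong-identifiability}: after balancing $\norm{x}_p=\norm{y}_p$, some optimal $(u,v)$ has $\norm{u-x}_p^p,\norm{v-y}_p^p\le O(\OPT/\norm{x}_p^p)$. This uses both products in \cref{lem:gen-strong-identifiability}, together with lower bounds on the ``plus'' factors, $\norm{x+u}_p\ge\Omega(\norm{x}_p)$ and $\norm{y+v}_p\ge\Omega(\norm{y}_p)$.
\begin{itemize}
\item These lower bounds hold only when $\OPT\le C_p\norm{A}_p^p$, which forces $\norm{x}_p^p\norm{y}_p^p\ge\Omega(\norm{A}_p^p)$.
\item The complementary regime must be handled separately; the paper runs \cref{thm:lra-additive} with accuracy $\eps/C_p$. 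Your plan has no such case split.
\item The program must then enforce $d_u,d_v\le O(\OPT/\norm{x}_p^p)$, where $d_u=\norm{u-x}_p^p$ and $d_v=\norm{v-y}_p^p$.
\end{itemize}

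\textbf{Where the $x_i$ and $y_j$ factors must go.} Keep $x_i^{\alpha}$ and $y_j^{\beta}$ inside the one-sided variables $x_i^{\alpha}(x_i-u_i)^{p-q-\alpha}$ and $y_j^{\beta}(y_j-v_j)^{p-q-\beta}$. The only coefficient should be $(A_{ij}-x_iy_j)^q$ with $1\le q<p$. Putting a lone $x_i$ into $c_{ij}$ costs dimension factors, e.g.\ $\sum_{ij}|x_i|^p=m\norm{x}_p^p$.

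For these variables the initial potential is not bounded by $s_a$. It is bounded by $\norm{x}_p^{p\alpha/(p-q)}d_u^{(p-q-\alpha)/(p-q)}$, and this requires H\"older constraints added explicitly to the SoS program. Each term then has scale $\OPT^{q/p}\norm{x}_p^{\alpha+\beta}(\OPT^{1/p}/\norm{x}_p)^{2(p-q)-\alpha-\beta}\le O(\OPT)$.

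\textbf{Freezing the $E$-free terms.} Every $E$-free cross term must be frozen exactly, not just $\sum_{ij}r_i^pw_j^p$. One way is to guess all $\sum_i x_i^{p-k}u_i^k$ and $\sum_j y_j^{p-k}v_j^k$. This is necessary because \cref{lem:general-p-rounding-err} requires $q\ge1$.
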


Our algorithm follows the same high-level approach as the additive approximation, but with some key differences. We write down a series of SDPs, and then round each via global correlation rounding. However, unlike in the additive approximation, we also have to enforce some new constraints regarding the solutions' relation to an $O(\OPT)$ valued solution. 

We first define a set of constraints that we will require our pseudodistributions satisfy:

\begin{align*}
\mathcal{A} (s, d) = \left\{ 
\begin{gathered}
    \begin{aligned}
        \forall \, 0 \leq k \leq p: \quad &\sum_{i \in [n]} x_i^{p-k} u_i^k = s_{u,k}, \quad \sum_{j \in [m]} y_j^{p-k} v_j^k = s_{v,k} \\
        &\sum_{i \in [n]} (u_i-x_i)^p = d_u, \quad \sum_{j \in [m]} (v_j-y_j)^p = d_v
    \end{aligned} \\[1ex]
    \begin{aligned}
        (x_i - u_i)^2 &= \left(|x_i - u_i|^{\frac{1}{\prod_{q=1}^{p-1} (p-q)}}\right)^{2 \prod_{q=1}^{p-1} (p-q)} \\
        (y_j - v_j)^2 &= \left(|y_j - v_j|^{\frac{1}{\prod_{q=1}^{p-1} (p-q)}}\right)^{2 \prod_{q=1}^{p-1} (p-q)}
    \end{aligned} \\[1ex]
    |x_i - u_i|^{\frac{1}{\prod_{q=1}^{p-1} (p-q)}} \geq 0 \,,\quad |y_j - v_j|^{\frac{1}{\prod_{q=1}^{p-1} (p-q)}} \geq 0 \\
    \forall \, 1\leq q < p, \, a+b \leq p-q: \\
    \left(\sum_i |x_i|^{\frac{pa}{p-q}} \cdot |x_i - u_i|^{\frac{p(p-q-a)}{p-q}}\right)^{2(p-q)} \leq \|x\|_p^{2pa} \|x-u\|_p^{2p(p-q-a)} \\[1ex]
    \left(\sum_j |y_j|^{\frac{pb}{p-q}} \cdot |y_j - v_j|^{\frac{p(p-q-b)}{p-q}}\right)^{2(p-q)} \leq \|y\|_p^{2pb} \|y-v\|_p^{2p(p-q-b)} \\[1ex]
    \begin{aligned}
        & \left(\sum_{i,j} \vert x_i\vert^{\frac{pa}{p-q}}\vert x_i - u_i\vert^{\frac{p(p-q-a)}{p-q}} \vert y_j\vert^{\frac{pb}{p-q}}\vert y_j - v_j\vert^{\frac{p(p-q-b)}{p-q}} \right)^{2(p-q)} \\
        &\leq \|x\|_p^{2pa} \|x-u\|_p^{2p(p-q-a)} \|y\|_p^{2pb} \|y-v\|_p^{2p(p-q-b)}
    \end{aligned}
\end{gathered}
\right\}
\end{align*}

\begin{remark}
    Note that these constraints also include terms of the form $|x_i - u_i|^{\frac{p(p-q-a)}{p-q}}, |y_j - v_j|^{\frac{p(p-q-b)}{p-q}}$, and these will also appear in the analysis. These are all low degree polynomials in terms of $\vert x_i - u_i \vert^{\frac 1 {\prod_{q=1}^{p-1} (p-q)}},\, \vert y_j - v_j\vert^{\frac 1 {\prod_{q=1}^{p-1} (p-q)}}$ with degree at most $O_p(1)$. Formally, we write the pseudoexpectation with variables $\vert x_i - u_i\vert^{\frac 1 {\prod_{q=1}^{p-1} (p-q)}}, \vert y_j - v_j\vert^{\frac 1 {\prod_{q=1}^{p-1} (p-q)}}$ so these expressions are well defined.
\end{remark}

\begin{mdframed}
  \begin{algorithm}[$\ell_p$ LRA Rank $1$ PTAS]
    \label{algo:lra-ptas}\mbox{}
    \begin{description}
    \item[Input:] Matrix $A \in \mathbb{R}^{n \times m}$ and target accuracy $0<\epsilon<1$.
    
    \item[Operations:]\mbox{}
    \begin{enumerate}
        \item If $A$ is a rank-$1$ matrix output the rank-$1$ factorization $\hat{u}, \hat{v}$.
        \item Let $x,y$ be an $O(\OPT)$ LRA solution for $A$ from Theorem 14 in~\cite{ban_etal_soda2019_ptas_lp_lra} and let the objective value be $\widetilde{\OPT}$. 
        \item Rescale $x,y$ such that $\norm{x} = \norm{y}$ and then round the entries to the nearest value in a $\frac{\eps \widetilde{\OPT}}{\left(nm \cdot 2^B\right)^{O_p(1)}}$-fineness grid.
        \item Let $\calS_1$ be a $\frac{\eps^p \widetilde{\OPT}^p}{\left(nm \cdot 2^B\right)^{O_p(1)}}$-fineness grid of $[0, nm \cdot 2^{O_p(B)}]$.
        \item Let $\calS_2$ be a $\frac{\eps^p \widetilde{\OPT}^p}{\left(nm \cdot 2^B\right)^{O_p(1)}}$-fineness grid of $\left[0, O_p(1) \cdot \frac{\widetilde{\OPT}}{\norm{x}_p^p}\right]$
        \item For all $0 \leq k \leq p$ and each $s_{u,k} , s_{v,k} \in \calS_1$, $d_u, d_v \in \calS_2$, and set $\calT$ of size at most $O_p((1/\eps)^{O(1)})$:
        \begin{enumerate}
            \item Let $\Sigma$ be a $\frac{\eps \cdot \widetilde{\OPT}}{\left(nm \cdot 2^B\right)^{O_p(1)}}$-fineness grid of $\left[ - O(1) \cdot \norm{A}_p^{1/2}, O(1) \cdot \norm{A}_p^{1/2}\right]$.
            \item Let $\mu$ be a $O_p((1/\eps)^{O(1)})$-degree sum-of-squares pseudo-distribution over $\left(\Sigma\right)^{n+m}$ with variables $u_i$ and $v_j$, along with additional variables $\vert u_i - x_i\vert^{\frac 1 {\prod_{q=1}^{p-1} (p-q)}}$ and $ \vert v_j - y_j\vert^{\frac 1 {\prod_{q=1}^{p-1} (p-q)}}$, such that $\pE_{\mu}$ optimizes 
            \begin{align}
                &\min_{\pE_{\mu}} \quad \pE_{\mu} \sum_{i\in[n],j\in[m]} (A_{ij} - u_i v_j)^p \notag \\
                &\text{s.t.\ } \pE_{\mu} \text{ satisfies } \calA(s,d) \notag
            \end{align}
            \item Let $\hat{v}_\calT$ be a draw from the local distribution $\{v_{\calT}\}$. Let $\mu_\calT$ be the pseudodistribution obtained by conditioning on $\{v_j = \hat{v}_j\}_{j \in \calT}$.
            \item Let $\hat{u}, \hat{v}$ be the solution where $\hat{u}_i, \hat{v}_j$ are sampled independently from their local distributions in $\mu_\calT$.        \end{enumerate}
        \item Let $\hat{u}, \hat{v}$ be the LRA solution from the additive approximation in~\cref{thm:lra-additive} with $\eps' = \eps /C_p$ for an absolute constant $C_p$ depending only on $p$.
    \end{enumerate}

    \item[Output:] The embedding $\hat{u}, \hat{v}$ with the lowest LRA objective value over all iterations of the loop and the additive approximation solution.
    \end{description}
  \end{algorithm}
\end{mdframed}

\begin{remark}
    Although the above algorithm utilizes sum-of-squares pseudodistributions, we note that the Sherali-Adams linear programming hierarchy alongside global linear constraints would suffice. We use sum-of-squares pseudodistributions and SDPs for ease of exposition.
\end{remark}

Note that throughout the algorithm, we use a constant approximation to the optimal objective value as a proxy for the true value $\OPT$. In the rest of this section, we will assume that we have the true objective value here, and it is easy to verify that a constant factor approximation in fact suffices.

\paragraph{Discretization.}
We show that restricting LRA solutions to have entries in $\Sigma$ does not increase the cost of the optimal solution by too much. In particular, we show the following lemma, which bounds the error introduced by rounding LRA solutions to the nearest entries in $\Sigma$. The proof is straightforward, and we defer it to the appendix. 

\begin{lemma}
\label{lem:discretization-error-bound}
    Let $u,v$ be a constant factor optimal solution to the $\ell_p$ LRA approximation problem on $A$ such that $\norm{u}, \norm{v} \leq O(1) \cdot \norm{A}_p^{1/2}$. Let $\gamma \leq \norm{A}_p^{1/2}$ and let$\Sigma$ be a $\gamma$-fineness grid of 
    \[\left[ - O(1) \cdot \norm{A}_p^{1/2}, O(1) \cdot \norm{A}_p^{1/2}\right]\,.\] 
    Let $\tilde{u}, \tilde{v}$ denote the result of rounding the entries of $u,v$ to the nearest entry in $\Sigma$. Then we have that
    \[ \sum_{i \in [n], j \in [m]} \left(A_{ij} - u_i v_j\right)^p \leq \sum_{i \in [n], j \in [m]} \left(A_{ij} - \tilde{u}_i \tilde{v}_j\right)^p +  O_p(1) \cdot nm \cdot \gamma \cdot \norm{A}_p^{\frac{2p - 1}{2}}\,.\]
\end{lemma}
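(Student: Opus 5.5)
The plan is to compare the two objectives entry by entry, using a mean-value (difference-of-powers) bound, and then sum over the $nm$ entries. Fix $(i,j)$ and write $a = A_{ij} - u_i v_j$ and $b = A_{ij} - \tilde u_i \tilde v_j$. Since $p$ is even, $a^p - b^p = (a-b)\sum_{r=0}^{p-1} a^r b^{p-1-r}$. This gives
\[
\left|a^p - b^p\right| \leq p\,|a-b|\,\max(|a|,|b|)^{p-1}.
\]
It therefore suffices to bound $|a-b| = |\tilde u_i\tilde v_j - u_i v_j|$ by $O(\gamma)\cdot\norm{A}_p^{1/2}$ and $\max(|a|,|b|)$ by $O(1)\cdot\norm{A}_p$. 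These bounds give a per-entry error of $O_p(1)\cdot\gamma\cdot\norm{A}_p^{1/2}\cdot\norm{A}_p^{p-1} = O_p(1)\cdot\gamma\cdot\norm{A}_p^{\frac{2p-1}{2}}$, and summing over $nm$ entries yields the claimed bound. (In fact we get the stronger statement with absolute values, which implies the one-sided inequality.)

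For the first bound, the rounding guarantees $|\tilde u_i - u_i|, |\tilde v_j - v_j| \leq \gamma$. The grid covers the range of the entries because $|u_i| \leq \norm{u} \leq O(1)\norm{A}_p^{1/2}$, and similarly for $v$; the constant in the grid's range is chosen at least as large as the constant in the hypothesis. Then
\[
|\tilde u_i\tilde v_j - u_iv_j| \leq |\tilde u_i - u_i|\,|\tilde v_j| + |u_i|\,|\tilde v_j - v_j| \leq \gamma\Paren{|\tilde v_j| + |u_i|} = O(\gamma)\norm{A}_p^{1/2},
\]
since $\tilde v_j$ also lies in the grid range.

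For the second bound, note $|A_{ij}| \leq \norm{A}_p$ because the entrywise $\ell_p$ norm dominates each entry. Also $|u_iv_j|$ and $|\tilde u_i \tilde v_j|$ are at most $O(1)\norm{A}_p^{1/2}\cdot O(1)\norm{A}_p^{1/2} = O(\norm{A}_p)$. Hence both $|a|$ and $|b|$ are $O(\norm{A}_p)$.

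The only step needing care is ensuring that the norm hypothesis controls individual entries. This holds because the (unspecified, Euclidean or $\ell_p$) norm of a vector upper bounds each coordinate's magnitude. Beyond that, the key point is that $\gamma \leq \norm{A}_p^{1/2}$ keeps the rounded entries within a constant factor of the range, so no terms of order $\gamma^2$ or higher powers of $\gamma$ escape the $O(1)$ constants. Nothing uses the near-optimality of $(u,v)$; that hypothesis only matters later, when the lemma is combined with the grid fineness chosen in \cref{algo:lra-ptas}.
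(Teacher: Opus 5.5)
Your argument is correct, and it takes a genuinely different route from the paper's. The paper expands $(A_{ij}-u_iv_j)^p$ binomially around $A_{ij}-\tilde u_i\tilde v_j$. It applies H\"older to each cross term $\sum_{i,j}(\tilde u_i\tilde v_j-u_iv_j)^q(A_{ij}-\tilde u_i\tilde v_j)^{p-q}$, so that each is controlled by the global quantities $\|\tilde u\tilde v^\top-uv^\top\|_p^q$ and $\|A-\tilde u\tilde v^\top\|_p^{p-q}\le O_p(1)\|A\|_p^{p-q}$. It must then absorb the higher-order terms $q=2,\dots,p$, which are of order $\gamma^q$, into the linear one. That absorption is where $\gamma\le\|A\|_p^{1/2}$ enters. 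The paper's write-up of this step is loose: it replaces the H\"older factor $(nm)^{q/p}$ by $(nm)^q$, after which absorption would really need $nm\gamma\lesssim\|A\|_p^{1/2}$. With the sharp factor, the top term $q=p$ is absorbed by $\gamma\le\|A\|_p^{1/2}$.

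Your pointwise difference-of-powers bound $|a^p-b^p|\le p\,|a-b|\max(|a|,|b|)^{p-1}$ uses only entrywise ($\ell_\infty$) information and is linear in $\gamma$ from the start. It gives a two-sided estimate and never produces higher powers of $\gamma$, so it sidesteps that delicate step. Since you bound $|\tilde u_i|,|\tilde v_j|$ directly by the grid range, the hypothesis $\gamma\le\|A\|_p^{1/2}$ is essentially unused in your proof. Your closing remark therefore overstates its role; it matters only if the grid is allowed to overshoot the interval.

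What the paper's H\"older formulation buys is that the leading error is measured against the residual $\|A-\tilde u\tilde v^\top\|_p^{p-1}$ rather than $\|A\|_p^{p-1}$. This is the template reused in the rank-$k$ analogue \cref{lem:rounding-rank-k-disc-solutions}, whose bound scales with $\|A-XY^\top\|_p^{p-1}$. You could recover that refinement too by applying H\"older when summing your per-entry bound over $(i,j)$.
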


Note that by~\cref{lem:discretization-error-bound} if $A$ has entries with bit complexity at most $B$, then it suffices to let $\Sigma$ be a $\frac{\OPT}{\left(nm \cdot 2^B\right)^{O_p(1)}}$-fineness grid of \[\left[ - O(1) \cdot \norm{A}_p^{1/2}, O(1) \cdot \norm{A}_p^{1/2}\right]\,.\] 
If $\OPT = 0$ then we have already outputted a cost $0$ solution in step $1$ of~\cref{algo:lra-ptas}, and thus in order to show bounds on the size of $\Sigma$ it suffices to show lower bounds on $\OPT$ when $\OPT \neq 0$. We prove the following lower bound on the value of $\OPT$ when it is non-zero:
\begin{lemma}
\label{lem:lb-opt}
    Let $p$ be an even integer and let $A$ be a $\mathbb{R}^{n \times m}$ matrix with entries with bit complexity $B$. Let $\OPT$ be the optimal objective value of the $\ell_p$ LRA value and let $\OPT \neq 0$. Then we have that 
    \[ \OPT \geq (nm)^{-O_p(1)} \cdot 2^{-O(B)}\,.\]
\end{lemma}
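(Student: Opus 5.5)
We argue via semialgebraic geometry: $\OPT$ is the infimum of a degree-$2p$ polynomial with integer coefficients of bit-size $O_p(B)$, and a nonzero such infimum is bounded below by a root-separation-type bound. The infimum over rank-one $uv^\top$ is attained. Indeed, rank-one matrices of rank at most $1$ form a closed set. Moreover, the objective $\|A-B\|_p^p$ is coercive in $B$. Hence there is a minimizer $B^* = u^*v^{*\top}$, and after rebalancing we have $\|u^*\|,\|v^*\| \le (nm)^{O(1)}2^{O(B)}$.

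Next, view $\OPT$ as the minimum of the polynomial $F(u,v)=\sum_{ij}(A_{ij}-u_iv_j)^p$ over the bounded box $\|u\|_\infty,\|v\|_\infty\le R$, where $R=(nm)^{O(1)}2^{O(B)}$. After scaling $A$ to have integer entries of magnitude at most $2^{B}$, $F$ has $N=n+m$ variables, degree $2p$, and integer coefficients of bit length $O_p(B+\log(nm))$. The set of critical values of $F$ restricted to the box is a finite set. This is because, by stratifying the box into faces, each value is a critical value of $F$ on some face, and by Sard/semialgebraic arguments each face contributes finitely many critical values. Each such critical value $\lambda$ is then a root of a univariate integer polynomial obtained by eliminating the variables from the system $\{F=\lambda,\ \nabla_{\text{face}}F=0\}$.

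The main step is to bound the height and degree of that eliminating polynomial, and I expect this to be the main obstacle. Using standard quantifier-elimination/effective Nullstellensatz bounds, or directly the bound of Basu--Pollack--Roy for the minimum of a polynomial over a compact basic semialgebraic set, we get that the degree is $(2p)^{O(N)}$ and the bit size is $O_p(B+\log nm)\cdot (2p)^{O(N)}$. This would give only $2^{-2^{O(N)}}$-type bounds, which is too weak, since the lemma claims $(nm)^{-O_p(1)}2^{-O(B)}$.

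So I would instead exploit the low-dimensional structure. For fixed $v$, the optimal $u$ decouples: $F(u,v)=\sum_i \sum_j (A_{ij}-u_iv_j)^p$. At the optimum, each $u_i$ solves a univariate equation $\sum_j v_j (A_{ij}-u_iv_j)^{p-1}=0$. This suggests an alternative argument in the spirit of the Ban et al. bit-complexity argument. If $\OPT\ne 0$, then some entry must be poorly fit, and the mismatch is certified by a $2\times 2$ minor of $A$. Specifically, since $A$ is not rank one, some $2\times2$ minor $\det\begin{pmatrix}A_{ij}&A_{ij'}\\A_{i'j}&A_{i'j'}\end{pmatrix}=\Delta$ is a nonzero integer (after scaling, $|\Delta|\ge 2^{-O(B)}$). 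For any rank-one $uv^\top$, we have $u_iv_ju_{i'}v_{j'}-u_iv_{j'}u_{i'}v_j=0$. Writing $E=A-uv^\top$, we get $\Delta=\det(uv^\top+E)$ restricted to the minor, which expands as a sum of terms each containing at least one entry of $E$, times entries of $A$ or of $uv^\top$.

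The remaining difficulty is bounding the entries of $uv^\top$ at the minor. We may restrict to near-optimal $uv^\top$, for which $|u_iv_j|\le |A_{ij}|+\OPT^{1/p}\le 2^{O(B)}$ when $\OPT$ is small. Then $2^{-O(B)}\le|\Delta|\le O(2^{O(B)})\cdot\max|E_{kl}|$, so $\max|E_{kl}|\ge 2^{-O(B)}$. Hence $\OPT\ge\max|E_{kl}|^p\ge 2^{-O_p(B)}$, which is at least the claimed $(nm)^{-O_p(1)}2^{-O(B)}$ for fixed $p$. This minor argument is the route I would actually write up; the semialgebraic discussion serves only as a fallback.
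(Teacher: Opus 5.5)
Your minor-based argument is correct, but it takes a different route from the paper's.

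The paper first reduces to the Frobenius norm via $\ell_p$/$\ell_2$ norm equivalence. It then lower-bounds the error of \emph{every} rank-one approximation of the nonsingular $2\times 2$ minor $M$ by $\sigma_2(M)^2$, using Eckart--Young, $\sigma_1(M)\sigma_2(M)=|\det M|$, and $\sigma_1(M)\le\|M\|_F$. This needs no control on the approximant. Transferring back to $\ell_p$ costs $(nm)^{O_p(1)}$ and raises the bound to the power $p/2$.

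You stay entrywise instead. Because the rank-one block has zero determinant, $|\Delta|=|\det(P_M+E_M)|\le O(\max|E|\,(\max|P|+\max|E|))$. This forces you to bound the entries of the approximant $P$, which you get from near-optimality, $|P_{ij}|\le|A_{ij}|+\OPT^{1/p}$. State the complementary case explicitly rather than writing ``when $\OPT$ is small'': either $\OPT\ge 1$ and there is nothing to prove, or else all residual entries are below $1$ and $|P_{ij}|\le 2^B+1$. (Alternatively, use $\OPT\le\|A\|_p^p\le nm\,2^{pB}$ throughout, at the cost of an $(nm)^{O(1)}$ loss.)

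In exchange, your route with the case split avoids SVD and the norm comparison. It yields $\OPT\ge 2^{-O(pB)}$ with no dependence on $nm$, and it extends directly to rank $k$ through a $(k+1)\times(k+1)$ minor, multilinearity of the determinant, and Hadamard's inequality.

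The factor $p$ in your exponent is also present in the paper's proof, and it is unavoidable: $A=\mathrm{diag}(1,2^{-B})$ has $\OPT=2^{-\Theta(pB)}$. So reading the lemma's $2^{-O(B)}$ with a $p$-dependent constant is the right interpretation.

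Drop the semialgebraic paragraphs from the write-up; as you observe, they do not give the claimed bound.
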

\begin{proof}[Proof of~\cref{lem:lb-opt}]
    Let $\OPT \neq 0$. Note that $\OPT = 0$ for the $\ell_p$ LRA problem if and only if $\OPT = 0$ for the $\ell_2$ LRA problem; in both cases this occurs if and only $A$ is rank-$1$. Thus, we can assume that the optimal value for the $\ell_2$ LRA problem, is also non-zero. Furthermore, since $\ell_2$ and $\ell_p$ norms are equivalent up to multiplicative factor of $n^{O_p(1)}$, it suffices to show the statement for $p=2$. 

    Let $\hat{A}$ be the best rank-$1$ approximation to $A$, and we want to show that $\norm{A - \hat{A}}_F^2 \geq 2^{-O(B)}$ when $A$ is not rank $1$. Note that since $A$ is not rank $1$, there must exist a $2 \times 2$ minor of $A$ with non-zero determinant. Let the associated submatrix be $M$. Note that $\det(M) \neq 0$, so we must have that $\abs{\det(M)} \geq 2^{-O(B)}$, since the entries of $M$ have bit complexity at most $B$. 

    Consider $\norm{M - \hat{M}}_F^2$, or the error in the best rank $1$ approximation to $M$. Note that this is just $\sigma_2(M)^2$ and we have that $\abs{\sigma_1(M)} \cdot \abs{\sigma_2(M)} = \abs{\det(M)} \geq 2^{-O(B)}$. Furthermore, note that $\abs{\sigma_1(M)} \leq \norm{M}_F$, which is at most $nm\cdot 2^{O(B)}$. Thus, we can conclude that $\abs{\sigma_2(M)} \geq \frac 1 {nm} \cdot 2^{-O(B)}$. Finally, to conclude we note that any rank $1$ approximation to $A$ gives a rank $1$ approximation to $M$ with $\norm{M - \hat{M}}_F^2 \leq \norm{A - \hat{A}}_F^2$. Thus, we have that $\norm{A - \hat{A}}_F^2 \geq \frac 1 {n^2m^2} 2^{-O(B)}$.
\end{proof}

\subsection{LRA Identifiability}

We prove the following ``strong identifiability'' inequality for any two LRA solutions. These identifiability inequalities show that we can compute an SDP solution value with $(1+\eps)\cdot \OPT$\footnote{This is true in the case where $\OPT$ is sufficiently small; otherwise, the additive approximation in~\cref{thm:lra-additive} already gives $\eps \cdot \OPT$ additive error.} where $\norm{x-u}_p^p + \norm{y-v}_p^p\leq \frac \OPT {\norm{x}_p^p}$, which will be crucial in analyzing our rounding algorithm. 

\begin{lemma}[Strong Identifiability for Rank $1$ $\ell_p$ LRA]
\label{lem:gen-strong-identifiability}
    Let $p$ be an even integer, $x,u \in \mathbb{R}^n$, and $y,v \in \mathbb{R}^m$. Then
    \[ \min_c \left(\norm{u+cx}_p^p \norm{v-y/c}_p^p + \norm{u-cx}_p^p \norm{v+y/c}_p^p\right) \leq O(1) \cdot \norm{uv^\top - xy^\top}_p^p\,.\]
\end{lemma}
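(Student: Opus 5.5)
The plan is to rewrite both products as entrywise norms of rank-one matrices, which makes the left-hand side additive, and then use $\ell_p$-nearest-point decompositions to lower bound $\norm{D}_p^p$, where $D=uv^\top-xy^\top$. Since $\norm{a}_p^p\norm{b}_p^p=\norm{ab^\top}_p^p$, the quantity inside the minimum equals $\norm{P}_p^p+\norm{Q}_p^p$, with
\[ P=(u+cx)(v-y/c)^\top,\qquad Q=(u-cx)(v+y/c)^\top .\]
Expanding gives $P=D+E$ and $Q=D-E$, where $E=c\,xv^\top-c^{-1}uy^\top$. The almost-triangle inequality then gives $\norm{P}_p^p+\norm{Q}_p^p\le 2^{p}(\norm{D}_p^p+\norm{E}_p^p)$. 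It therefore suffices to exhibit one $c$ for which each individual rank-one piece is $O(\norm{D}_p^p)$. Equivalently, we need $\norm{E}_p\lesssim\norm{D}_p$, but I will bound the pieces directly.

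\textbf{Choice of $c$ and structural lower bound.} I will take $c$ to minimize $\norm{u-cx}_p$ and write $u=cx+r$, so that $\norm{r+\lambda x}_p\ge\norm{r}_p$ for all $\lambda$. Then $D=x(cv-y)^\top+rv^\top$, and column $j$ of $D$ is $t_jx+s_jr$ with $t_j=cv_j-y_j$ and $s_j=v_j$. Nearest-point optimality gives $\norm{t_jx+s_jr}_p\ge|s_j|\norm{r}_p$, and the triangle inequality gives $\norm{t_jx+s_jr}_p\ge|t_j|\norm{x}_p-|s_j|\norm{r}_p$. Splitting on whether $|t_j|\norm{x}_p\ge 2|s_j|\norm{r}_p$ shows $\norm{t_jx+s_jr}_p\ge\frac13(|t_j|\norm{x}_p+|s_j|\norm{r}_p)$. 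Raising to the $p$-th power and summing over $j$ yields the key bound
\[ \norm{D}_p^p\gtrsim \norm{x}_p^p\norm{cv-y}_p^p+\norm{r}_p^p\norm{v}_p^p .\]

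\textbf{Bounding the two products.} First, $u-cx=r$, and $\norm{v+y/c}_p\le 2\norm{v}_p+\norm{cv-y}_p/|c|$. Second, $u+cx=2cx+r$ and $v-y/c=(cv-y)/c$. After expanding, every cross term is controlled by the key bound except the single quantity
\[ \norm{r}_p^p\,\norm{cv-y}_p^p/|c|^p .\]
If $\norm{r}_p\le|c|\norm{x}_p$, this quantity is at most $\norm{x}_p^p\norm{cv-y}_p^p\lesssim\norm{D}_p^p$, and we are done.

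\textbf{Main obstacle.} The remaining regime is $\norm{r}_p>|c|\norm{x}_p$, which includes $c=0$. There $u$ is nearly ``$\ell_p$-orthogonal'' to $x$, and the factor $1/c$ blows up. In this case I would run the symmetric argument. One option is to choose $c$ so that $y/c$ is the $\ell_p$-projection of $v$'s direction, i.e.\ decompose $v=y/c'+r'$ and use the analogous column-wise (now row-wise) lower bound. Another option is to argue directly that $\norm{D}_p\gtrsim\norm{u}_p\norm{v}_p+\norm{x}_p\norm{y}_p$, and then pick $c$ to balance $\norm{cx}_p\norm{v}_p=\norm{u}_p\norm{y}_p/|c|$. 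The second option should be available because, when $u$ and $x$ are far from parallel, $uv^\top$ and $xy^\top$ cannot cancel. This gives a trivial bound of every piece by $\norm{u}_p^p\norm{v}_p^p+\norm{x}_p^p\norm{y}_p^p$. I expect making this dichotomy quantitative is the delicate step: showing that $\norm{r}_p\gtrsim\norm{u}_p$ (equivalently, that $|c|\norm{x}_p$ is small) forces the no-cancellation lower bound via the same per-column nearest-point argument, with constants depending only on $p$.
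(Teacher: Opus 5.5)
Your nearest-point decomposition, the per-column bound $\|D\|_p^p \ge 3^{-p}\bigl(\|x\|_p^p\|cv-y\|_p^p + \|r\|_p^p\|v\|_p^p\bigr)$, and your treatment of the regime $\|r\|_p\le |c|\,\|x\|_p$ are all correct. What is missing is the regime $\|r\|_p > |c|\,\|x\|_p$, which you leave open, so as written this is not yet a proof. That case is not delicate, though: the no-cancellation bound you want already follows from your key bound.

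In that regime $\|u\|_p \le |c|\|x\|_p + \|r\|_p < 2\|r\|_p$, so $\|u\|_p^p\|v\|_p^p \le 2^p\|r\|_p^p\|v\|_p^p = O(\|D\|_p^p)$. Also $\|y\|_p \le \|cv-y\|_p + |c|\|v\|_p$ gives $\|x\|_p\|y\|_p \le \|x\|_p\|cv-y\|_p + \|r\|_p\|v\|_p$, so $\|x\|_p^p\|y\|_p^p = O(\|D\|_p^p)$ as well. Now discard the nearest-point $c$ and use $c' = \|u\|_p/\|x\|_p$. Each of the two products is then at most $(\|u\|_p + c'\|x\|_p)^p(\|v\|_p + \|y\|_p/c')^p = 2^p(\|u\|_p\|v\|_p + \|x\|_p\|y\|_p)^p = O(\|D\|_p^p)$. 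The degenerate cases $u=0$ or $x=0$ are handled by letting $c\to 0$ or $c\to\infty$ and reading the min as an infimum; the paper's proof needs the same convention. With this added, your argument is a complete proof with explicit constants $2^{O(p)}$.

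This is a genuinely different route from the paper's. The paper first reduces to $O(1)$ dimensions by applying an $\ell_p$ subspace embedding (\cref{fact:sketch}) to $\mathrm{span}\{u,x\}$ and $\mathrm{span}\{v,y\}$, and then replaces $\ell_p$ by $\ell_2$ there. It takes $c=\|u\|_2/\|x\|_2$, so that $\langle u+cx,u-cx\rangle=0$. Then $P$ and $Q$ are Frobenius-orthogonal and $\|P\|_F^2+\|Q\|_F^2=\|P+Q\|_F^2=4\|D\|_F^2$ exactly. This orthogonality identity is what the paper lifts to rank $k$ via the polar decomposition (\cref{lem:high-rank-ell-2-ident}). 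However, it depends on a dimension-reduction theorem and leaves the $p$-dependence of the constant implicit in the $\ell_p$-versus-$\ell_2$ comparison.

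Your argument never leaves $\ell_p$. It uses only Birkhoff--James orthogonality of the nearest-point residual and the additivity of $\|\cdot\|_p^p$ over columns. As a result it is self-contained, gives explicit constants, and holds for every real $p\ge 1$, not only even $p$. Its cost is that it does not obviously extend to rank $k$: $\ell_p$ nearest-point projection onto a $k$-dimensional subspace is nonlinear, so a single matrix $M$ need not make every $Rs$ Birkhoff--James orthogonal to the column span of $X$.
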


Note that when $\OPT$ is small we also get stronger bounds via the following lemma:
\begin{lemma}
\label{cor:strong-identifiability}
    Let $x,y$ be an $O\left(\OPT\right)$ value $\ell_p$ LRA solution for $A$ such that $\norm{x} = \norm{y}$ and let $\OPT \leq C_p \norm{A}_p^p$ for some sufficiently small constant $C_p$ depending only on $p$. Then there exists $u,v$ an optimal LRA solution for $A$ such that 
    \[ \max\{\norm{u-x}_p^p, \norm{v-y}_p^p\} \leq O\left( \frac{\OPT}{\norm{x}_p^p} \right)\,.\]
\end{lemma}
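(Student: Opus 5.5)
Take an optimal rank-one solution $(u_0,v_0)$ with $\|A-u_0v_0^\top\|_p^p=\OPT$, and apply \cref{lem:gen-strong-identifiability} with $x,y$ playing the role of the reference pair. The triangle inequality for $\|\cdot\|_p$ gives
\[\|u_0v_0^\top - xy^\top\|_p^p \le 2^{p-1}\bigl(\OPT + O(\OPT)\bigr) = O(\OPT).\]
So there is a scalar $c$ such that both products $\|x+cu_0\|_p^p\,\|y-v_0/c\|_p^p$ and $\|x-cu_0\|_p^p\,\|y+v_0/c\|_p^p$ are $O(\OPT)$. (This uses the lemma with the roles of $(x,y)$ and $(u,v)$ swapped, which is harmless by symmetry.) Replace $c$ by $-c$ if necessary so that the rescaled optimum $(u,v)=(-cu_0,-v_0/c)$ or $(cu_0,v_0/c)$ has the appropriate signs; either way $uv^\top=u_0v_0^\top$, so $(u,v)$ is still optimal. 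After this relabeling the two products read
\[\|x-u\|_p^p\,\|y+v\|_p^p=O(\OPT),\qquad \|x+u\|_p^p\,\|y-v\|_p^p=O(\OPT).\]

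The aim is to show that $\|y+v\|_p^p$ and $\|x+u\|_p^p$ are both $\Omega(\|x\|_p^p)$, after choosing between $(u,v)$ and $(-u,-v)$. Dividing then gives $\|x-u\|_p^p$ and $\|y-v\|_p^p$ both $O(\OPT/\|x\|_p^p)$, which is the claim. Two ingredients are needed.

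First, $\|x\|_p$ is comparable to $\|y\|_p$ in the $\ell_p$ sense. The hypothesis $\|x\|=\|y\|$ is presumably meant in $\ell_p$ or at least up to constants; I would read it as $\|x\|_p=\|y\|_p$. Next, $\|xy^\top\|_p^p=\|x\|_p^p\|y\|_p^p \ge \Omega(\|A\|_p^p)$. This holds because $\|A-xy^\top\|_p^p = O(\OPT) \le O(C_p)\|A\|_p^p$, and since $C_p$ is small the triangle inequality gives $\|xy^\top\|_p \ge \tfrac12\|A\|_p$. Similarly $\|u\|_p^p\|v\|_p^p = \Theta(\|A\|_p^p)$. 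Hence $\|x\|_p^p=\|y\|_p^p=\Theta(\|A\|_p^{p/2})$, and $\OPT/\|x\|_p^p$ is much smaller than $\|x\|_p^p$.

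Second, the balancing step, which is the main obstacle. By the parallelogram-type bound $\|x+u\|_p+\|x-u\|_p\ge 2\|x\|_p$, at least one of $\|x\pm u\|_p^p$ is at least $2^{-p}\cdot\Omega(\|x\|_p^p)$, and the same holds for $y\pm v$. Suppose $\|x+u\|_p^p \ge \Omega(\|x\|_p^p)$. Then $\|y-v\|_p^p \le O(\OPT/\|x\|_p^p)$, which is tiny compared to $\|y\|_p^p$. Hence $\|v\|_p\approx\|y\|_p$, and $\|y+v\|_p \ge 2\|y\|_p-\|y-v\|_p \ge \|y\|_p$. Therefore $\|x-u\|_p^p \le O(\OPT/\|y\|_p^p) = O(\OPT/\|x\|_p^p)$, and we are done.

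In the other case, $\|x-u\|_p^p\ge\Omega(\|x\|_p^p)$ and $\|x+u\|_p^p$ is small, so $\|y+v\|_p^p$ is small by the symmetric argument. Replacing $(u,v)$ by $(-u,-v)$ (same product, still optimal) reduces to the first case. The only care needed is tracking constants so that "tiny" is strictly below a constant fraction of $\|y\|_p$. This is exactly where the assumption that $\OPT\le C_p\|A\|_p^p$ with $C_p$ small enters, via $\OPT/\|x\|_p^p \le O(C_p)\|x\|_p^p$.
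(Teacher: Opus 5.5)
Your proposal is correct and follows essentially the same route as the paper. You apply \cref{lem:gen-strong-identifiability} to get both cross products $O(\OPT)$, use $\OPT\le C_p\|A\|_p^p$ to get $\|x\|_p^{2p}=\Theta(\|A\|_p^p)$, and then run the chain ``one factor large $\Rightarrow$ its partner small $\Rightarrow$ the sign-flipped counterpart large $\Rightarrow$ the remaining residual small'' using the $(u,v)\mapsto(-u,-v)$ symmetry; the only cosmetic difference is that you enter the chain via $\max(\|x+u\|_p,\|x-u\|_p)\ge\|x\|_p$, whereas the paper enters via the dichotomy that one of $\|u+x\|_p^p,\|v-y\|_p^p$ is $O(\sqrt{\OPT})$.
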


We first prove~\cref{cor:strong-identifiability} and then return to the proof of~\cref{lem:gen-strong-identifiability}.

\begin{proof}[Proof of~\cref{cor:strong-identifiability}]
    Applying~\cref{lem:gen-strong-identifiability} we have that for an optimal LRA solution $u,v$ 
    \[ \min_c \left(\norm{u+cx}_p^p \norm{v-y/c}_p^p + \norm{u-cx}_p^p \norm{v+y/c}_p^p\right) \leq O(1) \cdot \norm{uv^\top - xy^\top}_p^p\,.\]
    Note that $cu, v/c$ are also an optimal LRA solution themselves, so WLOG let the optimizing $c$ be $1$. Furthermore, since both $x,y$ and $u,v$ have objective value $O(\OPT)$ we have that 
    \[ \norm{uv^\top - xy^\top}_p^p \leq \norm{(uv^\top- A) + (A - xy^\top)}_p^p \leq O_p(1) \left( \norm{A - uv^\top}_p^p + \norm{A - xy^\top}_p^p \right) \leq O_p(\OPT)\,.\]
    Combining this with the identifiability inequality, we thus have that
    \begin{align}
        \norm{u+x}_p^p \norm{v-y}_p^p + \norm{u-x}_p^p \norm{v+y}_p^p \leq O_p(\OPT) \label{eq:opt-identifiability}
    \end{align}
    Note that this implies that either $\norm{u+x}_p^p \leq O_p \left(\sqrt{\OPT}\right)$ or $\norm{v-y}_p^p \leq O_p \left(\sqrt{\OPT}\right)$. WLOG let $\norm{y-v}_p^p \leq O_p \left(\sqrt{\OPT}\right)$ by symmetry between $x,y$ and $u,v$ and the fact that $-u,-v$ are also an optimal LRA solution. We will show that this implies that $\norm{y+v} \geq \Omega_p(1) \norm{y}$ and thus $\norm{u-x}_p^p \leq O_p(1) \frac \OPT {\norm{y}_p^p}$.

    We first observe that by the Almost Triangle Inequality, we have that 
    \[ \norm{A}_p^p \leq 2^{2p-1} \norm{A-xy^\top}_p^p + 2^{2p-1} \norm{x}_p^{p}\norm{y}_p^p = 2^{2p-1} \norm{A-xy^\top}_p^p + 2^{2p-1} \norm{y}_p^{2p}\,.\]
    Rearranging, we have that 
    \[ \norm{y}_p^{2p} \geq 2^{-2p+1} \norm{A}_p^p - \norm{A-xy^\top}_p^p = 2^{-2p+1} \norm{A}_p^p - O\left(\OPT\right)\,.\]
    Observe that by assumption we have that $\OPT \leq C_p \norm{A}_p^p$ so $\norm{y}_p^{2p} \geq \left(2^{-2p+1} - C'\right) \norm{A}_p^p$ for a new constant $C'$ depending only on $p$. Furthermore, again by Almost Triangle Inequality we have that
    \begin{align*}
        \norm{y+v}_p^p \geq 2^{-2p+1} \norm{2y}_p^p - \norm{y-v}_p^p\,.
    \end{align*}
    Since $\norm{y-v}_p^p \leq O_p(\sqrt{\OPT}) \leq O_p\left( \sqrt{C_p \norm{A}_p^p}\right)$ and $\norm{v}_p^{p} \geq \sqrt{2^{-2p+1} - C'} \cdot \sqrt{\norm{A}_p^p}$, substituting gives us that
    \[ \norm{y+v}_p^p \geq \left(2^{-p+1} \sqrt{2^{-2p+1} - C'} - \sqrt{C} \right)\cdot \norm{A}_p^{p/2}\,.\]
    To conclude, we note that $\norm{A}_p^{p} \geq \Omega_p(1) \norm{x}_p^p \norm{y}_p^p = O_p(1) \norm{y}_p^{2p}$ and thus for sufficiently small $C_p$, $C'$ we have that
    \[ \norm{y+v}_p^p \geq \Omega_p(1) \norm{y}_p^p\,.\]
    Combining this lower bound on $\norm{y+v}_p$ with~\cref{eq:opt-identifiability} gives us that 
    \[\norm{u-x}_p^p \leq O_p(1) \cdot \frac \OPT {\norm{y}_p^p}\,.\]
    Using the above upper bound on $\norm{y-v}$, the same line of reasoning also gives that $\norm{u+x}_p^p \geq \Omega_p(1) \norm{x}_p^p$ and thus
    \[\norm{v-y}_p^p \leq O_p(1) \cdot \frac \OPT {\norm{x}_p^p}\,.\]

\end{proof}

We now return to the proof of~\cref{lem:gen-strong-identifiability}. We will need the following fact about sketching.

\begin{fact}[Sketching]
\label{fact:sketch}
  Let $L \subseteq \R^n$ be a subspace of dimension $k$.
  And let $p > 2$.
  There exists a random matrix $S \in \R^{m \times n}$ with $m = \poly(k,1/\eps)$ such that with probability at least $2/3$, for all $v \in L$,
  \[
  (1-\eps) \|v\|_p \leq \|Sv \|_p \leq (1+\eps) \|v\|_p \, .
  \]
\end{fact}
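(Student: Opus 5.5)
The plan is to realize $S$ as a rescaled row-sampling matrix driven by sensitivities, equivalently $\ell_p$ Lewis weights, of the subspace $L$. Each $v \in L$ has $\|Sv\|_p^p$ equal to a reweighted partial sum of the $|v_i|^p$. We then get a pointwise concentration bound for each fixed $v$, union bound over a net of the unit sphere of $L$, and finally pass from the net to all of $L$ using linearity of $S$.

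\emph{Step 1 (sensitivities).} For each coordinate $i \in [n]$ define the sensitivity $\sigma_i = \sup_{v \in L,\, v \neq 0} |v_i|^p / \|v\|_p^p$. We need the bound $\sum_i \sigma_i \leq k^{p/2}$ for $p > 2$. To prove it, take $\ell_p$ Lewis weights $w_1,\ldots,w_n$ of a basis matrix $U \in \R^{n\times k}$ for $L$, which satisfy $\sum_i w_i = k$. Lewis' change of density gives, for every $v \in L$, $|v_i|^p \leq k^{p/2-1} w_i \|v\|_p^p$. This is the standard consequence of the defining identity $w_i = (u_i^\top (U^\top W^{1-2/p} U)^{-1} u_i)^{p/2}$ together with H\"older's inequality. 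Hence $\sigma_i \leq k^{p/2-1} w_i$, and summing over $i$ gives $\sum_i \sigma_i \leq k^{p/2}$.

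\emph{Step 2 (sampling and fixed-vector concentration).} Set $q_i = \sigma_i / \sum_j \sigma_j$. Draw $m$ independent indices $i_1,\ldots,i_m$ from $q$, and let row $t$ of $S$ be $(m q_{i_t})^{-1/p} e_{i_t}^\top$. Then $\E \|Sv\|_p^p = \|v\|_p^p$ for every $v$. For a fixed unit $v \in L$, each summand $|v_{i_t}|^p / (m q_{i_t})$ is at most $\sum_j\sigma_j / m \leq k^{p/2}/m$. By Bernstein's inequality (or a multiplicative Chernoff bound for bounded nonnegative summands),
\[
\Pr\Brac{\,\abs{\|Sv\|_p^p - 1} > \eps\,} \leq 2\exp\Paren{-\Omega\Paren{\eps^2 m / k^{p/2}}}.
\]

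\emph{Step 3 (net and extension).} Take an $\eps$-net $\calN$ of the $\ell_p$-unit sphere of $L$ with $|\calN| \leq (3/\eps)^k$. Choosing $m = C\, k^{p/2}\cdot k\log(1/\eps)/\eps^2 = \poly(k,1/\eps)$ makes the union bound over $\calN$ succeed with probability at least $2/3$. On this event every net point satisfies $(1-\eps)\|v\|_p \leq \|Sv\|_p \leq (1+\eps)\|v\|_p$ (after taking $p$-th roots and adjusting constants).

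To extend to all unit $v \in L$, write $v = \sum_{r \geq 0} \alpha_r v^{(r)}$ with $v^{(r)} \in \calN$ and $|\alpha_r| \leq \eps^r$, by greedily approximating the successive residuals. Linearity of $S$ and the triangle inequality then give $\|Sv\|_p \leq (1+\eps)/(1-\eps)$. Let $M$ be the maximum of $\|Sv\|_p$ over unit $v \in L$. For the lower bound, pick $v' \in \calN$ with $\|v - v'\|_p \leq \eps$; then $\|Sv\|_p \geq \|Sv'\|_p - \eps M \geq 1 - O(\eps)$. Rescaling $\eps$ by a constant finishes the argument.

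The main obstacle is Step 1, the sensitivity bound $\sum_i \sigma_i \leq k^{p/2}$. This is exactly where $p > 2$ forces the polynomial $k^{p/2}$ dependence rather than $k$, and it needs the existence and basic properties of Lewis weights. If a self-contained argument is preferred, I would first try a cruder bound: using a well-conditioned basis, John's ellipsoid gives $|v_i|^p \leq (k \cdot \|u_i\|_2^2)^{p/2}\|v\|_p^p$-type estimates with total sensitivity $\poly(k)$, which already suffices since only $m = \poly(k,1/\eps)$ is required. Steps 2 and 3 are routine.
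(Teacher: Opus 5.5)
Your proof is correct. The paper gives no proof of this statement; it invokes it as a standard black-box fact (the known $\ell_p$ Lewis-weight / sensitivity-sampling subspace embedding). Your argument therefore supplies a derivation rather than competing with one.

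The key estimates check out:
\begin{itemize}
\item Cauchy--Schwarz in the $U^\top W^{1-2/p}U$ geometry, followed by H\"older with exponents $p/2$ and $p/(p-2)$, gives $|v_i|^p \le k^{p/2-1} w_i \|v\|_p^p$.
\item The identity $\sum_i w_i = \mathrm{tr}(I_k) = k$ then bounds the total sensitivity by $k^{p/2}$.
\item The Chernoff bound with summands in $[0,k^{p/2}/m]$, the union bound over a $(3/\eps)^k$-net, and the geometric-series extension from the net to all of $L$ are all fine.
\end{itemize}
Together these yield $m = O(k^{p/2+1}\log(1/\eps)/\eps^2)$. That is roughly a factor $k$ worse than chaining-based analyses of Lewis-weight sampling, but only $\mathrm{poly}(k,1/\eps)$ is needed. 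Your John-ellipsoid fallback, which gives total sensitivity at most $k^p$, would also suffice and avoids the existence theory of Lewis weights.

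One point you should make explicit: your $S$ depends on $L$ through the sampling probabilities. This is allowed by the quantifier order in the statement, and it is unavoidable. By the known lower bound for linear sketches, estimating $\|v\|_p$ for $p>2$ even for a single fixed vector needs $\Omega(n^{1-2/p})$ rows, so no oblivious $S$ with $\mathrm{poly}(k,1/\eps)$ rows exists. It is also all the paper needs, since the fact is only applied to specific low-dimensional subspaces such as $\mathrm{span}(u,x)$ and $\mathrm{span}(v,y)$ in the identifiability lemmas.
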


\begin{proof}[Proof of~\cref{lem:gen-strong-identifiability}]
    Note that by taking $p=O(1)$ and $\eps = 1/2$ we have that there exists random matrices $S_1 \in \mathbb{R}^{O(1) \cdot n}$ and $S_2 \in \mathbb{R}^{O(1) \times m}$ such that with probability at least $2/3$
    \begin{align*}
        \frac{1}{2} \norm{S_1u+cS_1x}_p&\leq \norm{u+cx}_p \leq \frac 3 2 \norm{S_1u+cS_1x}_p \\
        \frac{1}{2} \norm{S_1u-cS_1x}_p&\leq \norm{u-cx}_p \leq \frac 3 2 \norm{S_1u-cS_1x}_p \\
        \frac{1}{2} \norm{S_2v+S_2y/c}_p&\leq \norm{v+y/c}_p \leq \frac 3 2 \norm{S_2v+S_2y/c}_p \\
        \frac{1}{2} \norm{S_2v-S_2y/c}_p&\leq \norm{v-y/c}_p \leq \frac 3 2 \norm{S_2v-S_2y/c}_p 
    \end{align*}
    and $\norm{uv^\top - xy^\top}_p = \Theta \left( \norm{(S_1 u)(S_2 v)^\top - (S_1 x)(S_2 y)^\top}_p \right)$. Thus, it suffices to prove the inequality for $S_1u, c S_1 x, S_2 v, S_2 y/c$ or in the setting where $u, cx, v, y/c$ are constant dimensional vectors. In this setting, up to multiplicative constants, we have that $\ell_p$ and $\ell_2$ norms are equivalent, and thus the inequality reduces to showing that 
    \[\min_c \left(\norm{u+cx}_2^2 \norm{v-y/c}_2^2 + \norm{u-cx}_2^2 \norm{v+y/c}_2^2\right) \leq O(1) \cdot \norm{uv^\top - xy^\top}_F^2\,.\]
    In particular, we will show that when $c = \norm{u}/\norm{x}$ then
    \[ \norm{u+cx}_2^2 \norm{v-y/c}_2^2 + \norm{u-cx}_2^2 \norm{v+y/c}_2^2 \leq O(1) \cdot \norm{uv^\top - xy^\top}_F^2\,.\]
    Observe that 
    \[ \left(u+cx\right)\left(v-y/c\right)^\top + \left(u-cx\right)\left(v+y/c\right)^\top = 2 \left( uv^\top - xy^\top \right)\,.\]
    Thus, $\frac{1}{2}\norm{\left(u+cx\right)\left(v-y/c\right)^\top + \left(u-cx\right)\left(v+y/c\right)^\top} = \norm{uv^\top - xy^\top}$. On the other hand,
    \begin{align*}
        &\norm{\left(u+cx\right)\left(v-y/c\right)^\top + \left(u-cx\right)\left(v+y/c\right)^\top}_F^2 \\
        &\qquad= \norm{u+cx}_2^2 \norm{v-y/c}_2^2 + \norm{u-cx}_2^2 \norm{v+y/c}_2^2 + 2 \left( \langle u+cx, u-cx\rangle \cdot \langle v-y/c, v+y/c\rangle\right) \\
        &\qquad= \norm{u+cx}_2^2 \norm{v-y/c}_2^2 + \norm{u-cx}_2^2 \norm{v+y/c}_2^2 + 2 \left(\norm{u}_2^2 - c^2 \norm{x}_2^2 \right) \left( \norm{v}_2^2 - \frac{1}{c^2} \norm{y}_2^2\right)\,.
    \end{align*}
    Picking $c = \norm{u}/\norm{x}$ makes the last term $0$ and thus 
    \[ \norm{\left(u+cx\right)\left(v-y/c\right)^\top + \left(u-cx\right)\left(v+y/c\right)^\top}_F^2 = \norm{u+cx}_2^2 \norm{v-y/c}_2^2 + \norm{u-cx}_2^2 \norm{v+y/c}_2^2\,,\]
    which completes the proof.
\end{proof}

Finally, we note that~\cref{cor:strong-identifiability} implies that the following bounds hold, which mean that our pseudodistribution constraints will be satisfied by an optimal LRA solution.
\begin{lemma}
\label{lem:addtl-sdp-constraints-satisfied}
    Let $p$ be an even integer and let $a,b,q$ be integers such that $1\leq q < p$ and $a+b \leq p-q$. Let $x,y$ be an $O\left(\OPT\right)$ value $\ell_p$ LRA solution for $A$ such that $\norm{x} = \norm{y}$ and let $\OPT \leq C_p \norm{A}_p^p$ for some sufficiently small constant $C_p$ depending only on $p$. Then there exists $u,v$ an optimal LRA solution for $A$ such that
    \begin{align*}
        \left(\sum_i |x_i|^{\frac{pa}{p-q}} \cdot |x_i - u_i|^{\frac{p(p-q-a)}{p-q}}\right)^{2(p-q)} &\leq \|x\|_p^{2pa} \|x-u\|_p^{2p(p-q-a)} \\[1ex]
        \left(\sum_j |y_j|^{\frac{pb}{p-q}} \cdot |y_j - v_j|^{\frac{p(p-q-b)}{p-q}}\right)^{2(p-q)} &\leq \|y\|_p^{2pb} \|y-v\|_p^{2p(p-q-b)} \\[1ex]
        \left(\sum_{i,j} \vert x_i \vert^{\frac{pa}{p-q}}\cdot \vert x_i - u_i\vert^{\frac{p(p-q-a)}{p-q}} \cdot \vert y_j \vert^{\frac{pb}{p-q}}\cdot \vert y_j - v_j\vert^{\frac{p(p-q-b)}{p-q}} \right)^{2(p-q)} &\leq \|x\|_p^{2pa} \|x-u\|_p^{2p(p-q-a)} \|y\|_p^{2pb} \|y-v\|_p^{2p(p-q-b)}\,.
    \end{align*}
\end{lemma}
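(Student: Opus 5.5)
The three inequalities are instances of H\"older's inequality. I will take $u,v$ to be the optimal solution supplied by \cref{cor:strong-identifiability}; the norm bounds there are not needed, since the claimed inequalities hold for \emph{every} pair $u,v$.

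First inequality. Fix $1 \le q < p$ and $0 \le a \le p-q$, and set $r = p-q$. Write the summand as $|x_i|^{pa/r}\,|x_i-u_i|^{p(r-a)/r}$ and apply H\"older with exponents $r/a$ and $r/(r-a)$; when $a=0$ or $a=r$ the bound is trivial. This gives
\[
\sum_i |x_i|^{\frac{pa}{r}} |x_i-u_i|^{\frac{p(r-a)}{r}} \le \Big(\sum_i |x_i|^p\Big)^{\frac a r}\Big(\sum_i |x_i-u_i|^p\Big)^{\frac{r-a}{r}} = \|x\|_p^{\frac{pa}{r}}\,\|x-u\|_p^{\frac{p(r-a)}{r}} .
\]
Both sides are nonnegative, so raising to the power $2r = 2(p-q)$ gives exactly the stated bound $\|x\|_p^{2pa}\|x-u\|_p^{2p(p-q-a)}$.

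Second inequality. This is the same argument applied to $y,v$ with $b$ in place of $a$.

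Third inequality. The double sum factors as a product $\big(\sum_i \cdots\big)\big(\sum_j \cdots\big)$ of the two single sums above. Multiplying the first two bounds (before raising to the power $2(p-q)$) and then raising to the power $2(p-q)$ gives the claim.

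The only points to check are the degenerate exponents $a\in\{0,r\}$, where H\"older is replaced by the identity, and the convention $0^0=1$. Since $a+b\le p-q$ forces $a,b\in[0,p-q]$, there is no real obstacle; the argument is elementary H\"older throughout.
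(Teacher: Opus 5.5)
Your proof is correct and is the same as the paper's: H\"older with exponents $\frac{p-q}{a}$ and $\frac{p-q}{p-q-a}$ (respectively with $b$) for the first two inequalities, and for the third, the double sum factors into the product of the two single sums. You are also right that the optimality and norm hypotheses play no role, since the inequalities hold for every pair $u,v$; the paper's proof likewise uses nothing beyond H\"older here.
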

\begin{proof}
    All inequalities follow via an application of H\"olders inequality with the applicable conjugates. Using H\"olders inequality with $\frac{p-q}{a}$ and $\frac{p-q}{p-q-a}$ yields the first inequality, and the second follows via H\"olders with $\frac{p-q}{b}$ and $\frac{p-q}{p-q-b}$. Finally, we note that the third inequality is simply the product of the first two.
\end{proof}

\subsection{Rounding Error}

In this section, we will show that global correlation rounding produces a solution with value at most $\eps\cdot \OPT$ larger than the SDP solution. We focus on the case where $\OPT$ is sufficiently small relative to $\norm{A}_p^p$. Otherwise, the additive approximation in~\cref{thm:lra-additive} already produces a solution $\hat{u}, \hat{v}$ which achieves $\eps \cdot \OPT$ relative error.

\begin{lemma}[Relative Error for SDP Rounding]
\label{lem:ptas-rounding}
    Let $p$ be a positive even integer and let $A$ be a matrix such that $\OPT \leq C_p \norm{A}_p^p$ for a sufficiently small constant $C_p$ depending only on $p$. Let $\OPT_{\mathrm{SDP}}$ be the objective value of the SDP in~\cref{algo:lra-ptas} for some $s_{u,k}, s_{v,k} \in \mathcal{S}_1$ and $d_u, d_v \in \calS_2$. Then global correlation rounding with $(1/\eps)^{O_p(1)}$ rounds of conditioning produces a solution with expected value at most $\OPT_{\mathrm{SDP}} + \eps \cdot \OPT$.
\end{lemma}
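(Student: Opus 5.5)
We rewrite the objective in the residual variables $a_i = u_i - x_i$ and $b_j = v_j - y_j$, where $x,y$ is the fixed constant-factor solution. Then $A_{ij} - u_iv_j = R_{ij} - x_i b_j - a_i y_j - a_i b_j$ with $R = A - xy^\top$ a fixed matrix, and $\|R\|_p^p = O(\OPT)$. Expanding $(R_{ij} - x_ib_j - a_iy_j - a_ib_j)^p$ multinomially gives a sum of terms of the form
\[
c \cdot \sum_{i,j} R_{ij}^q\, x_i^{\alpha} a_i^{\beta}\, y_j^{\gamma} b_j^{\delta},
\qquad q + \beta + \delta' = p .
\]
Here the monomials in $i$ involve only $a_i$ (with $x_i$ as fixed coefficients), and likewise for $j$. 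The terms that depend on $a_i$ alone or on $b_j$ alone, after summing over the other index, are linear in the one-wise marginals and incur no rounding loss. The term $\sum_{i,j} a_i^p b_j^p$ factors as $d_u d_v$ and has zero loss by the global constraints. More generally, any term with $q=0$ factors as $\sum_i x_i^{p-k}u_i^k \cdot \sum_j y_j^{p-k}v_j^k$ after regrouping in $u,v$, so it has zero loss by the constraints $s_{u,k}, s_{v,k}$, which are preserved under conditioning. That leaves the terms with $1 \le q \le p-1$ that involve both $a_i$ and $b_j$ nontrivially, and these we must bound.

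For each remaining term we apply \cref{lem:general-p-rounding-err} with exponent $q$. We take
\[
Y_i = x_i^{\alpha}a_i^{\beta}, \qquad Z_j = y_j^{\gamma}b_j^{\delta},
\]
normalized so that the natural moment is a power $p/(p-q)$ of each. The needed moment quantities are $\pE\sum_i |Y_i|^{p/(p-q)}$, $\pE\sum_j|Z_j|^{p/(p-q)}$ and $\pE\sum_{ij}|Y_i|^{p/(p-q)}|Z_j|^{p/(p-q)}$, and these have the exact mixed form $\sum_i |x_i|^{pa/(p-q)}|x_i-u_i|^{p(p-q-a)/(p-q)}$ that the H\"older-type constraints in $\calA(s,d)$ control. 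Those constraints bound the quantities by $\|x\|_p^{a\cdot}\, d_u^{\cdot}$ and the analogous product. By \cref{lem:addtl-sdp-constraints-satisfied} they are satisfied by the optimal solution, so adding them costs nothing. Plugging in $d_u, d_v \le O(\OPT/\|x\|_p^p)$ (from \cref{cor:strong-identifiability}, with $\|x\|_p = \|y\|_p$) together with $\|R\|_p^q \le O(\OPT)^{q/p}$, the prefactor in \cref{lem:general-p-rounding-err} multiplied by $\|R\|_p^q$ becomes $O(\OPT)$ times a power of the potential decrement. The powers of $\|x\|_p$ should cancel by homogeneity: each $x_i$ factor is paired with a $y_j$ factor, and each $a$ gains $\OPT^{1/p}/\|x\|_p$. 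Hence if the rounding error of a term exceeds $\eps\OPT/C$, the potential $\Phi=\sum_i (\phi^{(p)}_{Y_i})^{(p-1)/(p-q)}$ drops in expectation by $\eps^{O_p(1)}$ times its scale.

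We initially bound the potential via \cref{fact:gen-initial-potential-bound} by the same moment sums, which are $O(\text{scale})$. So each term is potential-aligned, and \cref{lem:gcr_linear_combo} over the $O_p(1)$ terms gives the $(1/\eps)^{O_p(1)}$ conditioning bound. The case split from \cref{cor:strong-identifiability} (one residual of size $\sqrt{\OPT}$ versus $\OPT/\|x\|_p^p$) is avoided, since the WLOG sign choice places us in the small-residual regime for both $a$ and $b$.

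The main obstacle is the exponent bookkeeping. We must verify that for every multinomial term the exponents produced by \cref{lem:general-p-rounding-err}, namely $\frac{p-q}{2(p-1)^2}$ and $\frac{(p-q)(p-2)(2p-1)}{2p(p-1)^2}$, combine with the residual scales to give exactly $\OPT$ to the first power, with no leftover ratio $\|x\|_p^p/\sqrt{\OPT}$. We must also check that terms whose $x$-power is large but whose $a$-power is small still have moments controlled by a constraint in $\calA$; this is the reason for the range $a+b\le p-q$. If the homogeneity does not close for some term, we would first try regrouping that term with a neighbouring one, or bounding it by Young's inequality between the pure-residual and pure-$x$ extremes.
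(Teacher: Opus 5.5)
Your proposal is correct and follows the paper's proof essentially step for step. It uses the same expansion into terms $\sum_{i,j}(A_{ij}-x_iy_j)^q\, x_i^a(x_i-u_i)^{p-q-a}\, y_j^b(y_j-v_j)^{p-q-b}$ with $a+b\le p-q$, the same zero-loss terms (one-variable terms, and the $q=0$ block via the $s_{u,k},s_{v,k}$ constraints), \cref{lem:general-p-rounding-err} with moments controlled by the H\"older-type constraints in $\calA(s,d)$ and $d_u,d_v\le O(\OPT/\|x\|_p^p)$, and finally \cref{lem:gcr_linear_combo}.

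The bookkeeping you defer does close, but not with ``no leftover''. Comparing the forced potential decrement to the initial potential leaves a favorable factor $(\OPT/\|x\|_p^{2p})^{c}$ with $c\ge 0$ proportional to $p-q-a-b$. This factor is $O_p(1)$ exactly because $a+b\le p-q$ and the hypothesis $\OPT\le C_p\|A\|_p^p$ gives $\|x\|_p^{2p}\ge\Omega_p(\|A\|_p^p)\ge\Omega_p(\OPT)$, so no regrouping or Young's-inequality fallback is needed.
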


In the proof of~\cref{lem:ptas-rounding}, we will need the following lemma regarding the potential alignment of parts of our objective.
\begin{lemma}
\label{lem:lra-decomp-potential aligned}
    Let $p$ be a positive even integer and $q \in \mathbb{N}$ be such that $1 \leq q < p$ and let $A$ be a matrix such that $\OPT \leq C_p \norm{A}_p^p$ for a sufficiently small constant $C_p$ depending only on $p$. Furthermore, let $a,b \in \mathbb{N}$ such that $a, b < p-q$ and $a+b \leq p-q$. Then pseudodistribution solutions to~\cref{algo:lra-ptas} are $\left(\eps\cdot \OPT,\, \eps^{O_p(1)} \cdot \norm{x}^{\frac {p(2a - (p-q))} {p-q}} \cdot \OPT^{\frac{p-q-a}{p-q}} \right)$-potential aligned with the function
    \[ \sum_{i,j} \left(A_{ij} - x_i y_j\right)^q x_i^a \left(x_i - u_i\right)^{p-q-a} y_j^b \left(y_j - v_j\right)^{p-q-b}\]
    and potential function
    \[ \Phi_{a,q}(u) \vcentcolon= \sum_{i \in [n]} \left( \phi_{x_i^a (u_i - x_i)^{p-q-a}}^{(p)} \right)^{\frac{p-1}{p-q}}\,.\] 
\end{lemma}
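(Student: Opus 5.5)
The plan is to treat the function as a bilinear form in the pair of variables $y'_i := x_i^a (x_i - u_i)^{p-q-a}$ and $z'_j := y_j^b (y_j - v_j)^{p-q-b}$, with coefficient matrix $W_{ij} = (A_{ij} - x_i y_j)^q$. Since $x,y$ are fixed, the rounding error of the function under independent sampling is $\sum_{ij} W_{ij}\Cov(y'_i, z'_j)$, exactly the quantity controlled by \cref{lem:general-p-rounding-err}. That lemma is stated with weights $A_{ij}^q$ and produces a factor $\norm{A}_p^q$, but its proof uses only $|A_{ij}|^q$ and H\"older with exponents $p/q, p/(p-q)$. So I will apply it verbatim with $A$ replaced by $A - xy^\top$, obtaining the factor $\norm{A - xy^\top}_p^q = O(\OPT^{q/p})$. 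Here I use that $x,y$ is an $O(\OPT)$ solution. The potential in that lemma, $\sum_i (\phi^{(p)}_{y'_i})^{\frac{p-1}{p-q}}$, is precisely $\Phi_{a,q}(u)$ (up to the sign $(u_i - x_i)$ versus $(x_i-u_i)$, which is harmless for even powers). Conditioning on $v_j$ only decreases it relative to conditioning on $z'_j$, as in the footnote of \cref{lem:additive-approx-rounding-error}.

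Next I bound the moment factors appearing in \cref{lem:general-p-rounding-err}: $\sum_j \pE|z'_j|^{p/(p-q)}$ and $\sum_{ij}\pE|y'_i|^{p/(p-q)}|z'_j|^{p/(p-q)} + \pE|y'_i|^{p/(p-q)}\pE|z'_j|^{p/(p-q)}$. These are exactly the sums $\sum_i |x_i|^{pa/(p-q)}|x_i-u_i|^{p(p-q-a)/(p-q)}$ and their products appearing in the constraint set $\mathcal{A}(s,d)$. By those H\"older-type constraints, which \cref{lem:addtl-sdp-constraints-satisfied} shows an optimal solution satisfies, together with Fact~\ref{fact:pseudo-expectation-holder} to pass the $2(p-q)$-th power through $\pE$, they are at most
\[
\norm{x}_p^{pa/(p-q)} d_u^{(p-q-a)/(p-q)}, \qquad \norm{y}_p^{pb/(p-q)} d_v^{(p-q-b)/(p-q)},
\]
and the corresponding product. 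The equality constraints on $d_u,d_v$ are preserved under conditioning, and $d_u, d_v \le O(\OPT/\norm{x}_p^p)$ by $\calS_2$ and \cref{cor:strong-identifiability}. Using $\norm{x}=\norm{y}$ and $\norm{x}_p^{2p} = \Theta(\norm{A}_p^p)\gg \OPT$, I then plug everything into the inequality: rounding error $\ge \eps\OPT$ forces the averaged potential decrease $\E_{j\sim\psi}[\Phi - \Phi(\cdot\mid v_j)]$ to be at least $\eps^{O_p(1)}$ times a product of powers of $\norm{x}_p$ and $\OPT$. This is the $\delta'$ required by \cref{def:potential-aligned}.

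The main obstacle is the exponent bookkeeping. I need the powers of $\OPT$ and $\norm{x}_p$ to collapse to $\norm{x}^{p(2a-(p-q))/(p-q)}\OPT^{(p-q-a)/(p-q)}$, which is on the scale of the initial potential, $\Phi_{a,q}\lesssim \sum_i |x_i|^{pa/(p-q)}|u_i-x_i|^{p(p-q-a)/(p-q)}$, via Fact~\ref{fact:gen-initial-potential-bound}. The point is that the ratio of initial potential to $\delta'$ is $\eps^{-O_p(1)}$, independent of $\OPT$. In particular, the $\norm{y}$-side moments and the $\OPT^{q/p}$ from the weights have to combine exactly with the $b$-dependent factors. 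If the raw exponents do not match, I will weaken using $\OPT\le C_p\norm{A}_p^p \lesssim \norm{x}_p^{2p}$ in the favorable direction to reach the stated form. A secondary point is that \cref{lem:general-p-rounding-err} averages $j$ over the distribution $\psi$ rather than over a single index. This still yields some index $j$ achieving the decrease, which suffices for \cref{def:potential-aligned}.
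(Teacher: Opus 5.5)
Taking the paper's earlier results as given, your proposal follows the paper's proof essentially step for step, and your planned weakening does close the bookkeeping: with $N=\norm{x}_p^p=\norm{y}_p^p$ and $d_u,d_v\le O(\OPT/N)$, \cref{lem:general-p-rounding-err} forces a decrease of at least $\eps^{O_p(1)}\,N^{\frac{2a-(p-q)}{p-q}}\OPT^{\frac{p-q-a}{p-q}}\cdot(N^2/\OPT)^{\frac{2(p-1)^2(p-q-a-b)}{p(p-q)}}$, and the last factor is $\Omega(1)$ by the scale-invariant bound $\OPT\le O(\norm{x}_p^{2p})$ that you use (the paper instead writes $\OPT\le O_p(1)\cdot\norm{x}_p^p$, which is not homogeneous in $A$). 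One caveat applies to your write-up and the paper's alike: the footnote monotonicity you invoke (conditioning on $v_j$ lowers $\Phi_{a,q}$ at least as much as conditioning on $z'_j$) holds for $\phi^{(p)}$ itself but not for its $\frac{p-1}{p-q}$-th power when $q\ge 2$; for example, a fair coin deciding whether a variable is $0$ or uniform on $\{\pm1\}$ leaves $\E\,\phi^{(p)}$ unchanged yet strictly increases $\E\,(\phi^{(p)})^{c}$ for every $c>1$, and the same example violates Fact~\ref{fact:powers-of-breg-potential-bound}, on which the proof of \cref{lem:general-p-rounding-err} relies, so this ingredient needs a separate argument in both.
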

We will first prove~\cref{lem:ptas-rounding} and then return to the proof of~\cref{lem:lra-decomp-potential aligned}.
\begin{proof}[Proof of~\cref{lem:ptas-rounding}]
    Let $x,y$ be the constant factor approximation LRA solutions. We have that 
    \begin{align*}
        \sum_{i \in [n], j \in [m]} (A_{ij} - u_i v_j)^p &= \sum_{i \in [n], j \in [m]} (A_{ij} \pm x_i y_j - u_i v_j)^p \\
        &= \sum_{q =0}^p \binom{p}{q} \sum_{i \in [n], j \in [m]} \left(A_{ij} - x_i y_j\right)^q \left(x_i y_j - u_i v_j\right)^{p-q}. 
    \end{align*}
    Note that the term where $q=p$ incurs no rounding error because it is a constant function independent of $u,v$ while we can rewrite the term where $q=0$ as 
    \[\sum_{i \in [n], j \in [m]} \left(x_i y_j - u_i v_j\right)^{p} = \sum_{k=0}^p \binom{p}{k} \left(-1\right)^k \left(\sum_i x_i^{p-k} u_i^k \right)\left(\sum_j y_j^{p-k} v_j^k \right)\,.\]
    This function has the same expectation under joint and product distributions due to the SDP constraints, and thus this term incurs no error. It remains to analyze the remaining terms for $1 \leq q < p$. Note that we can decompose the polynomial $\left(x_i y_j - u_i v_j\right)^{p-q}$ using that
    \[ x_i y_j - u_i v_j = -(x_i - u_i)(y_j - v_j) + x_i(y_j - v_j) + y_j (x_i - u_i)\,,\]
    and thus,
    \begin{align*}
        \left(x_i y_j - u_i v_j\right)^{p-q} = \sum_{a=0}^{p-q} \sum_{b=0}^{p-q-a} \left(-1\right)^{p-q-a-b} \binom{p-q}{a} \binom{p-q-a}{b} x_i^a \left(x_i - u_i\right)^{p-q-a} y_j^b \left(y_j - v_j\right)^{p-q-b}\,.
    \end{align*}
    Note that the polynomial $x_i^{p-q} \left(y_j - v_j\right)^{p-q}$ incurs no rounding error since it is only a function of the variable $v_j$ and constants $x_i, y_j$, and similarly $\left(x_i - u_i\right)^{p-q} y_j^{p-q}$ also has no rounding error. 

    By~\cref{lem:lra-decomp-potential aligned} we have that the remaining polynomials 
    \[\sum_{i\in[n], j \in [m]} \left(-1\right)^{p-q-a-b} \binom{p-q}{a} \binom{p-q-a}{b} \left(A_{ij} - x_i y_j\right)^q x_i^a \left(x_i - u_i\right)^{p-q-a} y_j^b \left(y_j - v_j\right)^{p-q-b}\]
    are $\left(\eps\cdot \OPT,\, \eps^{O_p(1)} \cdot \norm{x}^{\frac {p(2a - (p-q))} {p-q}} \cdot \OPT^{\frac{p-q-a}{p-q}} \right)$-potential aligned for the potentials
    \[ \Phi_{a,q}(u) \vcentcolon= \sum_{i \in [n]} \left( \phi_{x_i^a (u_i - x_i)^{p-q-a}}^{(p)} \right)^{\frac{p-1}{p-q}}\,.\] 
    Note that by~\cref{fact:gen-initial-potential-bound} and convexity the initial potential is bounded by 
    \[ \Phi_{a,q}(u) \leq \sum_{i \in [n]} \left( \pE \left[ \left\vert x_i^a (u_i - x_i)^{p-q-a}\right\vert^{\frac p {p-1}}\right] \right)^{\frac {p-1}{p-q}} \leq \sum_{i \in [n]} \pE \left[ \left( \vert x_i\vert^a \cdot \vert u_i - x_i\vert^{p-q-a}\right)^{\frac{p} {p-q}} \right]\,.\]
    Note that by applying~\cref{fact:pseudo-expectation-holder} and our SDP constraints we have that
    \begin{align*}
        \pE \left[\sum_i \vert x_i\vert^{\frac{pa}{p-q}}\cdot \vert x_i - u_i\vert^{\frac{p(p-q-a)}{p-q}} \right] &\leq \left(\pE \left[\left(\sum_i \vert x_i\vert^{\frac{pa}{p-q}}\cdot \vert x_i - u_i\vert^{\frac{p(p-q-a)}{p-q}}\right)^{2(p-q)} \right]\right)^{\frac 1 {2(p-q)}} \\
        &\leq \norm{x}_p^{\frac{pa}{p-q}} \left( \pE \left[ \norm{x-u}^{2p(p-q-a)}\right]\right)^{\frac 1 {2(p-q)}}\,.
    \end{align*}
    By our SDP constraints we also have that 
    \[\pE \left[ \norm{x-u}^{2p(p-q-a)}\right] = \pE \left[ \norm{x-u}^p \right]^{2(p-q-a)} \leq \frac{\OPT^{2(p-q-a)}}{\norm{x}^{2p(p-q-a)}}\]
    and thus the initial potential satisfies 
    \[ \Phi_{a,q}(u) \leq \norm{x}_p^{\frac{p(2a-(p-q))}{p-q}}\cdot \OPT^{\frac{p-q-a}{p-q}}\,.\]
    Thus, applying~\cref{lem:gcr_linear_combo} we have that there exists a set $\calT$ of size 
    \[ \sum_{q = 1}^{p-1} \sum_{a=0}^{p-q} \sum_{b=0}^{p-q-a} \frac{\norm{x}_p^{\frac{p(2a-(p-q))}{p-q}}\cdot \OPT^{\frac{p-q-a}{p-q}}}{\eps^{O_p(1)} \cdot \norm{x}^{\frac {p(2a - (p-q))} {p-q}} \cdot \OPT^{\frac{p-q-a}{p-q}}} = O_p(1)\]
    such that after conditioning on $\calT$ the expected rounding error is at most $\eps \cdot \OPT$.
\end{proof}
We now return to the proof of~\cref{lem:lra-decomp-potential aligned}.
\begin{proof}[Proof of~\cref{lem:lra-decomp-potential aligned}]
    Applying~\cref{lem:general-p-rounding-err} with $x_i^a \left(x_i - u_i\right)^{p-q-a}$ and $y_j^b \left(y_j - v_j\right)^{p-q-b}$ we have that if the expected rounding error on this function is at least $\eps\cdot \OPT$ then\footnote{Note that while~\cref{lem:general-p-rounding-err} considers conditioning on $y_j^b \left(y_j - v_j\right)^{p-q-b}$ conditioning on $v_j$ only further reduces the potential function since $y_j^b \left(y_j - v_j\right)^{p-q-b}$ is a deterministic function of $v_j$.} 
    \begin{align*}
        \eps\cdot \OPT &\leq \left(\sum_j \E_{\{\mu\}}\pE [\vert y_j\vert^{\frac{pb}{p-q}}\vert y_j - v_j\vert^{\frac{p(p-q-b)}{p-q}}] \right)^{\frac{p-q}{2(p-1)^2}}  \\
        &\qquad\Bigg( \sum_{ij} \E_{\{\mu\}} \Big[\pE [\vert x_i\vert^{\frac{pa}{p-q}}\vert x_i - u_i\vert^{\frac{p(p-q-a)}{p-q}} \vert y_j\vert^{\frac{pb}{p-q}}\vert y_j - v_j\vert^{\frac{p(p-q-b)}{p-q}}]  \\
        &\qquad \qquad+ \pE [\vert x_i\vert^{\frac{pa}{p-q}}\vert x_i - u_i\vert^{\frac{p(p-q-a)}{p-q}}] \pE [\vert y_j\vert^{\frac{pb}{p-q}}\vert y_j - v_j\vert^{\frac{p(p-q-b)}{p-q}}] \Big] \Bigg)^{\frac{(p-q)(p-2)(2p-1)}{2p(p-1)^2}}  \\
        &\qquad \norm{A - xy^\top}_p^q \cdot \left( \E_{\{\mu\}}\E_{j \sim \psi} \Phi_{a,q}(u) - \E_{v_j} \Phi_{a,q}(u \vert v_j)\right)^{\frac{p-q}{2(p-1)^2}}
    \end{align*}
    Note that $\norm{A - xy^\top}_p^q \leq O\left(\OPT^{\frac{q}{p}}\right)$ since $x,y$ are a constant factor approximation LRA solution. Thus, it suffices to bound terms of the following form in expectation over the randomness in $\{\mu\}$:
    \begin{gather*}
        \sum_i \pE \left[\vert x_i\vert^{\frac{pa}{p-q}}\vert x_i - u_i\vert^{\frac{p(p-q-a)}{p-q}}\right]\,, \\
        \sum_j \pE \left[\vert y_j\vert^{\frac{pb}{p-q}}\vert y_j - v_j\vert^{\frac{p(p-q-b)}{p-q}} \right]\,, \\
        \sum_{ij} \pE \left[\vert x_i\vert^{\frac{pa}{p-q}}\vert x_i - u_i\vert^{\frac{p(p-q-a)}{p-q}} \vert y_j\vert^{\frac{pb}{p-q}} \vert y_j - v_j\vert^{\frac{p(p-q-b)}{p-q}}\right] + \pE \left[\vert x_i\vert^{\frac{pa}{p-q}}\vert x_i - u_i\vert^{\frac{p(p-q-a)}{p-q}}\right] \pE \left[\vert y_j\vert^{\frac{pb}{p-q}}\vert y_j - v_j\vert^{\frac{p(p-q-b)}{p-q}}\right] \,.
    \end{gather*}
    Note that $\E_{\{\mu\}}\pE = \pE_0$, or the pseudoexpectation at time $0$, so it suffices just to consider expectations over the initial pseudodistribution on all but the last term. Note that by applying~\cref{fact:pseudo-expectation-holder} and our SDP constraints we have that
    \begin{align*}
        \pE_0 \left[\sum_i \vert x_i\vert^{\frac{pa}{p-q}}\vert x_i - u_i\vert^{\frac{p(p-q-a)}{p-q}} \right] &\leq \left(\pE_0 \left[\left(\sum_i \vert x_i\vert^{\frac{pa}{p-q}} \vert x_i - u_i\vert^{\frac{p(p-q-a)}{p-q}}\right)^{2(p-q)} \right]\right)^{\frac 1 {2(p-q)}} \\
        &\leq \norm{x}_p^{\frac{pa}{p-q}} \left( \pE_0 \left[ \norm{x-u}^{2p(p-q-a)}\right]\right)^{\frac 1 {2(p-q)}}\,.
    \end{align*}
    Similarly,
    \begin{align*}
        \sum_j \E_{\{\mu\}}\pE [\vert y_j\vert^{\frac{pb}{p-q}} \vert y_j - v_j\vert^{\frac{p(p-q-b)}{p-q}}] \leq \norm{y}_p^{\frac{pb}{p-q}} \left( \pE_0 \left[ \norm{y-v}^{2p(p-q-b)} \right] \right)^{\frac 1 {2(p-q)}}\,.
    \end{align*}
    Finally, for the last term, note that by our constraints we have that
    \begin{align*}
        &\sum_{ij} \E_{\{\mu\}} \pE [\vert x_i\vert^{\frac{pa}{p-q}} \vert x_i - u_i\vert^{\frac{p(p-q-a)}{p-q}} \vert y_j\vert^{\frac{pb}{p-q}} \vert y_j - v_j\vert^{\frac{p(p-q-b)}{p-q}}] \\
        &\qquad\leq \norm{x}_p^{\frac{pa}{p-q}} \norm{y}^{\frac{pb}{p-q}} \left( \pE_0 \norm{x-u}^{2p(p-q-a)} \norm{y-v}^{2p(p-q-b)} \right)^{\frac 1 {2(p-q)}}\,,
    \end{align*}
    and
    \begin{align*}
        &\sum_{ij} \pE [\vert x_i\vert^{\frac{pa}{p-q}} \vert x_i - u_i\vert^{\frac{p(p-q-a)}{p-q}}] \pE [\vert y_j\vert^{\frac{pb}{p-q}}\vert y_j - v_j\vert^{\frac{p(p-q-b)}{p-q}}] \\
        &\qquad \leq \norm{x}_p^{\frac{pa}{p-q}} \norm{y}^{\frac{pb}{p-q}} \left( \pE \norm{x-u}^{2p(p-q-a)} \pE \norm{y-v}^{2p(p-q-b)} \right)^{\frac 1 {2(p-q)}}
    \end{align*}
    Using that we imposed equality constraints on $\norm{x-u}, \norm{y-v}$ we have that
    \begin{align*}
        \pE \norm{x-u}^{p(p-q-a)} \pE \norm{y-v}^{p(p-q-b)} = \pE \norm{x-u}^{p(p-q-a)} \norm{y-v}^{p(p-q-b)}\,,
    \end{align*}
    and thus
    \begin{align*}
        \sum_{ij} \E_{\{\mu\}} \Bigg[&\pE [\vert x_i\vert^{\frac{pa}{p-q}} \vert x_i - u_i\vert^{\frac{p(p-q-a)}{p-q}} \vert y_j \vert^{\frac{pb}{p-q}} \vert y_j - v_j\vert^{\frac{p(p-q-b)}{p-q}}] + \pE [\vert x_i\vert^{\frac{pa}{p-q}}\vert x_i - u_i\vert^{\frac{p(p-q-a)}{p-q}}] \pE [\vert y_j\vert^{\frac{pb}{p-q}}\vert y_j - v_j\vert^{\frac{p(p-q-b)}{p-q}}] \Bigg] \\
        &\leq 2 \norm{x}_p^{\frac{pa}{p-q}} \norm{y}^{\frac{pb}{p-q}} \left( \pE_0 \norm{x-u}^{2p(p-q-a)} \norm{y-v}^{2p(p-q-b)} \right)^{\frac 1 {2(p-q)}}
    \end{align*}
    Plugging these bounds in, along with the fact that for any non-negative integers $\alpha, \beta$, we have that 
    \[ \pE_0 \norm{x-u}^{p\alpha} \norm{y-v}^{p\beta} = \left( \pE_0 \norm{x-u}^p \right)^\alpha \left( \pE_0 \norm{y-v}^p \right)^\beta\]
    from our equality constraints, and we get that
    \begin{align*}
        \eps \cdot \OPT \leq &\left( \pE_0 \norm{y-v}_p^p\right)^{\frac {p-q-b}{p}} \norm{x}_p^{a\left(1-\frac{p(p-1)}{2(p-1)^3}\right)} \left( \pE_0 \norm{x-u}_p^p \right)^{\frac {p-q-a}{p} \left(1-\frac{p(p-1)}{2(p-1)^3}\right)} \\
        &\OPT^{\frac q p} \cdot \left( \E_{\{\mu\}}\E_{j \sim \psi} \Phi_{a,q}(u) - \E_{v_j} \Phi_{a,q}(u \vert v_j)\right)^{\frac{p-q}{2(p-1)^2}}\,.
    \end{align*}
    Note that from our SDP constraints\footnote{Note that $d_u,d_v$ always are at most $O_p(1) \cdot \frac \OPT {\norm{x}_p^p}$.}, we have that
    \[ \max \left(\left( \pE_0 \norm{x-u}_p^p \right) , \left( \pE_0 \norm{y-v}_p^p \right) \right) \leq \frac{\OPT}{\norm{x}^{p}}\,.\]
    Furthermore, $\OPT \leq O_p(1) \cdot \norm{x}_p^p$ since $\OPT \leq C_p \norm{A}_p^p$ and thus $\left( \frac{\OPT}{\norm{x}^p}\right)^{\frac{p-q-(a+b)}{p-q}} \leq O_p(1)$ since $a+b \leq p-q$. Plugging this in, along with the fact that $\norm{x} = \norm{y}$ and simplifying and rearranging yields that
    \begin{align*}
        \eps^{O_p(1)} \cdot \norm{x}^{\frac {p(2a - (p-q))} {p-q}} \cdot \OPT^{\frac{p-q-a}{p-q}} \leq  \E_{\{\mu\}}\E_{j \sim \psi} \Phi_{a,q}(u) - \E_{v_j} \Phi_{a,q}(u \vert v_j)\,,
    \end{align*}
    which completes the proof of potential alignment.
\end{proof}

\subsection{Proof of~\cref{thm:lra-ptas}}

We now return to the proof of the main theorem of this section.
\begin{proof}[Proof of~\cref{thm:lra-ptas}]
    We first observe that if $\OPT = 0$ then we return a solution of cost $0$ in the first step of~\cref{algo:lra-ptas}. Furthermore, if $\OPT \geq C_p \norm{A}_p^p$ for a fixed constant $C_p$ only depending on $p$, then by applying~\cref{thm:lra-additive} with $\eps' = \eps / C_p$ we have that the solution produced by step 7 of~\cref{algo:lra-ptas} has value at most $\OPT + \left( \eps / C_p \right) \norm{A}_p^p \leq \OPT + \eps \cdot \OPT$. Thus, in this case, we will output a $(1+\eps)$ optimal solution from the candidate solution in step $7$. 

    In order to show the approximation guarantee of the algorithm, it suffices to show that one of the $\hat{u},\hat{v}$ solutions produced in the loop is a $(1+\eps)$ multiplicative approximation to $\OPT$ in the case where both $\OPT \neq 0$ and $\OPT \leq C_p \norm{A}_p^p$. Let $x,y$ be the constant factor LRA solution computed. By~\cref{cor:strong-identifiability} we have that after rescaling $x,y$ such that $\norm{x} = \norm{y}$ there exists an optimal solution $u,v$ such that 
    \[ \max\{\norm{u-x}_p^p, \norm{v-y}_p^p\} \leq O\left( \frac{\OPT}{\norm{x}_p^p} \right)\,.\]
    Furthermore, note that the same solution satisfies the last three inequalities in $\calA$ by~\cref{lem:addtl-sdp-constraints-satisfied}. Note that both these facts are still true after rounding the entries of $x,y$, since we round them to a net of fineness $\frac{\OPT}{\left(nm \cdot 2^B\right)^{O_p(1)}}$\footnote{Note that after the rounding the norms of $x,y$ might not be exactly equal, but we note that everywhere we use that $\norm{x} = \norm{y}$ it suffices that they are instead equal up to multiplicative factors.}. Furthermore, by~\cref{lem:discretization-error-bound} we have that after rounding the entries $x,y$ are still a constant factor approximation LRA solution.

    By~\cref{lem:discretization-error-bound} we have that we can round $u,v$ to values in $\Sigma$ while still maintaining that the rounded vectors are a $(1+\eps) \cdot \OPT$ value solution. Let $\tilde{u}, \tilde{v}$ be the rounded solution. Furthermore, we still have that
    \[ \max\{\norm{\tilde{u}-x}_p^p, \norm{\tilde{v}-y}_p^p\} \leq O\left( \frac{\OPT}{\norm{x}_p^p} \right)\,,\]
    and that the last three inequalities in $\calA$ are satisfied by $\tilde{u}, \tilde{v}$. We now consider the potential values of $\sum_{i \in [n]} x_i^{p-k} \tilde{u}_i^k$, $\sum_{j \in [m]} y_j^{p-k} \tilde{v}_j^k$, $\norm{\tilde{u} - x}$, and $\norm{\tilde{v} - y}$. Note that since the entries of $x,y,u,v$ all live on a grid of fineness $\frac{\eps \OPT}{\left(nm \cdot 2^B\right)^{O_p(1)}}$, these values all live in a grid of fineness $\frac{\eps^p \OPT^p}{\left(nm \cdot 2^B\right)^{O_p(1)}}$. For $\norm{\tilde{u} - x}$ and $\norm{\tilde{v} - y}$ we also know that the value must be in the interval $\left[ 0,\, O\left( \frac{\OPT}{\norm{x}_p^p} \right)\right]$, so it suffices to search a grid of this interval. For $\sum_{i \in [n]} x_i^{p-k} \tilde{u}_i^k$ and $\sum_{j \in [m]} y_j^{p-k} \tilde{v}_j^k$ we know that these have value at most $nm \cdot 2^{O_p(B)}$ so it suffices to search the interval $[0, nm \cdot 2^{O_p(B)}]$. Thus, for one iteration of the loop, we will correctly guess all of these quantities, and for this iteration we will have that the optimal value of the SDP is at most $(1+\eps) \cdot \OPT$.
    
    Furthermore, note that we only incur an additional expected $\eps \cdot \OPT$ cost after rounding the SDP solution via~\cref{lem:ptas-rounding}. By Markov's, we produce a solution of value at most $(1+O(\eps))\cdot \OPT$ with probability $0.99$. Thus, in this step of the loop we have a SDP with $\OPT_{SDP} \leq (1+O(\eps)) \cdot \OPT$ and furthermore the rounded solution has value $\OPT_{SDP} + O(\eps) \cdot \OPT$ so we have computed a solution $\hat{u}, \hat{v}$ with value at most $(1+O(\eps)) \cdot \OPT$. 

    Finally, it remains to analyze the runtime. The runtime cost is dominated by the cost of the loop. There are $\left(nm \cdot 2^B/\OPT\right)^{O_p(1)}$ iterations of the loop. Each iteration is dominated by the cost of computing an SDP solution, which can be done in time $\left(nm \cdot 2^B/\OPT\right)^{(1/\eps)^{O_p(1)}}$ by~\cref{fact:eff-pseudo-distribution}. Since $\OPT \geq (nm)^{-O_p(1)} \cdot 2^{-O(B)}$ by~\cref{lem:lb-opt}, we have that the total runtime is at most $\left(nm \cdot 2^B\right)^{(1/\eps)^{O_p(1)}}$ as desired.
\end{proof}

\section{Higher Rank LRA}
\label{sec:higher-rank-lra}

In this section, we extend the PTAS for rank-$1$ LRA to rank-$k$ LRA for all $k$. Some of the key ideas in this section (in particular the decomposition of the objective function and residual balancing lemma) were suggested by ChatGPT Pro when given our draft of the previous sections. We then verified the proofs below independently.

\begin{theorem}
\label{thm:rank-k-lra-main}
    Given a matrix $A \in \R^{n \times m}$ with bit complexity $B$, even $p \in N$, and $0<\eps<1$, there exists an algorithm that runs in $(nm\cdot 2^B)^{(k/\eps)^{O_p(1)}}$ time and outputs matrices $\hat{U} \in \R^{n\times k}, \hat{V} \in \R^{m \times k}$ such that with probability at least $0.99$, 
    \begin{equation*}
        \norm{ A - \hat{U} \hat{V}^\top }_p^p \leq (1+ \eps) \cdot \OPT\,. 
    \end{equation*}
\end{theorem}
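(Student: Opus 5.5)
We will follow the rank-$1$ template of \cref{thm:lra-ptas} step by step, replacing scalar factor variables with vectors in $\R^k$ and using \cref{lem:high-rank-ident-intro} and \cref{lem:residual-balancing} in place of \cref{lem:gen-strong-identifiability} and \cref{cor:strong-identifiability}. There are two cases. If $\OPT \geq C_{p,k} \norm{A}_p^p$, we run the rank-$k$ analogue of \cref{thm:lra-additive} with $\eps' = \eps/C_{p,k}$. That analogue uses a vector-valued version of \cref{lem:additive-approx-rounding-error}, in which each coordinate pair $(u_i)_a,(v_j)_b$ is treated by \cref{lem:general-p-rounding-err} and there are $k^{O(p)}$ monomial terms. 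If $A$ has rank at most $k$, we output the exact factorization. Otherwise we first compute an $O(1)$-approximate rank-$k$ solution $X,Y$ by \cite{ban_etal_soda2019_ptas_lp_lra}. The lower bound $\OPT \geq (nm)^{-O_{p,k}(1)} 2^{-O(B)}$ follows as in \cref{lem:lb-opt}, using a nonzero $(k+1)\times(k+1)$ minor and $\sigma_{k+1}$. This lower bound controls the grid sizes, and \cref{lem:discretization-error-bound} extends entrywise to $U V^\top$.

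Next comes the structure step. By \cref{lem:residual-balancing}, including its bounded-entry companion, there is an optimal factorization $B=UV^\top$ and invertible $N_1,N_2$ with polynomially bounded entries such that the four residual matrices $R_1=(X+U)N_1^{-1}$, $S_1=(Y-V)N_1^\top$, $R_2=(X-U)N_2^{-1}$ and $S_2=(Y+V)N_2^\top$ have $p$-th norms $O_{p,k}(\sqrt{\OPT})$. The algorithm enumerates a grid of candidate $N_1,N_2$; the discretization error is controlled because the entries are polynomially bounded. The key identity is
\[
2(UV^\top - XY^\top)=-(X+U)(Y-V)^\top+(X-U)(Y+V)^\top = -R_1S_1^\top + R_2 S_2^\top .
\]
Substituting $A-UV^\top = (A-XY^\top) - \tfrac12(R_2S_2^\top - R_1S_1^\top)$ and expanding the $p$-th power entrywise writes the objective as a sum of terms $(A-XY^\top)_{ij}^q\,\prod (r_i)_{a}(s_j)_b$. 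Here $r_i$ ranges over the rows of $R_1,R_2$ and $s_j$ over the rows of $S_1,S_2$, and these rows are affine in $u_i$ and $v_j$ respectively. The SDP variables are $u_i,v_j\in\R^k$, and we impose constraints analogous to $\calA(s,d)$. First, we guess the values of all global sums $\sum_i \prod(\text{entries of } r_i)$ of total degree $p$, which makes the $q=0$ term factor and incur no rounding error. Second, we impose norm bounds $\sum_i \norm{r_i}_p^p \le O(\sqrt{\OPT})$ as equality constraints on guessed values. Third, we add Hölder-type constraints like those in \cref{lem:addtl-sdp-constraints-satisfied}, which an optimal solution satisfies by Hölder's inequality.

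For rounding, each term with $1\le q<p$ has the bipartite form $\sum_{ij}w_{ij}\,f_i(u_i)g_j(v_j)$ with $\sum_{ij}|w_{ij}|^{p/q}\le O(\OPT)$. We apply \cref{lem:general-p-rounding-err} with $y_i=f_i(u_i)$, $z_j=g_j(v_j)$ and the potentials $\Phi=\sum_i(\phi^{(p)}_{y_i})^{(p-1)/(p-q)}$, exactly as in \cref{lem:lra-decomp-potential aligned}. The moment terms on the right are bounded by the Hölder constraints and the guessed residual norms. Because both the $R$ and the $S$ residuals are $O(\sqrt{\OPT})$, the product is $O(\OPT^{(p-q)/p})$, and combined with $\norm{A-XY^\top}_p^q\le O(\OPT^{q/p})$ this gives potential alignment with $\delta=\eps\OPT$ and a gain of $\eps^{O_p(1)}k^{-O_p(1)}$ times the initial potential. \cref{lem:gcr_linear_combo} then yields a conditioning set of size $(k/\eps)^{O_p(1)}$. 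The final step is Markov's inequality plus a count of runtime, giving $(nm2^B)^{(k/\eps)^{O_p(1)}}$.

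The main obstacle I expect is the balancing of scales. In rank one, the $\norm{x}=\norm{y}$ normalization made mixed terms like $x_i^a(x_i-u_i)^{p-q-a}$ behave well, with the initial potential and the alignment gain scaling identically. Here the balancing lemma gives only the product-type bound $\sqrt{\OPT}\cdot\sqrt{\OPT}$ and no individual comparison to $\norm{X}$. That is why I expand purely in $R_1,S_1,R_2,S_2$ rather than in $X$ and a residual: all factors then have the same $\sqrt{\OPT}$ scale, and ratios like the potential over the gain become $O_{p,k}(\eps^{-O(1)})$. I would first check that no term of the form $X$-times-residual survives this expansion. If some cross term between $R_1S_1^\top$ and $R_2S_2^\top$ misbehaves, the fallback is to guess its global value as an extra equality constraint, as was done for the $q=0$ term.
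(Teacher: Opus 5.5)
Your proposal is essentially the paper's proof. The shared steps are the warm start, the identifiability-plus-Auerbach balancing that gives $O_{p,k}(\sqrt{\OPT})$ residual norms, guessing the degree-$p$ global sums so that top term incurs no rounding error, and the heavy-tailed rounding lemma, whose exponents balance exactly at the $\sqrt{\OPT}$ scale. The differences are minor. The paper keeps $N_1,N_2,N_1^{-1},N_2^{-1}$ as SDP variables and conditions on them first rather than enumerating them. It needs no additive-approximation fallback, since none of the residual bounds refer to $\norm{A}_p^p$. It also randomly perturbs the warm start $X,Y$ (and the optimal factors in the analysis) so that the full-column-rank and bounded-entry hypotheses behind the balancing step actually hold. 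One slip: the identity should read $2(XY^\top-UV^\top)=(X+U)(Y-V)^\top+(X-U)(Y+V)^\top$, since your signs give $2(XV^\top-UY^\top)$; this is harmless for all the norm bounds.
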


Our algorithm follows a similar approach to the rank $1$ case, but with some important differences. We introduce a new higher rank identifiability argument for LRA as well as new SDP variables $N_1, N_2$, which are used in the analysis to balance the positive and negative residuals. We give further explanation of variables $N_1, N_2$ when we show the feasibility of the constraint set below for an (approximately) optimal LRA solution. Our SDPs use the following constraint set (in which $A_s$ denotes the set of $s$ tuples):
\begin{align*}
\mathcal{A}(s) = \left\{ 
\begin{gathered}
    \forall \alpha \in A_{p}\,,b\in[p]\,: \qquad \sum_{i} \left(\left(N_1^\top (X_i+U_i)\right)^{\otimes b}\otimes \left( N_2^\top (X_i-U_i)\right)^{\otimes (p-b)}\right)_\alpha = s_{u, \alpha,b} \\
    \forall \alpha \in A_{p}\,,b\in[p]\,: \qquad \sum_j \left(\left(N_1^{-1}(Y_j-V_j)\right)^{\otimes b} \otimes \left(N_2^{-1}(Y_j + V_j)\right)^{\otimes (p-b)}\right)_{\alpha} = s_{v, \alpha,b} \\
    N_1^\top \left(N_1^\top\right)^{-1} = I_k,\,N_2^\top \left(N_2^\top\right)^{-1} = I_k \\
    \forall a\in[p-1], 0 \leq b \leq p-a:\\
    \sum_{i \in [n]} \left\Vert\left(N_1^\top (X_i+U_i)\right)^{\otimes b} \otimes \left( N_2^\top (X_i-U_i)\right)^{\otimes (p-a-b)}\right\Vert_{\frac{p}{p-a}}^{\frac{p}{p-a}} \leq k^{O_p(1)} \sqrt{\OPT} \\
    \sum_{j \in [m]} \left\Vert\left(N_1^{-1} (Y_j - V_j)\right)^{\otimes b} \otimes \left( N_2^{-1} (Y_j + V_j)\right)^{\otimes (p-a-b)}\right\Vert_{\frac{p}{p-a}}^{\frac{p}{p-a}} \leq k^{O_p(1)} \sqrt{\OPT} \\
\end{gathered}
\right\}
\end{align*}
\begin{remark}
    Note that the above constraint set also includes fractional powers of the form $\frac{p}{p-a}$. We note that this can be rewritten as a polynomial system of constraints via introducing auxiliary variables for $(N_2^\top (X-U))_{ij}^{\frac{1}{\prod_{a=0}^{p-1} (p-a)}}$, etc with only a polynomial blowup in the number of variables (for fixed $p$). We do not explicitly write this representation for ease of exposition and note that whenever fractional exponents appear, they can be represented within this framework. Furthermore, the above set of constraints also includes ``variables'' representing $N_1^{-1}, N_2^{-1}$ which are restricted by the constraints $N_1 N_1^{-1} = I_k$ and $N_2 N_2^{-1} = I_k$. 
\end{remark}
\begin{remark}
    Note that since all our variables are constrained to live on a grid, it is not possible to satisfy the equality constraints as written, even for the intended optimal solution. Thus, we will need to relax these constraints to instead require equality up to $2^{-O(kB)} \cdot \frac 1 {\poly_p(nm)}$ error. We will work with the true equality constraints below for ease of exposition. 
\end{remark}

\begin{mdframed}
  \begin{algorithm}[$\ell_p$ LRA Rank $k$ PTAS]
    \label{algo:lra-ptas-rank-k}\mbox{}
    \begin{description}
    \item[Input:] Matrix $A \in \mathbb{R}^{n \times m}$ and target accuracy $0<\epsilon<1$.
    
    \item[Operations:]\mbox{}
    \begin{enumerate}
        \item If $A$ is a rank-$k'$ matrix for $k' \leq k$ output the rank-$k'$ factorization $\widehat{U}, \widehat{V}$.
        \item Let $X_0,Y_0$ be an $O(\OPT)$ LRA solution for $A$ from Theorem 14 in~\cite{ban_etal_soda2019_ptas_lp_lra} and let the objective value be $\widetilde{\OPT}$.
        \item Let $X_0Y_0^\top = U \Sigma V^\top$ be the SVD factorization and let $X_1 = U \Sigma^{1/2}, Y_1 = V \Sigma^{1/2}$. Let $X, Y$ be $X_1 + H_1$ and $Y_1 + H_2$ where $H_1$, $H_2$ are random matrices with all entries drawn independently from $U(-2^{-\Omega_p(B)} \cdot \frac 1 {\poly_p(nmk)}  \cdot \widetilde{\OPT},  2^{-\Omega_p(B)} \cdot \frac 1 {\poly_p(nmk)}  \cdot \widetilde{\OPT}) + U(-2^{-\Omega_p(B)} \cdot \frac 1 {\poly_p(nmk)}  \cdot \widetilde{\OPT},  2^{-\Omega_p(B)} \cdot \frac 1 {\poly_p(nmk)}  \cdot \widetilde{\OPT})$.
        \item Let $\Delta = \poly_p(nmk) \cdot 2^{O_p(B)} \cdot \left( \frac 1 {\eps \cdot \widetilde{\OPT}}\right)^{\Theta_p(1)}$.
        \item Let $S$ be a $\left(\frac{\eps\cdot \widetilde{\OPT}}{knm \cdot 2^B}\right)^{O_p(1)}$-fineness grid of $[-\Delta, \Delta]$.
        \item For each $s$ such that $s_{u,\alpha,b}, s_{v,\alpha,b} \in S$ and set $\calT\subset [m]$ of size $(k/\eps)^{O_{p}(1)}$:
        \begin{enumerate}
            \item Let $\Sigma$ be a $\left(\frac{\eps\cdot \widetilde{\OPT}}{knm \cdot 2^B}\right)^{O_p(1)}$-fineness grid of $[-\Delta, \Delta]$.
            \item Let $\mu$ be a $(k/\eps)^{O_{p}(1)}$-degree sum-of-squares pseudo-distribution over the alphabet $\Sigma^{kn+km + 4k^2}$ with variables $U_i$, $V_j$, $N_1$, $N_2$, $N_1^{-1}$, $N_2^{-1}$ such that $\pE_{\mu}$ optimizes 
            \begin{align}
                &\min_{\pE_{\mu}} \quad \pE_{\mu} \sum_{i\in[n],j\in[m]} (A_{ij} - \langle U_i, V_j\rangle)^p \notag \\
                &\text{s.t.\ } \pE_{\mu} \text{ satisfies } \calA (s)\notag
            \end{align}
            \item Let $\widehat{N_1}, \widehat{N_2}, \widehat{N_1}^{-1}, \widehat{N_2}^{-1}$ be draws from their local distributions and condition the SDP pseudodistribution on the values of $\widehat{N_1}, \widehat{N_2}, \widehat{N_1}^{-1}, \widehat{N_2}^{-1}$.
            \item Let $\widehat{V}_\calT$ be a draw from the local distribution $\{V_{\calT}\}$. Let $\mu_\calT$ be the pseudodistribution obtained by conditioning on $\{V_j = \widehat{V}_j\}_{j \in \calT}$.
            \item Let $\widehat{U}, \widehat{V}$ be the solution where $\widehat{U}_i, \widehat{V}_j$ are sampled independently from their degree $k$ local distributions in $\mu_\calT$. 
        \end{enumerate}
    \end{enumerate}

    \item[Output:] The embedding $\widehat{U}, \widehat{V}$ with the lowest LRA objective value over all iterations of the loop.
    \end{description}
  \end{algorithm}
\end{mdframed}

\begin{remark}
    Although the above algorithm utilizes sum-of-squares pseudodistributions, we note that the Sherali-Adams linear programming hierarchy alongside global linear constraints would suffice. We use sum-of-squares pseudodistributions and SDPs for ease of exposition.
\end{remark}

\subsection{Discretization and SDP Constraints.}

In this section, we show that at least one of the loop iterations has an SDP objective value at most $(1+\eps)\cdot \OPT$. We first show that there exist $X,Y,N$ with real-valued entries that satisfy the SDP constraints, and then analyze the effect of discretization on the SDP objective value.

\paragraph{Higher Rank Identifiability.} We first address the higher rank identifiability. 

\begin{lemma}[Higher Rank LRA Identifiability]
\label{lem:high-rank-ident}
    Let $X,U \in \mathbb{R}^{n \times k}$ and $Y,V \in \mathbb{R}^{m \times k}$ be full column rank matrices. Then there exists an invertible matrix $M \in \mathbb{R}^{k \times k}$ such that
    \[ \norm{(U + XM)(V - Y\left(M^{-1}\right)^{\top})^\top}_p^p + \norm{(U - XM)(V + Y\left(M^{-1}\right)^{\top})^\top}_p^p \leq k^{O_p(1)} \norm{UV^\top - XY^\top}_p^p\,.\]
    Furthermore, if $X$, $U$, $\left(X^\top X\right)^{-1/2}$, and $\left(U^\top U\right)^{-1/2}$ have entries bounded in magnitude by $\beta$, then $M$ and $M^{-1}$ have entries bounded by $\poly(\beta \cdot nmk)$.
\end{lemma}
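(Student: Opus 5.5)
The plan is to copy the proof of \cref{lem:gen-strong-identifiability}: first reduce to a Frobenius-norm statement in dimension $\poly(k)$, then use an exact algebraic identity together with a careful choice of $M$ that controls a cross term. The starting point is the identity, valid for every invertible $M$,
\[
(U+XM)(V-YM^{-\top})^\top + (U-XM)(V+YM^{-\top})^\top = 2\,(UV^\top - XY^\top),
\]
which holds because $XM\,(YM^{-\top})^\top = XY^\top$ and the mixed terms cancel. Write $P_\pm = U \pm XM$ and $Q_\mp = V \mp YM^{-\top}$. The goal is then to show $\|P_+Q_-^\top\|^2 + \|P_-Q_+^\top\|^2 \lesssim \|P_+Q_-^\top + P_-Q_+^\top\|^2$ for a well-chosen $M$.

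\textbf{Reduction to Frobenius norm.} All columns of $P_\pm$ lie in the column space $L_1$ of $[U\ X]$, which has dimension at most $2k$. All columns of $Q_\mp$ lie in the column space $L_2$ of $[V\ Y]$. Apply \cref{fact:sketch} to $L_1$ on the left and to $L_2$ on the right, with constant distortion, obtaining sketches $S_1, S_2$ with $\poly(k)$ rows. Each matrix $S_1 P_\pm Q_\mp^\top S_2^\top$ keeps its entrywise $\ell_p$ norm up to constants: a union of the subspace-embedding guarantee applied row-wise and then column-wise gives this. The rank-$2k$ matrix $UV^\top - XY^\top$ is preserved in the same way. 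In dimension $\poly(k)$, the $\ell_p$ and $\ell_2$ norms agree up to $k^{O_p(1)}$ factors. So it suffices to prove the Frobenius version: there is an invertible $M$ with
\[
\|P_+Q_-^\top\|_F^2 + \|P_-Q_+^\top\|_F^2 \le O(1)\,\|UV^\top - XY^\top\|_F^2 .
\]
The sketches act on $U, X, V, Y$ linearly and $M$ multiplies on the right, so an $M$ that works for the sketched matrices works for the originals.

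\textbf{Choice of $M$ and the cross term.} Expanding the square of the identity gives the cross term $2\,\mathrm{tr}\big((P_+^\top P_-)(Q_+^\top Q_-)\big)$, and it suffices to make it nonnegative or zero. First use the factorization freedom $UV^\top = (UG)(VG^{-\top})^\top$ to normalize so that $U^\top U = I$. Then choose $M = M_0 O$, where $M_0 = (X^\top X)^{-1/2}$ and $O$ is orthogonal. This makes $M^\top X^\top X M = I$, so the symmetric part of $P_+^\top P_- = U^\top U - M^\top X^\top X M + (M^\top X^\top U - U^\top X M)$ vanishes. The matrix $P_+^\top P_-$ is then antisymmetric, say $P_a$, and only the antisymmetric part $Q_a$ of $Q_+^\top Q_-$ contributes, giving cross term $\mathrm{tr}(P_a Q_a)$. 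The remaining freedom is the choice of $O$. I would choose $O$ by a Procrustes/polar step, taking it to be the orthogonal polar factor of $M_0^\top X^\top U$ (so that $M^\top X^\top U$ becomes symmetric). That would kill $P_a$ entirely, making the cross term zero exactly as in the rank-one case with $c = \|u\|/\|x\|$.

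\textbf{Main obstacle.} Establishing that the cross term is controlled is the hard step. If the polar choice does not fully cancel $P_a$, for instance because of invertibility issues, I would fall back on a variational argument. Take $M$ to minimize $F(M) = \|P_+Q_-^\top\|_F^2 + \|P_-Q_+^\top\|_F^2$ over invertible $M$, first arguing via full column rank that a minimizer exists. Then use first-order stationarity along the perturbation $M \mapsto M(I+tE)$ to show that the cross term is at least $-\tfrac12 F(M)$. The key step is that this perturbation changes $F$ at first order by exactly a multiple of $\mathrm{tr}(P_aQ_a)$-type quantities, so stationarity forces them to vanish.

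\textbf{Entry bounds.} With the explicit choice $M = (X^\top X)^{-1/2}\,O\,(U^\top U)^{1/2}$, where $O$ is orthogonal and the last factor accounts for the normalization of $U$, we get $M^{-1} = (U^\top U)^{-1/2}\,O^\top (X^\top X)^{1/2}$. Both $M$ and $M^{-1}$ are products of matrices with entries bounded by $\poly(\beta nmk)$: the factors $(X^\top X)^{1/2}$ and $(U^\top U)^{1/2}$ have entries at most $nk\beta^2$, and $O$ is orthogonal. This gives the ``furthermore'' claim.
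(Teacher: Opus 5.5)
Your proposal is correct and follows the paper's proof. The paper sketches down to $\mathrm{poly}(k)$ dimensions to reduce to the Frobenius case. It then takes $M=(X^\top X)^{-1/2}Q(U^\top U)^{1/2}$ with $Q$ the orthogonal polar factor of $\hat X^\top\hat U$, which gives $(U+XM)^\top(U-XM)=0$ and hence Frobenius-orthogonality of the two residual products. This is exactly your choice, and your variational fallback is unnecessary: a polar decomposition exists for every square matrix, and $M$ is invertible automatically because $X$ and $U$ have full column rank.

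One small omission concerns the entry bounds. In the $\ell_p$ setting $M$ is built from the sketched matrices $S_1X$ and $S_1U$, not from $X$ and $U$. You therefore also need, as the paper notes, that $S_1X$, $S_1U$ and their inverse Gram square roots still have polynomially bounded entries.
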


We first prove the statement for $p=2$ and then extend it to all even $p$ via sketching (similarly to the rank $1$ case). 

\begin{lemma}[Higher Rank $\ell_2$-LRA Identifiability]
\label{lem:high-rank-ell-2-ident}
    Let $X,U \in \mathbb{R}^{n \times k}$ and $Y,V \in \mathbb{R}^{m \times k}$ be full column rank matrices. Then there exists an invertible matrix $M \in \mathbb{R}^{k \times k}$ such that
    \[ \norm{(U + XM)(V - Y\left(M^{-1}\right)^\top)^\top}_F^2 + \norm{(U - XM)(V + Y\left(M^{-1}\right)^\top)^\top}_F^2 \leq 4 \norm{UV^\top - XY^\top}_F^2\,.\]
    Furthermore, if $X$, $U$, $\left(X^\top X\right)^{-1/2}$, and $\left(U^\top U\right)^{-1/2}$ have entries bounded in magnitude by $\beta$, then $M$ and $M^{-1}$ have entries bounded by $\poly(\beta \cdot nmk)$.
\end{lemma}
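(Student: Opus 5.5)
The plan is to reduce the inequality to an orthogonality statement and then build $M$ from Gram matrices and a polar decomposition. Write $D = UV^\top - XY^\top$ and
\[
P = (U+XM)\bigl(V - Y(M^{-1})^\top\bigr)^\top, \qquad Q = (U-XM)\bigl(V + Y(M^{-1})^\top\bigr)^\top .
\]
Expanding both products, as in the rank-one proof of \cref{lem:gen-strong-identifiability}, the cross terms $\pm U M^{-1}Y^\top$ and $\pm XMV^\top$ cancel, so $P + Q = 2D$ for every invertible $M$. Hence
\[
4\norm{D}_F^2 = \norm{P}_F^2 + \norm{Q}_F^2 + 2\langle P, Q\rangle ,
\]
and it suffices to find $M$ with $\langle P,Q\rangle = 0$. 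Then the claimed inequality holds with equality. Using the trace form,
\[
\langle P, Q\rangle = \mathrm{tr}\bigl( (U+XM)^\top (U-XM)\,\bigl(V+Y(M^{-1})^\top\bigr)^\top\bigl(V-Y(M^{-1})^\top\bigr)\bigr).
\]
So it is enough to make the $k\times k$ factor $G := (U+XM)^\top(U-XM)$ vanish.

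Expanding, $G = \bigl(U^\top U - M^\top X^\top X M\bigr) + \bigl(M^\top X^\top U - U^\top X M\bigr)$, a symmetric part plus an antisymmetric part. I will kill both parts. To kill the symmetric part, I take
\[
M = (X^\top X)^{-1/2}\, O\, (U^\top U)^{1/2}
\]
with $O$ orthogonal. Full column rank makes the Gram matrices invertible, so $M$ is invertible, and then $M^\top X^\top X M = U^\top U$. For the antisymmetric part, note that
\[
M^\top X^\top U = (U^\top U)^{1/2} O^\top C\, (U^\top U)^{1/2}, \qquad C := (X^\top X)^{-1/2} X^\top U (U^\top U)^{-1/2}.
\]
This matrix is symmetric whenever $O^\top C$ is symmetric. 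By the polar decomposition (valid even when $C$ is singular), $C = OS$ for some orthogonal $O$ and symmetric PSD $S$. Choosing this $O$ gives $O^\top C = S$, so the antisymmetric part vanishes as well. Thus $G = 0$, hence $\langle P, Q\rangle = 0$ and $\norm{P}_F^2 + \norm{Q}_F^2 = 4\norm{D}_F^2$.

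For the entry bounds, I use $M^{-1} = (U^\top U)^{-1/2} O^\top (X^\top X)^{1/2}$. The factors $(X^\top X)^{-1/2}$ and $(U^\top U)^{-1/2}$ have entries at most $\beta$ by assumption, and $O$ has entries at most $1$. The square roots $(X^\top X)^{1/2}$ and $(U^\top U)^{1/2}$ have operator norm at most $\norm{X}_F \le \sqrt{nk}\,\beta$ (respectively $\sqrt{nk}\,\beta$ for $U$). A product of three $k\times k$ matrices with such bounds has entries bounded by $\poly(\beta\cdot nmk)$, for both $M$ and $M^{-1}$.

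I expect the main obstacle to be noticing that both the symmetric and the antisymmetric parts of $G$ can be cancelled at the same time: the Gram-matching choice handles the first, and the remaining orthogonal freedom $O$, fixed by the polar decomposition, handles the second. The other steps are routine algebra.
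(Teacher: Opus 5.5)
Your proposal is correct and follows essentially the same route as the paper. It uses the identity $P+Q=2(UV^\top-XY^\top)$ and the choice $M=(X^\top X)^{-1/2}O(U^\top U)^{1/2}$, where $O$ is the orthogonal factor in the polar decomposition of $(X^\top X)^{-1/2}X^\top U(U^\top U)^{-1/2}$; this forces $(U+XM)^\top(U-XM)=0$, so the inequality holds with equality. The entry bound is the same three-factor argument, and your formula $M^{-1}=(U^\top U)^{-1/2}O^\top(X^\top X)^{1/2}$ has the factors in the correct order (the paper's display reverses them, which is harmless for the bound).
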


Note that with~\cref{lem:high-rank-ell-2-ident} we can easily prove~\cref{lem:high-rank-ident}.
\begin{proof}[Proof Sketch]
    Note that by sketching (\cref{fact:sketch}) it suffices to show that
    \[ \norm{(\widetilde{U} + \widetilde{X}M)(\widetilde{V} - \widetilde{Y}\left(M^{-1}\right)^\top)^\top}_p^p + \norm{(\widetilde{U} - \widetilde{X}M)(\widetilde{V} + \widetilde{Y}\left(M^{-1}\right)^\top)^\top}_p^p \leq k^{O_p(1)} \norm{\widetilde{U}\widetilde{V}^\top - \widetilde{X}\widetilde{Y}^\top}_p^p\,,\]
    for $\widetilde{U}, \widetilde{X} \in \mathbb{R}^{k^{O_p(1)} \times k}$ and $\widetilde{V}, \widetilde{Y} \in \mathbb{R}^{k^{O_p(1)} \times k}$. Since $\ell_p$ and $\ell_2$ norms are equivalent up to dimension-dependent factors, it thus also suffices to instead show that
    \[ \norm{(\widetilde{U} + \widetilde{X}M)(\widetilde{V} - \widetilde{Y}\left(M^{-1}\right)^\top)^\top}_F^2 + \norm{(\widetilde{U} - \widetilde{X}M)(\widetilde{V} + \widetilde{Y}\left(M^{-1}\right)^\top)^\top}_F^2 \leq O_{p,k}k^{O_p(1)} \norm{\widetilde{U}\widetilde{V}^\top - \widetilde{X}\widetilde{Y}^\top}_F^2\,,\]
    Applying~\cref{lem:high-rank-ell-2-ident} completes the proof of the identifiability inequality. To show that $M$ and $M^{-1}$ are bounded we note that the sketched matrices $\widetilde{X}$, $\widetilde{U}$, $\left(\widetilde{X}^\top \widetilde{X}\right)^{-1/2}$, and $\left(\widetilde{U}^\top \widetilde{U}\right)^{-1/2}$ still have bounded entries and thus applying~\cref{lem:high-rank-ell-2-ident} gives the desired bound on $M, M^{-1}$.
\end{proof}

We now return to the proof of~\cref{lem:high-rank-ell-2-ident}.
\begin{proof}[Proof of~\cref{lem:high-rank-ell-2-ident}]
    Note that we have that
    \[ (U + XM)(V - Y\left(M^{-1}\right)^\top)^\top + (U - XM)(V + Y\left(M^{-1}\right)^\top)^\top = 2 \left(UV^\top - XY^\top\right)\,.\]
    Thus, it suffices to show that there is a choice of $M$ such that $(U + XM)(V - Y\left(M^{-1}\right)^\top)^\top, (U - XM)(V + Y\left(M^{-1}\right)^\top)^\top$ are orthogonal. Let 
    \[ \widehat{X} \vcentcolon= X (X^\top X)^{-1/2} \quad\text{and}\quad \widehat{U} \vcentcolon= U(U^\top U)^{-1/2}\,.\]
    Furthermore, let $\widehat{X}^\top \widehat{U} = QH$ be a polar decomposition such that $Q$ is a unitary matrix and $H \succeq 0$. It suffices to take 
    \[ M \vcentcolon= (X^\top X)^{-1/2} Q (U^\top U)^{1/2}\,.\]
    Note that $M$ is in fact invertible since $X,U$ have full column rank. We now show that this choice of $M$ satisfies $\langle (U + XM)(V - Y\left(M^{-1}\right)^\top)^\top, (U - XM)(V + Y\left(M^{-1}\right)^\top)^\top \rangle_F = 0$. We have that
    \begin{align*}
        &\langle (U + XM)(V - Y\left(M^{-1}\right)^\top)^\top, (U - XM)(V + Y\left(M^{-1}\right)^\top)^\top \rangle_F \\
        &\qquad= \tr\left((U + XM)^\top (U - XM) (V + Y\left(M^{-1}\right)^\top)^\top (V - Y\left(M^{-1}\right)^\top) \right)\,,
    \end{align*} 
    and we will show that $(U + XM)^\top (U - XM) = 0$. We have that 
    \begin{align*}
        (U + XM)^\top (U - XM) &= U^\top U - M^\top X^\top X M + M^\top X^\top U - U^\top X M\,.
    \end{align*}
    Note that 
    \[ M^\top X^\top X M  = (U^\top U)^{1/2} Q^\top Q (U^\top U)^{1/2} = U^\top U\,,\]
    so we have that $(U + XM)^\top (U - XM) = M^\top X^\top U - U^\top X M$. Furthermore, for our choice of $M$, the matrix $M^\top X^\top U$ is symmetric, so this is $0$. We have that 
    \begin{align*}
        U^\top X M &= U^\top X (X^\top X)^{-1/2} Q (U^\top U)^{1/2} \\
        &= U^\top \widehat{X} Q (U^\top U)^{1/2} \\
        &= (U^\top U)^{1/2} \widehat{U}^\top \widehat{X} Q (U^\top U)^{1/2} \\
        &= (U^\top U)^{1/2} H (Q^\top Q) (U^\top U)^{1/2} \\
        &= (U^\top U)^{1/2} H (U^\top U)^{1/2} \,,
    \end{align*}
    using the definitions of $\widehat{X}, \widehat{U}$, and $M$ along with the fact that $Q$ is unitary. Finally, note that $H$ is symmetric, so we have that $U^\top X M$ is symmetric. Thus, we have that $(U + XM)^\top (U - XM) = 0$ and 
    \begin{align*}
        &\langle (U + XM)(V - Y\left(M^{-1}\right)^\top)^\top, (U - XM)(V + Y\left(M^{-1}\right)^\top)^\top \rangle_F \\
        &\qquad= \tr\left((U + XM)^\top (U - XM) (V + Y\left(M^{-1}\right)^\top)^\top (V - Y\left(M^{-1}\right)^\top) \right) = 0\,,
    \end{align*} 
    Since $(U + XM)(V - Y\left(M^{-1}\right)^\top)^\top, (U - XM)(V + Y\left(M^{-1}\right)^\top)^\top$ are orthogonal, we have that 
    \[ \norm{(U + XM)(V - Y\left(M^{-1}\right)^\top)^\top}_F^2 + \norm{(U - XM)(V + Y\left(M^{-1}\right)^\top)^\top}_F^2 \leq 4 \norm{UV^\top - XY^\top}_F^2\,,\]
    which completes the proof of the identifiability inequality.

    We now consider the explicit form above for $M$ and $M^{-1}$ to bound the entries. We have that
    \[ M = (X^\top X)^{-1/2} Q (U^\top U)^{1/2}\,,\]
    which has bounded entries by our assumption on the entries of $U$ and $(X^\top X)^{-1/2}$, along with the fact that $Q$ is a unitary matrix and thus also bounded. We now consider the explicit form for $M^{-1}$. We have that 
    \[ M^{-1} = (X^\top X)^{1/2} Q^\top (U^\top U)^{-1/2}\,,\]
    which has bounded entries by our assumption on the entries of $X$ and $(U^\top U)^{-1/2}$, along with the fact that $Q^\top$ is a unitary matrix and thus also bounded. This completes the second part of the proof.
\end{proof}

\paragraph{Balancing Residuals.}

We now show that there exist $N_1$, $N_2$ satisfying the SDP constraints. Intuitively, these matrices are attempting to balance the residuals such that the norms $\norm{(U + X)(V - Y)^\top}_p, \norm{(U - X)(V + Y)^\top}_p$ being large is witnessed by the norms of both matrices in the product.

\begin{lemma}[Pairwise $\ell_p$ Balancing]
\label{lem:pairwise-ell-p-balance}
    Let $p \in \mathbb{R}^{\geq 1}$ and let $F \in \mathbb{R}^{n \times k}$ and $G \in \mathbb{R}^{m \times k}$ have full column rank. Then there exists an invertible $N \in \mathbb{R}^{k \times k}$ such that 
    \[ \norm{F N}_p^p = \left\Vert G \left(N^{-1}\right)^\top\right\Vert_p^p \leq k \cdot \norm{FG^\top}_p^{p/2}\,.\]
\end{lemma}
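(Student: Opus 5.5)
The plan is to use Auerbach's Lemma, as the overview indicates. View the column space of $F$ as a $k$-dimensional subspace $L \subseteq \R^n$, equipped with the entrywise $\ell_p$ norm. Auerbach's Lemma gives a basis $f_1,\ldots,f_k$ of $L$ and dual functionals $\phi_1,\ldots,\phi_k$ on $(L, \|\cdot\|_p)$ with $\|f_r\|_p = 1$, $\|\phi_r\|_* = 1$ and $\phi_r(f_s) = \delta_{rs}$. Let $N_0$ be the invertible matrix with $F N_0 = [f_1,\ldots,f_k]$, and set $\tilde G = G (N_0^{-1})^\top$, so that $F G^\top = (FN_0)\tilde G^\top = \sum_r f_r \tilde g_r^\top$, where $\tilde g_r$ are the columns of $\tilde G$.

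The key step is to control each $\tilde g_r$ by $\|FG^\top\|_p$. Extend $\phi_r$ by Hahn--Banach to a norm-one functional on $\ell_p^n$, i.e.\ a vector $w_r$ with $\|w_r\|_{p^*} \le 1$. Apply it to each column $j$ of $FG^\top$; since the $j$-th column lies in $L$, this gives $w_r^\top (FG^\top) = \tilde g_r^\top$. By H\"older, $|\tilde G_{jr}| \le \|(FG^\top)_{:,j}\|_p$. Summing $p$-th powers over $j$ yields $\|\tilde g_r\|_p^p \le \|FG^\top\|_p^p$ for each $r$, and hence $\|\tilde G\|_p^p \le k \|FG^\top\|_p^p$. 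On the other side, $\|FN_0\|_p^p = \sum_r \|f_r\|_p^p = k$.

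Next I rescale by a scalar $\lambda > 0$, taking $N = \lambda N_0$. Then $\|FN\|_p^p = \lambda^p k$ and $\|G(N^{-1})^\top\|_p^p = \lambda^{-p}\|\tilde G\|_p^p$. Choose $\lambda$ so these two quantities are equal; this is possible because $\tilde G \neq 0$ by full column rank. The common value is $\sqrt{k \|\tilde G\|_p^p} \le \sqrt{k \cdot k\|FG^\top\|_p^p} = k \|FG^\top\|_p^{p/2}$, which is exactly the claimed bound.

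The main obstacle is the justification of Auerbach's Lemma for a general norm when $p \ge 1$. I would prove it via the standard volume argument: choose $f_1,\ldots,f_k$ in the unit ball of $L$ maximizing $|\det|$ relative to a fixed basis, and define $\phi_r(x)$ as the determinant with $f_r$ replaced by $x$, divided by the maximal determinant. Maximality then gives $|\phi_r(x)| \le \|x\|_p$. A secondary check is that $FG^\top$ has columns in $L$, which is immediate.
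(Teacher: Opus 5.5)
Your proof is correct and is essentially the paper's argument: an Auerbach basis for the norm $x \mapsto \|Fx\|_p$ (equivalently, the column space of $F$ under $\ell_p$), the dual functionals to get $\|G(N_0^{-1})^\top\|_p^p \le k\|FG^\top\|_p^p$ against $\|FN_0\|_p^p = k$, and a final scalar rescaling. The only differences are cosmetic. The Hahn--Banach extension is unnecessary, since each column of $FG^\top$ already lies in $L$ and $|\phi_r(y)| \le \|y\|_p$ applies directly there. Also, the paper simply cites Auerbach's Lemma rather than proving it via the determinant argument.
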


In the proof of~\cref{lem:pairwise-ell-p-balance}, we will need the following fact:
\begin{fact}[Auerbach's Lemma]
\label{fact:auerbach}
  Let $V$ be a finite-dimensional normed vector space with norm $\| \cdot \|$.
  There exists $v_1,\ldots,v_n \in V$ which is a basis for $V$ and $v_1^*,\ldots,v_n^* \in V^*$ such that $v_i^*(v_j) = \delta_{ij}$, $\|v_i\| = 1$, and $\|v_i^*\|_* = 1$.
\end{fact}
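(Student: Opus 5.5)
We prove Auerbach's Lemma with the classical volume-maximization argument. The basis will be a tuple of vectors in the closed unit ball that maximizes the absolute value of a determinant, and the dual functionals will be ratios of determinants obtained by replacing one vector. The one place where care is needed is the dual-norm bound, and it follows directly from the maximality of the chosen tuple. Throughout, $n = \dim V$, and we write $B = \{v \in V : \|v\| \leq 1\}$ for the closed unit ball.

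\emph{Setting up the determinant.} Fix an arbitrary linear isomorphism $T : V \to \R^n$. Define
\[
D(w_1,\ldots,w_n) = \det\bigl(T w_1, \ldots, T w_n\bigr),
\]
where the columns of the matrix are $T w_1,\ldots,T w_n$. Then $D$ is multilinear and alternating in its arguments. It is nonzero exactly when $w_1,\ldots,w_n$ are linearly independent. It is also continuous, because all norms on a finite-dimensional space are equivalent.

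\emph{Choosing the maximizer.} Since $V$ is finite-dimensional, $B$ is compact, and hence so is $B^n$. The continuous function $|D|$ therefore attains its maximum on $B^n$ at some tuple $(v_1,\ldots,v_n)$. Any basis rescaled into $B$ gives a nonzero value of $D$, so the maximum is positive. In particular $v_1,\ldots,v_n$ are linearly independent and form a basis of $V$.

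Each $v_i$ has norm exactly $1$. Suppose instead that $\|v_i\| < 1$ for some $i$. Since $v_i \neq 0$, we may replace $v_i$ by $v_i / \|v_i\| \in B$. By multilinearity this multiplies $|D|$ by $1/\|v_i\| > 1$, which contradicts maximality.

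\emph{Defining the dual functionals.} For each $i$, set
\[
v_i^*(x) = \frac{D(v_1,\ldots,v_{i-1},x,v_{i+1},\ldots,v_n)}{D(v_1,\ldots,v_n)}.
\]
This is linear in $x$ because $D$ is multilinear. Taking $x = v_i$ gives $v_i^*(v_i) = 1$. For $j \neq i$, taking $x = v_j$ produces a repeated argument, so the alternating property gives $v_i^*(v_j) = 0$. Hence $v_i^*(v_j) = \delta_{ij}$.

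\emph{The dual-norm bound.} Let $x \in B$. The tuple obtained from $(v_1,\ldots,v_n)$ by replacing $v_i$ with $x$ still lies in $B^n$. By maximality its $|D|$ value is at most $|D(v_1,\ldots,v_n)|$, which says exactly that $|v_i^*(x)| \leq 1$. Therefore $\|v_i^*\|_* \leq 1$. On the other hand, $v_i^*(v_i) = 1$ with $\|v_i\| = 1$, so $\|v_i^*\|_* \geq 1$. Together these give $\|v_i^*\|_* = 1$, which completes the proof.
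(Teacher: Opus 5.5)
Your proof is correct and complete. It is the classical determinant-maximization argument, and every step checks out: compactness of $B^n$, positivity of the maximum, $\|v_i\|=1$ by rescaling, $v_i^*(v_j)=\delta_{ij}$ from the alternating property, and $\|v_i^*\|_*\le 1$ from maximality after substituting any $x\in B$ for $v_i$.

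The paper gives no proof of this statement; it simply invokes Auerbach's Lemma as a standard fact from functional analysis. Your argument supplies exactly the two properties the paper later relies on in the pairwise $\ell_p$ balancing lemma, applied to the norm $x\mapsto\|Fx\|_p$:
\begin{itemize}
\item $\|Fv_t\|_p=1$, which gives $\|FN_0\|_p^p=k$;
\item $|x_t|=|v_t^*(\sum_i x_i v_i)|\le\|FN_0x\|_p$.
\end{itemize}
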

\begin{proof}[Proof of~\cref{lem:pairwise-ell-p-balance}]
    Note that it suffices to show that there exists an invertible $N_0$ such that
    \[\norm{F N_0}_p^p \cdot \left\Vert G \left(N_0^{-1}\right)^\top\right\Vert_p^p \leq k^2 \cdot \norm{FG^\top}_p^{p} \,.\]
    Given such an $N_0$ we can pick $s$ such that $\norm{F sN_0}_p^p = \norm{G (sN_0)^{-1}}_p^p$ and then letting $N = sN_0$ proves the claim. Define the following norm on $\mathbb{R}^k$:
    \[ \norm{x}_F \vcentcolon= \norm{F x}_p\,.\]
    Note that since $F$ has full column rank, this is in fact a norm. Furthermore, by~\cref{fact:auerbach} we have that there exists a basis $v_1,\ldots, v_k$ and a dual basis $v_1^*,\ldots, v_k^*$ satisfying the above Auerbach basis properties for $\norm{\cdot}_F$. Let $N_0$ be the invertible matrix whose $t$-th column is $v_t$. Then we have that
    \[ \norm{F N_0}_p^p = \sum_{i = 1}^n \norm{F_i N_0}_p^p = \sum_{i = 1}^n \sum_{t = 1}^k (\langle F_i, v_t\rangle)^p = \sum_{t= 1}^k \norm{F v_t}_p^p = k\,.\]
    Now consider $\norm{G \left(N_0^{-1}\right)^\top}_p^p$. We will bound this by showing that $FN_0$ does not contract the $\ell_p$ norm of any vector $x$. Note that we have that for any vector $x$ and any $t \in [k]$,
    \[ \abs{x_t} = \left\vert v_t^* \left(\sum_{i=1}^k x_i v_i\right)\right\vert\]
    by the Auerbach basis properties since $v_t^*(v_t) = 1$ and $v_t^*(v_u) = 0$ for all $u \neq t$. Furthermore, by generalized Cauchy-Schwartz
    \[ \abs{x_t} = \left\vert v_t^* \left(\sum_{i=1}^k x_i v_i\right)\right\vert \leq \norm{v_t^*}_* \norm{\sum_{i=1}^k x_i v_i}_F = \norm{F N_0 x}_p\,.\]
    Thus, we have concluded that $\norm{x}_\infty^p \leq \norm{F N_0 x}_p^p$ and $\norm{x}_p^p \leq k\norm{F N_0 x}_p^p$. Let $\widetilde{G} = G \left(N_0^{-1}\right)^\top$, and we now apply the above fact with $\widetilde{g}_j$. Note that we have that
    \begin{align*}
        \norm{FG^\top}_p^p &= \norm{FN_0\widetilde{G}^\top}_p^p = \sum_{j = 1}^m \norm{FN_0 \widetilde{g}_j}_p^p \geq \frac 1 k \sum_{j = 1}^m \norm{\widetilde{g}_j}_p^p = \frac 1 k \left\Vert G \left(N_0^{-1}\right)^\top\right\Vert_p^p\,.
    \end{align*}
    Putting this all together, we have that
    \[ \norm{FG^\top}_p^p \geq \frac 1 k \norm{G \left(N_0^{-1}\right)^{\top}}_p^p \geq \frac 1 {k^2} \norm{F N_0}_p^p \cdot \left\Vert G \left(N_0^{-1}\right)^\top\right\Vert_p^p\,,\]
    which completes the proof.
\end{proof}

\paragraph{Discretization.}
We now analyze the effect of discretization on the SDP objective value. In particular, we prove the following lemma:
\begin{lemma}
\label{lem:rank-k-feasibility}
    Let $A \in \mathbb{R}^{n \times m}$ have entries with bit complexity $B$. Then with probability at least $0.999$ in~\cref{algo:lra-ptas-rank-k} there exists some $s$ with $s_{u,\alpha,b}, s_{v,\alpha,b} \in S$ such that the SDP with constraints $\mathcal{A}(s)$ and alphabet $\Sigma$ has objective value at most $(1+\eps)\cdot \OPT$.
\end{lemma}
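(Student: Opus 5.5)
We exhibit an explicit feasible point and then discretize it. Fix an optimal rank-$k$ solution $B^\star = U^\star (V^\star)^\top$ with $\norm{A-B^\star}_p^p = \OPT$. We may assume $\OPT \neq 0$, since otherwise step 1 of \cref{algo:lra-ptas-rank-k} already succeeds, and we may assume $B^\star$ has rank exactly $k$ (the lower-rank case is handled by the same argument with $k' < k$). The random perturbations $H_1,H_2$ in step 3 ensure that, with probability $0.999$, $X$ and $Y$ have full column rank. They also ensure that $(X^\top X)^{-1/2}$ has entries bounded by $2^{O_p(B)}\poly(nmk)/\widetilde{\OPT}^{O(1)}$; this is an anti-concentration statement for the smallest singular value of a matrix with a continuous random perturbation. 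Since the perturbation has size at most $2^{-\Omega_p(B)}\widetilde{\OPT}/\poly_p(nmk)$, the pair $(X,Y)$ is still an $O(\OPT)$ solution.

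We choose a factorization $U,V$ of $B^\star$ and apply \cref{lem:high-rank-ident} to $(X,Y)$ and $(U,V)$. Replacing $(U,V)$ by $(UM^{-1}, V M^\top)$, which leaves $UV^\top$ unchanged, we obtain
\[ \norm{(X+U)(Y-V)^\top}_p^p + \norm{(X-U)(Y+V)^\top}_p^p \leq k^{O_p(1)} \OPT, \]
where the right-hand side uses $\norm{UV^\top - XY^\top}_p^p \leq O_p(\OPT)$ by the almost triangle inequality. \cref{lem:high-rank-ident} also controls the entries of $M$ and $M^{-1}$. We then apply \cref{lem:pairwise-ell-p-balance} twice, once to $F = X+U$, $G = Y-V$ and once to $F = X-U$, $G = Y+V$ (full rank is again ensured generically by the perturbation). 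This yields $N_1,N_2$ for which all four balanced residual norms are at most $k\cdot k^{O_p(1)}\sqrt{\OPT}$. Up to the transpose convention in $\calA$, this matches \cref{lem:residual-balancing}.

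Next we check the inequality constraints of $\calA(s)$. For a row $i$, we bound the mixed tensor norm $\norm{a^{\otimes b}\otimes c^{\otimes(p-a-b)}}_{p/(p-a)}^{p/(p-a)} = \norm{a}_{r}^{rb}\norm{c}_r^{r(p-a-b)}$ with $r = p/(p-a)$. Summing over $i$ and applying Hölder together with the comparison $\norm{\cdot}_r \leq k^{O(1)}\norm{\cdot}_p$ in $\R^k$ bounds the sum by $k^{O_p(1)}$ times powers of the two balanced norms. This step must be done carefully. The exponents must combine to total homogeneity $\sqrt{\OPT}$, and the split across $N_1$ versus $N_2$ residuals should rely on the fact that both balanced norms are $O(\sqrt{\OPT})$. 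The equality constraints define the $s_{u,\alpha,b}, s_{v,\alpha,b}$ as the realized values. These are bounded by $\Delta$ because all entries of $X,U,N_i,N_i^{-1}$ are at most $\poly(nmk)2^{O_p(B)}/\widetilde{\OPT}^{O(1)}$. The same bound holds for $U^\star$: after SVD-balancing, the entries of an optimal solution are polynomially bounded in terms of $\norm{A}$.

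Finally we discretize. We round every variable, namely the entries of $U,V,N_1,N_2,N_1^{-1},N_2^{-1}$, to the grid $\Sigma$, and we round each $s$ value to $S$. The objective changes by at most $\poly(nmk)\,\Delta^{O_p(1)}\cdot$(grid fineness), as in \cref{lem:discretization-error-bound}. By \cref{lem:lb-opt} this is at most $\eps\OPT$ once the fineness exponent is a large enough $O_p(1)$. The constraints, which are relaxed to hold up to $2^{-O(kB)}/\poly_p(nmk)$ error per the remark, remain satisfied for the same reason, since each constraint is a polynomial of degree $O_p(1)$ in variables bounded by $\Delta$. The point-mass distribution on the rounded solution is then a feasible pseudo-distribution with value at most $(1+\eps)\OPT$.

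The main obstacle is the quantitative control needed for discretization. The rounding error scales with $\Delta$, so we need polynomial bounds, with degree depending only on $p$, on the entries of $N_1^{\pm1}$ and $N_2^{\pm1}$. The matrices coming from Auerbach's lemma are not obviously bounded. I would bound them using the fact that $\norm{FN}_p^p = k$-type normalization forces $\norm{Fv_t}_p = 1$ and $\norm{v_t^*}_* = 1$. Together with the bounded smallest singular value of $F$, guaranteed by the perturbation, this gives the entry bounds. Rounding $N$ and $N^{-1}$ separately breaks the exact identity $NN^{-1} = I$, but only by a fineness-sized amount, which is absorbed by the relaxed constraints.
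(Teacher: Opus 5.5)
Your outline (perturb, identifiability, Auerbach balancing, H\"older for the tensor constraints, then discretize and guess $s$) matches the paper's. However, two steps lean on the step-3 perturbation in ways that do not hold as written, and both are exactly where the paper's randomization is set up differently.

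\textbf{First gap: conditioning of the optimal factor.} You cite \cref{lem:high-rank-ident} for entry bounds on $M^{\pm1}$. That conclusion assumes $(U^\top U)^{-1/2}$ has bounded entries for the factor of the optimum. Nothing in your argument gives a quantitative lower bound on $\sigma_k(B^\star)$, since the exact $\ell_p$ optimum carries no such guarantee. So the hypothesis is unverified, and without it you have no bound on the entries of $U^\star M^{-1}, V^\star M^\top$, hence no $\Delta$. The paper fixes this by taking a bounded factorization $U_0,V_0$ of the optimum (\cref{lem:constant-factor-bounded-bit-complexity}) and adding independent noise of size $2^{-\Omega_p(B)}\eps\OPT/\poly_p(nmk)$. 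By \cref{lem:rounding-rank-k-disc-solutions} the result $U_1,V_1$ is still $(1+\eps)\OPT$-valued. It now also has $\sigma_{\min}(U_1)\geq 2^{-O_p(B)}\eps\OPT/\poly_p(nmk)$ with high probability, which is what bounds $M^{\pm1}$. So part of the $\eps\OPT$ slack is spent here, not only on grid rounding.

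\textbf{Second gap: independence of the noise from the residuals.} You need full column rank of $X\pm U$ and $Y\pm V$ to apply \cref{lem:pairwise-ell-p-balance}. Your plan for bounding $N_i^{\pm1}$ also needs a lower bound on their smallest singular values. You assert both follow ``generically from the perturbation.'' But your $U=U^\star M^{-1}$ and $V=V^\star M^\top$ depend on $X,Y$ through $M$. So $X\pm U$ is not a fixed matrix plus independent noise, and no anti-concentration argument applies. Moreover, since $XY^\top\approx B^\star$, the column spans of $X$ and $U$ are nearly aligned, which is exactly the regime where $X-U$ can be close to singular.

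The missing idea explains why step 3 adds a \emph{sum of two} independent uniforms. The paper applies identifiability to $X_2,Y_2$, obtained with the first noise piece, so that $M$, $U$, $V$ are fixed before the second piece is drawn. Then $X\pm U=(X_2\pm U)+H$ with $H$ fresh and independent, which yields full rank and a quantitative $\sigma_{\min}$ bound with high probability. Because all entries are bounded, this changes the identifiability bound only negligibly. With these two perturbation steps in place, the rest of your argument (H\"older for the mixed tensor constraints, rounding to $\Sigma$ under the relaxed equalities, guessing $s$) goes through as in the paper.
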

In the proof of the lemma above, we will need to show that solutions that are close in $\ell_\infty$ norm have similar objective value, and that it is possible to refactor good solutions $X_0,Y_0$ such that they have bounded magnitude entries. We defer the proofs of these facts to~\cref{sec:appendix}.
\begin{proof}
    Consider $X, Y$ produced by~\cref{algo:lra-ptas-rank-k}. Note that $X_0, Y_0$ were originally an $O(\OPT)$ value solution, and by~\cref{lem:constant-factor-bounded-bit-complexity} we have that the factorization $X_1, Y_1$ given by SVD have entries with magnitude at most $\poly_p(nmk) \cdot 2^{O_p(B)}$. Consider $X_2, Y_2$ generated by adding random matrices to $X_1, Y_1$ with entries which are independent and sampled from $U(-2^{-\Omega_p(B)} \cdot \frac 1 {\poly_p(nmk)} \cdot \OPT, 2^{-\Omega_p(B)} \cdot \frac 1 {\poly_p(nmk)} \cdot \OPT)$. Note that $X_2, Y_2$ still have entries of magnitude at most $\poly_p(mnk) \cdot 2^{O_p(B)}$. In addition, by~\cref{lem:rounding-rank-k-disc-solutions} we have that since each entry changes by at most $2^{-\Omega_p(B)} \cdot \frac 1 {\poly_p(nmk)} \cdot \OPT$ and $\norm{X_1}_\infty,\norm{Y_1}_\infty$ are bounded, we have that $X_2, Y_2$ is still a $O(\OPT)$ value solution to the LRA problem on $A$. Finally, since we added $2^{-\Omega_p(B)} \cdot \frac 1 {\poly_p(nmk)} \cdot \OPT$ noise, we have that with probability $0.9999$ that the singular values of $X_2$ are lower bounded in magnitude by $2^{-\Omega_p(B)} \cdot \frac 1 {\poly_p(nmk)} \cdot \OPT$ and thus $(X_2^\top X_2)^{-1/2}$ has entries bounded in magnitude by $2^{O_p(B)} \cdot \poly_p(nmk) \cdot \left(1/\OPT\right)^{O(1)}$.

    Now consider some rank $k$ matrix $G$ associated with an optimal solution to the LRA problem on $A$. By~\cref{lem:constant-factor-bounded-bit-complexity} there exists a factorization of $G = U_0 V_0^\top$ such that $U_0,V_0$ have entries of magnitude at most $\poly_p(nmk) \cdot 2^{O_p(B)}$. Furthermore, if we add $U(-2^{-\Omega_p(B)} \cdot \frac 1 {\poly_p(nmk)} \cdot \eps \cdot \OPT, 2^{-\Omega_p(B)} \cdot \frac 1 {\poly_p(nmk)} \cdot \eps \cdot \OPT)$ noise to each entry of $U_0,V_0$ we similarly get matrices $U_1, V_1$ which are still $(1+\eps)\cdot \OPT$ value solutions (by~\cref{lem:rounding-rank-k-disc-solutions}), have bounded entries, and with probability $0.9999$ the singular values of $U_1$ lower bounded by $2^{-O_p(B)} \cdot \frac 1 {\poly_p(nmk)} \cdot {\eps \cdot \OPT}$ (and thus the entries of $(U_1^\top U_1)^{-1/2}$ are bounded by $2^{O_p(B)} \cdot \poly_p(nmk) \cdot \left(\frac 1 {\eps \cdot \OPT}\right)^{O(1)}$).

    We now apply~\cref{lem:high-rank-ident} to the matrices $X_2, Y_2$ and $U_1, V_1$. Note that we have bounds on the entries of $X_2$, $U_1$, $(X_2^\top X_2)^{-1/2}$, $(U_1^\top U_1)^{-1/2}$ and thus there exists an invertible $M$ such that
    \begin{align*}
        &\norm{(X_2 + U_1M)(Y_2 - V_1\left(M^{-1}\right)^{\top})^\top}_p^p + \norm{(X_2 - U_1M)(Y_2 + V_1M^{-1})^\top}_p^p \\
        &\qquad\leq k^{O_p(1)} \cdot \norm{U_1V_1^\top - X_2Y_2^\top}_p^p \\
        &\qquad\leq k^{O_p(1)} \cdot O_p\left(\norm{U_1V_1^\top - A}_p^p + \norm{A - X_2Y_2^\top}_p^p\right)\\
        &\qquad\leq k^{O_p(1)} \cdot O_p(\OPT) \,,
    \end{align*}
    and the entries of $M$ and $M^{-1}$ are both bounded by $\poly_p(nmk) \cdot 2^{O_p(B)} \cdot \left(\eps \cdot \OPT\right)^{-\Theta_p(1)}$. Thus, if we take $U = U_1 M$ and $V = V_1 (M^{-1})^\top$ then both $U,V$ have entries bounded by $\poly_p(nmk) \cdot 2^{O_p(B)} \cdot \left( \eps \cdot \OPT\right)^{-\Theta_p(1)}$. Furthermore, $U,V$ have the same solution value as $U_1, V_1$ so they are $(1+\eps)\cdot \OPT$ valued solutions. 

    We now consider adding $U(-2^{-\Omega_p(B)} \cdot \frac 1 {\poly_p(nmk)} \cdot \OPT, 2^{-\Omega_p(B)} \cdot \frac 1 {\poly_p(nmk)} \cdot \OPT)$ independent noise to each entry of $X_2,Y_2$ to get $X,Y$. Note that the distribution of $X,Y$ is identical to $X,Y$ in step 3 of~\cref{algo:lra-ptas-rank-k}. Furthermore, note that since the noise was independent of $U,V$ we now have that all residuals $X-U$, $X+U$, $Y-V$, and $Y+V$ have full column rank. Thus, we can apply~\cref{lem:pairwise-ell-p-balance} to show that there exists invertible $N_1$ and $N_2$ such that
    \[ \norm{(X+U)N_1^\top}_p^p = \norm{(Y-V)N_1^{-1}}_p^p \leq k^{O_p(1)} \cdot \norm{(X+U)(Y-V)^\top}_p^{p/2} \leq k^{O_p(1)} \cdot O_p(\sqrt{\OPT})\,,\]
    and
    \[ \norm{(X-U)N_2^\top}_p^p = \norm{(Y+V)N_2^{-1}}_p^p \leq k^{O_p(1)} \cdot \norm{(X-U)(Y+V)^\top}_p^{p/2} \leq k^{O_p(1)} \cdot O_p(\sqrt{\OPT})\,,\]
    We now show that these $N_1$, $N_1^{-1}$, $N_2$, and $N_2^{-1}$ must have bounded entries. Note that since $\ell_p$ and $\ell_2$ norms are equivalent up to dimension-dependent factors, we have that $\norm{(X+U)N_1^\top}_F^2 \leq \poly(nk) \cdot \norm{(X+U)N_1^\top}_p^2 \leq  (nk)^{O_p(1)} \cdot O_p(\OPT^{1/p})$. Equivalently, we have that 
    \[ \tr\left(N_1 (X+U)^\top (X+U) N_1^\top\right) \leq (nk)^{O_p(1)} \cdot O_p(\OPT^{1/p})\,.\]
    Note that with high probability we have that the smallest magnitude singular value of $X+U$ is at least $2^{-O_p(B)} \cdot \frac 1 {\poly_p(nmk)} \cdot \OPT$ since we added noise to $X$ and thus this implies that $N_1$ must have singular values (and thus entries) bounded by $2^{O_p(B)} \cdot \poly_p(nmk) \cdot \left(1/\OPT\right)^{O(1)}$. Similarly, we can also bound $N_1^{-1}$, $N_2$, and $N_2^{-1}$ via the same argument. 

    To summarize, we have shown with high probability (via a union bound over the singular value bounds) that there exists $U$, $V$, $N_1$, and $N_2$ such that
    \begin{enumerate}
        \item $ \norm{(X+U)N_1^\top}_p^p, \norm{(Y-V)N_1^{-1}}_p^p, \norm{(X-U)N_2^\top}_p^p, \norm{(Y+V)N_2^{-1}}_p^p \leq k^{O_p(1)} \cdot O_p(\sqrt{\OPT})$.
        \item $U$ and $V$ have entries with magnitude bounded by $\poly_p(nmk) \cdot 2^{O_p(B)} \cdot \left( \frac 1 {\eps \cdot \OPT}\right)^{\Theta_p(1)}$.
        \item $N_1$, $N_1^{-1}$, $N_2$, $N_2^{-1}$ have entries with magnitude bounded by $2^{O_p(B)} \cdot \poly_p(nmk) \cdot \left(1/\OPT\right)^{O(1)}$.
    \end{enumerate}
    We now argue that we can round the entries of $U$, $V$, $N_1$, $N_1^{-1}$, $N_2$, $N_2^{-1}$ to the closest entry in $\Sigma$ while only losing constant factors in the inequalities in the first bullet above and preserving the solution value of $U,V$ up to an $\eps\cdot \OPT$ additive factor. Let these rounded values be $\widetilde{U}$, $\widetilde{V}$, $\widetilde{N}_1$, $\widetilde{N}_1^{-1}$, $\widetilde{N}_2$, $\widetilde{N}_2^{-1}$. We will claim that these values are feasible for the SDP constraints for some value of $s$. Note that the entries of these matrices are within the interval covered by $\Sigma$, so the error in each entry compared to the original matrices is at most the grid fineness, or $2^{-\Omega_p(B)} \cdot \frac 1 {\poly_p(nmk)} \cdot \left( \eps \cdot \OPT \right)^{\Theta_p(1)}$. Note that by~\cref{lem:rounding-rank-k-disc-solutions} we immediately have that $\widetilde{U}$ and $\widetilde{V}$ are still $(1+\eps)\cdot\OPT$ valued solutions.

    We now show that our rounded solutions $\widetilde{U}$, $\widetilde{V}$, $\widetilde{N}_1$, $\widetilde{N}_2$ still satisfy
    \[ \norm{(X+\widetilde{U})\widetilde{N}_1^\top}_p^p, \norm{(Y-\widetilde{V})\widetilde{N}_1^{-1}}_p^p, \norm{(X-\widetilde{U})\widetilde{N}_2^\top}_p^p, \norm{(Y+\widetilde{V})\widetilde{N}_2^{-1}}_p^p \leq k^{O_p(1)} \cdot O_p(\sqrt{\OPT})\,.\]
    We illustrate the argument for showing that $\norm{(X+\widetilde{U})\widetilde{N}_1^\top}_p^p \leq k^{O_p(1)} \cdot O_p(\sqrt{\OPT})$ and the rest follow similarly. Note that $\left\vert\norm{(X+U)\widetilde{N}_1^\top}_p^p - \norm{(X+U)N_1^\top}_p^p\right\vert \leq O_p(\sqrt{\OPT})$ since $\norm{N_1 - \widetilde{N}_1}_\infty \leq 2^{-\Omega_p(B)} \cdot \frac 1 {\poly_p(nmk)} \cdot \left( \eps \cdot \OPT \right)^{\Theta_p(1)}$ and the entries of $X, U$ are bounded by $\poly_p(nmk) \cdot 2^{O_p(B)} \cdot \left( \frac 1 {\eps \cdot \OPT}\right)^{\Theta_p(1)}$. Similarly, $\left\vert\norm{(X+U)\widetilde{N}_1^\top}_p^p - \norm{(X+\widetilde{U})\widetilde{N}_1^\top}_p^p\right\vert \leq O_p(\sqrt{\OPT)}$. Thus, we have that $\norm{(X+\widetilde{U})\widetilde{N}_1^\top}_p^p \leq k^{O_p(1)} \cdot O_p(\sqrt{\OPT})$.

    Furthermore, this implies our SDP inequality constraints are satisfied. We have that 
    \begin{align*}
        &\sum_{i \in [n]} \norm{\left(\widetilde{N}_1^\top (X_i+\widetilde{U}_i)\right)^{\otimes b} \otimes \left( \widetilde{N}_2^\top (X_i-\widetilde{U}_i)\right)^{\otimes (p-a-b)}}_{\frac{p}{p-a}}^{\frac{p}{p-a}} \\
        &\qquad= \sum_{i \in [n]} \norm{\widetilde{N}_1^\top (X_i+\widetilde{U}_i)}_{\frac{p}{p-a}}^{\frac {pb}{p-a}} \cdot \norm{\widetilde{N}_2^\top (X_i-\widetilde{U}_i)}_{\frac{p}{p-a}}^{\frac{p(p-a-b)}{p-a}} \\
        &\qquad\leq \poly_p(k) \sum_{i \in [n]} \norm{\widetilde{N}_1^\top (X_i+\widetilde{U}_i)}_{p}^{\frac {pb}{p-a}} \cdot \norm{\widetilde{N}_2^\top (X_i-\widetilde{U}_i)}_{p}^{\frac{p(p-a-b)}{p-a}}\,.
    \end{align*}
    By H\"olders with $\frac{p-a}{b}$ and $\frac{p-a}{p-a-b}$ we have that
    \begin{align*}
        &\sum_{i \in [n]} \norm{\left(\widetilde{N}_1^\top (X_i+\widetilde{U}_i)\right)^{\otimes b} \otimes \left( \widetilde{N}_2^\top (X_i-\widetilde{U}_i)\right)^{\otimes (p-a-b)}}_{\frac{p}{p-a}}^{\frac{p}{p-a}} \\
        &\qquad\leq \poly_p(k) \sum_{i \in [n]} \norm{\widetilde{N}_1^\top (X_i+\widetilde{U}_i)}_{p}^{\frac {pb}{p-a}} \cdot \norm{\widetilde{N}_2^\top (X_i-\widetilde{U}_i)}_{p}^{\frac{p(p-a-b)}{p-a}} \\
        &\qquad\leq \poly_p(k) \cdot \left( \sum_{i \in [n]} \norm{\widetilde{N}_1^\top (X_i+\widetilde{U}_i)}_{p}^p\right)^{\frac {b}{p-a}} \cdot \left(\sum_{i \in [n]} \norm{\widetilde{N}_2^\top (X_i-\widetilde{U}_i)}_{p}^p \right)^{\frac {p-a-b}{p-a}} \\
        &\qquad= \poly_p(k) \cdot \norm{(X+\widetilde{U})\widetilde{N}_1^\top}_{p}^{\frac {pb}{p-a}} \cdot \norm{(X-\widetilde{U})\widetilde{N}_2^\top}_{p}^{\frac {p(p-a-b)}{p-a}} \\
        &\qquad\leq \poly_p(k) \sqrt{\OPT}\,.
    \end{align*}
    Similarly, we also have that
    \begin{align*}
        \sum_{j \in [m]} \left\Vert\left(\widetilde{N}_1^{-1} (Y_j - \widetilde{V}_j)\right)^{\otimes b} \otimes \left( \widetilde{N}_2^{-1} (Y_j + \widetilde{V}_j)\right)^{\otimes (p-a-b)}\right\Vert_{\frac{p}{p-a}}^{\frac{p}{p-a}} \leq k^{O_p(1)} \sqrt{\OPT}
    \end{align*}
    Now we have $\widetilde{U}$, $\widetilde{V}$, $\widetilde{N}_1$, $\widetilde{N}_1^{-1}$, $\widetilde{N}_2$, $\widetilde{N}_2^{-1}$ such that $\widetilde{U}, \widetilde{V}$ have objective value $(1+\eps)\OPT$, the matrices satisfy all the SDP constraints apart from the constraints involving $s$, and all entries are in the discrete alphabet. We now consider the constraints involving $s$. Note that for all $\alpha$ and $b \in [p]$ we have that both expressions
    \begin{center}
        $\sum_{i} \left(\left(\widetilde{N}_1^\top (X_i+\widetilde{U}_i)\right)^{\otimes b}\otimes \left( \widetilde{N}_2^\top (X_i-\widetilde{U}_i)\right)^{\otimes (p-b)}\right)_\alpha$ \\
        $\sum_j \left(\left(\widetilde{N}_1^{-1}(Y_j-\widetilde{V}_j)\right)^{\otimes b} \otimes \left(\widetilde{N}_2^{-1}(Y_j + \widetilde{V}_j)\right)^{\otimes (p-b)}\right)_{\alpha}$
    \end{center}
    have values of magnitude at most $\poly_p(nmk) \cdot 2^{O_p(B)} \cdot \left( \frac 1 {\eps \cdot \OPT}\right)^{\Theta_p(1)}$ via our bounds on the magnitude of the entries of these matrices. Thus, for some value $s$ we will correctly guess the value of each of these expressions (up to sufficiently small slack). Thus, for some $s$ there is a feasible SDP solution with objective value at most $(1+\eps)\cdot \OPT$ as desired.
\end{proof}

Furthermore, we note that the size of our alphabet depends on the value of $\OPT$. Thus, in order to bound the size of the SDP and the runtime, we require a lower bound on $\OPT$ when it is non-zero. In particular, we note that our lower bound from the rank $1$ analysis cleanly extends to the rank $k$ setting.
\begin{lemma}
\label{lem:lb-opt-rank-k}
    Let $p$ be an even integer and let $A$ be a $\mathbb{R}^{n \times m}$ matrix with entries with bit complexity $B$. Let $\OPT$ be the optimal objective value of the rank $k$ $\ell_p$ LRA value, and let $\OPT \neq 0$. Then we have that 
    \[ \OPT \geq (nm)^{-O_p(k)} \cdot 2^{-O(kB)}\,.\]
\end{lemma}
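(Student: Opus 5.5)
We would follow the rank-one argument of \cref{lem:lb-opt}: first reduce from $\ell_p$ to Frobenius norm, then find a small submatrix on which any rank-$k$ approximation must incur error bounded below in terms of a nonzero minor. Since $\OPT \neq 0$, $A$ has rank at least $k+1$. For any $B$ of rank at most $k$, $\|A-B\|_F^2 \leq (nm)^{1-2/p}\|A-B\|_p^2$ by H\"older, so $\|A-B\|_p^p \geq (nm)^{-O_p(1)}\|A-B\|_F^p$. It therefore suffices to lower bound $\sigma_{k+1}(A)^2 = \min_{\rank(B)\leq k}\|A-B\|_F^2$ by $(nm)^{-O(k)}2^{-O(kB)}$, and then raise it to the power $p/2$.

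Because $\rank(A) \geq k+1$, there is a $(k+1)\times(k+1)$ submatrix $M$ with $\det(M)\neq 0$. Its entries are integers, or more generally rationals of bit complexity $B$. After clearing a common denominator of size at most $2^{O(kB)}$, $\det(M)$ is a nonzero integer divided by that denominator, so $|\det M| \geq 2^{-O(kB)}$.

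The singular values of a submatrix interlace with those of $A$: $\sigma_{k+1}(A) \geq \sigma_{k+1}(M)$. Equivalently, restricting any rank-$k$ $B$ to the rows and columns of $M$ gives a rank-$k$ approximation to $M$ with no larger Frobenius error. Next, $\prod_{t=1}^{k+1}\sigma_t(M) = |\det M|$ and $\sigma_t(M)\leq \|M\|_F \leq (k+1)2^{O(B)}$. Hence
\[
\sigma_{k+1}(M) \geq \frac{|\det M|}{\|M\|_F^{k}} \geq 2^{-O(kB)}\,(k+1)^{-k}.
\]
Combining the steps gives $\OPT \geq (nm)^{-O_p(1)}\,\sigma_{k+1}(A)^p \geq (nm)^{-O_p(k)}2^{-O(kB)}$, using $k\leq \min(n,m)$ to absorb $(k+1)^{-kp}$ into $(nm)^{-O_p(k)}$.

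The only step needing care is the determinant lower bound under the bit-complexity convention. If entries are integers it is immediate. If they are rationals with $B$-bit numerators and denominators, the common denominator of the $(k+1)^2$ entries can be as large as $2^{O(k^2B)}$. To avoid this we would take the product of each row's denominators, which is $2^{O(kB)}$ per row, and scale each row by it. This keeps the total distortion at $2^{O(k^2 B)}$ in the worst case, which would still be harmless for the runtime but weaker than stated. We therefore expect the intended convention to be integer entries, as in \cref{thm:lra-intro}, where the bound is clean.
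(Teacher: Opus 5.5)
Your proposal is correct and is essentially the paper's argument. You reduce to the Frobenius norm, take a nonsingular $(k+1)\times(k+1)$ submatrix $M$, lower bound $\sigma_{k+1}(M)$ by $|\det M|/\|M\|_F^{k}$, and transfer the bound to $A$ since restricting any rank-$k$ approximation of $A$ to $M$ cannot increase the error. Your explicit H\"older step (raising the Frobenius bound to the power $p/2$) is more careful than the paper's ``it suffices to show the statement for $p=2$'' and shows the bound is really $2^{-O_p(kB)}$. The only slip is writing $\sigma_{k+1}(A)^2 = \min_{\operatorname{rank}(B)\le k}\|A-B\|_F^2$: the minimum is $\sum_{t>k}\sigma_t(A)^2 \ge \sigma_{k+1}(A)^2$, which is the direction you use anyway.
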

\begin{proof}[Proof of~\cref{lem:lb-opt-rank-k}]
    Let $\OPT \neq 0$. Note that $\OPT = 0$ for the rank $k$ $\ell_p$ LRA problem if and only if $\OPT = 0$ for the rank $k$ $\ell_2$ LRA problem; in both cases this occurs if and only $A$ is rank at most $k$. Thus, we can assume that the optimal value for the $\ell_2$ LRA problem, is also non-zero. Furthermore, since $\ell_2$ and $\ell_p$ norms are equivalent up to multiplicative factor of $(nm)^{O_p(1)}$, it suffices to show the statement for $p=2$. 

    Let $\hat{A}$ be the best rank-$k$ approximation to $A$, and we want to show that $\norm{A - \hat{A}}_F^2 \geq 2^{-O(kB)}$ when $A$ is not rank at most $k$. Note that since $A$ is not rank at most $k$, there must exist a $(k+1) \times (k+1)$ minor of $A$ with non-zero determinant. Let the associated submatrix be $M$. Note that $\det(M) \neq 0$, so we must have that $\abs{\det(M)} \geq 2^{-O(kB)}$, since the entries of $M$ have bit complexity at most $B$. 

    Consider $\norm{M - \hat{M}}_F^2$, or the error in the best rank $k$ approximation to $M$. Note that this is just $\sigma_{k+1}(M)^2$ and we have that $\prod_{i=1}^{k+1} \abs{\sigma_i(M)} = \abs{\det(M)} \geq 2^{-O(kB)}$. Furthermore, note that $\abs{\sigma_i(M)} \leq \norm{M}_F$, which is at most $nm 2^{O(B)}$. Thus, we can conclude that $\abs{\sigma_{k+1}(M)} \geq (nm)^{-k} 2^{-O(kB)}$. Finally, to conclude we note that any rank $k$ approximation to $A$ gives a rank $k$ approximation to $M$ with $\norm{M - \hat{M}}_F^2 \leq \norm{A - \hat{A}}_F^2$. Thus, we have that $\norm{A - \hat{A}}_F^2 \geq (nm)^{-2k}\cdot  2^{-O(kB)}$.
\end{proof}

\subsection{Rounding Error}

We now turn to analyzing the rounding error. In this section, we will prove the following lemma:
\begin{lemma}
\label{lem:high-rank-rounding}
    Let $p$ be a positive even integer and let $A \in \mathbb{R}^{n \times m}$. Let $\OPT_{\mathrm{SDP}}$ be the objective value of the SDP in~\cref{algo:lra-ptas-rank-k} for any iteration of the loop. Then the rounding step in~\cref{algo:lra-ptas-rank-k} produces a solution with expected value at most $\OPT_{\mathrm{SDP}} + \eps \cdot \OPT$.
\end{lemma}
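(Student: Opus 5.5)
We will mirror the proof of \cref{lem:ptas-rounding}, working throughout with the pseudodistribution after the variables $N_1,N_2,N_1^{-1},N_2^{-1}$ have been pinned. Once these are fixed, $N_1,N_2$ are constant invertible matrices. The residual vectors
\[
P_i = N_1^\top(X_i+U_i),\quad Q_i = N_2^\top(X_i-U_i),\quad R_j = N_1^{-1}(Y_j-V_j),\quad S_j = N_2^{-1}(Y_j+V_j)
\]
are then affine functions of $U_i$ and $V_j$ alone. Conditioning on $U_i$ or $V_j$ is therefore the same as conditioning on $(P_i,Q_i)$ or $(R_j,S_j)$.

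\textbf{Decomposition of the objective.} By the algebraic identity already used in \cref{lem:high-rank-ell-2-ident},
\[
2(XY^\top - UV^\top) = (X+U)(Y-V)^\top + (X-U)(Y+V)^\top .
\]
Hence, for each entry,
\[
X_iY_j^\top - U_iV_j^\top = \tfrac12\bigl(\langle P_i,R_j\rangle + \langle Q_i,S_j\rangle\bigr).
\]
We then expand
\[
(A_{ij}-\langle U_i,V_j\rangle)^p = \sum_{q=0}^p \binom pq (A_{ij}-\langle X_i,Y_j\rangle)^q\, 2^{-(p-q)}\bigl(\langle P_i,R_j\rangle+\langle Q_i,S_j\rangle\bigr)^{p-q}.
\]
Writing each inner product as a contraction of tensors, every term becomes a sum over multi-indices $\alpha$ of
\[
(A_{ij}-\langle X_i,Y_j\rangle)^q\,\bigl(P_i^{\otimes b}\otimes Q_i^{\otimes(p-q-b)}\bigr)_\alpha\,\bigl(R_j^{\otimes b}\otimes S_j^{\otimes(p-q-b)}\bigr)_\alpha .
\]
The term $q=p$ is constant and incurs no rounding error. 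For the term $q=0$, the sum over $(i,j)$ factors as
\[
\sum_\alpha \Bigl(\sum_i (P_i^{\otimes b}\otimes Q_i^{\otimes (p-b)})_\alpha\Bigr)\Bigl(\sum_j(R_j^{\otimes b}\otimes S_j^{\otimes(p-b)})_\alpha\Bigr).
\]
Both factors are pinned to constants $s_{u,\alpha,b}$ and $s_{v,\alpha,b}$ by $\calA(s)$, so joint and product pseudoexpectations agree and this term also incurs no rounding error. The number of pairs $(q,b)$ and multi-indices $\alpha$ is $k^{O_p(1)}$, which we absorb into the potential-alignment parameters.

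\textbf{Potential alignment for $1\le q<p$.} Fix $(q,b,\alpha)$. Apply \cref{lem:general-p-rounding-err} with
\[
y_i=(P_i^{\otimes b}\otimes Q_i^{\otimes(p-q-b)})_\alpha,\qquad z_j=(R_j^{\otimes b}\otimes S_j^{\otimes(p-q-b)})_\alpha,
\]
and coefficient matrix $A-XY^\top$ in place of $A$. Since $(A-XY^\top)_{ij}^q$ may have either sign, we first split it by sign; for the odd-$q$ case we use $|{\cdot}|^q$ in the Hölder step, as in the rank-one proof. The factor $\|A-XY^\top\|_p^q$ is at most $O(\OPT^{q/p})$. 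The moment quantities in \cref{lem:general-p-rounding-err} involve:
\begin{itemize}
\item $\sum_j \pE|z_j|^{p/(p-q)}$, which is bounded by the inequality constraints of $\calA(s)$ (taking $a=q$), giving $k^{O_p(1)}\sqrt{\OPT}$;
\item similarly $\sum_i\pE|y_i|^{p/(p-q)} \le k^{O_p(1)}\sqrt{\OPT}$, which also bounds the initial potential $\Phi(y)=\sum_i(\phi^{(p)}_{y_i})^{\frac{p-1}{p-q}}$ via \cref{fact:gen-initial-potential-bound} and convexity;
\item the cross term $\sum_{ij}\bigl(\pE|y_i|^{p/(p-q)}|z_j|^{p/(p-q)}+\pE|y_i|^{p/(p-q)}\,\pE|z_j|^{p/(p-q)}\bigr)$, which is $k^{O_p(1)}\OPT$.
\end{itemize}
For the cross term we need the product of the two row constraints to hold jointly. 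We add the corresponding product constraint, exactly as in $\calA(s,d)$ of the rank-one algorithm (which is satisfied by Hölder as in \cref{lem:addtl-sdp-constraints-satisfied}), or use Cauchy–Schwarz on pseudoexpectations. Substituting these bounds, a rounding error of at least $\eps\OPT/k^{O_p(1)}$ forces a decrease
\[
\E_{j\sim\psi}\bigl[\Phi - \E\,\Phi(\cdot\mid V_j)\bigr] \ge (\eps/k)^{O_p(1)}\sqrt{\OPT}.
\]
The powers of $\OPT$ balance, as in \cref{lem:additive-approx-rounding-error}, because $\sqrt{\OPT}$ plays the role of $\|A\|_p^{p/2}$. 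Thus each piece is $(\eps\OPT/k^{O_p(1)},\,(\eps/k)^{O_p(1)}\sqrt{\OPT})$-potential aligned, with initial potential $k^{O_p(1)}\sqrt{\OPT}$. \cref{lem:gcr_linear_combo} then yields a set $\calT$ of size $(k/\eps)^{O_p(1)}$ with total expected rounding error at most $\eps\OPT$.

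\textbf{Main obstacle.} The hardest step will be verifying the hypotheses of \cref{lem:general-p-rounding-err} in this vector-valued, pinned setting. Three points need checking:
\begin{itemize}
\item Conditioning on $V_j$ must dominate conditioning on the scalar $z_j$; this holds because $z_j$ is a deterministic function of $V_j$ once $N$ is pinned.
\item The equality constraints must survive the initial pinning of $N$; the quantities $s$ are still constants after pinning, so this should hold.
\item The odd-$q$ sign issue in the coefficients must be handled by the $|{\cdot}|^q$ Hölder step noted above.
\end{itemize}
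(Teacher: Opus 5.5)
Your proposal is correct and follows essentially the same route as the paper: pin $N_1,N_2$ and their inverses, and expand via $2(XY^\top-UV^\top)=(X+U)(Y-V)^\top+(X-U)(Y+V)^\top$ into tensor contractions. You then dispose of the constant term and the degree-$p$ residual term using the guessed values $s$, and show each remaining term is potential aligned using the $\ell_{p/(p-q)}$ constraints of $\calA(s)$ before combining everything with \cref{lem:gcr_linear_combo}. The only difference is organizational: you apply \cref{lem:general-p-rounding-err} separately to each coordinate $\alpha$ with its own potential, whereas the paper packages this coordinatewise application into \cref{lem:lra-vec-pinning} with a single potential summed over coordinates; both lose only $k^{O_p(1)}$ factors.
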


We will defer the proof to the end of this subsection and first prove some useful helper lemmas. To motivate the lemmas below, consider the following decomposition of the objective function:
\begin{align*}
    \sum_{ij} \left(A_{ij} - \langle U_i, V_j \rangle \right)^p &= \sum_{ij} \left(A_{ij} \pm X_i^\top Y_j - \langle U_i, V_j \rangle\right)^p \\
    &= \sum_{q=0}^p \binom{p}{q} \sum_{ij} \left(A_{ij} - X_i^\top Y_j\right)^{p-q} \left(\langle X_i, Y_j\rangle - \langle U_i, V_j \rangle\right)^q\,.
\end{align*}
We now note that 
\[ \langle X_i, Y_j\rangle - \langle U_i, V_j\rangle = \frac 1 2 \left[(X_i + U_i)^\top (Y_j - V_j) + (X_i - U_i)^\top (Y_j + V_j)\right]\,, \]
and therefore for any $q \in [p]$  we have that
\begin{align*}
    &\sum_{ij} \left(A_{ij} - X_i^\top Y_j\right)^{p-q} \left(\langle X_i, Y_j \rangle - \langle U_i, V_j \rangle\right)^q \\
    &\qquad= O(1) \cdot \sum_{a = 0}^q \sum_{ij} \left(A_{ij} - X_i^\top Y_j\right)^{p-q} \left((X_i + U_i)^\top (Y_j - V_j)\right)^a \left( (X_i - U_i)^\top (Y_j + V_j) \right)^{q-a} \\
    &\qquad= O(1)\cdot \sum_{a = 0}^q \sum_{ij} \left(A_{ij} - X_i^\top Y_j\right)^{p-q} \left\langle (X_i + U_i)^{\otimes a} \otimes (X_i - U_i)^{\otimes (q-a)}, (Y_j - V_j)^{\otimes a} \otimes (Y_j + V_j)^{\otimes (q-a)}\right\rangle\,.
\end{align*}

We will analyze the rounding error separately on each part of the objective function given by a similar decomposition. However, in order to ensure that the resulting variables are bounded by $\OPT$ factors, we will need to slightly modify the decomposition above.

\paragraph{Potential Alignment Lemma.} For all but the first part of the objective (degree $p$ in our variables $U_i,V_i$), we will utilize the following potential alignment lemma:
\begin{lemma}
\label{lem:lra-vec-pinning}
    Let $p$ be an even integer, $1 \leq q < p$ be an integer, $Z_i, W_j \in \mathbb{R}^s$ be vectors, and $B \in \mathbb{R}^{n \times m}$ be a matrix of coefficients. Let
    \[ \err \vcentcolon= \left\vert \sum_{i,j} B_{ij}^q \E_{\mu_{Z,W}} \langle Z_i, W_j \rangle - \sum_{ij} B_{ij}^q \E_{\mu_Z \otimes \mu_W} \langle Z_i, W_j \rangle \right\vert\,,\]
    where $\mu_{Z} \otimes \mu_{W}$ denotes the distribution where $Z_i, W_j$ are drawn independently from each other but with correlations within their respective coordinates. Then we have that any pseudodistribution satisfying $\norm{Z}_{\frac p {p-q}}^{\frac p {p-q}} \leq \beta_Z$ and $\norm{W}_{\frac p {p-q}}^{\frac p {p-q}} \leq \beta_W$ is $\left(\gamma, \frac{\gamma^{\frac{2(p-1)^2}{p-q}}}{s^{O_p(1)} \cdot \norm{B}_p^{\frac{2q(p-1)^2}{p-q}} \cdot \beta_Z^{\frac{(p-2)(2p-1)}{p}} \cdot \beta_W^{\frac{2 (p-1)^2}{p}}}\right)$ potential aligned for the potential
    \[ \overline{\Phi}(Z) = \sum_{i=1}^n \sum_{b=1}^s \left(\phi_{(Z_i)_b}^{(p)}\right)^{\frac{p-1}{p-q}}\,.\]
\end{lemma}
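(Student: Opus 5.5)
The plan is to reduce to the scalar rounding-error bound \cref{lem:general-p-rounding-err}, applied coordinatewise in the $s$ coordinates of $Z_i,W_j$. First expand $\langle Z_i, W_j\rangle = \sum_{b=1}^s (Z_i)_b (W_j)_b$. By the triangle inequality,
\[ \err \leq \sum_{b,b'=1}^s \Big\vert \sum_{i,j} B_{ij}^q \big(\E (Z_i)_b (W_j)_{b'}\,\mathbb{1}[b=b'] - \E (Z_i)_b \E (W_j)_{b'} \mathbb{1}[b=b']\big)\Big\vert , \]
so only the $s$ diagonal pairs $b=b'$ contribute. If $\err\ge\gamma$, then some coordinate $b$ carries rounding error at least $\gamma/s$.

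For that coordinate, I will apply \cref{lem:general-p-rounding-err} with the scalar families $y_i=(Z_i)_b$ and $z_j=(W_j)_b$ and coefficient matrix $B$ in place of $A$. This bounds $\gamma/s$ by a product of four factors. The first is $(\sum_j \pE|z_j|^{p/(p-q)})^{\frac{p-q}{2(p-1)^2}}$, which is at most $\beta_W^{\frac{p-q}{2(p-1)^2}}$. The second is the pairwise moment term $\sum_{ij}\pE|y_i|^{\frac p{p-q}}|z_j|^{\frac p{p-q}} + \pE|y_i|^{\frac p{p-q}}\pE|z_j|^{\frac p{p-q}}$ raised to $\frac{(p-q)(p-2)(2p-1)}{2p(p-1)^2}$. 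The third is $\norm{B}_p^q$. The last is the potential drop, raised to $\frac{p-q}{2(p-1)^2}$.

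Bounding the pairwise moment term is the main obstacle. It factors as $\pE[\norm{Z}^{p/(p-q)}\norm{W}^{p/(p-q)}]$ plus a product of expectations, and the hypotheses only control the two norms separately. My first attempt will be to read the hypotheses as SoS constraints (or use the product/equality-type constraints, as in the rank-one analysis) so that, via \cref{fact:pseudo-expectation-holder}, the term is at most $O(\beta_Z\beta_W)$. Both the constraint and this bound must survive conditioning in expectation over $\{\mu\}$; this holds because $\E_{\{\mu\}}\pE=\pE_0$ for the linear terms, and a Cauchy–Schwarz step handles the product term.

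Substituting these bounds and raising both sides to the power $\frac{2(p-1)^2}{p-q}$ isolates the drop in $\sum_i(\phi^{(p)}_{(Z_i)_b})^{\frac{p-1}{p-q}}$ under conditioning on $z_j$ with $j\sim\psi$. The drop is lower bounded by
\[ \gamma^{\frac{2(p-1)^2}{p-q}} \Big/ \Big( s^{O_p(1)}\,\norm{B}_p^{\frac{2q(p-1)^2}{p-q}}\,\beta_Z^{(\cdot)}\beta_W^{(\cdot)} \Big). \]
A bookkeeping step then checks that the exponents are dominated by the stated ones, absorbing any mismatch into $s^{O_p(1)}$ and crude bounds. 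Conditioning on the whole vector $W_j$ only increases the drop, since $(W_j)_b$ is a function of $W_j$. The drop in this coordinate $b$ is at most the drop in $\overline{\Phi}(Z)$, because every summand is nonincreasing in expectation under conditioning. Finally, averaging over $j\sim\psi$ yields a single index $j$ achieving the required decrease, which is exactly potential alignment in the sense of \cref{def:potential-aligned}.
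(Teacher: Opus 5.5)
Your route is the paper's: you apply \cref{lem:general-p-rounding-err} coordinatewise with $y_i=(Z_i)_b$ and $z_j=(W_j)_b$, bound the two moment factors by $\beta_W$ and $2\beta_Z\beta_W$ using the norm constraints (which persist under conditioning), and invert. The only difference is that you pigeonhole onto a single coordinate carrying error at least $\gamma/s$, whereas the paper H\"older-sums the $s$ coordinatewise bounds. Both cost $s^{O_p(1)}$. Your exponents match the statement exactly, since $1+\frac{(p-2)(2p-1)}{p}=\frac{2(p-1)^2}{p}$, so there is no mismatch to absorb.

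The step that does not hold is your last one.
\begin{itemize}
\item Let $c=\frac{p-1}{p-q}$. When $c>1$ (i.e.\ $q\ge 2$), the summands $(\phi^{(p)}_{(Z_i)_{b'}})^{c}$ are \emph{not} nonincreasing in expectation under conditioning.
\item For the same reason, conditioning on all of $W_j$ instead of $(W_j)_b$ can shrink the drop.
\end{itemize}
For instance, let $z$ be uniform on $\{0,1\}$, let $y$ be uniform on $\{\pm1\}$ given $z=0$, and let $y=0$ given $z=1$. Then $\Cov(y,z)=0$ and $\E_z\phi^{(p)}_{y\mid z}=\phi^{(p)}_y=\frac{p-1}{2p}$. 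However, $\E_z(\phi^{(p)}_{y\mid z})^{c}=\frac12(\frac{p-1}{p})^{c}>(\frac{p-1}{2p})^{c}=(\phi^{(p)}_y)^{c}$. So the drop you find in coordinate $b$ can be cancelled by increases in other coordinates $b'$, or by the extra information in $W_j$. It therefore does not lower-bound the drop of $\overline{\Phi}$.

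This is not a departure from the paper, which relies on the same property in three places:
\begin{itemize}
\item It is what \cref{fact:powers-of-breg-potential-bound} asserts. The Jensen step in that proof runs in the wrong direction, and the example above refutes the fact with $n=1$.
\item It is the justification in the footnotes that pass from conditioning on a function of $v_j$ (or $V_j$) to conditioning on the variable itself.
\item The paper's own proof of this lemma writes a single $\psi$ for all $b$. But \cref{lem:general-p-rounding-err} yields a $b$-dependent $\psi_b$, so combining coordinates there needs the same monotonicity.
\end{itemize}
Your argument is thus as complete as the paper's. For $q\ge 2$, however, neither is rigorous without a potential that cannot increase in expectation under conditioning. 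The potential $\sum_i(\phi^{(p)}_{y_i})^{c}$ with $c>1$ does not have that property.
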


\begin{remark}
    You can think of $Z_i$ and $W_j$ as the vectorized version of $(X_i + U_i)^{\otimes a} \otimes (X_i - U_i)^{\otimes (q-a)}$ and $(Y_j - V_j)^{\otimes a} \otimes (Y_j + V_j)^{\otimes (p-a)}$. These only depend on a single $U_i$ or $V_j$ (out of the indeterminates), and thus the error of independently sampling each vector with correlations within the coordinates does match the error of our rounding algorithm. In the final analysis, we will need to slightly modify this to ensure that we have bounds on $\beta_Z$ and $\beta_W$ which suffice to achieve $\eps \cdot \OPT$ rounding error.
\end{remark}

\begin{proof}
    The proof proceeds via an application of~\cref{lem:general-p-rounding-err}. Note that applying the lemma coordinate-wise and summing over coordinates, we have that for 
    \[ \Phi(Z_b) = \sum_i \left(\phi_{(Z_i)_b}^{(p)}\right)^{\frac{p-1}{p-q}}\,,\]
    there exists a distribution $\psi$ such that the error satisfies
    \begin{align*}
        \err &\leq \sum_{b = 1}^s \left(\E_{\{\mu\}} \left[ \sum_j \pE \left[\vert \left(W_j\right)_b\vert^{\frac p {p-q}}\right] \right]\right)^{\frac{p-q}{2(p-1)^2}} \\
        &\qquad\qquad\left( \E_{\{\mu\}} \left[ \sum_{ij} \pE \left[\vert (Z_i)_b\vert^{\frac p {p-q}} \vert (W_j)_b\vert^{\frac p {p-q}}\right] + \pE \left[\vert (Z_i)_b\vert^{\frac p {p-q}}\right] \pE \left[\vert (W_j)_b\vert^{\frac p {p-q}}\right] \right] \right)^{\frac{(p-q)(p-2)(2p-1)}{2p(p-1)^2}} \\
        &\qquad\qquad\norm{B}_p^q \left( \E_{\{\mu\}} \left[\E_{j \sim \psi} \Phi(Z_{b}) - \E_{W_j} \Phi(Z_b \vert W_j) \right]\right)^{\frac{p-q}{2(p-1)^2}}\,.
    \end{align*}
    By applying H\"olders with $\frac{2 (p-1)^2 p}{(2 - 3 p) q + p^2 (2 q-1)}$, $\frac{2(p-1)^2}{p-q}$, $\frac{2p(p-1)^2}{(p-q)(p-2)(2p-1)}$, and $\frac{2(p-1)^2}{p-q}$ we have that\footnote{Note that the first factor is at least $1$ when $q \geq 1$.} 
    \begin{align*}
        \err &\leq s^{\frac{(2 - 3 p) q + p^2 (2 q-1)}{2 (p-1)^2 p}} \cdot  \left(\sum_{b = 1}^s \E_{\{\mu\}} \left[ \sum_j \pE \left[\vert \left(W_j\right)_b\vert^{\frac p {p-q}}\right] \right]\right)^{\frac{p-q}{2(p-1)^2}} \\
        &\qquad\qquad\left( \sum_{b = 1}^s \left[\E_{\{\mu\}} \left[ \sum_{ij} \pE \left[\vert (Z_i)_b\vert^{\frac p {p-q}} \vert (W_j)_b\vert^{\frac p {p-q}}\right] + \pE \left[\vert (Z_i)_b\vert^{\frac p {p-q}}\right] \pE \left[\vert (W_j)_b\vert^{\frac p {p-q}}\right] \right] \right]\right)^{\frac{(p-q)(p-2)(2p-1)}{2p(p-1)^2}} \\
        &\qquad\qquad\norm{B}_p^q \left( \sum_{b = 1}^s \E_{\{\mu\}} \left[\E_{j \sim \psi} \Phi(Z_{b}) - \E_{W_j} \Phi(Z_b \vert W_j) \right]\right)^{\frac{p-q}{2(p-1)^2}} \\
        &\leq s \cdot \left( \E_{\{\mu\}} \pE \norm{W}_{\frac p {p-q}}^{\frac p {p-q}} \right)^{\frac{p-q}{2(p-1)^2}} \left( \E_{\{\mu\}} \left(\pE \left[ \norm{Z}_{\frac p {p-q}}^{\frac p {p-q}} \norm{W}_{\frac p {p-q}}^{\frac p {p-q}} \right] + \pE \left[ \norm{Z}_{\frac p {p-q}}^{\frac p {p-q}} \right] \pE \left[ \norm{W}_{\frac p {p-q}}^{\frac p {p-q}} \right]\right)\right)^{\frac{(p-q)(p-2)(2p-1)}{2p(p-1)^2}} \\
        &\qquad \qquad \norm{B}_p^q \left( \E_{\{\mu\}} \left[\E_{j \sim \psi} \left( \sum_{b = 1}^s \Phi(Z_{b})\right) - \E_{W_j} \left( \sum_{b = 1}^s \Phi(Z_b \vert W_j)\right) \right]\right)^{\frac{p-q}{2(p-1)^2}}\,.
    \end{align*}
    Thus, whenever the error is at least $\gamma$ we have that 
    \begin{align*}
        &\E_{\{\mu\}} \left[\E_{j \sim \psi} \left( \sum_{b = 1}^s \Phi(Z_{b})\right) - \E_{W_j} \left( \sum_{b = 1}^s \Phi(Z_b \vert W_j)\right) \right]\geq \frac{\gamma^{\frac{2(p-1)^2}{p-q}}}{s^{O_p(q)}\cdot \norm{B}_p^{\frac{2q\cdot (p-1)^2}{p-q}} \cdot \left( \E_{\{\mu\}} \pE \norm{W}_{\frac p {p-q}}^{\frac p {p-q}} \right) \cdot \alpha^{\frac{(p-2)(2p-1)}{p}}}\,,
    \end{align*}
    where $\alpha = \E_{\{\mu\}} \left(\pE \left[ \norm{Z}_{\frac p {p-q}}^{\frac p {p-q}} \norm{W}_{\frac p {p-q}}^{\frac p {p-q}} \right] + \pE \left[ \norm{Z}_{\frac p {p-q}}^{\frac p {p-q}} \right] \pE \left[ \norm{W}_{\frac p {p-q}}^{\frac p {p-q}} \right]\right)$. Substituting the bounds on the norms of $Z,W$ from the pseudodistribution constraints yields the claim.
\end{proof}

\paragraph{Rounding Error Analysis.} We now return to analyzing the rounding error of our algorithm and a proof of~\cref{lem:high-rank-rounding}.

\begin{proof}[Proof of~\cref{lem:high-rank-rounding}]
    Note that we can decompose the objective function as follows:
    \[ \norm{A - UV^\top}_p^p = \norm{A - XY^\top + XY^\top - UV^\top}_p^p\,.\]
    Since for any invertible matrices $N_1, N_2$ we have that
    \begin{align*}
        XY^\top - UV^\top &= \frac 1 2 \left[(X+U)(Y-V)^\top + (X-U)(Y+V)^\top\right] \\
        &= \frac 1 2 \left[(X+U)N_1 N_1^{-1}(Y-V)^\top + (X-U)N_2 N_2^{-1} (Y+V)^\top\right]\,,
    \end{align*}
    we have that for the sampled indeterminates $N_1, N_2$ the objective function is equal to
    \begin{align*}
        \norm{A - UV^\top}_p^p &= \left\Vert A - XY^\top + (X+U)N_1 N_1^{-1}(Y-V)^\top + (X-U)N_2 N_2^{-1} (Y+V)^\top\right\Vert_p^p \\
        &= \sum_{a=0}^p \sum_{b=0}^{p-a} C_{a,b,p} \sum_{i,j} (A_{ij} - X_i^\top Y_j)^a \left\langle \left(Z_{a,b}\right)_i, \left(W_{a,b}\right)_j\right\rangle
    \end{align*}
    for some constants $C_{a,b,p}$ where 
    \begin{align*}
        \left(Z_{a,b}\right)_i &= \left(N_1^\top (X_i+U_i)\right)^{\otimes b} \otimes \left( N_2^\top (X_i-U_i)\right)^{\otimes (p-a-b)} \\
        \left(W_{a,b}\right)_j &= \left(N_1^{-1}(Y_j-V_j)\right)^{\otimes b} \otimes \left(N_2^{-1}(Y_j + V_j)\right)^{\otimes (p-a-b)}\,.
    \end{align*}
    Note that once we condition on the values of $N_1, N_2$ we have that the distribution of $\left(Z_{a,b}\right)_i$ for a single fixed $i$ are preserved by our independent sampling procedure, and similarly for $\left(W_{a,b}\right)_j$ for a single fixed $j$.
    
    Now, observe that for the part of the objective where $a=0$, we have that the rounding algorithm incurs no error since 
    \begin{align*}
        &\sum_{b=0}^p \sum_{i,j} \left\langle \left(N_1^\top (X_i+U_i)\right)^{\otimes b} \otimes \left( N_2^\top (X_i-U_i)\right)^{\otimes (p-b)}, \left(N_1^{-1}(Y_j-V_j)\right)^{\otimes b} \otimes \left(N_2^{-1}(Y_j + V_j\right)^{\otimes (p-b)} \right\rangle\\
        &= \sum_{\alpha} \sum_{b=0}^p \left(\sum_{i} \left(\left(N_1^\top (X_i+U_i)\right)^{\otimes b}\otimes \left( N_2^\top (X_i-U_i)\right)^{\otimes (p-b)}\right)_\alpha\right) \\
        &\qquad\qquad \left( \sum_j \left(\left(N_1^{-1}(Y_j-V_j)\right)^{\otimes b} \otimes \left(N_2^{-1}(Y_j + V_j\right)^{\otimes (p-b)}\right)_{\alpha}\right)\,.
    \end{align*}
    Since we guessed all $k^{O_p(1)}$ quantities $\sum_{i} \left(\left(N_1^\top (X_i+U_i)\right)^{\otimes (p-b)}\otimes \left( N_2^\top (X_i-U_i)\right)^{\otimes (p-b)}\right)_\alpha$ and $\sum_j \left(\left(N_1^{-1}(Y_j-V_j)\right)^{\otimes b} \otimes \left(N_2^{-1}(Y_j + V_j\right)^{\otimes (p-b)}\right)_{\alpha}$ this expression has a fixed value regardless of whether the vectors $U_i, V_j$ are sampled jointly or independently. Thus, the $a=0$ term incurs no rounding error. We turn our attention to the remaining terms. Since $\norm{f}_{\frac p {p-a}} \leq \poly_p(k) \cdot \norm{f}_p$ for vectors $f$ of dimension at most $\poly_p(k)$, we have that 
    \begin{align*}
        \norm{Z_{a,b}}_{\frac{p}{p-a}}^{\frac{p}{p-a}} &= \sum_{i \in [n]} \norm{\left(N_1^\top (X_i+U_i)\right)^{\otimes b} \otimes \left( N_2^\top (X_i-U_i)\right)^{\otimes (p-a-b)}}_{\frac{p}{p-a}}^{\frac{p}{p-a}}\,,
    \end{align*}
    and 
    \begin{align*}
        \norm{W_{a,b}}_{\frac{p}{p-a}}^{\frac{p}{p-a}} &= \sum_{j \in [n]} \left\Vert\left(N_1^{-1} (Y_j - V_j)\right)^{\otimes b} \otimes \left( N_2^{-1} (Y_j + V_j)\right)^{\otimes (p-a-b)}\right\Vert_{\frac{p}{p-a}}^{\frac{p}{p-a}}\,.
    \end{align*}
    Thus, by our constraints, the pseudodistribution satisfies
    \[ \norm{Z_{a,b}}_{\frac{p}{p-a}}^{\frac{p}{p-a}}, \norm{W_{a,b}}_{\frac{p}{p-a}}^{\frac{p}{p-a}} \leq k^{O_p(1)} \cdot \sqrt{\OPT}\,,\]
    and thus by~\cref{lem:lra-vec-pinning}\footnote{Note that conditioning on $V_j$ also conditions on the value of $W_j$ respectively it is a deterministic function of $V_j$ once we fix $N_1$ and $N_2$. Thus~\cref{lem:lra-vec-pinning} applies when conditioning on $V_j$.} we have that the polynomial
    \[ \sum_{i,j} (A_{ij} - X_i^\top Y_j)^a \left\langle \left(Z_{a,b}\right)_i, \left(W_{a,b}\right)_j\right\rangle \]
    is $\left(\eps \cdot \OPT, \frac{\eps^{O_p(1)}}{k^{O_p(1)}}\cdot \sqrt{\OPT} \right)$-potential aligned for the potential
    \[ \Phi(Z_{a,b}) = \sum_{i,\alpha} \left(\phi_{(Z_{a,b})_{i,\alpha}}^{(p)}\right)^{\frac{p-1}{p-a}}\,.\]
    Furthermore, by~\cref{fact:gen-initial-potential-bound} the potential is bounded initially by
    \[ \norm{Z_{a,b}}_{\frac{p}{p-a}}^{\frac{p}{p-a}} \leq k^{O_p(1)} \sqrt{\OPT} \,.\]
    Thus, by~\cref{lem:gcr_linear_combo} we have that after $O_p(1) \cdot \frac{k^{O_p(1)}}{\eps^{O_p(1)}}$ rounds of conditioning the total rounding error is at most $\eps \cdot \OPT$ as desired.
\end{proof}

\subsection{Proof of~\cref{thm:rank-k-lra-main}}

We now return to the proof of the main theorem of this section and put together all the pieces from above.
\begin{proof}[Proof of~\cref{thm:rank-k-lra-main}]
    We first start by showing that with probability $0.99$ the algorithm outputs an $(1+O(\eps))\cdot \OPT$ valued solution. Note that by~\cref{lem:rank-k-feasibility} that for some iteration of the loop with probability at least $0.999$ the SDP has optimal value at most $(1+O(\eps))\cdot \OPT$. Furthermore, when we round this solution by~\cref{lem:high-rank-rounding} we have that the expected value of the solution is $(1+O(\eps))\cdot \OPT$. Thus, by Markov and a union bound (over the randomness in the rounding algorithm and the randomness in the SDP optimal value), with probability at least $0.99$ we produce a solution of value at most $(1+O(\eps))\cdot \OPT$. 

    We now consider the runtime of the algorithm. The runtime is dominated by the cost of the loop in step $6$. There are $\left(\frac{kmn\cdot 2^{B}}{\eps \cdot \OPT}\right)^{O_p(1)}$ rounds of the loop. By our lower bound on $\OPT$ via~\cref{lem:lb-opt-rank-k} we have that this is at most $\left(\eps^{-1} \cdot kmn \cdot 2^{O(B)}\right)^{O_p(k)}$ iterations. We now consider the cost of a single step of the loop. This is dominated by the cost of finding an optimal SoS pseudodistribution over $\left(\frac{kmn\cdot 2^{B}}{\eps \cdot \OPT}\right)^{O_p(1)} \leq \left(\eps^{-1} \cdot kmn \cdot 2^{O(B)}\right)^{O_p(k)}$ variables with degree $(k/\eps)^{O_p(1)}$. By~\cref{fact:sos-separation-efficient} this can be done in time $\left(mn \cdot 2^{O(B)}\right)^{(k/\eps)^{O_p(1)}}$. Thus, the overall algorithm runs in time $\left(mn \cdot 2^{O(B)}\right)^{(k/\eps)^{O_p(1)}}$.
\end{proof}

\section{Additive Approximation for $p\rightarrow q$ Norms}

In this section, we prove the following theorem:
\begin{theorem}
    \label{thm:p_q_norm_main}
    Let $q \geq 2$ be an even integer with H\"older dual $q^*$, and let $p \in [q^*,2]$ have H\"older dual $p^* = \tfrac p {p-1}$ which is an even integer.
    For every $\eps > 0$, there is a $(nm)^{O\left(\eps^{-4q}\right)}$ time algorithm which takes a matrix $A \in \R^{n \times m}$ and with probability $0.99$ returns $x \in \R^m$ such that
    \[
    \frac{\|Ax\|_q^q}{\|x\|_p^q} \leq \|A\|_{p \rightarrow q}^q \leq \frac{\|Ax\|_q^q}{\|x\|_p^q} + \eps \cdot \sum_{i \leq n} \|A_i\|_{p^*}^q
    \]
    where $A_1,\ldots,A_n$ are the rows of $A$.
\end{theorem}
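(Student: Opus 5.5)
We follow the template of the technical overview: write the $q$-th power of the norm as a bilinear problem, relax it with a discretized sum-of-squares program, and round by global correlation rounding. The analysis uses the asymmetric Bregman potential-decrease lemma, \cref{lem:gen-p-q-norm-potential-decrease}.

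\emph{Formulation.} By duality, $\|Ax\|_q = \max_{\|w\|_{q^*}\le 1}\langle w, Ax\rangle$. So $\|A\|_{p\to q} = \max\{ w^\top A x : \|w\|_{q^*}\le 1,\ \|x\|_p \le 1\}$.

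To get exponents matching \cref{lem:gen-p-q-norm-potential-decrease}, substitute $x_j = \sgn(s_j)|s_j|^{p^*-1}$ and $w_i = \sgn(t_i)|t_i|^{q-1}$. Since $p^*$ and $q$ are even these are odd polynomials, and the constraints become $\sum_j s_j^{p^*} \le 1$ and $\sum_i t_i^{q}\le 1$. The objective is then $\sum_{ij} A_{ij}\, t_i^{q-1} s_j^{p^*-1}$, a weighted sum of the products covered by \cref{lem:gen-p-q-norm-potential-decrease}, with the roles of its two even exponents played by $p^*\le q$. (The hypothesis $p\ge q^*$ is exactly $p^*\le q$.)

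Discretize $s,t$ to a grid fine enough that, by H\"older, the rounding error is at most $\eps\, (\sum_i \|A_i\|_{p^*}^q)^{1/q}$. Then solve the degree-$\eps^{-O(q)}$ SoS relaxation with the constraints $\sum_j s_j^{p^*} = 1$ and $\sum_i t_i^q = 1$. As in \cref{lem:additive-approx-rounding-error}, these equality constraints make the product moments factor after conditioning.

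\emph{Rounding and analysis.} Condition on a random set $\calT$ of $t$-coordinates, then round each $s_j,t_i$ independently (or by expectation, so the norm constraints stay satisfied by convexity). The rounding loss is $\sum_{ij}|A_{ij}|\,|\Cov(t_i^{q-1},s_j^{p^*-1})|$. Apply \cref{lem:gen-p-q-norm-potential-decrease} with $u=s_j$, $v=t_i$ (exponents $p^*$ and $q$) to lower bound the decrease of the potential $\Phi=\sum_j \phi^{(p^*)}_{s_j^{p^*-1}}$ when we condition on $t_i$.

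Then aggregate over $i,j$ by the H\"older argument of \cref{lem:general-p-rounding-err}. Pick $i$ from a distribution $\psi(i)\propto \|A_i\|_{p^*}^q$. For row $i$, Hölder over $j$ gives $\sum_j |A_{ij}|\,|\Cov_{ij}| \le \|A_i\|_{p^*}\,\bigl(\sum_j|\Cov_{ij}|^{p}\bigr)^{1/p}$, and the moment sums $\sum_j \pE s_j^{p^*}t_i^q$ and $\sum_j\pE s_j^{p^*}\pE t_i^q$ collapse to $\pE t_i^q$ via the equality constraint on $s$. Using Hölder again over $i$ with weights $\|A_i\|_{p^*}$, together with $\sum_i \pE t_i^q=1$, the loss exceeding $\eps(\sum_i\|A_i\|_{p^*}^q)^{1/q}$ forces an expected potential drop of $\eps^{O(q)}$.

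The initial potential is $O(\sum_j \pE s_j^{p^*})=O(1)$ by \cref{fact:gen-initial-potential-bound}. So \cref{lem:gcr_linear_combo} yields a $\calT$ of size $\eps^{-O(q)}$ with small loss. Enumerating all such $\calT$ gives running time $(nm)^{O(\eps^{-4q})}$; the algorithm outputs the best rounded $x$, and success probability $0.99$ follows from Markov. Finally, converting the additive bound on $\|A\|_{p\to q}$ to the stated bound on the $q$-th power uses $a^q-b^q\le q a^{q-1}(a-b)$ together with $\|A\|_{p\to q}^q\le \sum_i\|A_i\|_{p^*}^q$, after rescaling $\eps$.

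\emph{Main obstacle.} The delicate step is the exponent bookkeeping in the aggregation. The bound in \cref{lem:gen-p-q-norm-potential-decrease} involves $\E v^q$ and the mixed moments $\E u^{p}v^q$ raised to awkward powers, and these must be combined with the weights $|A_{ij}|$ so that the final error is exactly $\sum_i\|A_i\|_{p^*}^q$ and not some larger mixed norm. If a single Hölder chain does not close, the first thing to try is a per-row normalization: rescale so that each row $i$ carries mass $\|A_i\|_{p^*}^q$ in $\psi$, and bound each row's contribution separately before summing.
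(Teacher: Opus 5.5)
Your proposal is correct and follows the paper's proof essentially step for step: the same substitution $x_j=s_j^{p^*-1}$, $w_i=t_i^{q-1}$, conditioning on $t$-coordinates followed by expectation rounding, the potential $\sum_j\phi^{(p^*)}_{s_j^{p^*-1}}$ driven by \cref{lem:gen-p-q-norm-potential-decrease}, the distribution $\psi(i)\propto\|A_i\|_{p^*}^q$, and the Markov/renormalize/$q$-th power finish. Your row-wise H\"older chain also closes with no extra normalization, because after solving that lemma for $|\Cov(t_i^{q-1},s_j^{p^*-1})|$ the exponents on the potential drop and on $\alpha_{ij}=\pE[s_j^{p^*}t_i^q]+\pE[s_j^{p^*}]\,\pE[t_i^q]$ sum to $1/p$, while those on $\pE[t_i^q]$ and $\alpha_{ij}$ sum to $1/q^*$, giving a required drop of $(c\,\eps)^{2q(p^*-1)/p^*}$ exactly as in the paper. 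The one change you need is to impose $\sum_j s_j^{p^*}\le 1$ and $\sum_i t_i^q\le 1$, as the paper does, rather than equalities: a discretized near-optimizer need not satisfy exact equality, and the analysis only uses the upper bounds $\sum_j\alpha_{ij}\le 2\,\pE[t_i^q]$ and $\sum_i\pE[t_i^q]\le 1$. Unlike the LRA objective, there is no product term here that needs to factor.
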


\begin{mdframed}
  \begin{algorithm}[$p\rightarrow q$ Norm Additive Approximation]
    \label{algo:p-to-q}\mbox{}
    \begin{description}
    \item[Input:] Matrix $A \in \mathbb{R}^{n \times m}$ and target accuracy $0<\epsilon<1$.
    
    \item[Operations:]\mbox{}
    \begin{enumerate}
        \item Let $\Sigma$ be a $1/(\poly(nm/\eps))$ grid of $[-1,1]$.
        \item Let $\mu$ be a $O\left(\eps^{-4q} \right)$-degree sum-of-squares pseudo-distribution over $\left(\Sigma\right)^{2n+2m}$ such that $\pE_{\mu}$ optimizes 
                    \begin{align}
                        &\max_{\pE_{\mu}} \quad \pE_{\mu} \sum_{i\in[n],j\in[m]} A_{ij} v_i^{q-1} u_j^{p^*-1} \notag \\
                        &\text{s.t.\ } \pE_{\mu} \text{ satisfies} \notag \\
                        &\quad \norm{v}_q^q \leq 1, \norm{u}_{p^*}^{p^*} \leq 1 \\
                        &\quad y_i = v_i^{q-1}, x_j = u_j^{p^*-1}  
                    \end{align}        
        \item For each set $\calT \subseteq [n]$ of size $O(\eps^{-4q})$
        \begin{enumerate}
            \item Let $\hat{v}_\calT$ be a draw from the local distribution $\{v_{\calT}\}$. Let $\mu_\calT$ be the pseudodistribution obtained by conditioning on $\{v_i = \hat{v}_i\}_{i \in \calT}$.
            \item Let $\hat{x}$ be $\pE x$ in the conditioned pseudodistribution.
        \end{enumerate}
    \end{enumerate}
    \item[Output:] The vector $\frac {\hat{x}}{\norm{\hat{x}}_p}$ maximizing $\frac{\norm{A\hat{x}}_q^q}{\norm{\hat{x}}_p^q}$ out of all iterations of the loop.
    \end{description}
  \end{algorithm}
\end{mdframed}

\begin{remark}
    Although the above algorithm utilizes sum-of-squares pseudodistributions, we note that the Sherali-Adams linear programming hierarchy alongside global linear constraints would suffice. We use sum-of-squares pseudodistributions and SDPs for ease of exposition.
\end{remark}
\paragraph{Discretization.}
We note that it suffices to let $\Sigma$ be a $1/(\poly(nm/\eps))$-fineness grid of $[-1,1]$. With this choice of $\Sigma$ it holds that 
\[ \max_{\norm{x}_p \leq 1, \norm{y}_{q^*} \leq 1} \langle y, Ax\rangle \leq \max_{\norm{x}_p \leq 1, \norm{y}_{q^*} \leq 1, x_i \in \Sigma, y_j \in \Sigma} \langle y, Ax\rangle + \eps \cdot \left(\sum_{i \leq n} \|A_i\|_{p^*}^q\right)^{1/q}\,.\]
We defer the proof to~\cref{sec:appendix}.

\paragraph{Potential Alignment of the Objective.} We now turn our attention to analyzing the rounding error via the potential alignment framework. We first simply consider the objective cost achieved by the rounded solution without considering the norms of the outputted $\hat{x}, \hat{y}$. Let $p \leq 2$ have H\"older dual $p^*$, which is an even integer. Recall that we are trying to round the following expression under the constraint that $\norm{u}_{p*}, \norm{v}_q \leq 1$:
\[ \sum_{i \in [n], j \in [m]} A_{ij} v_i^{q-1} u_j^{p^*-1}\,.\]
We will show that global correlation rounding preserves the objective value up to an additive factor of $\eps \left(\sum_{i \leq n} \|A_i\|_{p^*}^q\right)^{1/q}$.

\begin{lemma}
\label{lem:p-to-q-potential-aligned}
    Let $p^*,q$ be even integers and let $\mu$ be any sum-of-squares pseudodistribution satisfying the constraints $\norm{u}_{p^*}^{p^*}, \norm{v}_q^q \leq 1$. Then if $p^* \leq q$ then the objective
    \[ \sum_{i \in [n], j \in [m]} A_{ij} v_i^{q-1} u_j^{p^*-1} \]
    is $\left[\eps \left( \sum_i \norm{A_i}_{p^*}^q \right)^{\frac 1 q}, \eps^{4q}\right]$-potential aligned for the potential function $\sum_j \phi_{x_j}^{(p^*)}$.
\end{lemma}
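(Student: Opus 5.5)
Our plan is to follow the template of \cref{lem:general-p-rounding-err}, but to use the asymmetric decrease inequality \cref{lem:gen-p-q-norm-potential-decrease} in place of \cref{lem:general-bregman-p-uncertainty-decrease}. The roles are as follows. The variables being tracked are $x_j = u_j^{p^*-1}$, whose potential is $\phi^{(p^*)}_{x_j}$. The variables being conditioned on are the $v_i$. So we apply \cref{lem:gen-p-q-norm-potential-decrease} with ``$p$'' $=p^*$, ``$q$'' $=q$, ``$u$'' $=u_j$ and ``$v$'' $=v_i$, which is allowed since $p^*\le q$.

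First we bound the rounding error by a weighted covariance sum. Suppose the rounding error, averaged over the random conditioned pseudodistributions $\{\mu\}$, is at least $\eps (\sum_i \|A_i\|_{p^*}^q)^{1/q}$. Then
\[
\eps \Big(\sum_i \|A_i\|_{p^*}^q\Big)^{1/q} \le \E_{\{\mu\}} \sum_{i,j} |A_{ij}|\,|\Cov(v_i^{q-1},u_j^{p^*-1})|.
\]
Write $\Delta(j|i)$ for the Bregman uncertainty decrease of $x_j$ when conditioning on $v_i$. \cref{lem:gen-p-q-norm-potential-decrease} gives
\[
|\Cov(v_i^{q-1},u_j^{p^*-1})| \le \Delta(j|i)^{\frac{p^*}{2q(p^*-1)}} (\pE v_i^q)^{a} (\pE u_j^{p^*}v_i^q+\pE u_j^{p^*}\pE v_i^q)^{b},
\]
with explicit exponents $a,b$.

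Next we aggregate with Hölder and the norm constraints. For each fixed $i$, we apply Hölder in $j$ with exponents $p^*$ and $p$ against $A_{ij}$; this produces $\|A_i\|_{p^*}$. We then apply Hölder in $i$ so that the row weights appear as $(\sum_i\|A_i\|_{p^*}^q)^{1/q}$. This is the natural choice, since the $v_i$ carry the $\ell_q$ constraint and $\pE\|v\|_q^q\le1$. The moment factors then collapse using $\sum_j \pE u_j^{p^*}\le 1$ and $\sum_i\pE v_i^q\le1$. For the cross term we also need $\sum_{i,j}\pE u_j^{p^*}v_i^q=\pE\|u\|_{p^*}^{p^*}\|v\|_q^q\le 1$, which follows by multiplying the two SoS constraints. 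We choose the distribution over $i$ proportional to $\pE v_i^q$ times a power of $\|A_i\|_{p^*}$, in the same way $\psi$ was chosen in \cref{lem:general-p-rounding-err}. The remaining quantity is then a power of $\E_{\{\mu\}}\E_{i\sim\psi}[\sum_j\phi_{x_j}-\E_{v_i}\sum_j\phi_{x_j|v_i}]$. We pull the sum over $j$ outside the fractional power using concavity, $\sum_j \Delta^{\theta}\le(\cdot)$, with care about the direction of the inequality. Dividing both sides by $\eps(\sum_i\|A_i\|_{p^*}^q)^{1/q}$ then shows that the expected potential drop is at least $\eps^{2q(p^*-1)/p^*}\ge\eps^{4q}$, where the last step uses $p^*\ge 2$. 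This is exactly potential alignment in the sense of \cref{def:potential-aligned}.

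The main obstacle is the exponent bookkeeping in the aggregation step. Two things have to come out right. First, the power of $\Delta$ must be small enough that Hölder over $(i,j)$ leaves the right-hand side homogeneous of degree exactly $1$ in $(\sum_i\|A_i\|_{p^*}^q)^{1/q}$. Second, the powers of the moment terms must sum to at most $1$, so that the bounds $\le1$ can absorb them. If the Hölder exponents do not line up, we would fall back to a cruder route. That route is to first bound $|A_{ij}|\le\|A_i\|_{p^*}$-weighted averages, and then apply Hölder three times separately to the $\Delta$-factor, the $u$-moment factor, and the $v$-moment factor. We would accept a worse but still $\eps^{O(q)}$ potential decrease; the statement's $\eps^{4q}$ leaves slack for this.
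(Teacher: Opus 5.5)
Your plan is essentially the paper's proof. You apply \cref{lem:gen-p-q-norm-potential-decrease} with $(u,v)=(u_j,v_i)$ to each term of $\sum_{i,j}|A_{ij}|\,|\Cov(v_i^{q-1},u_j^{p^*-1})|$, collapse the $j$-sum by H\"older against $\|A_i\|_{p^*}$, and collapse the $i$-sum by H\"older using $\sum_i\pE v_i^q\le 1$ and $\sum_{i,j}\bigl(\pE u_j^{p^*}v_i^q+\pE u_j^{p^*}\pE v_i^q\bigr)\le 2$. You then finish with Jensen over $i\sim\psi$. The exponents line up exactly: the $\Delta$- and cross-moment exponents sum to $1-1/p^*$, and the $v$- and cross-moment exponents sum to $1-1/q$, so no fallback is needed.

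A few minor points:
\begin{itemize}
\item The paper takes $\psi(i)\propto\|A_i\|_{p^*}^q$ and absorbs the $\pE v_i^q$ factor by a separate H\"older in $i$. Folding $\pE v_i^q$ into $\psi$ also works, but only with no additional power of $\|A_i\|_{p^*}$; otherwise you lose a polynomial factor in $n$.
\item The $j$-sum has to be collapsed by H\"older, not concavity, since for $\theta<1$ one has $\sum_j\Delta_j^{\theta}\ge(\sum_j\Delta_j)^{\theta}$. Concavity is needed only for the power over $i\sim\psi$.
\item The bound $\pE\|u\|_{p^*}^{p^*}\|v\|_q^q\le 1$ comes from multiplying the constraint $1-\|v\|_q^q\ge 0$ by the sum of squares $\|u\|_{p^*}^{p^*}$. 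Multiplying the two constraints together only yields a lower bound.
\end{itemize}
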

\begin{proof}
    We consider the case where $p^* \leq q$, the other case is analogous. Let 
    \[ \err = \sum_{i \in [n], j \in [m]} A_{ij} \E_{\mu_{u} \otimes \mu_v} v_i^{q-1} u_j^{p^*-1} - \sum_{i \in [n], j \in [m]} A_{ij} \E_{\mu_{u,v}} v_i^{q-1} u_j^{p^*-1} \geq \eps \left( \sum_i \norm{A_i}_{p^*}^q \right)^{\frac 1 q}\,.\]
    We now show that this implies a $\poly(1/\eps)$ decrease in the Bregman uncertainty potential function $\sum_j \phi_{u_j^{p^*-1}}^{(p^*)}$. Let 
    \[ \delta(j \vert i) \vcentcolon= \E_{u_j,v_i} \left( B_{p^*}( u_j^{p^*-1}, \E u_j^{p^*-1}) - B_{p^*} ( u_j^{p^*-1},  \E [u_j^{p^*-1} \vert v_i])  \right)\,,\]
    and let
    \[ \alpha_{ij} \vcentcolon= \E [u_j^{p^*} v_i^q] + \E u_j^{p^*} \E v_i^q \,.\]
    Note by~\cref{lem:gen-p-q-norm-potential-decrease} we have that
    \[ \vert \Cov (v_i^{q-1}, u_j^{p^*-1}) \vert \leq \delta(j \vert i)^{\frac{{p^*}}{2q({p^*}-1)}} \left( \E v_i^q \right)^{\frac{(q-{p^*})({p^*}-2)+{p^*}q}{2{p^*}q({p^*}-1)}} \cdot \alpha_{ij}^{\frac{2 {p^*}q \left({p^*}-2\right) - \left(\left(p^*\right)^2 - 2q\right)}{2{p^*}q({p^*}-1)}}\,.\]
    Thus, we have that
    \begin{align*}
        \err &= \sum_{i \in [n], j \in [m]} A_{ij} \Cov (v_i^{q-1}, u_j^{p^*-1}) \\
        &\leq \sum_{i \in [n], j \in [m]} \vert A_{ij}\vert \cdot \vert \Cov (v_i^{q-1}, u_j^{p^*-1}) \vert \\
        &\leq \sum_{i \in [n], j \in [m]} \vert A_{ij}\vert \cdot \delta(j \vert i)^{\frac{{p^*}}{2q({p^*}-1)}} \cdot \left( \E v_i^q \right)^{\frac{(q-{p^*})({p^*}-2)+{p^*}q}{2{p^*}q({p^*}-1)}} \cdot \alpha_{ij}^{\frac{2 {p^*}q \left({p^*}-2\right) - \left(\left(p^*\right)^2 - 2q\right)}{2{p^*}q({p^*}-1)}} \\
        &= \sum_{i \in [n]} \left( \E v_i^q \right)^{\frac{(q-{p^*})({p^*}-2)+{p^*}q}{2{p^*}q({p^*}-1)}} \sum_{j \in [m]}  \vert A_{ij}\vert \cdot \delta(j \vert i)^{\frac{{p^*}}{2q({p^*}-1)}} \cdot \alpha_{ij}^{\frac{2 {p^*}q \left({p^*}-2\right) - \left(\left(p^*\right)^2 - 2q\right)}{2{p^*}q({p^*}-1)}}\,.
    \end{align*}
    We now apply H\"olders with $\frac{2{p^*}q({p^*}-1)}{(q-{p^*})({p^*}-2)+{p^*}q}$ and $\frac{2 {p^*} q \left({p^*}-1\right)}{2 {p^*}q \left({p^*}-2\right) + \left(\left(p^*\right)^2 - 2{p^*} + 2q\right)}$ to the outer sum to get that
    \begin{align*}
        \err \leq &\left( \sum_{i \in [n]} \E v_i^q \right)^{\frac{(q-{p^*})({p^*}-2)+{p^*}q}{2{p^*}q({p^*}-1)}} \\
        &\quad\left(\sum_{i \in [n]} \left( \sum_{j \in [m]}  \vert A_{ij}\vert \cdot \delta(j \vert i)^{\frac{{p^*}}{2q({p^*}-1)}} \cdot \alpha_{ij}^{\frac{2 {p^*}q \left({p^*}-2\right) - \left(\left(p^*\right)^2 - 2q\right)}{2{p^*}q({p^*}-1)}} \right)^{\frac{2 {p^*} q ({p^*}-1)}{2 {p^*}q ({p^*}-2) + ({p^*}^2 - 2{p^*} + 2q)}} \right)^{\frac{2 {p^*}q \left({p^*}-2\right) + \left(\left(p^*\right)^2 - 2{p^*} + 2q\right)}{2 {p^*} q \left({p^*}-1\right)}}\,.
    \end{align*}
    Furthermore, since we enforce that $\norm{v}_q^q \leq 1$ we have that $\sum_i \E v_i^q \leq 1$ and thus
    \[ \err \leq \left(\sum_{i \in [n]} \left( \sum_{j \in [m]}  \vert A_{ij}\vert \cdot \delta(j \vert i)^{\frac{{p^*}}{2q({p^*}-1)}} \cdot \alpha_{ij}^{\frac{2 {p^*}q \left({p^*}-2\right) - \left(\left(p^*\right)^2 - 2q\right)}{2{p^*}q({p^*}-1)}} \right)^{\frac{2 {p^*} q \left({p^*}-1\right)}{2 {p^*}q \left({p^*}-2\right) + \left(\left(p^*\right)^2 - 2{p^*} + 2q\right)}} \right)^{\frac{2 {p^*}q \left({p^*}-2\right) + \left(\left(p^*\right)^2 - 2{p^*} + 2q\right)}{2 {p^*} q \left({p^*}-1\right)}}\,.\]
    We now apply H\"olders on the inner sum over $j$ with $p^*$, $\frac{2 q \left(p^*-1\right)}{p^*}$, and $\frac{2{p^*}q({p^*}-1)}{2 {p^*}q \left({p^*}-2\right) - \left(\left(p^*\right)^2 - 2q\right)}$. This gives us that
    \begin{align*}
        \err \leq \Biggl( \sum_{i \in [n]} &\left( \sum_{j \in [m]} A_{ij}^{p^*} \right)^{\frac{2 q \left({p^*}-1\right)}{2 {p^*}q \left({p^*}-2\right) + \left(\left(p^*\right)^2 - 2{p^*} + 2q\right)}} \left( \sum_{j \in [m]} \delta(j \vert i) \right)^{\frac{{\left(p^*\right)^2} }{2 {p^*}q \left({p^*}-2\right) + \left(\left(p^*\right)^2 - 2{p^*} + 2q\right)}} \\
        &\quad \left( \sum_{j \in [m]} \alpha_{ij}\right)^{\frac{2 p^* q (p^*-2) - ((p^*)^2 - 2q)}{2 p^* q (p^*-2) + ((p^*)^2 - 2 p^* + 2 q)}}\Biggr)^{\frac{2 {p^*}q \left({p^*}-2\right) + \left(\left(p^*\right)^2 - 2{p^*} + 2q\right)}{2 {p^*} q \left({p^*}-1\right)}}\,.
    \end{align*}
    We now apply H\"olders again on the outer sum with $\frac{2 p^* q (p^*-2) + ((p^*)^2 - 2 p^* + 2 q)}{2 p^* (p^*-1)}$ and $\frac{2 p^* q (p^*-2) + ((p^*)^2 - 2 p^* + 2 q)}{2 p^* q (p^*-2) - ((p^*)^2 - 2q)}$ to get that
    \begin{align*}
        \err \leq &\left( \sum_{i \in [n]} \left( \sum_{j \in [m]} A_{ij}^{p^*} \right)^{\frac q {p^*}} \left( \sum_{j \in [m]} \delta(j \vert i) \right)^{\frac {p^*}{2(p^*-1)}}\right)^{\frac 1 q} \left( \sum_{i \in [n], j \in [m]} \alpha_{ij} \right)^{\frac{2 p^* q (p^*-2) - ((p^*)^2 - 2q)}{2 {p^*} q \left({p^*}-1\right)}}\,.
    \end{align*}
    Note that via our constraints, we have that
    \[ \sum_{i \in [n], j \in [m]} \alpha_{ij} = \sum_{i \in [n], j \in [m]} \E [u_j^{p^*} v_i^q] + \E u_j^{p^*} \E v_i^q = \E \left[ \norm{u}_{p^*}^{p^*} \norm{v}_q^q \right] + \E \norm{u}_{p^*}^{p^*} \E \norm{v}_q^q \leq 2 \,.\]
    Therefore,
    \begin{align*}
        \err \leq  O(1) \cdot \left( \sum_{i \in [n]} \left( \sum_{j \in [m]} A_{ij}^{p^*} \right)^{\frac q {p^*}} \left( \sum_{j \in [m]} \delta(j \vert i) \right)^{\frac {p^*}{2(p^*-1)}}\right)^{\frac 1 q}\,.
    \end{align*}
    Let $A_i$ be the $i$th row of $A$ and let the distribution $\psi$ be defined such that $\psi(i) \propto \norm{A_i}_{p^*}^q$. Then we have that 
    \[\left( \sum_{i \in [n]} \left( \sum_{j \in [m]} A_{ij}^{p^*} \right)^{\frac q {p^*}} \left( \sum_{j \in [m]} \delta(j \vert i) \right)^{\frac {p^*}{2(p^*-1)}}\right)^{\frac 1 q} = \left(\sum_i \norm{A_i}_{p^*}^q\right)^{\frac 1 q} \left(\E_{i \sim \psi} \left( \sum_{j \in [m]} \delta(j \vert i) \right)^{\frac {p^*}{2(p^*-1)}}\right)^{\frac 1 q}\,.\]
    Note that $\frac {p^*}{2(p^*-1)} \leq 1$ and thus by convexity we have that
    \[ \E_{i \sim \psi} \left( \sum_{j \in [m]} \delta(j \vert i) \right)^{\frac {p^*}{2(p^*-1)}} \leq \left(\E_{i \sim \psi} \left( \sum_{j \in [m]} \delta(j \vert i) \right)\right)^{\frac {p^*}{2(p^*-1)}}\,.\]
    When $\err \geq \eps \left( \sum_i \norm{A_i}_{p^*}^q \right)^{\frac 1 q}$ we therefore have that
    \[ \eps \left( \sum_i \norm{A_i}_{p^*}^q \right)^{\frac 1 q} \leq \left(\sum_i \norm{A_i}_{p^*}^q\right)^{\frac 1 q} \left( \E_{i \sim \psi} \sum_{j \in [m]} \delta(j \vert i) \right)^{\frac {p^*}{2q(p^*-1)}}\,,\]
    and thus
    \[ \E_{i \sim \psi} \sum_{j \in [m]} \delta(j \vert i) \geq \eps^{\frac{2q(p^*-1)}{p^*}} \geq \eps^{4q}\,.\]
\end{proof}

\paragraph{Final Approximation Guarantee.}
We now argue that the output vector $\hat{x}$ at the end of the algorithm in fact witnesses a lower bound on the $p\rightarrow q$ norm with additive error $\eps \cdot \sum_{i \leq n} \|A_i\|_{p^*}^q$. 

\begin{proof}[Proof of~\cref{thm:p_q_norm_main}]
    Note that by~\cref{lem:p-q-disc} we have that some iteration of our loop has SDP value at least $\norm{A}_{p \rightarrow q} - \eps \cdot \left(\sum_{i \leq n} \|A_i\|_{p^*}^q\right)^{1/q}$. Consider the $\hat{x}$ and $\hat{y} = \pE y$ produced by this iteration of the loop. Note that since we are rounding a bilinear form, outputting the expectation of $x,y$ produces the same objective value as the expected objective when rounding each coordinate independently. Note that by~\cref{lem:p-to-q-potential-aligned} and the fact that the potential function is at most $1$ by norm bounds, we have that
    \[ \E_{\{\mu\}} \langle \hat{y}, A \hat{x}\rangle \geq \norm{A}_{p \rightarrow q} - \eps \cdot \left(\sum_{i \leq n} \|A_i\|_{p^*}^q\right)^{1/q}\,,\]
    where the expectation on the LHS is over the randomness in the conditioning. Furthermore, note that we always have that $\norm{\hat{x}}_p, \norm{\hat{y}}_{q^*} \leq 1$ by Jensen's since for $\hat{x}$
    \[ \norm{\hat{x}}_p^p = \sum_j \|\pE x_j\|^p \leq \sum_j \pE \|x_j\|^p = \pE \norm{x}_p^p \leq 1\,,\]
    and similarly for $\hat{y}$ we also have that $ \norm{\hat{y}}_{q^*} \leq 1$. Thus, $\langle \hat{y}, A \hat{x}\rangle \leq \norm{A}_{p \rightarrow q}$ with probability $1$, and thus by Markov's we have with probability $0.99$ that
    \[ \langle \hat{y}, A \hat{x}\rangle \geq \norm{A}_{p \rightarrow q} - 100\eps \cdot \left(\sum_{i \leq n} \|A_i\|_{p^*}^q\right)^{1/q}\,.\]
    Note that since $\hat{x}, \hat{y}$ have the relevant norms at most $1$, renormalizing only increases the objective function. Let these renormalized vectors be $\tilde{x}, \tilde{y}$. We thus have that 
    \[ \langle \tilde{y}, A \tilde{x}\rangle \geq \norm{A}_{p \rightarrow q} - 100\eps \cdot \left(\sum_{i \leq n} \|A_i\|_{p^*}^q\right)^{1/q}\]
    for $\tilde{x}, \tilde{y}$ satisfying $\norm{\tilde{x}}_p = 1$ and $\norm{\tilde{y}}_{q^*} = 1$. Therefore, 
    \[ \norm{A \tilde{x}}_q \geq \norm{A}_{p \rightarrow q} - 100\eps \cdot \left(\sum_{i \leq n} \|A_i\|_{p^*}^q\right)^{1/q}\,,\]
    and taking $q$th powers along with the fact that $\norm{A}_{p \rightarrow q} \leq \left(\sum_{i \leq n} \|A_i\|_{p^*}^q\right)^{1/q}$ (via H\"olders) completes the proof of the approximation guarantee. For the runtime analysis, we note that each iteration of the loop takes time $(nm)^{O\left(\eps^{-4q}\right)}$ by~\cref{fact:sos-separation-efficient} and there are $n^{O(\eps^{-4q})}$ iterations of the loop.
\end{proof}

\subsection{Comparison to Prior Work}
\label{sec:prior-work-pq-additive}
We compare the additive error $\sum_{i \leq n} \|A_i\|_{p^*}^q$ to the additive error obtained in the algorithms of \cite{barak2012hypercontractivity} and \cite{brandao2015replacing}.

\paragraph{The $p=2$ case}
When $p=2$, \cite{brandao2015replacing}, improving over \cite{barak2012hypercontractivity}, give a polynomial time algorithm for each fixed $\eps$ with additive error $\eps \cdot \|A\|_{2 \rightarrow 2}^{2} \cdot \|A\|_{2 \rightarrow \infty}^2$.
This is strictly incomparable to $\eps \cdot \sum_i \|A_i\|_2^q$ -- smaller if the row-norms of $A$ are all similar, and larger if $A$ has one very large row making $\|A\|_{2 \rightarrow \infty}$ very large.

\paragraph{The $p < 2$ case}
\cite{brandao2015replacing}'s Theorem 15 also provides a nontrivial additive approximation for $\|A\|_{p \rightarrow q}$ when $p < 2$, the only such algorithm we are aware of.
Specializing Theorem~15 of \cite{brandao2015replacing} to the map $A : \ell_p^m \to \ell_q^n$, the relevant quantity in their guarantee is the factorization norm of $A$ through $\ell_1$,
\[
\gamma_1(A) \vcentcolon= \norm{A}_{\ell_p \rightarrow \ell_1 \rightarrow \ell_q}
    = \inf_{A = RL} \norm{L}_{\ell_p \rightarrow \ell_1}\norm{R}_{\ell_1 \rightarrow \ell_q}.
\]
Their theorem, given such a factorization explicitly, gives additive error $\eps \cdot \gamma_1(A)$ for $\norm{A}_{p \rightarrow q}$, and hence, after adjusting constants depending only on $q$, additive error $\eps \cdot \gamma_1(A)^q$ for $\norm{A}_{p \rightarrow q}^q$.
Thus the comparison is between our error term $\eps \cdot \sum_{i \leq n} \norm{A_i}_{p^*}^q$ and the factorization-based quantity $\eps \cdot \gamma_1(A)^q$.

These two bounds are not ordered in general. If $A$ admits a genuinely good factorization through $\ell_1$, namely if
\[
\gamma_1(A)^q \ll \sum_{i \leq n} \norm{A_i}_{p^*}^q,
\]
then \cite{brandao2015replacing} can be substantially better. On the other hand, the trivial row decomposition
\[
Ax = \sum_{i \leq n} \langle A_i, x \rangle e_i
\]
immediately gives
\[
\gamma_1(A) \leq \norm{A}_{p \rightarrow 1} \leq \sum_{i \leq n} \norm{A_i}_{p^*}.
\]
Hence, whenever this trivial factorization is close to optimal, Theorem~15 yields error at most
\[
\eps \cdot \left( \sum_{i \leq n} \norm{A_i}_{p^*} \right)^q,
\]
whereas our error is $\eps \cdot \sum_{i \leq n} \norm{A_i}_{p^*}^q$. Since
\[
\sum_{i \leq n} \norm{A_i}_{p^*}^q \leq \left( \sum_{i \leq n} \norm{A_i}_{p^*} \right)^q,
\]
and the gap can be as large as $n^{q-1}$ when the row norms are comparable, our guarantee is better on instances whose row $p^*$-mass is spread across many rows and for which the trivial row factorization is near-optimal, while \cite{brandao2015replacing} is better on instances with a small $\ell_1$-factorization norm.

\section*{AI Disclosure:}
We used GPT 5.4 Pro as described in the introduction to this paper.
The tool materially affected Section~\ref{sec:higher-rank-lra}.
The authors verified the correctness and originality of all content including references.

\printbibliography

\appendix
\newpage
\section{Appendix}
\label{sec:appendix}

\subsection{Deferred Discretization Lemmas}
We give a proof of the deferred discretization lemmas from the previous sections below.
\paragraph{Rank $1$ LRA Discretization.}
\begin{proof}[Proof of~\cref{lem:discretization-error-bound} (LRA Discretization)]
    We note that by our bounds on $\norm{u}, \norm{v}$ that all entries of $u,v$ lie within the interval $\left[ - O(1) \cdot \norm{A}_p^{1/2}, O(1) \cdot \norm{A}_p^{1/2}\right]$. Thus, for all entries $u_i,v_j$ we have that $\vert u_i - \tilde{u}_i \vert \leq \gamma$ and $\vert v_j - \tilde{v}_j \vert \leq \gamma$. Note that we have that
    \begin{align*}
        \sum_{i \in [n], j \in [m]} \left(A_{ij} - u_i v_j\right)^p &= \sum_{i \in [n], j \in [m]} \left(A_{ij} \pm \tilde{u}_i \tilde{v}_j - u_i v_j\right)^p \\
        &=\sum_{i \in [n], j \in [m]} \left(A_{ij} - \tilde{u}_i \tilde{v}_j\right)^p + \sum_{q = 1}^p \binom{p}{q}\sum_{i \in [n], j \in [m]} \left(\tilde{u}_i \tilde{v}_j - u_i v_j\right)^q \left(A_{ij} - \tilde{u}_i \tilde{v}_j\right)^{p-q}\,.
    \end{align*}
    Thus, it suffices to give a bound on the second sum. By applying H\"olders with $\frac q q$ and $\frac {p-q} p$ we have that 
    \[ \sum_{i \in [n], j \in [m]} \left(\tilde{u}_i \tilde{v}_j - u_i v_j\right)^q \left(A_{ij} - \tilde{u}_i \tilde{v}_j\right)^{p-q} \leq \left(\sum_{i \in [n], j \in [m]} \left(\tilde{u}_i \tilde{v}_j - u_i v_j\right)^p\right)^{\frac q p} \left( \sum_{i \in [n], j \in [m]} \left(A_{ij} - \tilde{u}_i \tilde{v}_j\right)^{p}\right)^{\frac {p-q} p}\,.\]
    Note that 
    \[ \sum_{i \in [n], j \in [m]} \left(A_{ij} - \tilde{u}_i \tilde{v}_j\right)^{p} \leq O_p(1) \left( \norm{A}_p^p + \norm{\tilde{u} \tilde{v}^\top}_p^p\right) \leq O_p(1) \cdot \norm{A}_p^p\,.\]
    Thus, plugging this all in, we have that
    \begin{align*}
         \sum_{i \in [n], j \in [m]} \left(A_{ij} - u_i v_j\right)^p &=\sum_{i \in [n], j \in [m]} \left(A_{ij} - \tilde{u}_i \tilde{v}_j\right)^p + \sum_{q = 1}^p \binom{p}{q}\sum_{i \in [n], j \in [m]} \left(\tilde{u}_i \tilde{v}_j - u_i v_j\right)^q \left(A_{ij} - \tilde{u}_i \tilde{v}_j\right)^{p-q} \\
         &\leq \sum_{i \in [n], j \in [m]} \left(A_{ij} - \tilde{u}_i \tilde{v}_j\right)^p + O_p(1) \cdot \sum_{q = 1}^p \left(\sum_{i \in [n], j \in [m]} \left(\tilde{u}_i \tilde{v}_j - u_i v_j\right)^p\right)^{\frac q p} \norm{A}_p^{p-q}\,.
    \end{align*}
    Using that $\vert u_i - \tilde{u}_i \vert \leq \gamma$ and $\vert v_j - \tilde{v}_j \vert \leq \gamma$ along with that $\abs{u_i}, \abs{v_j} \leq O(1) \cdot \norm{A}_p^{1/2}$ we have that for all $i,j$,
    \[ \abs{\tilde{u}_i \tilde{v}_j - u_i v_j} \leq O(1) \cdot \gamma \cdot \norm{A}_p^{1/2}\,.\]
    Thus,
    \begin{align*}
         \sum_{i \in [n], j \in [m]} \left(A_{ij} - u_i v_j\right)^p &=\sum_{i \in [n], j \in [m]} \left(A_{ij} - \tilde{u}_i \tilde{v}_j\right)^p + \sum_{q = 1}^p \binom{p}{q}\sum_{i \in [n], j \in [m]} \left(\tilde{u}_i \tilde{v}_j - u_i v_j\right)^q \left(A_{ij} - \tilde{u}_i \tilde{v}_j\right)^{p-q} \\
         &\leq \sum_{i \in [n], j \in [m]} \left(A_{ij} - \tilde{u}_i \tilde{v}_j\right)^p + O_p(1) \cdot \sum_{q = 1}^p \left(nm \cdot \gamma \cdot \norm{A}_p^{1/2} \right)^{q} \norm{A}_p^{p-q}\,.
    \end{align*}
    For all $1 \leq q \leq p$ we have that
    \[ \left(nm \cdot \gamma \cdot \norm{A}_p^{1/2} \right)^{q} \norm{A}_p^{p-q} \leq \left(nm \cdot \gamma \cdot \norm{A}_p^{1/2} \right) \norm{A}_p^{p-1} + \left(nm \cdot \gamma \cdot \norm{A}_p^{1/2} \right)^{p-1} \norm{A}_p\,.\]
    When $\gamma \leq \norm{A}_p^{1/2}$ we have that $\left(nm \cdot \gamma \cdot \norm{A}_p^{1/2} \right) \norm{A}_p^{p-1} \leq \left(nm \cdot \gamma \cdot \norm{A}_p^{1/2} \right)^{p-1} \norm{A}_p$ and thus
    \begin{align*}
        \sum_{i \in [n], j \in [m]} \left(A_{ij} - u_i v_j\right)^p &\leq \sum_{i \in [n], j \in [m]} \left(A_{ij} - \tilde{u}_i \tilde{v}_j\right)^p + O_p(1) \cdot \sum_{q = 1}^p \left(nm \cdot \gamma \cdot \norm{A}_p^{1/2} \right)^{q} \norm{A}_p^{p-q} \\
        &\leq \sum_{i \in [n], j \in [m]} \left(A_{ij} - \tilde{u}_i \tilde{v}_j\right)^p + O_p(1) \cdot nm \cdot \gamma \cdot \norm{A}_p^{\frac{2p - 1}{2}}\,.
    \end{align*}
\end{proof}

\paragraph{Rank $k$ LRA Discretization.} We also show that the above lemma can be extended to the rank $k$ setting, which will be required for the higher rank analysis.

\begin{lemma}
\label{lem:rounding-rank-k-disc-solutions}
    Let $X \in \mathbb{R}^{n \times k}, Y \in \mathbb{R}^{m \times k}$ be solutions to the LRA problem on $A \in \mathbb{R}^{n \times m}$. Let $\widetilde{X}, \widetilde{Y}$ be rounded LRA solutions such that $\norm{X - \widetilde{X}}_{\infty}, \norm{Y - \widetilde{Y}}_\infty \leq \gamma$. Then if $\alpha = \norm{X}_\infty + \norm{Y}_\infty + \gamma$ we have that
    \[ \norm{A - \widetilde{X} \widetilde{Y}^\top}_p^p \leq \norm{A - XY^\top}_p^p + O_p(1) \cdot nm \cdot \left(\norm{A - XY^\top}_p^{p-1} \cdot k \cdot \gamma \cdot \alpha + k^p \cdot \gamma^p \cdot \alpha^p \right)\,.\]
    In particular, if $A, X, Y$ have entries with magnitude at most $2^B$ and $\gamma \leq 2^{-\Omega_p(B)} \cdot \frac{1}{\poly(nm)} \cdot k^{-p} \cdot \gamma'$ then we have that
    \[ \norm{A - \widetilde{X} \widetilde{Y}^\top}_p^p \leq \norm{A - XY^\top}_p^p + \gamma'\,.\]
\end{lemma}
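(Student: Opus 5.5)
I will follow the rank-$1$ argument in the proof of \cref{lem:discretization-error-bound}: perturb entrywise, expand the $p$-th power binomially, and control the cross terms with H\"older. Write $R = A - XY^\top$ and $E = XY^\top - \widetilde{X}\widetilde{Y}^\top$. Then $A - \widetilde{X}\widetilde{Y}^\top = R + E$. Since $p$ is even, I can expand entrywise as
\[
\norm{R+E}_p^p = \sum_{i,j} \sum_{q=0}^p \binom{p}{q} R_{ij}^{p-q} E_{ij}^q = \norm{R}_p^p + \sum_{q=1}^p \binom{p}{q}\sum_{i,j} R_{ij}^{p-q}E_{ij}^q .
\]
So everything reduces to bounding the cross terms $q \ge 1$.

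First I bound $E$ entrywise. Write $E_{ij} = \langle X_i - \widetilde{X}_i, Y_j\rangle + \langle \widetilde{X}_i, Y_j - \widetilde{Y}_j\rangle$. Each of the $k$ coordinates of the first inner product is at most $\gamma\norm{Y}_\infty$ in magnitude. For the second, $\norm{\widetilde{X}}_\infty \le \norm{X}_\infty + \gamma$, so each coordinate is at most $\gamma(\norm{X}_\infty+\gamma)$. Hence $\abs{E_{ij}} \le k\gamma\alpha$ with $\alpha = \norm{X}_\infty + \norm{Y}_\infty + \gamma$, and consequently $\norm{E}_p^p \le nm\,(k\gamma\alpha)^p$.

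Next I apply H\"older with exponents $\frac{p}{p-q}$ and $\frac{p}{q}$ to each cross term:
\[
\sum_{i,j} \abs{R_{ij}}^{p-q}\abs{E_{ij}}^q \le \norm{R}_p^{p-q}\norm{E}_p^q .
\]
Combined with the entrywise bound this gives at most $\norm{R}_p^{p-q}(nm)^{q/p}(k\gamma\alpha)^q \le nm\,\norm{R}_p^{p-q}(k\gamma\alpha)^q$. Alternatively I can skip H\"older and use $\abs{R_{ij}}\le\norm{R}_p$ directly, which yields the cleaner $nm\,\norm{R}_p^{p-q}(k\gamma\alpha)^q$.

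To reduce the intermediate exponents $1\le q\le p$ to the two extremes, I use the weighted AM--GM inequality $a^{p-q}b^q \le a^{p-1}b + b^p$, valid for $q\in[1,p]$ and $a,b\ge0$. This holds because $a^{p-q}b^q$ is a log-convex interpolation between $a^{p-1}b$ (at $q=1$) and $b^p$ (at $q=p$). Applying it with $a=\norm{R}_p$ and $b=k\gamma\alpha$ and summing over $q$ with the binomial constants gives
\[
O_p(1)\cdot nm\left(\norm{R}_p^{p-1}k\gamma\alpha + k^p\gamma^p\alpha^p\right),
\]
which is the first claim.

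For the "in particular" statement, I plug in bounds. If $A$, $X$, $Y$ have entries of magnitude at most $2^B$ and $\gamma\le 1$, then $\alpha \le 3\cdot 2^B$. Also $\norm{R}_p^{p-1} \le (nm)^{O(1)}(k2^{2B})^{O_p(1)}$, because each entry of $R$ is at most $2^B + k2^{2B}$. Both error terms are then at most $\poly(nm)\,k^{O_p(1)}2^{O_p(B)}\gamma$, and the choice $\gamma \le 2^{-\Omega_p(B)}\poly(nm)^{-1}k^{-p}\gamma'$ makes the sum at most $\gamma'$.

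The main obstacle is bookkeeping. The $k$-dependence coming from $\norm{R}_p$ must be absorbed by the hidden constants: the stated factor $k^{-p}$ may need to be $k^{-O_p(1)}$ unless that dependence is folded into the $\poly(nm)$ and $2^{\Omega_p(B)}$ factors, for instance via $k\le m$.
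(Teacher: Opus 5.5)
Your proposal is correct and follows essentially the same route as the paper: you expand binomially around $A - XY^\top$, bound $\left|\langle X_i,Y_j\rangle - \langle \widetilde{X}_i,\widetilde{Y}_j\rangle\right| \le k\gamma\alpha$ entrywise, apply H\"older with exponents $\frac{p}{p-q}$ and $\frac{p}{q}$, and collapse the intermediate $q$ to the $q=1$ and $q=p$ terms. Your worry about the $k$-dependence is unnecessary: the $k^{p-1}$ inside $\norm{A - XY^\top}_p^{p-1}$ combines with the explicit factor $k$ to give exactly $k^p$, and the second term also carries exactly $k^p$ once $\gamma \le 1$ (an assumption the paper's argument also uses implicitly), so the stated $k^{-p}$ already suffices.
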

\begin{proof}[Proof of~\cref{lem:rounding-rank-k-disc-solutions}]
    Let $\alpha = \norm{X}_\infty + \norm{Y}_\infty + \gamma$. Note that we have that 
    \begin{align*}
        \norm{A - \widetilde{X} \widetilde{Y}^\top}_p^p &= \sum_{i \in [n], j \in [m]} \left(A_{ij} - \langle \widetilde{X}_i, \widetilde{Y}_j\rangle\right)^p \\
        &= \sum_{i \in [n], j \in [m]} \left(A_{ij} \pm \langle X_i, Y_j \rangle - \langle \widetilde{X}_i, \widetilde{Y}_j\rangle\right)^p \\
        &= \norm{A - XY^\top}_p^p + \sum_{q=1}^p \binom{p}{q} \sum_{i \in [n], j \in [m]} \left( A_{ij} - \langle X_i, Y_j \rangle\right)^{p-q} \left(\langle X_i, Y_j \rangle - \langle \widetilde{X}_i, \widetilde{Y}_j\rangle \right)^q \\
        &\leq \norm{A - XY^\top}_p^p + \sum_{q=1}^p \binom{p}{q} \sum_{i \in [n], j \in [m]} \left\vert A_{ij} - \langle X_i, Y_j \rangle\right\vert^{p-q} \left(k \cdot \gamma \cdot \alpha\right)^q \,.
    \end{align*}
    Applying H\"older with $\frac{p}{p-q}$ and $\frac{p}{q}$ we have that
    \begin{align*}
        \norm{A - \widetilde{X} \widetilde{Y}^\top}_p^p &\leq \norm{A - XY^\top}_p^p + \sum_{q=1}^p \binom{p}{q} \sum_{i \in [n], j \in [m]} \left\vert A_{ij} - \langle X_i, Y_j \rangle\right\vert^{p-q} \left(k \cdot \gamma \cdot \alpha \right)^q \\
        &\leq \norm{A - XY^\top}_p^p + \sum_{q=1}^p \binom{p}{q}\left(\sum_{i \in [n], j \in [m]}\left\vert A_{ij} - \langle X_i, Y_j \rangle\right\vert^p\right)^{\frac {p-q}{p}} \left( \sum_{i \in [n], j \in [m]} \left(k \cdot \gamma \cdot \alpha\right)^p \right)^{\frac q p} \\
        &= \norm{A - XY^\top}_p^p + \sum_{q=1}^p \binom{p}{q} \norm{A - XY^\top}_p^{p-q} \cdot \left(nm\right)^{\frac q p} k^q \gamma^q \cdot \alpha^q \,.
    \end{align*}
    We now note that for all $1 \leq q \leq p$ we have that 
    \begin{align*}
        \norm{A - XY^\top}_p^{p-q} \cdot \left(nm\right)^{\frac q p} k^q \gamma^q \cdot \alpha^q &\leq \norm{A - XY^\top}_p^{p-1} \cdot \left(nm\right)^{\frac 1 p} k \gamma \alpha + \left(nm\right) k^p \gamma^p\alpha^p \\
        &\leq nm \cdot \left(\norm{A - XY^\top}_p^{p-1} \cdot \cdot k \cdot \gamma \alpha + k^p \cdot \gamma^p \cdot \alpha^p\right)\,.
    \end{align*}
    Therefore, 
    \begin{align*}
        \norm{A - \widetilde{X} \widetilde{Y}^\top}_p^p &\leq \norm{A - XY^\top}_p^p + \sum_{q=1}^p \binom{p}{q} \norm{A - XY^\top}_p^{p-q} \cdot \left(nm\right)^{\frac q p} k^q \gamma^q \cdot \alpha^q\\
        &\leq \norm{A - XY^\top}_p^p + O_p(1) \cdot nm \cdot \left(\norm{A - XY^\top}_p^{p-1} \cdot k \cdot \gamma \cdot \alpha + k^p \cdot \gamma^p \cdot \alpha^p \right)\,.
    \end{align*}
    Now note that if $A,X,Y$ all have bit complexity $B$ then we have that $\norm{A - XY^\top}_p^{p-1} \leq nm \cdot 2^{O_p(B)}$ and thus 
    \[ O_p(1) \cdot nm \cdot \left(\norm{A - XY^\top}_p^{p-1} \cdot k\gamma \alpha+ k^p \gamma^p \alpha^p\right) \leq 2^{O_p(B)} \cdot (nm)^2 \cdot \left( k \cdot \gamma + k^p \cdot \gamma^p\right)\,.\]
    Thus, if we take $\gamma \leq 2^{-\Omega_p(B)} \cdot \frac{1}{\poly(nm)} \cdot k^{-p} \cdot \gamma'$ then we have that 
    \[ \norm{A - \widetilde{X} \widetilde{Y}^\top}_p^p \leq \norm{A - XY^\top}_p^p + \gamma'\,.\]
\end{proof}

Similarly to the rank $1$ case, we must also show that our constant factor approximation solutions $X,Y$ have bounded magnitude entries.
\begin{lemma}
\label{lem:constant-factor-bounded-bit-complexity}
    Let $A \in \mathbb{R}^{n \times m}$ have bit complexity $B$ and let $G \in \mathbb{R}^{n \times m}$ satisfy $\rank(G) = k$ and 
    \[ \norm{A - G}_p^p \leq \norm{A}_p^p\,,\]
    Let $G = U\Sigma V^\top$ be the SVD factorization of $G$. Then $X = U \Sigma^{1/2}$ and $Y = V \Sigma^{1/2}$ satisfy $\norm{X}_\infty, \norm{Y}_\infty \leq \poly_p(nmk) \cdot 2^{O_p(B)}$. 
\end{lemma}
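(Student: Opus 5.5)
The plan is to reduce everything to bounds on $\norm{G}_F$ and on the singular values of $G$. Entries of $X = U\Sigma^{1/2}$ satisfy $|X_{it}| = |U_{it}|\,\sigma_t^{1/2} \leq \sigma_t^{1/2}$, because $U$ has orthonormal columns and so every entry of $U$ has magnitude at most $1$. The same holds for $Y$. Since $\sigma_t \leq \sigma_1 \leq \norm{G}_F$, it suffices to show
\[
\norm{G}_F \leq \poly_p(nmk) \cdot 2^{O_p(B)}\,.
\]
Squaring roots then only halves the exponent.

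To bound $\norm{G}_F$, I first bound $\norm{G}_p$ using the almost triangle inequality for $\norm{\cdot}_p^p$, exactly as in the rank-one arguments. This gives
\[
\norm{G}_p^p \leq 2^{p-1}\left(\norm{A-G}_p^p + \norm{A}_p^p\right) \leq 2^p \norm{A}_p^p\,.
\]
Since $A$ has entries of magnitude at most $2^B$, we have $\norm{A}_p^p \leq nm\cdot 2^{pB}$. Hence $\norm{G}_p \leq 2\,(nm)^{1/p} 2^{B}$. Next I compare norms on $\R^{nm}$: for $p\geq 2$, $\norm{G}_F \leq (nm)^{\frac12 - \frac1p}\norm{G}_p$. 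Combining the two gives $\norm{G}_F \leq 2\sqrt{nm}\,2^{B}$. Therefore every entry of $X$ and $Y$ is at most $\left(2\sqrt{nm}\,2^B\right)^{1/2}$, which is well within the claimed $\poly_p(nmk)\cdot 2^{O_p(B)}$.

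I do not expect a real obstacle here; the argument is elementary. The one step to check carefully is the entrywise bound $|U_{it}|\leq 1$ together with $\sigma_t \leq \norm{G}_F$. I should also confirm that the SVD is the thin SVD, with $\Sigma$ of size $k\times k$ since $\rank(G)=k$, so that $X$ has exactly $k$ columns. The dependence on $k$ only enters trivially, through the stated form of the bound. For the application in the algorithm, $G = X_0Y_0^\top$ is an $O(\OPT)$ solution, so the hypothesis $\norm{A-G}_p^p\leq \norm{A}_p^p$ is not literally implied. If only $\norm{A-G}_p^p \leq C\norm{A}_p^p$ holds, the same chain goes through with the extra constant $C$ absorbed into the $O_p(1)$ factors. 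This covers the applications to both $X_0Y_0^\top$ and the optimal $G$ (for the latter take $G=0$ as a comparison, giving $\OPT\leq\norm{A}_p^p$).
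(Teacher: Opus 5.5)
Your proposal is correct and follows essentially the same route as the paper. Both reduce to bounding $\|G\|_F$ (hence the singular values), using that $U$ and $V$ have entries of magnitude at most $1$, and both bound $\|G\|_F$ via a triangle inequality, $\ell_p$--$\ell_2$ norm equivalence, and $\|A\|_p^p \le nm\cdot 2^{pB}$. The only difference is cosmetic: you apply the almost-triangle inequality in $\ell_p$ before passing to the Frobenius norm, whereas the paper first moves $A-G$ to Frobenius. Your explicit constants, and your remark that the applications only give $\|A-G\|_p^p \le C\|A\|_p^p$ (which the argument absorbs), are a welcome tightening of the paper's sketch.
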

\begin{proof}
    Note that since $\norm{U_i}_2^2, \norm{V_j}_2^2 \leq 1$ it suffices to bound the singular values or $\norm{G}_F^2$. Observe that since $A$ has bounded bit complexity, we have that 
    \[ \norm{A - G}_p^p \leq \norm{A}_p^p \leq nm \cdot 2^{pB}\,.\]
    Furthermore, since $\ell_p$ and $\ell_2$ norms are equivalent up to polynomial in dimension factors, we also have that
    \[ \norm{A - G}_F^2 \leq \poly(nm) \cdot 2^{pB}\,.\]
    Thus, we have that 
    \[ \norm{G}_F^2 \leq O\left(\norm{A - G}_F^2 + \norm{A}_F^2\right) \leq \poly(nm) \cdot 2^{pB}\,.\]
    Therefore, all singular values are most $\poly(nm) \cdot 2^{pB}$ and therefore $\norm{X}_\infty, \norm{Y}_\infty \leq \poly_p(nmk) \cdot 2^{O_p(B)}$.
\end{proof}

\paragraph{$p\rightarrow q$ Discretization.} We now also show that our discretization is sufficient for $p\rightarrow q$ norms. 

\begin{lemma}
\label{lem:p-q-disc}
    Let $p \leq 2 \leq q$ and let $p^*$ be the H\"older dual of $p$. Let $A \in \mathbb{R}^{n \times m}$ and let $\Sigma$ be a $1/(\poly(nm/\eps))$-fineness grid of $[-1,1]$. Then we have that
    \[ \max_{\norm{x}_p \leq 1, \norm{y}_{q^*} \leq 1} \langle y, Ax\rangle \leq \max_{\norm{x}_p \leq 1, \norm{y}_{q^*} \leq 1, x_i \in \Sigma, y_j \in \Sigma} \langle y, Ax\rangle + \eps \cdot \left(\sum_{i \leq n} \|A_i\|_{p^*}^q\right)^{1/q}\,.\]
\end{lemma}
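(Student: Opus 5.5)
The plan is to take an exact optimizer $(x^\star,y^\star)$ of the left-hand side, with $\norm{x^\star}_p\le 1$ and $\norm{y^\star}_{q^*}\le 1$. Every entry then lies in $[-1,1]$, since each coordinate is bounded by the corresponding norm, which is at most $1$. We round each entry toward zero to the nearest grid point of $\Sigma$ and call the results $\tilde x,\tilde y$. Rounding toward zero gives $|\tilde x_j|\le|x^\star_j|$ and $|\tilde y_i|\le|y^\star_i|$, so the norm constraints still hold and $(\tilde x,\tilde y)$ is feasible for the discretized maximum. It remains to bound the loss $\langle y^\star,Ax^\star\rangle-\langle\tilde y,A\tilde x\rangle$.

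We split the loss as
\[ \langle y^\star-\tilde y, Ax^\star\rangle + \langle \tilde y, A(x^\star-\tilde x)\rangle, \]
where each coordinate of $y^\star-\tilde y$ and of $x^\star-\tilde x$ has magnitude at most the grid fineness $\delta=1/\poly(nm/\eps)$.

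For the first term, Hölder gives the bound $\norm{y^\star-\tilde y}_{q^*}\norm{Ax^\star}_q\le n^{1/q^*}\delta\,\norm{A}_{p\to q}$. By Hölder row by row, $\norm{A}_{p\to q}\le(\sum_i\norm{A_i}_{p^*}^q)^{1/q}$.

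For the second term, write $\langle\tilde y,A(x^\star-\tilde x)\rangle=\sum_i\tilde y_i\langle A_i,x^\star-\tilde x\rangle$. Since $|\tilde y_i|\le 1$, this is at most $\sum_i\norm{A_i}_{p^*}\norm{x^\star-\tilde x}_p$. We have $\norm{x^\star-\tilde x}_p\le m^{1/p}\delta$, and by Hölder (power mean) $\sum_i\norm{A_i}_{p^*}\le n^{1-1/q}(\sum_i\norm{A_i}_{p^*}^q)^{1/q}$.

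Combining, the total loss is at most $\poly(nm)\,\delta\,(\sum_i\norm{A_i}_{p^*}^q)^{1/q}$. Choosing $\delta=\eps/\poly(nm)$ with a large enough polynomial makes this at most $\eps(\sum_i\norm{A_i}_{p^*}^q)^{1/q}$.

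There is no real obstacle. The only points needing care are that rounding toward zero preserves feasibility, which is why we do not round to the nearest grid point, and that every error term is measured against the reference quantity $(\sum_i\norm{A_i}_{p^*}^q)^{1/q}$ rather than against $\norm{A}_{p\to q}$ alone. Both are handled by the Hölder comparisons above, which cost only polynomial factors in $n,m$ that the grid fineness absorbs.
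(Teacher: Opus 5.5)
Your proposal is correct and follows essentially the same route as the paper: take an exact optimizer, round each entry into $\Sigma$, split the loss as $\langle y^\star-\tilde y, Ax^\star\rangle+\langle \tilde y, A(x^\star-\tilde x)\rangle$, and let the grid fineness absorb polynomial factors relative to $(\sum_i\|A_i\|_{p^*}^q)^{1/q}$. The paper bounds both terms together by the cruder estimate $(\|y-\bar y\|_\infty+\|x-\bar x\|_\infty)\sum_i\|A_i\|_1$, then uses $\|A_i\|_1\le m\|A_i\|_{p^*}$ and H\"older over rows, whereas you apply H\"older directly.

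Your rounding toward zero is a small but genuine improvement. The paper rounds to the nearest grid point and never checks that the rounded vectors still satisfy $\|\bar x\|_p\le 1$ and $\|\bar y\|_{q^*}\le 1$. Your version needs only $0\in\Sigma$, which you can guarantee by choosing the grid appropriately.
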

\begin{proof}
    Let $x,y$ be the vectors maximizing 
    \[ \max_{\norm{x}_p \leq 1, \norm{y}_{q^*} \leq 1} \langle y, Ax\rangle\,. \]
    Note that since $\norm{x}_p \leq 1, \norm{y}_{q^*} \leq 1$ we have that all $y_i, x_j \in [-1,1]$. Thus, if we let $\bar{x}, \bar{y}$ be $x,y$ with each entry being rounded to the nearest element in $\Sigma$ then we have that $\norm{x-\bar{x}}_\infty, \norm{y-\bar{y}}_\infty \leq 1/(\poly(nm/\eps))$. 
    \begin{align*}
        \langle y, Ax \rangle - \langle \bar{y}, A \bar{x}\rangle &= \langle y - \bar{y}, A x\rangle + \langle \bar{y}, A (x - \bar{x})\rangle\,.
    \end{align*}
    Note that for any $u \in \mathbb{R}^n,v \in \mathbb{R}^m$we have that 
    \[ u^\top A v \leq \norm{u}_\infty \norm{v}_\infty \sum_{i \in [n]} \norm{A_i}_1 \,.\]
    Applying this above we have that
    \[ \langle y, Ax \rangle - \langle \bar{y}, A \bar{x}\rangle \leq \left(\norm{y-\bar{y}}_\infty + \norm{x-\bar{x}}_\infty\right) \cdot \sum_{i \in [n]} \norm{A_i}_1\,.\]
    We now bound $\sum_{i \in [n]} \norm{A_i}_1$. Note that $\norm{A_i}_1 \leq m \norm{A_i}_{p^*}$ so we have that 
    \[ \sum_{i \in [n]} \norm{A_i}_1 \leq m \sum_{i \in [n]} \norm{A_i}_{p^*}\,,\]
    and $q \geq 2$ so by H\"olders
    \[ m \sum_{i \in [n]} \norm{A_i}_{p^*} \leq mn^{1-1/q} \left(\sum_{i \in [n]} \norm{A_i}_{p^*}^q\right)^{1/q} \leq mn \left(\sum_{i \in [n]} \norm{A_i}_{p^*}^q\right)^{1/q}\,.\]
    Finally letting $\norm{y-\bar{y}}_\infty,\norm{x-\bar{x}}_\infty \leq \eps/(nm)$ yields that
    \[ \langle y, Ax \rangle - \langle \bar{y}, A \bar{x}\rangle \leq \eps \left(\sum_{i \in [n]} \norm{A_i}_{p^*}^q\right)^{1/q}\,.\]
\end{proof}

\subsection{Discrete Pinning Lemma from Mirror Descent}

We now describe how to obtain discrete pinning lemmas from mirror descent. In particular, we sketch a proof of the following:
\begin{lemma}
\label{lem:discrete-pinning}
    For every $t \in \mathbb{N}$ there is $s \leq t$ such that if $S \subseteq [n]$ is a random set of size $s$ and $(X_1,\ldots,X_n)$ are any jointly distributed discrete random variables, 
    \[
      \E_S \E_{i,j\sim [n]} \tv(\{X_i, X_j \vert X_S\}, \{X_i \vert X_S\} \otimes \{X_j \vert X_S\}) \leq \left( \frac{\E_i H(X_i)}{t}\right)^2 \,.
    \]
\end{lemma}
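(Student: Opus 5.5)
The plan is to mirror the proof of the variance pinning lemma from \cref{sec:overview-pinning}, replacing the quadratic regularizer by the negative entropy. We use the potential
\[
\Phi(s) = \E_{S \subseteq [n], |S| = s} \E_{i \sim [n]} H(X_i \mid X_S).
\]
It satisfies $\Phi(0) = \E_i H(X_i)$ and $\Phi(s) \geq 0$. It is non-increasing in $s$, since conditioning reduces entropy; this makes it a pseudo-distribution potential in the sense of \cref{def:pseudo-distribution-potential}. As in the variance case, it has a Bregman form. For fixed $S$,
\[
H(X_i \mid X_S) - \E_{X_j} H(X_i \mid X_S, X_j) = \E\, \mathrm{KL}\big(\{X_i \mid X_S, X_j\} \,\|\, \{X_i \mid X_S\}\big),
\]
and $\mathrm{KL}$ is exactly the Bregman divergence of the negative entropy on the simplex. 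The conditional law $\{X_i \mid X_S, X_j\}$ is the best predictor of $X_i$ from $(X_S, X_j)$ in this divergence. This is the analogue of \cref{fact:bregman_min_expecation}.

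\textbf{Step 1: the one-step decrease.} Fix $S$ and write $u = X_i$, $v = X_j$ under the conditioned law, with $\tau = \tv(\{u,v\}, \{u\} \otimes \{v\})$. The goal is a lower bound on $\E_v\, \mathrm{KL}(\{u \mid v\} \,\|\, \{u\})$ in terms of $\tau$.

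Following the proof of \cref{thm:variance-reduction-via-mirror}, I would construct an explicit predictor $f(v)$ by one mirror-descent step from the marginal $\{u\}$. The step is multiplicative weights with cost vector given by the signed indicator of the event on which $\{u,v\}$ exceeds $\{u\}\otimes\{v\}$. The witnessed decrease is the linear term $\eta\,\tau$ minus the divergence of the update. For the entropy regularizer this divergence is at most $O(\eta^2)$, by $1$-strong convexity with respect to $\ell_1$ (\cref{lem:mirror-descent-progress}). Optimizing $\eta$ gives decrease $\Omega(\tau^2)$, which recovers Pinsker.

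To reach the stated exponent, I would instead need a decrease of order $\tau^{1/2}$. The idea is not to optimize $\eta$ but to take a large, bounded step that exploits the boundedness of the indicator gradient. The hope is that, since entropy is not merely quadratic near the boundary, the divergence term stays small relative to $\eta\tau$.

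\textbf{Step 2: summing over steps.} Summing the per-step decrease over $s = 0, \ldots, t-1$ telescopes, giving
\[
\sum_{s<t} \E_S \E_{i,j} g(\tau_{ij,S}) \leq \E_i H(X_i),
\]
where $g$ is the per-step lower bound from Step 1. Hence some $s \leq t$ satisfies $\E_S \E_{i,j}\, g(\tau) \leq \E_i H(X_i)/t$. From here we must convert a bound on $\E\, g(\tau)$ into the stated bound on $\E\,\tau$. Note that because $\Phi$ is a pseudo-distribution potential, \cref{lem:gcr_linear_combo} already yields the existence of such an $s$; the remaining work is entirely this conversion.

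\textbf{Main obstacle.} The conversion is where I expect the argument to be hardest, and it may simply fail. With $g(\tau) = \Omega(\tau^2)$, Jensen's inequality only gives $\E\,\tau \leq O\big((\E_i H(X_i)/t)^{1/2}\big)$. This is the classical form of the bound, and it is weaker than the stated $(\E_i H(X_i)/t)^2$ whenever $\E_i H(X_i)/t < 1$. An improvement of the kind stated would require $g(\tau) \gtrsim \tau^{1/2}$ together with a reverse-Jensen control. That control is available if the distribution of $\tau$ over $(S,i,j)$ is concentrated, or, using $\tau \leq 1$, in the regime $\E_i H(X_i)/t \geq 1$, where the claim is trivial.

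So I would first attempt the large-step multiplicative-weights analysis in Step 1. If it only yields $\Omega(\tau^2)$, I would report that the proof establishes the Pinsker-type exponent. In that case the squared bound must be read either with the ratio inverted or in the trivial regime, and cannot be obtained in general.
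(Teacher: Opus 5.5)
Your Steps 1 and 2 are exactly the paper's argument. The paper proves only \cref{lem:tv-decrease}, $\tv^2 \le O(1)\,(H(x)-H(x\mid y))$. It does this with one projected mirror-descent step under the negative-entropy regularizer, starting from the marginal $\{x\}$, with the TV witness $h(\cdot,y)$ as cost vector, $\ell_1$-strong convexity, and $\eta=\tau/2$. It then asserts that the lemma ``easily follows''. As you say, telescoping plus Jensen yields only
\[
\E_S \E_{i,j}\tv\big(\{X_i,X_j\mid X_S\},\{X_i\mid X_S\}\otimes\{X_j\mid X_S\}\big) \le O\Big(\big(\E_i H(X_i)/t\big)^{1/2}\Big)
\]
for some $s<t$ depending on the law of $X$.

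The gap in your proposal is that it leaves the exponent question open. It can be settled: the statement as printed is false, and the exponent $2$ is a typo for $1/2$. Here is a counterexample.
\begin{itemize}
\item Let $\theta$ be uniform on $\{\tfrac12-\delta,\tfrac12+\delta\}$ with $\delta^2=c/t$. Given $\theta$, let $X_1,\dots,X_n$ be i.i.d.\ $\mathrm{Ber}(\theta)$, and take $n\gg t$.
\item Each $X_i$ is marginally a fair bit, so $\E_i H(X_i)=\log 2$ and the right-hand side is $O(1/t^2)$.
\item For distinct $i,j\notin S$, a direct computation gives $\tv(\{X_i,X_j\mid X_S\},\{X_i\mid X_S\}\otimes\{X_j\mid X_S\})=2\Var(\theta\mid X_S)$.
\item Let $k$ be the number of ones in $X_S$. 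The posterior log-odds of $\{\theta=\tfrac12+\delta\}$ is $(2k-s)\log\frac{1+2\delta}{1-2\delta}$, and $\E(2k-s)^2\le s+4\delta^2s^2\le(1+4c)t$ for $s\le t$.
\item So for a small constant $c$ the posterior stays bounded away from $0$ and $1$ with constant probability. Hence $\E\Var(\theta\mid X_S)=\Omega(\delta^2)$, and the left-hand side is $\Omega(1/t)$ for every $s\le t$.
\end{itemize}

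Your two fallback routes are also dead ends.
\begin{itemize}
\item No predictor, however large its step, can certify more decrease than the Bayes predictor $\{u\mid v\}$. So the one-step decrease is exactly $I(u;v)=\E_v\mathrm{KL}(\{u\mid v\}\,\|\,\{u\})$. For two fair bits with $\Pr[u=v]=\tfrac12+\tau$, one has $\tv=\tau$ but $I(u;v)=\Theta(\tau^2)$. Thus $g(\tau)\gtrsim\tau^{1/2}$ is impossible and Pinsker is tight here. Concentration of $\tau$ would only make Jensen an equality, which still gives the square root.
\item Inverting the ratio gives a vacuous bound and is not the intended repair.
\end{itemize}
The correct statement is the square-root bound displayed above, with $s$ allowed to depend on the law of $X$. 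Your Steps 1--2 prove it completely, by the same route as the paper.
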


The above lemma is proved by utilizing the following non-negative potential function:
\[ \Phi(s) = \E_{S \subseteq [n], \vert S \vert = s} \E_{i \sim [n]} H(X_i \vert X_S)\,.\]

In order to prove~\cref{lem:discrete-pinning} it is enough to lower bound the decrement in potential function $\Phi(s) - \Phi(s+1)$. This is captured by the following lemma:
\begin{lemma}
\label{lem:tv-decrease}
    Let $x,y$ be jointly distributed random variables over $[m]$. Then we have that
    \[ \tv^2(\{x,y\}, \{x\} \otimes \{y\}) \leq O(1) \cdot \left(H(x) - H(x \vert y)\right)\,.\]
\end{lemma}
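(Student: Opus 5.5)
The proof goes through the standard variational characterization of entropy. The predictor comes from a single multiplicative-weights (entropy-regularized mirror descent) step, in direct analogy with \cref{thm:variance-reduction-via-mirror}. The first step is to write the entropy as a minimum of cross-entropies: for any random variable $x$ over $[m]$ and any distribution $q$ on $[m]$ we have $H(x) = \min_q \E_x[-\log q(x)]$, with the minimum attained at the law of $x$. Consequently, for any family of predictors $\{q_y\}$ indexed by the value of $y$,
\[ H(x) - H(x \vert y) \geq \E_{x,y}\Brac{\log \frac{q_y(x)}{p(x)}}\,, \]
where $p$ denotes the marginal law of $x$. This is the entropic analogue of the statement that the Bregman divergence is minimized at the (conditional) expectation, as in \cref{fact:bregman_min_expecation}. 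It therefore suffices to exhibit a good predictor $q_y$ that depends only on $y$ and the joint law.

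Next we define the ``gradient''. Let $T \subseteq [m]^2$ be a set witnessing the total variation distance, so that $\tau := \tv(\{x,y\},\{x\}\otimes\{y\}) = \Pr_{\text{joint}}[(x,y)\in T] - \Pr_{\text{prod}}[(x,y)\in T]$. Set $f(a,b) = \mathbf{1}[(a,b)\in T] \in [0,1]$. For each $y$ we take the mirror descent step with entropy regularizer starting from $p$ with gain vector $f(\cdot, y)$:
\[ q_y(a) = \frac{p(a)\, e^{\eta f(a,y)}}{Z_y}\,, \qquad Z_y = \E_{a\sim p} e^{\eta f(a,y)}\,. \]
Then $\log \frac{q_y(x)}{p(x)} = \eta f(x,y) - \log Z_y$. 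Taking expectations under the joint law, the first term gives $\eta\, \E_{\text{joint}} f$.

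The remaining step, and the only one needing care, is to control $\E_y \log Z_y$; this plays the role of the $\eta^2\|y\|_*^2$ term in \cref{lem:mirror-descent-progress}. Since $f(\cdot,y)$ takes values in $[0,1]$, Hoeffding's lemma gives $\log Z_y \leq \eta\, \E_{a\sim p} f(a,y) + \eta^2/8$. Averaging over $y$ drawn from its marginal, $\E_y \E_{a \sim p} f(a,y) = \E_{\text{prod}} f$. This is where the product distribution enters, because the predictor's baseline is the unconditional marginal $p$. Combining the two bounds,
\[ H(x) - H(x\vert y) \geq \eta\bigl(\E_{\text{joint}} f - \E_{\text{prod}} f\bigr) - \frac{\eta^2}{8} = \eta \tau - \frac{\eta^2}{8}\,. \]
Choosing $\eta = 4\tau$ yields $H(x) - H(x\vert y) \geq 2\tau^2$, which proves the lemma with constant $1/2$.

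The main thing to get right is the normalization of the gain. It must be bounded, which is why we take the indicator of the witnessing set rather than a likelihood ratio, and it must be paired with the unconditional marginal $p$ as the starting point, so that the log-partition term averages exactly to the product-distribution expectation. As a sanity check, the same conclusion also follows directly from Pinsker's inequality applied to $I(x;y) = \mathrm{KL}(\{x,y\}\,\|\,\{x\}\otimes\{y\}) = H(x) - H(x\vert y)$. The mirror descent argument above reproduces Pinsker's inequality, up to constants, in this special case.
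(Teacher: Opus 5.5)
Your proof is correct and follows essentially the same route as the paper: a single entropic mirror-descent (multiplicative-weights) step, started from the marginal $\{x\}$ and using the bounded total-variation witness as the gain, is compared against the optimal predictor $\{x \vert y\}$ via the cross-entropy characterization of $H(x \vert y)$. The only difference is that you bound the log-partition term directly with Hoeffding's lemma instead of citing the generic mirror-descent progress lemma; this avoids the paper's informal appeal to strong convexity plus Bregman projection onto the simplex, and it gives the sharp Pinsker constant $\tau^2 \leq \frac{1}{2}\left(H(x) - H(x \vert y)\right)$.
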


\cref{lem:tv-decrease} is a simple consequence of Pinsker's inequality and the definition of mutual information. From~\cref{lem:tv-decrease} we have that~\cref{lem:discrete-pinning} easily follows. Below we give a proof of~\cref{lem:tv-decrease} using mirror descent.
\begin{proof}
    We consider the following setup for our mirror descent game: let $x,y$ be sampled and let the player guess the hidden value of $x$. Guesses take the form of distributions over $[m]$ or vectors in $\mathbb{R}^m$ on the simplex, where the correct guess is $\mathbf{1}_x$.

    We study how a player can update their guess for $x$ when given $y$, when measuring error with respect to the Bregman divergence associated with $f(z) = -\sum_{i \in [m]} z_i \log z_i$. Note that the Bregman divergence between two distributions is the KL divergence between them. When given no other information other than the distribution of $x$, the guess which minimizes the Bregman divergence to the value $\mathbf{1}_x$ in expectation is $\E \mathbf{1}_x = \{x\}$. Furthermore, note that this value is 
    \[ \E_x B(\mathbf{1}_x, \{x\}) = \E_x \log \left(\frac{1}{p(x)}\right) = H(x)\,.\]
    We now explicitly construct a function $f: \mathbb{R} \rightarrow \mathbb{R}^m$ which updates the original guess for $x$ (which was $\{x\}$) when given the value of $y$. This function $f$ is a single step of (projected) mirror descent with the negative entropy regularizer. Note that negative entropy is $1$ strongly convex with respect to $\ell_1$. Therefore, for any cost function $g_y$, we have by~\cref{lem:mirror-descent-progress}
    \[ \eta \langle g_y, \mathbf{1}_x - \{x\}\rangle \leq B(\mathbf{1}_x, \{x\}) - B(\mathbf{1}_x, z) + \eta^2 \Vert g_y\Vert_\infty^2\,,\]
    where $z$ is the result of taking a mirror descent step with gradient $g$ and step size $\eta$ from $\{x\}$ and then Bregman projecting the result back to the simplex.

    We now describe how to pick $g_y$ based on the value of $g$. Note that by the variational characterization of $\tv$ there exists some function $h$ such that $\norm{h}_\infty \leq 1$ and
    \[ \tv(\{x,y\}, \{x\} \otimes \{y\}) = \E_{\{x,y\}} h(x,y) - \E_{\{x\} \otimes \{y\}} h(x,y)\,.\]
    Given such a function $h$ we let $g_y = h(\cdot, y)$. Plugging this into the above inequality and taking an expectation over the joint distribution of $x,y$ we have that 
    \[ \eta \E_{x,y} \langle h(\cdot, y), \mathbf{1}_x - \{x\}\rangle \leq \E_x B(\mathbf{1}_x, \{x\}) - \E_{x,y} B(\mathbf{1}_x, z) + \eta^2 \E_{x,y} \Vert h(\cdot, y)\Vert_\infty^2 \,.\]
    Note that we have that 
    \[ \E_{x,y} \langle h(\cdot, y), \mathbf{1}_x\rangle = \E_{x,y} h(x,y)\,,\]
    while 
    \[ \E_{x,y} \langle h(\cdot, y), \{x\}\rangle = \E_{\{x\} \otimes \{y\}} h(x,y)\,,\]
    so the left hand side of this inequality is $\eta \tv(\{x,y\}, \{x\} \otimes \{y\})$. Furthermore, since $h$ is bounded in $\ell_\infty$ norm by $1$ we have that 
    \[\eta \tv(\{x,y\}, \{x\} \otimes \{y\}) \leq \E_x B(\mathbf{1}_x, \{x\}) - \E_{x,y} B(\mathbf{1}_x, z) + \eta^2\,.\]
    We now note that the value minimizing the expected Bregman divergence to $\mathbf{1}_x$ out of all functions of $y$ is simply $\{x \vert y\}$. Combining with the fact that this expected Bregman divergence is entropy (as noted above) we have that 
    \[ \eta \tv(\{x,y\}, \{x\} \otimes \{y\}) \leq H(x) - H(x \vert y) + \eta^2\,.\]
    Setting $\eta = \frac{1}{2} \tv(\{x,y\}, \{x\} \otimes \{y\})$ completes the proof.
\end{proof}

\subsection{Impossibility of Heavy-Tailed Pinning Lemma Without Pairwise Upper Bound}
\label{sec:pairs-are-necessary}

We show that the following ``dream statement'' is false.

\begin{conjecture}[False]
For every $p \in \N$ and $t \in \N$ there is $s \leq t$ such that if $S \subseteq [n]$ is a random set of size $s$ and $(X_1,\ldots,X_n)$ are any jointly distributed random variables,
\[
  \E_S \E_{i,j\sim [n]} | \Cov(X_i,X_j \, | \, X_S) | \leq \frac 1 {t^{\Omega(1)}} \cdot
	\Brac{ \E_{X} \Paren{ \E_{i \sim [n]} |X_i|^{\frac {p+1}{p}} }}^{\frac {2p} {p+1}}
\]
\end{conjecture}
The only difference between the above conjecture and the heavy-tailed pinning lemma is whether the exponent $2$ lies inside or outside the final brackets on the right-hand side -- this makes a crucial difference!

Consider the following family of counterexamples, for the $4/3$ case (they generalize easily to other cases).
We divide the coordinates into $n/2$ pairs.
For each pair, say $X_{i}$ and $X_{i+1}$, we independently draw a random variable $Y_i$ with infinite first moment but finite $2/3$ moment.
Let $X_i, X_{i+1}$ both be equal to the square root of $|Y_i|$.
Then even after pinning $o(n)$ coordinates, the left-hand side $\E_S \E_{i,j \sim [n]} |\Cov(X_i, X_j \, | \, X_S)$ is infinite, because $\Cov(X_i, X_{i+1}) = \E |Y_i| - (\E \sqrt{|Y_i|})^2$, which is infinite.
But the right-hand side depends only on the $4/3$ moments of the $X_i$ variables, which are $2/3$ moments of $Y_i$, so the right-hand side is finite.

\subsection{Truncation-based Proof of Heavy-Tailed Pinning Lemma}
\label{sec:truncation-proof}
We present here the proof of the heavy-tailed pinning lemma which avoids the mirror-descent technology, instead going by a more direct reduction to the variance pinning lemma.
This proof was discovered by GPT, in the following dialogue (note that at this point the human interlocuter does actually posess a statement and proof of the heavy tailed pinning lemma, but does not admit this to the LLM, to see if the LLM can produce the statement on its own):

\subsubsection{Dialogue}

\textbf{Human:}

\begin{lstlisting}
the pinning lemma shows that if you have an n dimensional random variable, and you condition on poly(1/eps) indices, you get the following guarantee:

E_S E_{ij} |Cov(Xi, Xj | X_S)| leq eps E_i Var(Xi)

where S is a random set of indices of size S.

now, i want to know if there is a similar lemma which would be true for X where the variance might not exist, or might be very large. instead, i want the right hand side to depend only on the q-th moments of Xi, where q is something less than 2, like 4/3. I still want the LHS to be the average pairwise covariance. It would be ok if the the i neq j pairs are not included in the LHS. It would also be ok if the RHS depended on the pairwise products E Xi^q Xj^q. Even though the variance might be very large, the i = j term should show up only like 1/n of all the i,j pairs so having this kind of dependence on the RHS would be ok.

I don't know exactly what such a statement would look like. Could you discover and prove such a statement for me, or tell me why it can't exist, if there is a roadblock? The basic idea is that I want the average pairwise covariance to become small after conditioning on poly(1/eps) coordinates, but I want the RHS bound to be about qth moments rather than variances ie allowing the Xi to be heavy tailed.
\end{lstlisting}

\textbf{GPT:} (describes a counterexample showing that the RHS cannot depend only on $\E_i |X_i|^q$ when $ q < 2$, and shows that if we truncate/clip the $X_i$ varaibles then this issue disappears but the resulting pinning lemma will only apply to the clipped random variables)

\textbf{Human:}

\begin{lstlisting}
ok, suppose that i am also allowed to depend on E_ij |Xi Xj|^q but i don't want to go around losing some extraneous factors of n -- that thing will give me some bound on E_i |Xi|^2q but that bound will be dumb by a factor of n
\end{lstlisting}

\textbf{GPT:} (now produces the proof presented below)

\subsubsection{Proof of Heavy-Tailed Pinning Lemma via Truncation}

\begin{proposition}[Heavy-tailed pinning via truncation]
Assume \(1<q<2\), and let \(X_1,\dots,X_n\) be real-valued random variables. Define
\[
M_q
:=
\mathbb E_{i,j}\,\mathbb E|X_iX_j|^q
=
\mathbb E\!\left(\mathbb E_i |X_i|^q\right)^2.
\]
Assume \(M_q<\infty\). Let \(\pi\) be a uniformly random permutation of \([n]\), independent of \(X\), and define
\[
\mathcal F_t := \sigma\bigl(\pi(1),\dots,\pi(t),\,X_{\pi(1)},\dots,X_{\pi(t)}\bigr),
\qquad 0\le t\le n.
\]
Then for every \(m\in\{0,\dots,n-1\}\), there exists \(t\in\{0,\dots,m\}\) such that
\[
\mathbb E\Big[\,\mathbb E_{i,j}\big|\operatorname{Cov}(X_i,X_j\mid \mathcal F_t)\big|\,\Big]
\;\le\;
7\,(m+1)^{-(q-1)/2}\,M_q^{1/q}.
\]
Consequently, if
\[
m+1 \ge \left(\frac{7}{\varepsilon}\right)^{2/(q-1)},
\]
then for some \(t\le m\),
\[
\mathbb E\Big[\,\mathbb E_{i,j}\big|\operatorname{Cov}(X_i,X_j\mid \mathcal F_t)\big|\,\Big]
\;\le\;
\varepsilon\,M_q^{1/q}.
\]
\end{proposition}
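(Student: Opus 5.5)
The plan is to reduce to the bounded-variance pinning lemma (Theorem~\ref{thm:pinning-intro-variance}, in its filtration form along a random permutation) by truncation. Fix a level $L>0$ and write $X_i = Y_i + Z_i$, where $Y_i := \operatorname{sign}(X_i)\min(|X_i|,L)$ is the clipped variable and $Z_i := X_i - Y_i$ is the tail, supported on $\{|X_i|>L\}$.

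The bilinearity of conditional covariance gives
\[
\operatorname{Cov}(X_i,X_j\mid\mathcal F_t)=\operatorname{Cov}(Y_i,Y_j\mid\mathcal F_t)+\operatorname{Cov}(Y_i,Z_j\mid\mathcal F_t)+\operatorname{Cov}(Z_i,Y_j\mid\mathcal F_t)+\operatorname{Cov}(Z_i,Z_j\mid\mathcal F_t).
\]
The clipped part has controlled second moments. Pointwise $Y_i^2\le L^{2-q}|X_i|^q$, so
\[
\mathbb E_i\operatorname{Var}(Y_i)\le L^{2-q}\,\mathbb E\,\mathbb E_i|X_i|^q\le L^{2-q}M_q^{1/2}
\]
by Jensen. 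The potential $\Phi(t)=\mathbb E\,\mathbb E_i\operatorname{Var}(Y_i\mid\mathcal F_t)$ is nonincreasing. Revealing one more random coordinate (Theorem~\ref{thm:variance-reduction-via-mirror} plus Cauchy--Schwarz) decreases it by at least $(\mathbb E\,\mathbb E_{ij}|\operatorname{Cov}(Y_i,Y_j\mid\mathcal F_t)|)^2/\Phi(0)$, up to constants. Conditioning on $X_{\pi(t+1)}$ rather than $Y_{\pi(t+1)}$ only helps, since $Y$ is a function of $X$. So some $t\le m$ has
\[
\mathbb E\,\mathbb E_{ij}|\operatorname{Cov}(Y_i,Y_j\mid\mathcal F_t)|\lesssim (m+1)^{-1/2}L^{2-q}M_q^{1/2}.
\]

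For the three tail terms I would use $|\operatorname{Cov}(A,B\mid\mathcal F)|\le \mathbb E[|AB|\mid\mathcal F]+\mathbb E[|A|\mid\mathcal F]\,\mathbb E[|B|\mid\mathcal F]$ and bound both pieces uniformly in $t$.
\begin{itemize}
\item The joint piece averages, by the tower rule, to an unconditional quantity. For $Z_iZ_j$, the pointwise bound $|Z_i|\le |X_i|^qL^{1-q}$ gives $\mathbb E_{ij}\mathbb E|Z_iZ_j|\le L^{2-2q}M_q$.
\item The product-of-conditional-expectations piece I would handle via $\mathbb E_{ij}\,\mathbb E[|A_i|\mid\mathcal F]\,\mathbb E[|B_j|\mid\mathcal F]=\mathbb E[\,\mathbb E[U\mid\mathcal F]\,\mathbb E[V\mid\mathcal F]\,]$, with $U=\mathbb E_i|A_i|$ and $V=\mathbb E_j|B_j|$. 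Then apply conditional H\"older/Jensen to pull it back to a bound of the form $\mathbb E[U^{r}V^{r'}]$-type quantities, expressible through $M_q=\mathbb E(\mathbb E_i|X_i|^q)^2$. This is exactly why the right-hand side must be the pairwise quantity $M_q$, and not $\mathbb E_i\mathbb E|X_i|^q$ (cf.\ Section~\ref{sec:pairs-are-necessary}).
\end{itemize}

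The main obstacle is the cross terms $Y_iZ_j$. There is no pointwise inequality $|Y_iZ_j|\lesssim L^{2-2q}|X_iX_j|^q$ when $|X_i|$ is small. My first attempt would use $|Y_i|\le |X_i|^{s}L^{1-s}$ for a tuned $s\in(0,1]$ together with $|Z_j|\le |X_j|^{r}L^{1-r}$ for $r\in[1,q]$, then H\"older in the form $\mathbb E|X_i|^{s}|X_j|^{r}\le (\mathbb E|X_iX_j|^{q})^{(\cdot)}(\ldots)^{(\cdot)}$. If the homogeneities do not match, I would accept a lossy interpolation giving a tail bound like $L^{-(q-1)}M_q^{1/q}\cdot L^{(\text{stuff})}$.

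Finally I choose $L$ to balance the clipped term against the worst tail term. Balancing only the $YY$ and $ZZ$ terms with $L^q\asymp (M_q(m+1))^{1/2}$ would give both of order $M_q^{1/q}(m+1)^{-(q-1)/q}$. The weaker stated rate $(m+1)^{-(q-1)/2}$ and the constant $7$ leave room for the lossier cross-term estimate: with a cross-term bound of order $L^{1-q}M_q^{1/q}$, taking $L\asymp M_q^{1/(2q)}(m+1)^{1/2}$ should yield $7(m+1)^{-(q-1)/2}M_q^{1/q}$. The corollary then follows by solving $7(m+1)^{-(q-1)/2}\le\varepsilon$.
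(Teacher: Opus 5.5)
Your architecture is the same as the paper's. You truncate at $L$, run variance pinning on the clipped $Y_i$ using $\mathbb E_i\operatorname{Var}(Y_i)\le L^{2-q}M_q^{1/2}$, bound the tail pieces via $|\operatorname{Cov}(A,B\mid\mathcal F)|\le\mathbb E[|AB|\mid\mathcal F]+\mathbb E[|A|\mid\mathcal F]\,\mathbb E[|B|\mid\mathcal F]$, and take $L=M_q^{1/(2q)}(m+1)^{1/2}$. Your $YY$ and $ZZ$ estimates are fine.

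The gap is the one non-routine step, the cross terms $\operatorname{Cov}(Y_i,Z_j\mid\mathcal F_t)$. You leave it open, and the bound you posit for it, of order $L^{1-q}M_q^{1/q}$, cannot be right. Under $X\mapsto\lambda X$, $L\mapsto\lambda L$, every covariance scales like $\lambda^2$, while $L^{1-q}M_q^{1/q}$ scales like $\lambda^{3-q}$. Substituting your $L$ gives $(m+1)^{-(q-1)/2}M_q^{(3-q)/(2q)}$ rather than $(m+1)^{-(q-1)/2}M_q^{1/q}$, so your final accounting does not close.

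The missing observation is that you need no gain from truncation on the $Y$ side, because an $(i,j)$-average of a product factorizes. Take $s=1$, $r=q$ in your scheme, so $|Y_i|\le|X_i|$ and $|Z_j|\le L^{1-q}|X_j|^q$. Set $A=\mathbb E_i|X_i|^q$, $U=\mathbb E_i|Z_i|$, $V=\mathbb E_i|Y_i|$. Then $\mathbb E_{ij}|Y_iZ_j|=UV$ and $U\le L^{1-q}A$. The power-mean inequality gives $V\le\mathbb E_i|X_i|\le A^{1/q}$. Cauchy--Schwarz and concavity of $x\mapsto x^{1/q}$ then give
\[
\mathbb E[UV]\le(\mathbb E U^2)^{1/2}(\mathbb E V^2)^{1/2}\le L^{1-q}M_q^{1/2}\bigl(\mathbb E A^{2/q}\bigr)^{1/2}\le L^{1-q}M_q^{(q+1)/(2q)}.
\]
The piece $\mathbb E[\mathbb E(U\mid\mathcal F_t)\,\mathbb E(V\mid\mathcal F_t)]$ obeys the same bound by Cauchy--Schwarz and conditional Jensen. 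This bound has the correct homogeneity.

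As you suspected, it is this cross term, not the $ZZ$ term, that sets the rate. Balanced against $(m+1)^{-1/2}L^{2-q}M_q^{1/2}$, it dictates $L=M_q^{1/(2q)}(m+1)^{1/2}$ and the rate $(m+1)^{-(q-1)/2}$. The constant is then $1+2\cdot2+2=7$. This requires running the $YY$ potential argument with the constant-$1$ inequality $\operatorname{Var}(u)-\mathbb E\operatorname{Var}(u\mid v)\ge\operatorname{Cov}^2(u,v)/\operatorname{Var}(v)$, not the $1/4$ version of Theorem~\ref{thm:variance-reduction-via-mirror}.
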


\begin{proof}
If \(M_q=0\), then \(X_i=0\) a.s.\ for every \(i\), and there is nothing to prove. So assume \(M_q>0\).

Set
\[
A := \mathbb E_i |X_i|^q.
\]
Then
\[
M_q=\mathbb E A^2.
\]

Fix a truncation level \(L>0\), and write
\[
Y_i := \tau_L(X_i):=\operatorname{sgn}(X_i)\min\{|X_i|,L\},
\qquad
R_i := X_i-Y_i.
\]
Thus \(X_i=Y_i+R_i\), with \(Y_i\) the clipped part and \(R_i\) the tail.

Apply the variance pinning lemma to $Y_i$. Since \(Y_i\) is a measurable function of \(X_i\), there exists \(t\in\{0,\dots,m\}\) such that
\[
\mathbb E\Big[\,\mathbb E_{i,j}\big|\operatorname{Cov}(Y_i,Y_j\mid \mathcal F_t)\big|\,\Big]
\le
(m+1)^{-1/2}\,\mathbb E_i \operatorname{Var}(Y_i).
\]
Now
\[
Y_i^2 \le L^{2-q}|X_i|^q,
\]
hence
\[
\mathbb E_i \operatorname{Var}(Y_i)
\le
\mathbb E_i \mathbb E Y_i^2
\le
L^{2-q}\,\mathbb E A
\le
L^{2-q}(\mathbb E A^2)^{1/2}
=
L^{2-q}M_q^{1/2}.
\]
Therefore
\[
\mathbb E\Big[\,\mathbb E_{i,j}\big|\operatorname{Cov}(Y_i,Y_j\mid \mathcal F_t)\big|\,\Big]
\le
(m+1)^{-1/2}L^{2-q}M_q^{1/2}.
\]

Next define
\[
U := \mathbb E_i |R_i|,
\qquad
V := \mathbb E_i |Y_i|.
\]
Since
\[
|R_i|
=
|X_i|\,\mathbf 1_{\{|X_i|>L\}}
\le
L^{1-q}|X_i|^q,
\]
we get
\[
U \le L^{1-q}A,
\qquad\text{hence}\qquad
\mathbb E U^2 \le L^{2-2q}M_q.
\]
Also, \(|Y_i|\le |X_i|\), so by the power-mean inequality,
\[
V
\le
\mathbb E_i |X_i|
\le
\left(\mathbb E_i |X_i|^q\right)^{1/q}
=
A^{1/q}.
\]
Therefore
\[
\mathbb E V^2
\le
\mathbb E A^{2/q}
\le
(\mathbb E A^2)^{1/q}
=
M_q^{1/q},
\]
since \(2/q\le 2\).

Now decompose the covariance:
\[
\operatorname{Cov}(X_i,X_j\mid \mathcal F_t)
=
\operatorname{Cov}(Y_i,Y_j\mid \mathcal F_t)
+
\operatorname{Cov}(R_i,Y_j\mid \mathcal F_t)
+
\operatorname{Cov}(Y_i,R_j\mid \mathcal F_t)
+
\operatorname{Cov}(R_i,R_j\mid \mathcal F_t).
\]

For the mixed term,
\begin{align*}
\mathbb E_{i,j}\big|\operatorname{Cov}(R_i,Y_j\mid \mathcal F_t)\big|
&\le
\mathbb E_{i,j}\mathbb E\big[|R_iY_j|\mid \mathcal F_t\big]
+
\mathbb E_i\mathbb E\big[|R_i|\mid \mathcal F_t\big]\,
\mathbb E_j\mathbb E\big[|Y_j|\mid \mathcal F_t\big] \\
&=
\mathbb E[UV\mid \mathcal F_t]
+
\mathbb E[U\mid \mathcal F_t]\,
\mathbb E[V\mid \mathcal F_t].
\end{align*}
Taking expectations and using Cauchy--Schwarz (together with Jensen for the second term),
\begin{align*}
\mathbb E\Big[\,\mathbb E_{i,j}\big|\operatorname{Cov}(R_i,Y_j\mid \mathcal F_t)\big|\,\Big]
&\le
\mathbb E[UV]
+
\mathbb E\!\big[\mathbb E(U\mid \mathcal F_t)\mathbb E(V\mid \mathcal F_t)\big] \\
&\le
2(\mathbb E U^2)^{1/2}(\mathbb E V^2)^{1/2} \\
&\le
2L^{1-q}M_q^{(q+1)/(2q)}.
\end{align*}
The same bound holds with \(R_i,Y_j\) replaced by \(Y_i,R_j\).

For the tail-tail term,
\begin{align*}
\mathbb E_{i,j}\big|\operatorname{Cov}(R_i,R_j\mid \mathcal F_t)\big|
&\le
\mathbb E_{i,j}\mathbb E\big[|R_iR_j|\mid \mathcal F_t\big]
+
\mathbb E_i\mathbb E\big[|R_i|\mid \mathcal F_t\big]\,
\mathbb E_j\mathbb E\big[|R_j|\mid \mathcal F_t\big] \\
&=
\mathbb E[U^2\mid \mathcal F_t]
+
\mathbb E[U\mid \mathcal F_t]^2.
\end{align*}
Taking expectations and using Jensen,
\[
\mathbb E\Big[\,\mathbb E_{i,j}\big|\operatorname{Cov}(R_i,R_j\mid \mathcal F_t)\big|\,\Big]
\le
2\mathbb E U^2
\le
2L^{2-2q}M_q.
\]

Combining the last three displays,
\[
\mathbb E\Big[\,\mathbb E_{i,j}\big|\operatorname{Cov}(X_i,X_j\mid \mathcal F_t)\big|\,\Big]
\le
(m+1)^{-1/2}L^{2-q}M_q^{1/2}
+
4L^{1-q}M_q^{(q+1)/(2q)}
+
2L^{2-2q}M_q.
\]

Choose
\[
L := M_q^{1/(2q)}(m+1)^{1/2}.
\]
Then the first two terms become
\[
(m+1)^{-(q-1)/2}M_q^{1/q}
\quad\text{and}\quad
4(m+1)^{-(q-1)/2}M_q^{1/q},
\]
while the last term becomes
\[
2(m+1)^{-(q-1)}M_q^{1/q}
\le
2(m+1)^{-(q-1)/2}M_q^{1/q}.
\]
Hence
\[
\mathbb E\Big[\,\mathbb E_{i,j}\big|\operatorname{Cov}(X_i,X_j\mid \mathcal F_t)\big|\,\Big]
\le
7\,(m+1)^{-(q-1)/2}M_q^{1/q}.
\]
This proves the proposition.
\end{proof}

\end{document}